\newtheorem*{rep@theorem}{\rep@title}
\newcommand{\newreptheorem}[2]{
\newenvironment{rep#1}[1]{
 \def\rep@title{#2 \ref{##1}}
 \begin{rep@theorem}}
 {\end{rep@theorem}}}
\newtheorem{theorem}{Theorem}[section]
\newtheorem{lemma}[theorem]{Lemma}
\newtheorem{observation}[theorem]{Observation}
\newtheorem{claim}[theorem]{Claim}
\newtheorem{fact}[theorem]{Fact}
\theoremstyle{definition}
\newtheorem{definition}[theorem]{Definition}
\newenvironment{fminipage}
  {\begin{Sbox}\begin{minipage}}
  {\end{minipage}\end{Sbox}\fbox{\TheSbox}}
\def\pleq{\preccurlyeq}
\def\pgeq{\succcurlyeq}
\def\Span#1{\textbf{Span}\left(#1  \right)}
\def\defeq{\stackrel{\mathrm{def}}{=}}
\def\sgn#1{\mathrm{sgn} (#1)}
\def\union{\cup}
\def\abs#1{\left|#1  \right|}
\def\norm#1{\left\| #1 \right\|}
\def\calH{\mathcal{H}}
\def\calI{\mathcal{I}}
\def\calS{\mathcal{S}}
\def\calT{\mathcal{T}}
\newcommand\bb{\boldsymbol{\mathit{b}}}
\newcommand\cc{\boldsymbol{\mathit{c}}}
\newcommand\dd{\boldsymbol{\mathit{d}}}
\newcommand\ee{\boldsymbol{\mathit{e}}}
\newcommand\ff{\boldsymbol{\mathit{f}}}
\renewcommand\gg{\boldsymbol{\mathit{g}}}
\newcommand\hh{\boldsymbol{\mathit{h}}}
\let\muchless\ll
\newcommand\pp{\boldsymbol{\mathit{p}}}
\newcommand\qq{\boldsymbol{\mathit{q}}}
\newcommand\uu{\boldsymbol{\mathit{u}}}
\newcommand\vv{\boldsymbol{\mathit{v}}}
\newcommand\ww{\boldsymbol{\mathit{w}}}
\newcommand\yy{\boldsymbol{\mathit{y}}}
\newcommand\zz{\boldsymbol{\mathit{z}}}
\newcommand\xx{\boldsymbol{\mathit{x}}}
\renewcommand\AA{\boldsymbol{\mathit{A}}}
\newcommand\BB{\boldsymbol{\mathit{B}}}
\newcommand\FF{\boldsymbol{\mathit{F}}}
\newcommand\HH{\boldsymbol{\mathit{H}}}
\newcommand\II{\boldsymbol{\mathit{I}}}
\newcommand\MM{\boldsymbol{\mathit{M}}}
\newcommand\LL{\boldsymbol{\mathit{L}}}
\newcommand\PP{\boldsymbol{\mathit{P}}}
\newcommand\QQ{\boldsymbol{\mathit{Q}}}
\newcommand\RR{\boldsymbol{\mathit{R}}}
\newcommand\qqtilde{\boldsymbol{\mathit{\tilde{q}}}}
\newcommand{\eps}{\epsilon}
\newcommand{\trp}{\top}
\newcommand{\rank}{\text{rank}}
\DeclareMathOperator*{\nulls}{null}
\DeclareMathOperator*{\poly}{poly}
\newcommand{\schurto}[2]{\ensuremath{\textsc{Sc}\!\left[#1\right]_{#2}}}
\title{Incomplete Nested Dissection}
\author{
Rasmus Kyng
\thanks{email: \texttt{rjkyng@gmail.com}.
This work was done in part while the author was visiting the Simons
Institute for the Theory of Computing.
It was partially supported by the DIMACS/Simons Collaboration on Bridging Continuous and Discrete Optimization through NSF grant \#CCF-1740425.
}
\\ Simons Institute
\and
Richard Peng
\thanks{emails: \texttt{rpeng@cc.gatech.edu, \{rschwieterman3,pzhang60\}@gatech.edu}.
This material is based on work supported by the
National Science Foundation under Grant No. 1637566.}
\\ Georgia Tech
\and
Robert Schwieterman
\footnotemark[2]
\\ Georgia Tech
\and
Peng Zhang
\footnotemark[2]
\\ Georgia Tech
}
\newcommand{\esimple}{edge-simple}
\newcommand{\trussnice}{convex edge-simple}
\begin{document}

\maketitle

\thispagestyle{empty}

\begin{abstract}
We present an asymptotically faster algorithm for solving linear systems in 
well-structured 3-dimensional truss stiffness matrices. 
These linear systems arise from linear elasticity problems,
and can be viewed as extensions of graph Laplacians into higher dimensions.
Faster solvers for the 2-D variants of such systems
have been studied using generalizations of tools for solving graph
Laplacians [Daitch-Spielman CSC'07, Shklarski-Toledo SIMAX'08].

Given a 3-dimensional truss over $n$ vertices which is formed from a
union of  $k$ convex structures (tetrahedral meshes) with bounded aspect ratios, whose
individual tetrahedrons are also in some sense well-conditioned,
our algorithm solves a linear system in the associated stiffness matrix up to
accuracy $\eps$ in time $O(k^{1/3} n^{5/3} \log (1 / \eps))$.
This asymptotically improves the running time $O(n^2)$ by Nested
Dissection for all $k \ll n$.

We also give a result that improves on Nested Dissection even when we
allow any aspect ratio for each of the $k$ convex structures (but we
still require well-conditioned individual tetrahedrons). 
In this regime, we improve on Nested Dissection for $k \muchless n^{1/44}$.

The key idea of our algorithm is to combine
nested dissection
and support theory.
Both of these techniques for solving linear systems are well studied,
but usually separately.
Our algorithm decomposes a 3-dimensional truss into
separate and balanced regions with small boundaries.
We then bound the spectrum of each such region separately,
and utilize such bounds to obtain improved algorithms by
preconditioning with partial states of separator-based Gaussian elimination.
\end{abstract}

\clearpage
\setcounter{page}{1}

\section{Introduction}
\label{sec:intro}

Linear systems in truss stiffness matrices arise from linear elasticity problems for simulating the effect of forces on 
geometrically embedded objects~\cite{shklarskiT08,daitchS07}.
A truss is an undirected weighted graph over $n$ vertices which are points embedded in $d$ dimensions.
We refer to the weights as \emph{stiffness coefficients}.
The associated truss stiffness matrix is a $dn \times dn$ matrix, which can be written as a sum of rank-one positive semi-definite matrices such that each matrix corresponds to an edge in the truss graph.

Graph Laplacian matrices can be viewed as a special case of truss stiffness matrices, in which each vertex is embedded in 1 dimension.
2-dimensional variants have been studied in~\cite{shklarskiT08} and~\cite{daitchS07}. 
In this paper, we solve linear systems in truss stiffness matrices of 3-dimensional meshes.
These meshes commonly arise from applying finite element methods~\cite{strangF73} to 
simulating 3-dimensional physical space in scientific computing and numerical analysis.

Unlike linear systems in graph Laplacians for which nearly-linear time solvers exist (see for example~\cite{spielmanT14,KoutisMP10,KoutisMP11,pengS14,CKMPPRX14,kyngS16}, etc.), solving linear systems in general
truss stiffness matrices can be as hard as solving linear systems over the reals~\cite{kyngZ17}, even for 2-dimensional non-planar trusses.

Throughout this paper, we are interested in 3-dimensional trusses with additional geometric structures.
Specifically, we assume a truss is a tetrahedral mesh, which can be decomposed into several convex simplicial complexes.
In addition, each edge has bounded length and stiffness coefficient, and each tetrahedron has bounded aspect ratio.
The aspect ratio of a tetrahedron is the ratio between
its volume and the cubic of its diameter.
Bounded aspect ratio of simplices is a common criterion in mesh generation~\cite{bernEG94,chew89,mitchellV92,ruppert93}.

Existing works which use geometric structures to speed up algorithms are mostly for planar graphs 
(e.g.,~\cite{frederickson87,goodrich95,henzingerKRS97,
borradaileKMNW11}).
A common structure underlying these speed-ups is the $r$-division.
An $r$-division divides a graph into balanced pieces in which only few edges are between any two pieces~\cite{frederickson87,goodrich95,kleinMS13}.
Then divide-and-conquer can be applied.
Usually one layer of $r$-division is sufficient for running time
speedups~\cite{borradaileKMNW11,lkackiS11,borradaileENW14}, while recursive $r$-divisions 
lead to even faster algorithms~\cite{henzingerKRS97,BorradailePMNW11,kleinMS13}.
Designing efficient algorithms for 3-dimensional graphs is much harder than in
2-dimensions, even 
for graphs with 
bounded genus
surface embedded graphs~\cite{ericksonN11,borradaileENW14,borradaileCFN16}.

Existing linear system solvers for 3-dimensional tetrahedral meshes are
mainly based on nested dissection, 
which produces a vertex ordering for sparse Gaussian elimination~\cite{george73,liptonRT79,millerT90,alonY10}.
The main idea of nested dissection is to recursively compute small separators,  which recursively partition a (sub)graph into separate and balanced pieces.
Due to small overlaps between any two pieces, Gaussian elimination with this ordering introduces only a small fill-in size. 
The study of nested dissection, and the prevalence of linear systems
in 3-D complexes in turn motivated the study of small separators for
3-dimensional meshes~\cite{millerT90,millerTTV98}. 
However, the existence of 
3-D complexes with no 
sub-linear sized
constant-fraction separators~\cite{millerT90} means that such algorithms
critically depend on the aspect ratios of the individual tetrahedrons.

In addition, support theory has been widely used 
for solving linear systems in graph Laplacian matrices~\cite{spielmanT14,CKMPPRX14} and generalized Laplacian matrices (e.g., connection Laplacians~\cite{KLPSS16}).
Instead of directly solving the original linear system, support theory seeks a sparse preconditioner and iteratively solves a sequence of linear systems (refer to a survey~\cite{axelsson85}).
Both nested dissection and support theory are well-studied techniques in the literature of solving linear systems, but usually they are applied separately.

We design a new approach for solving linear systems in 3-D
truss stiffness matrices, by combining
the ideas of nested dissection and support theory.
We show that the $O(n^2)$-time\footnote{Here we
assume that multiplying two $n \times n$ matrices needs time $O(n^3)$, following the convention of nested dissection
literature~\cite{george73,liptonRT79,gilbertNP94}. We will discuss the running time by using fast matrix multiplication in Appendix~\ref{sec:rofl}.} 
bound for solving systems on simplicial
complexes with bounded aspect ratios~\cite{millerT90} can be further improved for
truss matrices on simplicial complexes
formed as a union of $k$ convex tetrahedral meshes.

\begin{theorem}[Informal statement]
Given a linear system in the stiffness matrix of a 3-D truss $\calT$
over $n$ vertices satisfying:
\begin{enumerate}
\item $\calT$ is a mesh of tetrahedrons formed by a union of $k$
  convex simplicial complexes with constant aspect ratio each,
\item each edge of $\calT$ has constant length and stiffness coefficient,
\item each tetrahedron of $\calT$ has constant aspect ratio, 
\end{enumerate}
and an error parameter $\eps > 0$, there is an algorithm which outputs a solution of the linear system 
up to accuracy $\eps$
in time $O(k^{1/3} n^{5/3} \log (1 / \eps))$.
\label{thm:mainInformal}
\end{theorem}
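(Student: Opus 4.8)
The plan is to combine a separator-based partial Gaussian elimination (nested dissection) with a support-theory preconditioning argument, balancing the sizes so that the cost of eliminating interior vertices matches the cost of the iterative refinement on the boundary. First I would invoke the geometric separator machinery for 3-D simplicial complexes with bounded aspect ratio to produce, for a chosen parameter $r$, an $r$-division-like decomposition of $\calT$ into $O(n/r)$ regions, each with $O(r)$ vertices and boundary of size $O(r^{2/3})$; since $\calT$ is a union of $k$ convex pieces of constant aspect ratio, the separators can be found respecting (or only mildly crossing) the convex pieces, which is what lets the $k$-dependence enter only as $k^{1/3}$. I would then perform Gaussian elimination on all vertices interior to the regions, leaving a Schur complement $\schurto{\AT}{F}$ onto the union $F$ of all region boundaries, with $|F| = O(n / r^{1/3})$. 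By the standard nested-dissection fill/flop bounds applied \emph{within} each region (each of which again admits recursive bounded-aspect-ratio separators), this elimination costs $O((n/r)\cdot r^2) = O(nr)$ time per region-solve once we account for the dense boundary blocks, and produces operators we can apply quickly.

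The second ingredient is a preconditioner for the dense Schur complement on $F$. Here I would use support theory: rather than forming or factoring $\schurto{\AT}{F}$ densely (which would cost $|F|^2 = (n/r^{1/3})^2$), I build a sparse truss stiffness matrix $\MM$ on $F$ (obtained, e.g., from the boundary truss together with a bounded number of added stiff struts per convex piece, or from a lighter-weight separator tree) whose stiffness matrix spectrally approximates $\schurto{\AT}{F}$ up to a factor $\kappa$ that is polynomial in $r$ — crucially $\kappa$ should be controlled by bounding the spectrum of each convex region \emph{separately} (using the constant aspect ratio of each region and of its tetrahedra) and then combining via the fact that support numbers add under sums of PSD matrices. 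Solving systems in $\MM$ is cheap because $\MM$ is again a well-structured truss on $O(n/r^{1/3})$ vertices, on which we can recurse or apply nested dissection directly at cost $O((n/r^{1/3})^2)$. Preconditioned Chebyshev / conjugate gradient then solves systems in $\schurto{\AT}{F}$ to accuracy $\eps$ in $O(\sqrt{\kappa}\,\log(1/\eps))$ iterations, each iteration costing one multiply by $\schurto{\AT}{F}$ (done implicitly via the region factorizations, cost $O(nr)$ total) plus one solve in $\MM$.

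Putting the pieces together, the total running time is of the form $O\!\big(nr \;+\; \sqrt{\kappa}\,(nr + (n/r^{1/3})^2)\,\log(1/\eps)\big)$ with $\kappa = \poly(r)$; I would then optimize over $r$ (together with the number of recursion levels), and the claim is that the choice balancing the interior-elimination cost against the boundary-solve cost yields $r = \Theta(n^{2/3}/k^{2/3})$ or similar, giving the stated $O(k^{1/3} n^{5/3}\log(1/\eps))$ bound, where the $k^{1/3}$ factor comes from the separator sizes and the support-theory condition number both degrading by a $\poly(k)$ amount when the convex pieces are glued. The main obstacle I anticipate is the spectral bound: showing that the Schur complement of a single convex bounded-aspect-ratio tetrahedral mesh onto its boundary is spectrally approximated, with a \emph{polynomially bounded} condition number, by an explicit sparse truss on the boundary — this requires quantitative bounds on the eigenvalues of the interior stiffness operator (a discrete elasticity analogue of Poincaré/trace inequalities) and careful tracking of how the constants depend on the aspect ratio and on the number of glued pieces $k$; everything else (separator existence, nested-dissection flop counts, preconditioned iteration counts) is comparatively standard once that estimate is in hand.
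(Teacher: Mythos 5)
Your overall strategy matches the paper's: compute an $r$-division of each convex piece, eliminate the interior vertices, and precondition the resulting Schur complement by the boundary (``hollowed'') truss itself, whose relative condition number $O(r^2)$ is controlled region-by-region via a lower bound on the smallest nonzero eigenvalue of an edge-simple, stiffly-connected truss. You also correctly identify the eigenvalue/condition-number estimate as the crux. However, your running-time accounting contains two errors that prevent the stated bound from following. First, you charge $O(nr)$ per PCG iteration for the multiplication by $\schurto{\AT}{F}$ (``done implicitly via the region factorizations''); the paper instead forms the Schur complement explicitly, and its nonzero count is only $O\bigl(\sum_i (n_i/r_i)\cdot(r_i^{2/3})^2\bigr)=O(n r^{1/3})$, which with $r=n^{1/2}$ is $O(n^{7/6})$ per matvec. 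Second, you place the $O\bigl((n/r^{1/3})^2\bigr)$ nested-dissection cost of the preconditioner \emph{inside} the $\sqrt{\kappa}\log(1/\eps)$ iteration count; that is a one-time factorization cost, and each iteration only pays the fill-in size $O\bigl((n/r^{1/3})^{4/3}\bigr)$ for the back-substitution. As written, your total $O\bigl(nr+\sqrt{\kappa}\,(nr+(n/r^{1/3})^2)\log(1/\eps)\bigr)$ with $\kappa=\Theta(r^2)$ contains the term $r\cdot n^2 r^{-2/3}=n^2 r^{1/3}\geq n^2$ for every choice of $r$, so no optimization over $r$ can recover $O(k^{1/3}n^{5/3}\log(1/\eps))$. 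With the corrected per-iteration costs and $r_i=n_i^{1/2}$, the terms $n^{3/2}$ (elimination), $k^{1/3}n^{5/3}$ (factorizing the preconditioner), and $n^{1/2}\log(1/\eps)\cdot(n^{7/6}+k^{2/9}n^{10/9})$ do balance to the claimed bound.

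A second, smaller issue is the source of the $k^{1/3}$ factor. You attribute it to the separators and the condition number ``both degrading by a $\poly(k)$ amount when the convex pieces are glued.'' In fact neither degrades with $k$: the condition number is $O(\max_i r_i^2)=O(n)$ regardless of $k$, and the hollowing of each piece is constructed independently. The $k$-dependence enters purely through Jensen's inequality on the total preconditioner size, $\sum_i n_i r_i^{-1/3}=\sum_i n_i^{5/6}=O(k^{1/6}n^{5/6})$, followed by the $O(\text{size}^2)$ cost of nested dissection on that union. Relatedly, your proposed $r=\Theta(n^{2/3}/k^{2/3})$ is not the right balance point; the paper's choice is $r_i=n_i^{1/2}$ per piece, independent of $k$.
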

Theorem~\ref{thm:mainInformal} improves on Nested Dissection for all $k \muchless n $.
We also show a second result that improves on Nested Dissection even when we
allow any aspect ratio for each of the $k$ convex structures (but we
still require individual tetrahedrons to have bounded aspect ratio and size).
In this regime, we improve on Nested Dissection provided $k \muchless n^{1/44}$.

Our requirements for trusses have more restrictions than the
previous nested dissection based algorithms~\cite{millerT90},
which only requires constant aspect ratios for individual tetrahedrons.
Our additional restrictions are mainly due to the need to derive spectral
bounds for the preconditioner.
Assumptions such as bounded edge lengths and stiffness coefficients 
 (in addition to aspect ratios) 
are also present in previous work by Daitch and Spielman~\cite{daitchS07},
which forms the starting point for our key technical results on eigenvalues
of 3-D simplicial complexes.
Furthermore, the need to maintain null spaces as well as truss
structures in intermediate steps of Gaussian elimination places
additional requirements on the mesh structure.

Our algorithm can be viewed as incomplete nested dissection.
Specifically, for each convex simplicial complex, we compute an $r$-division whose boundaries have nice structures.
We then eliminate all interior vertices of the $r$-division, which gives a partial state of Gaussian elimination.
To solve the remaining linear system,
we simply use the $r$-division boundaries to precondition
 the intermediate submatrix of the Gaussian elimination as an
incomplete Cholesky factorization.
Our key technical result shows that the associated stiffness matrix of
 carefully chosen boundaries has bounded condition number, which may be of independent interest.

This algorithm only relies on the geometric structures of the underlying graphs and  good eigenvalue bounds of the associated matrices.
Our presentation 
only represents a basic
instantiation of utilizing both geometric structures and spectral ideas to solve linear systems.
It is likely that the same techniques can be extended to more linear systems.

Finally, due to reliance on numerical as well as combinatorial
structures, our result, as well as previous works on faster linear system
solvers for truss matrices~\cite{shklarskiT08,daitchS07}, are limited
to meshes with both bounded side lengths and aspect ratios.
This is significantly more restrictive than nested dissection based
algorithms that only depend on the aspect ratios of the truss elements.
We believe in future works it would be useful to extend our approach
to more general cases, and more systematically study the necessity
of restricting to such special cases.

\subsection{Related Works}
\label{subsec:Related}

\begin{table*}
\small
\centering
\begin{tabular}{|c|c|c|c|c|c|}
\hline
Algorithm &
D &
Global Req. &
Local Req. &
Runtime\\
\hline
Gaussian Elimination
&
any
&
any
&
none
&
$O(n^{3})$
\\
Fast Inversion~\cite{Legall14}
&
any
&
any
&
none
&
$O(n^{\omega})$
\\
\hline
Nested Dissection~\cite{liptonRT79}
&
2
&
any
&
none
&
$O(n^{\omega/2})$\\
Nested Dissection~\cite{millerT90}
&
3
&
any
&
aspect ratio (AR)
&
$O(n^{2\omega/3})$\\
\hline
Fretsaw Extension~\cite{shklarskiT08}
&
2
&
any 
&
none 
&
unspecified
\\
Augmented Tree~\cite{daitchS07}
&
2
&
stiffly-connected
&
size, AR, stiffness coefs.
&
about $n^{5/4} \log(1/\eps)$
\\
\hline
Theorem~\ref{thm:mainSmallAR}
&
3
&
\makecell{
$k$~$\times$~(convex \& AR)
}
&
size, AR, stiffness coefs.
&
$O(k^{1/3}n^{5/3}\log(1/\eps))$\\
Theorem~\ref{thm:main}
&
3
&
\makecell{
$k$~$\times$~convex
}
&
size, AR, stiffness coefs.
&
$O(k^{22/3}n^{11/6}\log(1/\eps))$\\
\hline
\end{tabular}
\caption{Comparisons of algorithms for solving linear systems
in stiffness matrices of trusses.
Global Req. / Local Req. refer to the restrictions / assumptions made
by these algorithms on the overall truss complex and individual truss
elements respectively.
Here $\omega$ is the matrix multiplication exponent,
which by~\cite{Legall14} is $< 2.3728639$.
The running time of~\cite{daitchS07} also ignores an overhead term of
$O(\log^{3/2}n \log\log^{3/4} n)$ related to the qualities of tree
embeddings.
}
\label{tbl:compare}
\end{table*}

A comparison of related results and their geometric requirements
are given in Table~\ref{tbl:compare}.
Many geometric restrictions are crucial to designing 
fast linear system solvers.
In addition to requirements on individual truss elements (e.g., bounded aspect ratios,  edge lengths and stiffness coefficients),  
a truss being \emph{stiffly connected}, defined in~\cite{daitchS07}, is important as a global requirement.
From a physical view, any deformation of a stiffly connected truss, except a translation and / or a rotation, requires energy.
From an algebraic view, 
the associated stiffness matrix of a stiffly connected truss only has a trivial null space.
In our algorithm, we always guarantee that our preconditioner of a convex truss is stiffly connected.

Both direct methods (mainly Gaussian elimination,
refer to~\cite{george73,liptonRT79,millerT90,gilbertNP94,Legall14}) and
iterative methods (e.g., conjugate gradient, multigrid methods, etc., refer to~\cite{fedorenko64,axelsson85,saad03})
have been widely studied for solving linear systems with additional structures.
Combining these two methods has led to fast linear system solvers in graph Laplacians and its generalizations (e.g.,~\cite{spielmanT14,KoutisMP10,KoutisMP11,pengS14,KLPSS16,CohenKPPRSV16,kyngS16} etc.).

One key idea among these algorithms is incomplete Cholesky factorization.
It was used in~\cite{kyngS16} to accelerate Gaussian elimination for graph Laplacian linear systems.
Each time the algorithm eliminates a vertex,
it sparsifies the partial Cholesky factorization by randomly picking a subset of its nonzeros.
In addition, sparsified block Cholesky factorization and multigrid methods were applied in~\cite{KLPSS16} to solving connection Laplacian linear systems.
In~\cite{KLPSS16}, they eliminate a subset of vertices, sparsify its Schur complement (which is an intermediate submatrix of Gaussian elimination, refer to Definition~\ref{def:schur} for a formal definition) without explicitly computing it, and then repeat this process recursively.

The analyses of both these two papers crucially rely on the fact that graph Laplacian matrices and connection Laplacian matrices are closed under taking the Schur complement.
However, this fact does not hold
 for truss stiffness matrices, for which the Schur complements can
be essentially any PSD matrices~\cite{kyngZ17}.
The hard instance given in~\cite{kyngZ17} is a 2D non-planar truss.
In this paper, instead, we utilize the geometric structures, balanced
divisions, and tetrahedral meshes whose tetrahedrons form simplicial
complexes and have bounded size and aspect ratio.
In this setting we can approximate Cholesky factorization.

One way of measuring the quality of a sparse approximated Cholesky factorization is the relative condition number of the sparse matrix and the original matrix, which determines the number of iterations of preconditioned iterative algorithms.
Cheeger's inequality provides tight bounds on the smallest nonzero eigenvalues of graph Laplacian matrices through sparse cuts in graphs~\cite{chung96}.
Extensions to higher-order eigenvalues, connection Laplacian matrices, and simplicial complexes can be found in~\cite{leeGT14,bandeiraSS13,steenbergenKM14,
parzanchevski2016isoperimetric}, etc.
However, it is unclear whether a Cheeger-type inequality exists for truss stiffness matrices.

\subsection{Organization of the Remaining Paper}

In Section~\ref{sec:prelim},
we give definitions and notations on graphs and linear algebra,
and formally define truss stiffness matrices.
In Section~\ref{sec:overview}, we present our algorithm for solving
linear systems in 3-dimensional trusses and prove our main theorems.
Sections~\ref{sec:minEig} and~\ref{sec:proofDS07Main} bound
eigenvalues of a well-structured truss stiffness matrix 
which is the key lemma for proving our main results.
In Section~\ref{sec:appendixHollow} we show our algorithm for constructing  preconditioners, and in
Section~\ref{sec:NestedDissection} we show  our nested dissection based algorithms for solving linear systems
in these preconditioners.
Appendix~\ref{sec:appendixSchur} gives some useful facts on Schur complements,
and Appendix~\ref{sec:rofl} gives our running times  parameterized
by $\omega$.


\section{Preliminaries}
\label{sec:prelim}

\subsection{Tetrahedral Meshes}

For a subset $S \subset \mathbb{R}^3$, we define the \emph{diameter} of $S$ to be the maximum Euclidean distance between any pair of points in $S$.
We define the \emph{aspect ratio} of $S$ to be the ratio between the radius of the smallest ball containing $S$ and the radius of the largest ball inscribed in $S$.
In mesh generation, aspect ratio is a common criterion for individual elements (see for example~\cite{bernEG94,chew89,millerT90,mitchellV92,ruppert93}).

A tetrahedron is the convex hull of four non-coplanar points in
$\mathbb{R}^3$.
We will specify tetrahedrons in terms of sets of four such points.
We refer to a set of tetrahedrons as a tetrahedral mesh.

We say a tetrahedral mesh is 
a \emph{simplicial complex} if the intersection of every two tetrahedrons is either empty or a face of both two tetrahedrons.
A simplicial complex is \emph{convex} if 
the union of the images of its simplices is convex, as defined in~\cite{CohenFMNPW14}.

\begin{definition}
\label{def:wellshapedmesh}
  A tetrahedral mesh is said to be \emph{simple} iff it is a
  simplicial complex and every tetrahedron has bounded aspect ratio.
\end{definition}

The following definition characterizes rigidity and stiffness of a tetrahedral mesh, 
which is an adaption of Definition 2.3 in~\cite{daitchS07}.
\begin{definition}
The \emph{rigidity graph} of a tetrahedral mesh is a graph whose vertices correspond to tetrahedrons and whose edges connect any two tetrahedrons sharing a triangle face.
A tetrahedral mesh is \emph{stiffly-connected} iff (1) its rigidity graph is connected, and (2) for any vertex in the mesh, the  rigidity subgraph induced on the tetrahedrons containing this vertex is connected.
\label{def:stifflyConnected}
\end{definition}

We define a \emph{bounding box} of a convex 3D shape to be a 3D box such that: (1) the box contains all points of this shape, and (2) the volume of the box is same as the volume of this shape up to a constant factor.
\cite{BarequetH01} gives a linear time algorithm which computes a bounding box of a convex shape in 3 dimensions.

\begin{lemma}(Lemma 3.6 of~\cite{BarequetH01})
\label{lem:BoundingBox}
Given a 3D convex shape, one can compute in linear time a bounding box of this shape.\footnote{The lemma statement in~\cite{BarequetH01} gives a bound related to the minimum-volume bounding box $B^*$ of the input shape, but their proof uses the volume of the shape as a lower bound of the volume of $B^*$.}
Moreover, the aspect ratio of the bounding box is same as the aspect ratio of the given shape up to a constant factor.
\end{lemma}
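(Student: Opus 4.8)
The plan is to handle the two assertions separately. For the linear-time construction, the first sentence is essentially Lemma 3.6 of~\cite{BarequetH01}: their algorithm runs in linear time and returns a box $B\supseteq S$ whose volume is within a constant factor of that of the minimum-volume enclosing box $B^*$. As the footnote indicates, the quantity their analysis actually bounds $\mathrm{vol}(B)$ by is itself at most a constant times $\mathrm{vol}(S)$ — the weakening to ``$\le c_1\,\mathrm{vol}(B^*)$'' uses only $\mathrm{vol}(S)\le\mathrm{vol}(B^*)$ — so inspecting their proof yields $\mathrm{vol}(B)\le c_1\,\mathrm{vol}(S)$ directly; combined with the trivial $\mathrm{vol}(S)\le\mathrm{vol}(B)$ this gives $\mathrm{vol}(B)=\Theta(\mathrm{vol}(S))$, which is exactly the bounding-box property required by the definition preceding the lemma.

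For the aspect-ratio comparison, I would first normalize $B$ to be \emph{tight}: slide each of its six faces inward until it meets $S$. This keeps $S\subseteq B$, only decreases $\mathrm{vol}(B)$, and costs linear time (find the six extreme points of $S$ along the box axes). Write $a\le b\le c$ for the side lengths of the tightened box and $r,R$ for the inradius and circumradius of $S$. Since a box with sides $a\le b\le c$ has in- and circumradii $a/2$ and $\tfrac12\sqrt{a^2+b^2+c^2}$, we get $\mathrm{AR}(B)=\sqrt{a^2+b^2+c^2}/a=\Theta(c/a)$. Three elementary facts then control $c/a$: (i) $B$ contains a ball of radius $r$, so $a\ge 2r$; (ii) tightness places points of $S$ on both faces orthogonal to the longest axis, so $c\le\mathrm{diam}(S)\le 2R$; (iii) $S\subseteq B$ forces $R\le\tfrac12\sqrt{a^2+b^2+c^2}\le\tfrac{\sqrt3}{2}c$, so $c\ge\tfrac{2}{\sqrt3}R$. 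From (i) and (ii), $\mathrm{AR}(B)=\Theta(c/a)=O(R/r)=O(\mathrm{AR}(S))$.

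The reverse inequality $\mathrm{AR}(B)=\Omega(\mathrm{AR}(S))$ is the step I expect to be the main obstacle; it amounts to proving $a=O(r)$ for this tight box of volume $O(\mathrm{vol}(S))$. The easy ingredients go part of the way: by Steinhagen's inequality the minimal width of the convex body $S\subseteq\mathbb{R}^3$ is $\Theta(r)$, so enclosing $S$ in a slab realizing this width and bounding the (convex, circumradius $\le R$) cross-section by a near-minimum-area rectangle gives $\mathrm{vol}(S)=O(rR^2)$; hence $abc=\mathrm{vol}(B)=O(rR^2)$, which together with (iii) yields $ab=O(rR)$ and so $a\le\sqrt{ab}=O(\sqrt{rR})$. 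Upgrading this $O(\sqrt{rR})$ to $O(r)$ is the delicate point: one has to use that a bounding box of volume $O(\mathrm{vol}(S))$ cannot be badly misaligned with the thin directions of $S$ — concretely, that the face of $B$ orthogonal to its shortest axis is, up to a constant factor, a minimum-area bounding rectangle of the corresponding projection of $S$ — which pins the shortest side to the minimal width $\Theta(r)$ rather than to the geometric mean $\sqrt{rR}$. (If the application only requires $\mathrm{AR}(B)=O(\mathrm{AR}(S))$, then facts (i)--(iii) already suffice and this obstacle is moot.)
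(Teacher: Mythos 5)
The paper itself offers no proof of this statement: it is imported wholesale from \cite{BarequetH01}, and the second sentence --- the aspect-ratio equivalence --- is asserted without justification anywhere in the text. So there is no in-paper argument to compare yours against; you are attempting something the authors skip. Your reading of the first sentence agrees with the paper's footnote, and your derivation of $\mathrm{AR}(B)=O(\mathrm{AR}(S))$ from facts (i)--(iii) is correct.

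The gap you flag in the direction $\mathrm{AR}(B)=\Omega(\mathrm{AR}(S))$ is genuine, and your closing caveat does not rescue you: the paper needs \emph{both} directions. The remark after Lemma~\ref{lem:regionBoundary} uses $a\le O(c\,\alpha(\calT))$, while the last step of the proof of Lemma~\ref{lem:planeIntersect} uses $a\ge \Omega(c\,\alpha(\calT))$ to get $a^3\ge \Omega(ab\cdot c\alpha)=\Omega(n\alpha)$ and hence $bc=n/a\le O(n^{2/3}\alpha^{-1/3})$; Lemma~\ref{lem:smallSeparatorPlane} relies on the same bound. That said, the missing direction does follow from containment together with $\mathrm{vol}(B)=O(\mathrm{vol}(S))$ alone, with no appeal to how the Barequet--Har-Peled box is aligned, via a projection argument you did not consider. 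Let $E\subseteq S\subseteq 3E$ be the John ellipsoid of $S$, with semi-axes $\alpha\le\beta\le\gamma$, so that $r=\Theta(\alpha)$, $R=\Theta(\gamma)$ and $\mathrm{vol}(S)=\Theta(\alpha\beta\gamma)$. The orthogonal projection of the box $B$ onto the plane spanned by the two long axes of $E$ has area at most $ab+bc+ca$, yet must contain the projected ellipse of area $\pi\beta\gamma$. Since $B$ contains a segment of length $2\gamma$ we have $c=\Omega(\gamma)$, hence $ab=\mathrm{vol}(B)/c=O(\alpha\beta)$, while $bc+ca\le 2\,\mathrm{vol}(B)/a=O(\alpha\beta\gamma/a)$. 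Thus $\pi\beta\gamma\le O(\alpha\beta)+O(\alpha\beta\gamma/a)$, which forces either $\alpha=\Omega(\gamma)$ (so $\mathrm{AR}(S)=O(1)$ and the claim is trivial) or $a=O(\alpha)=O(r)$, giving $\mathrm{AR}(B)=\Theta(c/a)=\Omega(R/r)$. This replaces both your $a=O(\sqrt{rR})$ bound and the unproven alignment claim about minimum-area bounding rectangles of projections, which is the step your write-up correctly identifies as unresolved.
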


\subsection{Vectors and Matrices}

Given a vector $\xx \in \mathbb{R}^n$,
for $1 \le i < i+j \le n$, we denote $\xx_i$ the $i$th entry of $\xx$, and we denote $\xx_{i:i+j}$ the subvector whose entries are $\xx_i, \xx_{i+1}, \ldots, \xx_{i+j}$.
The Euclidean norm of $\xx$ is defined as $\norm{\xx}_2 \defeq \sum_{1\le i \le n} \xx_i^2$.

Given a matrix $\AA \in \mathbb{R}^{n \times n}$,
for $1\le i, j \le n$, we denote $\AA_{ij}$ the $(i,j)$th entry of $\AA$.
A square matrix $\AA \in \mathbb{R}^{n \times n}$ is a \emph{positive semi-definite matrix} (PSD) iff for every vector $\xx \in \mathbb{R}^n$ we have $\xx^\top \AA \xx \ge 0$.
We denote $\lambda_{\min}(\AA)$ the smallest \emph{nonzero} eigenvalue of $\AA$.

For two symmetric matrices $\AA, \BB \in \mathbb{R}^{n\times n}$,  we say $\AA \pgeq \BB$ iff $\AA - \BB$ is  PSD.
We define the 
\emph{condition number} of $\AA$ relative to $\BB$, denoted by $\kappa(\AA, \BB)$, to be 
\[
\min  \left\{ \frac{\lambda_{\max}}{\lambda_{\min}}: \lambda_{\min} \cdot \BB \pleq \AA \pleq \lambda_{\max} \cdot \BB \right\}.
\]

In addition, we define Schur complements which arise from the process of Gaussian elimination.
\begin{definition}[Schur complement]
\label{def:schur}
Let $S,T$ be a partition of the indices of a square matrix $\AA$ so that
$\AA = \left(\begin{array}{cc}
\AA_{SS} & \AA_{ST} \\
\AA_{ST}^{\trp} & \AA_{TT}
\end{array} \right)$ where $\AA_{SS}, \AA_{ST}, \AA_{TT}$ are block
matrices, 
the \emph{Schur complement} of $\AA$ onto $T$ is 
\[
\schurto{\AA}{T} \defeq \AA_{TT} - \AA_{ST}^\top\AA^{-1}_{SS}\AA_{ST}.
\]
\end{definition}
We will use the following fact of Schur complements.
\begin{fact}[Lemma 4 of~\cite{roseTL76}]
Let $\AA = \left( \begin{array}{cc}
\AA_{SS} & \AA_{ST} \\
\AA_{ST}^\top & \AA_{TT}
\end{array} \right)$ be a symmetric matrix, the $(i,j)$th entry of the Schur complement $\textsc{Sc}[\AA]_T \neq 0$ only if there exists a sequence of indices $k_1, \ldots, k_l \in S$ such that all $\AA_{ik_1}, \AA_{k_1k_2}, \ldots, \AA_{k_{l-1}k_l}, \AA_{k_l, j}$ are nonzero.
\label{fac:SchurPath}
\end{fact}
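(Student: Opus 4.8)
The plan is to prove this by induction on $|S|$, the number of indices eliminated, reducing the general Schur complement to a composition of single-vertex eliminations and tracking how the nonzero pattern propagates at each step. First I would recall the standard fact (which follows from the block formula for inverses, or can be taken as a separate lemma) that Schur complements compose: if we partition $S = S_1 \cup S_2$, then $\schurto{\AA}{T} = \schurto{(\schurto{\AA}{S_2 \cup T})}{T}$. This lets me reduce to the base case $|S| = 1$. So the core of the argument is the single-elimination step: when $S = \{s\}$, the formula gives $(\schurto{\AA}{T})_{ij} = \AA_{ij} - \AA_{is}\AA_{sj}/\AA_{ss}$, so this entry can become nonzero only if $\AA_{ij} \neq 0$ already, or if both $\AA_{is} \neq 0$ and $\AA_{sj} \neq 0$ — exactly the "path through $s$" condition.

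Next I would set up the induction. Order the eliminated indices $S = \{s_1, \dots, s_m\}$ arbitrarily, and let $\AA^{(0)} = \AA$, $\AA^{(t)} = \schurto{\AA^{(t-1)}}{V \setminus \{s_1, \dots, s_t\}}$ be the matrix after eliminating $s_1, \dots, s_t$; by the composition property $\AA^{(m)} = \schurto{\AA}{T}$. The inductive claim is: for every $t$ and every pair of surviving indices $i, j$, if $\AA^{(t)}_{ij} \neq 0$ then there is a path in the original matrix $\AA$ from $i$ to $j$ whose interior vertices all lie in $\{s_1, \dots, s_t\}$ (where "path" means a sequence of indices with consecutive entries of $\AA$ nonzero). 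The base case $t = 0$ is immediate since $\AA^{(0)} = \AA$. For the inductive step, by the single-elimination analysis applied to going from $\AA^{(t-1)}$ to $\AA^{(t)}$: either $\AA^{(t-1)}_{ij} \neq 0$, in which case the inductive hypothesis already gives a path with interior in $\{s_1,\dots,s_{t-1}\} \subseteq \{s_1,\dots,s_t\}$; or $\AA^{(t-1)}_{i s_t} \neq 0$ and $\AA^{(t-1)}_{s_t j} \neq 0$, in which case the inductive hypothesis gives a path from $i$ to $s_t$ and a path from $s_t$ to $j$, each with interior vertices in $\{s_1, \dots, s_{t-1}\}$; concatenating them at $s_t$ yields a path from $i$ to $j$ with interior vertices in $\{s_1, \dots, s_t\}$. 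Taking $t = m$ gives the statement.

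The one point needing a little care — and the only real obstacle — is justifying the composition property of Schur complements and making sure $\AA^{(t-1)}_{s_t s_t}$ is invertible so that each single-elimination step is well-defined; in the symmetric setting one typically assumes the relevant principal minors are nonsingular (or works with a symmetric matrix where this holds, e.g.\ PSD with the eliminated block positive definite), and the paper's later use is in exactly such a setting. Since the composition identity for Schur complements is itself recorded among the standard facts (see Appendix~\ref{sec:appendixSchur}), I would simply cite it and keep the proof to the two-paragraph induction above. Everything else is a routine expansion of the single-vertex elimination formula.
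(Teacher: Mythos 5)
The paper offers no proof of this fact --- it is imported verbatim as Lemma 4 of Rose, Tarjan, and Lueker --- so there is nothing in-paper to compare your argument against; I assess it on its own. Your route (reduce to single-pivot eliminations via the composition property $\schurto{\AA}{T} = \schurto{\schurto{\AA}{S_2\cup T}}{T}$, note that a single elimination creates a nonzero at $(i,j)$ only through the term $\AA_{is}\AA_{sj}/\AA_{ss}$, and splice paths inductively) is the standard proof and is sound whenever the sequential elimination is well defined. Two caveats. First, the composition identity is not actually recorded in Appendix~\ref{sec:appendixSchur}, which proves only Fact~\ref{fac:Schur}; you would have to state and verify it, though that is routine. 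Second, the pivot issue you flag is a genuine obstruction at the stated level of generality: for a symmetric matrix with, say, $\AA_{SS}=\left(\begin{smallmatrix}0&1\\1&0\end{smallmatrix}\right)$, \emph{every} ordering of $S$ has a zero first pivot, so the chain $\AA^{(0)},\AA^{(1)},\ldots$ you induct on does not exist even though $\AA_{SS}$ is invertible and the fact itself remains true. (For the paper's PSD applications a zero diagonal entry forces a zero row, so eliminating such an index is vacuous and your induction survives with the pseudo-inverse convention.)

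If you want a proof matching the generality of the statement, argue directly from $\schurto{\AA}{T}=\AA_{TT}-\AA_{ST}^{\top}\AA_{SS}^{-1}\AA_{ST}$: by Cayley--Hamilton, $\AA_{SS}^{-1}$ is a polynomial in $\AA_{SS}$, and $(\AA_{SS}^{m})_{kk'}\neq 0$ requires a walk of length $m$ from $k$ to $k'$ inside $S$, so $(\AA_{SS}^{-1})_{kk'}\neq 0$ only if $k$ and $k'$ are joined by a path within $S$. Expanding $(\schurto{\AA}{T})_{ij}=\AA_{ij}-\sum_{k,k'\in S}\AA_{ik}\left(\AA_{SS}^{-1}\right)_{kk'}\AA_{k'j}$ then yields the claim with no ordering, no composition lemma, and no pivot hypotheses.
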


\begin{fact}
Let $\AA$ be a symmetric PSD matrix and $\schurto{\AA}{T}$ be its Schur complement.
\begin{enumerate}
\item $\schurto{\AA}{T}$ is a symmetric PSD matrix.
\item $\lambda_{\max}(\schurto{\AA}{T}) \le \lambda_{\max}(\AA)$.
\end{enumerate}
\label{fac:Schur}
\end{fact}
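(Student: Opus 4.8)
The plan is to deduce both parts from the energy-minimization (variational) characterization of the Schur complement. Writing $[\xx;\yy]$ for the vector whose $S$-block is $\xx$ and whose $T$-block is $\yy$, the identity I would establish first is that for every $\yy$,
\[
\yy^\top \schurto{\AA}{T} \yy \;=\; \min_{\xx}\; [\xx;\yy]^\top \AA\, [\xx;\yy] \;=\; \min_{\xx}\Bigl( \xx^\top \AA_{SS}\xx + 2\xx^\top\AA_{ST}\yy + \yy^\top\AA_{TT}\yy \Bigr),
\]
where the minimum ranges over all vectors $\xx$ supported on $S$. I would prove this by completing the square: the right-hand quadratic in $\xx$ is minimized at $\xx^\star = -\AA_{SS}^{-1}\AA_{ST}\yy$ (with $\AA_{SS}^{-1}$ read as $\AA_{SS}^{\pinv}$ if $\AA_{SS}$ is singular), and substituting yields exactly $\yy^\top(\AA_{TT} - \AA_{ST}^\top\AA_{SS}^{-1}\AA_{ST})\yy = \yy^\top\schurto{\AA}{T}\yy$. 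The one delicate point is the singular case: I would first observe that $\AA \pgeq 0$ forces $\im(\AA_{ST}) \subseteq \im(\AA_{SS})$, since $\AA_{SS}\zz = \zero$ implies $[\zz;\zero]^\top\AA[\zz;\zero] = 0$, hence $[\zz;\zero] \in \nulls(\AA)$ (the null space of a PSD matrix is precisely its set of zero-quadratic-form vectors), hence $\AA_{ST}^\top\zz = \zero$; taking orthogonal complements gives the containment, which is exactly what makes $\xx^\star$ a genuine minimizer.

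Given this identity, part (1) is immediate. Symmetry of $\schurto{\AA}{T}$ follows from symmetry of $\AA_{TT}$ and of $\AA_{ST}^\top\AA_{SS}^{-1}\AA_{ST}$ (the (pseudo)inverse of a symmetric matrix is symmetric). For $\schurto{\AA}{T} \pgeq 0$: the identity expresses $\yy^\top\schurto{\AA}{T}\yy$ as a minimum of values $[\xx;\yy]^\top\AA[\xx;\yy]$, each of which is $\ge 0$ because $\AA$ is PSD, so the minimum is $\ge 0$ for every $\yy$.

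For part (2), I would bound the minimum from above by the feasible choice $\xx = \zero$:
\[
\yy^\top\schurto{\AA}{T}\yy \;=\; \min_{\xx}[\xx;\yy]^\top\AA[\xx;\yy] \;\le\; [\zero;\yy]^\top\AA[\zero;\yy] \;=\; \yy^\top\AA_{TT}\yy \;\le\; \lambda_{\max}(\AA)\,\yy^\top\yy,
\]
where the last step uses that $\AA_{TT}$ is a principal submatrix of $\AA$, so $\yy^\top\AA_{TT}\yy = [\zero;\yy]^\top\AA[\zero;\yy] \le \lambda_{\max}(\AA)\,\|[\zero;\yy]\|_2^2 = \lambda_{\max}(\AA)\,\yy^\top\yy$. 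Dividing by $\yy^\top\yy$ and taking the supremum over $\yy \neq \zero$ gives $\lambda_{\max}(\schurto{\AA}{T}) \le \lambda_{\max}(\AA)$. Equivalently, since the subtracted term $\AA_{ST}^\top\AA_{SS}^{-1}\AA_{ST}$ is PSD (by the part-(1) argument), $\schurto{\AA}{T} \pleq \AA_{TT}$, and one concludes by Cauchy eigenvalue interlacing for principal submatrices.

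I do not expect a genuine obstacle here, as this is a classical fact; the only thing requiring care is the range-containment argument above, which legitimizes the variational identity when $\AA_{SS}$ is not invertible. As an alternative route for part (1) one can instead use the block factorization $\AA = \LL\,\diag(\AA_{SS},\schurto{\AA}{T})\,\LL^\top$ with $\LL$ unit lower triangular, and invoke that congruence preserves positive semidefiniteness; I would still prefer the variational argument since it delivers part (2) with essentially no extra work.
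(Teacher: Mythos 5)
Your proposal is correct. For part (1) it is essentially the paper's argument: the paper's Fact~\ref{clm:schur} is exactly your variational identity $\yy^\top \schurto{\AA}{T}\yy = \min_{\xx} [\xx;\yy]^\top\AA[\xx;\yy]$, from which positive semidefiniteness is immediate. You are in fact slightly more careful than the paper on the degenerate case: the paper invokes the pseudo-inverse without checking that $\xx^\star = -\AA_{SS}^{\pinv}\AA_{ST}\yy$ actually attains the minimum when $\AA_{SS}$ is singular, whereas your observation that $\AA \pgeq 0$ forces $\im(\AA_{ST}) \subseteq \im(\AA_{SS})$ is the right way to close that gap.

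For part (2) you take a genuinely different route. The paper writes
\[
\AA = \left(\begin{array}{cc} \zero & \zero \\ \zero & \schurto{\AA}{T}\end{array}\right)
+ \left(\begin{array}{c}\AA_{SS}^{1/2} \\ \AA_{ST}^\top\AA_{SS}^{-1/2}\end{array}\right)^\top
\left(\begin{array}{c}\AA_{SS}^{1/2} \\ \AA_{ST}^\top\AA_{SS}^{-1/2}\end{array}\right)
\]
and appeals to a special case of Weyl's inequality to conclude $\lambda_{\max}(\schurto{\AA}{T}) \le \lambda_{\max}(\AA)$. You instead reuse the variational identity: plugging in the feasible point $\xx=\zero$ gives $\schurto{\AA}{T} \pleq \AA_{TT}$, and the principal-submatrix bound $\yy^\top\AA_{TT}\yy \le \lambda_{\max}(\AA)\norm{\yy}_2^2$ finishes it. Your argument is more elementary and more economical — it needs no Weyl inequality and no explicit factorization, and it extracts both parts of the fact from the single identity already proved — while the paper's decomposition has the (unused here) advantage of controlling all eigenvalues simultaneously, not just the largest. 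Both are valid; there is no gap in either.
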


We prove Fact~\ref{fac:Schur} in Appendix~\ref{sec:appendixSchur}.

\subsection{Solving Linear Systems}
\label{sec:prelim:linsolve}

Our algorithm combines two of the most important tools
for solving linear systems: Nested dissection and preconditioning.
Below, we give a brief introduction to some of the central results on
these techniques.

Classic results due to Lipton, Rose, and Tarjan~\cite{liptonRT79}, and
Miller and Thurston~\cite{millerT90} combine to show that linear
systems arising from 
simple tetrahedral meshes (see
Definition~\ref{def:wellshapedmesh}) 
can be solved in $O(n^2)$ time.
These results concern linear equations in an $n\times n$ matrix $\AA$ where
the indices $\{1,\ldots, n\}$ can be embedded as points $\{\pp_{1},
\ldots, \pp_{n}\}$ that form the vertices of an explicitly given, simple
tetrahedral mesh, and $\AA_{ij}$ is non-zero only if the vertices $i$
and $j$ share an edge in the tetrahedral mesh.

\begin{theorem}[Nested dissection~\cite{millerT90}]
Let $\AA \in \mathbb{R}^{n \times n}$ be a symmetric matrix defined on a simple tetrahedral mesh.
A Cholesky factorization $\AA = \PP \LL \LL^\top \PP^\top$ can be computed in time $O(n^{2})$, in which $\PP$ is a permutation matrix and $\LL$ is a lower triangular matrix with $O(n^{4/3})$ nonzero entries.
As a result, a linear system in $\AA$ can be solved in time $O(n^{2})$ by Gaussian elimination.
\label{thm:nestedDissection}
\end{theorem}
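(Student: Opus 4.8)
The plan is to build a nested-dissection elimination ordering from a recursive geometric separator decomposition of the mesh, and then bound both the fill-in of the Cholesky factor $\LL$ and the arithmetic cost of Gaussian elimination by summing a geometric series over the depth of the recursion. First I would invoke the separator theorem of Miller and Thurston for simple tetrahedral meshes (Definition~\ref{def:wellshapedmesh}): since every tetrahedron has bounded aspect ratio, the balls naturally associated with the mesh vertices form a bounded-ply neighborhood system, so any subset of $m$ mesh vertices admits a balanced vertex separator of size $O(m^{2/3})$ — one whose removal leaves components of at most $\alpha m$ vertices for a fixed $\alpha < 1$. I also use the standard fact that a simple mesh on $n$ vertices has only $O(n)$ edges, so that $\AA$ has $O(n)$ nonzeros and ``boundary'' sets are well-defined.

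Next I would apply this separator theorem recursively to produce the ordering. Each region $R$ in the recursion (a vertex subset, initially all of the index set) carries its boundary $\partial R$, the mesh-neighbors of $R$ lying outside $R$; split $R$ by a separator $S_R$, recurse on the components $R_1,\dots,R_t$ of $R \setminus S_R$, and output the order $[\mathrm{order}(R_1),\dots,\mathrm{order}(R_t),S_R]$, placing separator vertices last. Let $\PP$ realize this permutation. The delicate point is to run the recursion so that $|\partial R| = O(|R|^{2/3})$ holds for \emph{every} region: I would do this by choosing each $S_R$ to be balanced not only for $R$ but simultaneously for the inherited set $\partial R$, which the geometric (sphere) separator construction supports, so that the invariant propagates by induction (one checks $\partial R_i \subseteq S_R \cup \partial R$). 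This is exactly the step that uses the geometry rather than a bare combinatorial separator bound — without boundary control the same scheme only yields $O(n^{5/3})$ fill and $O(n^{7/3})$ work.

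Then I would bound fill and work as follows. By the fill-path characterization of Gaussian elimination (Fact~\ref{fac:SchurPath}), once the interior of $R$ has been eliminated every surviving nonzero touching a vertex of $S_R$ lies inside $S_R \cup \partial R$; hence eliminating $S_R$ creates at most $|S_R|(|S_R| + |\partial R|) = O(|R|^{4/3})$ new nonzeros and costs $O\!\left(|S_R|(|S_R| + |\partial R|)^2\right) = O(|R|^2)$ operations. The regions at a fixed recursion depth $\ell$ are disjoint subsets of the index set, each of size at most $\alpha^\ell n$, so $\sum_{R \text{ at depth }\ell} |R|^{4/3} \le (\alpha^\ell n)^{1/3}\sum_R |R| \le \alpha^{\ell/3} n^{4/3}$ and likewise $\sum_{R \text{ at depth }\ell} |R|^2 \le (\alpha^\ell n)\sum_R |R| \le \alpha^\ell n^2$; summing these geometric series over $\ell = 0,\dots,O(\log n)$ gives $\nnz(\LL) = O(n^{4/3})$ and total factorization time $O(n^2)$. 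Finally, with $\AA = \PP\LL\LL^\top\PP^\top$ in hand a linear system is solved by a permutation plus forward/backward substitution against $\LL$ in time $O(\nnz(\LL)) = O(n^{4/3})$, dominated by the factorization, giving the claimed $O(n^2)$. (If $\AA$ is indefinite or singular one uses an $\LL\DD\LL^\top$-type factorization with the same sparsity pattern; the ordering, the fill bound, and the running time are unaffected.)

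I expect the main obstacle to be the boundary invariant of the second paragraph: proving that the recursion can be arranged so that each region's inherited boundary stays of size $O(|R|^{2/3})$. Everything else — the fill-path argument and the level-by-level geometric-series accounting — is routine once that is in place, but it is precisely this geometric strengthening of the separator theorem (balancing a region and its current boundary at once) that upgrades the naive $O(n^{5/3})$ fill estimate to $O(n^{4/3})$, and hence the running time from $O(n^{7/3})$ to the stated $O(n^2)$.
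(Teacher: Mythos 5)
The paper does not actually prove this theorem---it is imported as a black box from \cite{millerT90} (together with the elimination analysis of \cite{liptonRT79})---and your proposal is a correct reconstruction of exactly that standard argument: Miller--Thurston $O(m^{2/3})$ geometric separators for bounded-aspect-ratio meshes plugged into the Lipton--Rose--Tarjan generalized nested dissection accounting, with the fill-path characterization giving $O(|R|^{4/3})$ fill and $O(|R|^{2})$ work per separator and a geometric series over levels. You also correctly isolate the one genuinely delicate step, namely maintaining the invariant $\abs{\partial R} = O(\abs{R}^{2/3})$ by choosing each separator to split the region and its inherited boundary simultaneously; this, the disjointness of same-level regions used in the summation, and the level-by-level bookkeeping are precisely the ingredients the paper itself reuses when it proves its own variant, Lemma~\ref{lem:flexibleND}.
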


Theorem~\ref{thm:nestedDissection} can be extended to a block matrix $\AA \in \mathbb{R}^{cn \times cn}$ where $c$ is a constant positive  integer.
Each vertex of the underlying graph corresponds to $c$ indices of $\AA$.
In addition, the block corresponding to the column
indices for vertex $i$ and the row indices for $j$ should be non-zero only if the vertices $i$
and $j$ share an edge in the tetrahedral mesh, or if $i = j$, i.e. when the block is on the diagonal.

The Nested Dissection algorithm relies on invoking \emph{separators}
recursively.
A separator is a set of indices $S$ such that the remaining indices $[n] \setminus S$
can be partitioned into two sets $B$ and $C$ such that every entry with $i \in B$ and $j \in C$ has $\AA_{ij} = 0$.
Furthermore, we guarantee that the partition is roughly balanced, for example, each of $B$ and $C$ contains no more than $\frac{3}{4} \cdot n$ indices.
Nested Dissection recursively repeats the partitioning process on the union
of each subset and the separator itself, that is, $B \cup S$ and $C \cup S$.
Given such a recursive partition scheme, we reorder the indices of the matrix so that the indices in the separator $S$ are eliminated \emph{last}, and we then order the indices in $B$ and $C$ recursively  in a similar way.
We perform Gaussian elimination on the matrix according to this ordering, which only introduces a small fill-in size and few multiplication counts.
This approach also works for eliminating a subset of
the variables, resulting in a Schur complement on the rest.

Both the running time and representation cost of 
nested dissection algorithms are bottlenecked by the costs
of the top-level separators.
In Algorithm~\ref{alg:TrussSolver} $\textsc{TrussSolver}$,
we will utilize improved running time bounds for nested dissection when
better separators exist.
The following lemma characterizes the performance of Nested Dissection
given better top-level separators.

\begin{lemma}
Suppose we have a recursive separator decomposition
of a 
simplicial complex with $n$ bounded aspect ratio
tetrahedrons such that:
\begin{enumerate}
\item the number of leaves, and hence total number
of recursive calls, is at most $n^{\alpha}$.
\item each leaf (bottom layer partition) has at most
$n^{\beta}$ tetrahedrons.
\item each top separator has size at most $n^{\gamma}$.
\end{enumerate}
Then we can find an exact Cholesky factorization of
the associated stiffness matrix in time
$O(n^{\alpha + 2 \beta}   + n^{\alpha + 3 \gamma})$,
and the total resulting fill-in is
$O(n^{\alpha + \frac{4}{3} \beta}  + n^{\alpha + 2 \gamma})$.
\label{lem:flexibleND}
\end{lemma}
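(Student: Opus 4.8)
The plan is to read off an elimination order for the $3n\times 3n$ stiffness matrix $\AA$ from the recursive separator decomposition, and then to charge the cost of Gaussian elimination in this order to two disjoint groups of pivots: those interior to leaf pieces, and those lying on separators. Since a stiffness matrix has only constant‑size ($3\times 3$) blocks and is supported on edges of the mesh, the block version of Theorem~\ref{thm:nestedDissection} applies verbatim, with constant block size absorbed into the implied constants; so I will speak of eliminating vertices rather than individual indices. For the order I would eliminate, first, all vertices interior to the leaf pieces, processing one leaf at a time and ordering each leaf internally by the nested‑dissection order of Theorem~\ref{thm:nestedDissection} applied to that leaf alone; and, second, the separator vertices, taken from the bottom of the recursion tree upward, the root separator last.

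\textbf{Leaf interiors.} Each leaf is a sub‑complex of the original simple simplicial complex, hence is itself a simple tetrahedral mesh with at most $n^{\beta}$ tetrahedrons and so $O(n^{\beta})$ vertices. Eliminating the interior of one leaf is a partial run of Gaussian elimination in the nested‑dissection order, whose running time and fill‑in are bounded by those of a full run on that leaf; by Theorem~\ref{thm:nestedDissection} this is $O(n^{2\beta})$ time and $O(n^{4\beta/3})$ fill‑in. Summing over the at most $n^{\alpha}$ leaves yields $O(n^{\alpha+2\beta})$ time and $O(n^{\alpha+\frac{4}{3}\beta})$ fill‑in, which are exactly the first terms in the two claimed bounds.

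\textbf{Separators.} After all leaf interiors are eliminated, Fact~\ref{fac:SchurPath} shows the remaining Schur complement is nonzero between two separator vertices only if they are joined by a path through already‑eliminated (interior) vertices, i.e.\ only if they lie on the boundary of a common leaf; in particular the remaining system decomposes along the regions of the decomposition. When we eliminate a separator $S_v$ with $|S_v|\le n^{\gamma}$ in bottom‑up order, the only active vertices it can interact with are those of $S_v$ and those on the boundary $\partial R_v$ of its region; telescoping separator sizes up the (balanced) recursion tree bounds $|\partial R_v| = O(n^{\gamma})$. Thus eliminating $S_v$ is dense elimination of $O(n^{\gamma})$ pivots inside a submatrix of size $O(n^{\gamma})$, costing $O(n^{3\gamma})$ time and producing $O(n^{2\gamma})$ fill‑in (equivalently: each column of $\LL$ indexed by $S_v$ has $O(n^{\gamma})$ nonzeros). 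Summing over the at most $n^{\alpha}$ separators gives $O(n^{\alpha+3\gamma})$ time and $O(n^{\alpha+2\gamma})$ fill‑in, the second terms; combining the two groups proves the lemma, and $\AA=\PP\LL\LL^{\top}\PP^{\top}$ with $\PP$ the permutation realizing this order.

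\textbf{Main obstacle.} The leaf bookkeeping and the summations are routine; the delicate point is the separator step, namely arguing that eliminating $S_v$ never "reaches across" the decomposition and only touches a submatrix of size $O(n^{\gamma})$. This needs Fact~\ref{fac:SchurPath} to confine fill‑in within regions, together with a quantitative structural claim that every region boundary is no larger than a top separator, so that the per‑separator cost is $O(n^{3\gamma})$ rather than growing with the recursion depth or with the total number $n^{\alpha+\gamma}$ of separator vertices; I would isolate this as the one place where the precise properties of the decomposition constructed later in the paper are invoked.
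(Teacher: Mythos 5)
Your two-phase skeleton (eliminate leaf interiors first, then the top separators, and sum $n^{\alpha}$ copies of each per-piece cost) is the same as the paper's, and your final arithmetic matches. But both phases lean on claims that do not follow from the three hypotheses as stated, and the second one is exactly the point you flagged but did not close.

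For the leaf phase, "a partial run \ldots bounded by those of a full run on that leaf" is not right as justification: when you eliminate the interior of a leaf you also create fill and perform updates against the $O(n^{\gamma})$ top-separator vertices on that leaf's boundary, which are not eliminated until much later, so the cost is not controlled by Theorem~\ref{thm:nestedDissection} applied to "that leaf alone." The paper handles this with a generalized nested-dissection recursion $g(n,s)$ (cost of a partial elimination with $s$ designated boundary vertices forced last), which contributes extra per-leaf terms of $O(n^{2\beta/3+\gamma})$ fill and $O(n^{\omega\gamma}\log n)$ work; these are then absorbed into $n^{4\beta/3}+n^{2\gamma}$ and $n^{3\gamma}$ by AM--GM on the exponents. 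For the separator phase, your claim $\abs{\partial R_v}=O(n^{\gamma})$ cannot be obtained by "telescoping separator sizes up the recursion tree": here all top separators carry the same bound $n^{\gamma}$ (they do not shrink with depth), so in a generic recursive decomposition a region's boundary is only $O(n^{\gamma}\cdot\mathrm{depth})$, and bottom-up elimination would lose at least polylogarithmic factors. The paper avoids this entirely by exploiting that its top separators are disjoint parallel plane slices, so that after the leaf interiors are gone the remaining graph is a \emph{layered graph} in which each region touches at most two separators; it then eliminates the layers by a one-dimensional middle-layer recursion (Algorithm~\ref{alg:layerGraph}, Claim~\ref{clm:layeredGraphND}) to get $O(s^2 l)$ fill and $O(s^{\omega}l)$ work. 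You need to either import that layered structure or add a hypothesis bounding each region's boundary; as written, the separator step is the missing piece of the proof rather than a routine consequence of Fact~\ref{fac:SchurPath}.
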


We give a proof of the 
above lemma in Section~\ref{sec:flexibleNDProof}, which is an adaption of the analysis in~\cite{liptonRT79}
We remark that the algorithmic realization of this can be viewed
as utilizing the nested dissection algorithm~\ref{thm:nestedDissection}
to complete this structure into a full separator tree.

Last but not  least, we state the following theorem for preconditioned conjugate gradient, which will be used in bounding the running time of our algorithm.

\begin{theorem}[Preconditioned conjugate gradient~\cite{axelsson85}]
Let $\AA, \BB \in \mathbb{R}^{n \times n}$ be two symmetric positive semidefinite matrices and let $\bb \in \mathbb{R}^n$.
Each iteration of the preconditioned conjugate gradient multiplies one vector by $\AA$, solves one linear system in $\BB$, and performs a constant number of vector additions. 
For any $\epsilon > 0$, the algorithm outputs 
an $\xx$ satisfying
$\norm{\AA \xx - \bb}_2 \le \eps \norm{\bb}_2$ in $O(\sqrt{\kappa(\AA, \BB)} \log(1/\epsilon))$ such iterations.
\label{thm:pcg}
\end{theorem}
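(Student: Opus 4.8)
The plan is to reduce preconditioned conjugate gradient (PCG) to ordinary conjugate gradient (CG) applied to a symmetrically preconditioned operator, and then invoke the classical polynomial-minimization analysis of CG. First I would handle the fact that $\AA$ and $\BB$ are only PSD: I assume (as is implicit in the statement and holds in our applications) that $\ker\BB \subseteq \ker\AA$ and $\bb \in \im(\AA)$, and I restrict attention to $V \defeq (\ker\BB)^{\orthog}$, on which $\BB$ is positive definite and its (pseudo)inverse square root $\BB^{-1/2}$ is well-defined and invertible. Writing $\yy = \BB^{1/2}\xx$, $\cc = \BB^{-1/2}\bb$, and $\MM = \BB^{-1/2}\AA\BB^{-1/2}$, a line-by-line comparison of the update rules shows that PCG on $(\AA,\BB,\bb)$ is exactly CG on $(\MM,\cc)$ under this change of variables; moreover the eigenvalues of $\MM$ restricted to $V$ are precisely the generalized eigenvalues of the pencil $(\AA,\BB)$, so its extreme eigenvalues are the $\lambda_{\min},\lambda_{\max}$ from the definition of $\kappa(\AA,\BB)$. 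Reading off the PCG recurrence also gives the per-iteration cost claimed: one multiplication by $\AA$, one solve in $\BB$, and $O(1)$ vector operations.

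Next I would invoke the optimality property of CG: after $k$ steps, $\yy_k$ minimizes $\norm{\yy-\yy^*}_{\MM}$ over the Krylov affine space $\yy_0 + \mathrm{span}\{\rr_0,\MM\rr_0,\dots,\MM^{k-1}\rr_0\}$, which is equivalent to $\norm{\yy_k-\yy^*}_{\MM} = \min\{\norm{p(\MM)(\yy_0-\yy^*)}_{\MM} : \deg p\le k,\ p(0)=1\}$. Since $\norm{\yy_k-\yy^*}_{\MM}^2 = (\xx_k-\xx^*)^{\trp}\AA(\xx_k-\xx^*) = \norm{\xx_k-\xx^*}_{\AA}^2$, this says that PCG minimizes the $\AA$-norm error over the corresponding preconditioned Krylov space, so diagonalizing $\MM$ on $V$ yields $\norm{\xx_k-\xx^*}_{\AA} \le \big(\min_p \max_{\lambda\in[\lambda_{\min},\lambda_{\max}]}\abs{p(\lambda)}\big)\norm{\xx_0-\xx^*}_{\AA}$, the minimum taken over degree-$\le k$ polynomials with $p(0)=1$. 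Taking $p$ to be a suitably shifted and rescaled Chebyshev polynomial of the first kind bounds the bracketed quantity by $2\big(\tfrac{\sqrt{\kappa}-1}{\sqrt{\kappa}+1}\big)^{k} \le 2\exp(-2k/\sqrt{\kappa})$ with $\kappa = \kappa(\AA,\BB)$, so the relative $\AA$-norm error drops below any $\eps'$ once $k = O(\sqrt{\kappa}\log(1/\eps'))$.

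Finally I would convert this into the stated residual guarantee. Initializing at $\xx_0 = 0$ so that $\norm{\xx_0-\xx^*}_{\AA} = \norm{\xx^*}_{\AA}$, splitting $\xx_k-\xx^*$ and $\xx^*$ into their $\im(\AA)$ and $\ker(\AA)$ components, one gets $\norm{\AA\xx_k-\bb}_2 = \norm{\AA(\xx_k-\xx^*)}_2 \le \sqrt{\lambda_{\max}(\AA)}\,\norm{\xx_k-\xx^*}_{\AA}$ and $\norm{\bb}_2 = \norm{\AA\xx^*}_2 \ge \sqrt{\lambda_{\min}(\AA)}\,\norm{\xx^*}_{\AA}$, hence $\norm{\AA\xx_k-\bb}_2 \le \sqrt{\kappa(\AA)}\,\eps'\,\norm{\bb}_2$. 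Choosing $\eps' = \eps/\sqrt{\kappa(\AA)}$ gives the desired accuracy after $k = O(\sqrt{\kappa(\AA,\BB)}\,(\log(1/\eps)+\log\kappa(\AA)))$ iterations, and the $\log\kappa(\AA)$ overhead is absorbed into $\log(1/\eps)$ under the standing assumption that condition numbers are polynomially bounded in $n$, yielding the stated $O(\sqrt{\kappa(\AA,\BB)}\log(1/\eps))$ bound.

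The main obstacle is one of care rather than depth: keeping the PSD (not positive definite) setting under control — pinning down exactly which null-space compatibility hypotheses make PCG well-posed and convergent, and ensuring the reduction to CG on $\MM$ respects $V$ — and then bridging the gap between the $\AA$-norm error bound that CG natively delivers and the $2$-norm residual bound appearing in the statement. Both are handled by the reductions above, with the residual conversion contributing only a logarithmic overhead.
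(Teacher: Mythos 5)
The paper does not actually prove this statement: Theorem~\ref{thm:pcg} is imported as a black box from Axelsson's survey~\cite{axelsson85}, so there is no in-paper proof to compare against. Your argument is the standard textbook derivation and is essentially correct: the change of variables $\yy=\BB^{1/2}\xx$ reducing PCG to CG on $\BB^{-1/2}\AA\BB^{-1/2}$, the Krylov-optimality/Chebyshev bound giving error decay $2\bigl(\tfrac{\sqrt{\kappa}-1}{\sqrt{\kappa}+1}\bigr)^k$ in the $\AA$-norm, and the per-iteration cost accounting are all right. Two small points are worth making precise. First, on the null spaces: you assume $\ker\BB\subseteq\ker\AA$, but note that finiteness of $\kappa(\AA,\BB)$ as defined in the paper (i.e.\ $\lambda_{\min}\BB\pleq\AA\pleq\lambda_{\max}\BB$ with $\lambda_{\min}>0$) already forces $\ker\AA=\ker\BB$, so no extra hypothesis is needed beyond the theorem being non-vacuous; one still needs $\bb\in\im(\AA)$ for the residual guarantee to be achievable at all. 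Second, and this is the only genuine gap between what CG natively delivers and what the theorem asserts: converting the relative $\AA$-norm bound into the stated $2$-norm residual bound costs a factor $\sqrt{\kappa(\AA)}$, hence an additive $O(\sqrt{\kappa(\AA,\BB)}\log\kappa(\AA))$ in the iteration count. You flag this and absorb it by assuming $\kappa(\AA)=\poly(n)$ and, implicitly, $\eps\le 1/\poly(n)$; in this paper that assumption is justified (Lemma~\ref{lem:3dMinEig} gives $\kappa(\AA)=\poly(n)$ for the trusses considered), but as a free-standing proof of the theorem for arbitrary PSD $\AA$ the bound should really read $O(\sqrt{\kappa(\AA,\BB)}(\log(1/\eps)+\log\kappa(\AA)))$. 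With that caveat stated, your proof is a valid and self-contained justification of the cited result.
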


We remark that while there are settings where the convergence
of preconditioned conjugate gradient is numerically unstable,
the eigenvalue-based bound that we utilize here is stable
once the solves involving $\BB$ have polynomially small errors.

\subsection{Truss Stiffness Matrices}

We extend the definition of 2-dimensional truss stiffness matrices from~\cite{daitchS07} (see Definition 2.1 and 2.2) to 3 dimensions.

\begin{definition}[3-dimensional truss]
\emph{3-dimensional truss} $\mathcal{T} = \langle V, \{ \pp_i \}_{i \in V}, T, E, \gamma \rangle$ is given by
\begin{itemize}
\item  A set of $n$ vertices $V$ embedded at distinct points $\pp_1, \ldots,
\pp_n \in \mathbb{R}^3$
\item A mesh (i.e. set) of tetrahedrons $T = \{t_1, t_2, \ldots\}$,
  each specified in terms of four vertices,
  i.e. we identify tetrahedron $t_i$ with both four vertices
  $\{a_{i}, b_{i}, c_{i}, d_{i} \} \subseteq V$ and the convex hull of $\pp_{a_{i}}, \pp_{b_{i}},
    \pp_{c_{i}}, \pp_{d_{i}}$.
 \item A set of edges $E$ which is exactly the set of pairs of
   vertices that appear in some tetrahedron together. 
Each edge $e = (i,j) \in E$ represents a straight idealized bar
between vertex points $\pp_i$ and $\pp_j$.
 \item A function $\gamma: E \to \mathbb{R}_+ $, which assigns a
   stiffness coefficient $\gamma(e)$ to each edge $e$. 
The stiffness coefficient represents the stiffness of the idealized
bar corresponding to edge $e$.
\end{itemize}
\end{definition}

\begin{definition}[Truss stiffness matrix]
Let $\calT =\langle V, \{ \pp_i \}_{i \in V}, T, E, \gamma \rangle$ be a 3-dimensional truss.
For each edge $e = (i,j) \in E$, we define an edge vector $\bb^{(e)} \in \mathbb{R}^{3n}$
with 6 nonzero entries:
\[
\bb^{(e)}_{3i-2:3i} = - \bb^{(e)}_{3j-2:3j} = \frac{\pp_i - \pp_j}{\norm{\pp_i - \pp_j}_2}.
\]
The \emph{stiffness matrix} of the truss $\calT$ is defined as
\[
\AA_{\calT} \defeq \sum_{e =(i,j) \in E} \frac{\gamma(e)}{\norm{\pp_i - \pp_j}_2 } \bb^{(e)}\bb^{(e)\top}.
\]
\label{def:truss}
\end{definition}

In general, solving linear systems in truss stiffness matrices can be
as hard as solving linear systems in real matrices~\cite{kyngZ17}.

In this paper, we study 3D trusses with some additional geometric structures. 
These structures enable us to design linear system solvers that run much faster than solvers for general simple tetrahedral meshes.

\begin{definition}
  We say a 3D truss is \emph{\esimple} if its tetrahedral mesh is
  simple and every tetrahedron has bounded edge
  lengths and stiffness coefficients (i.e. both are bounded above and
  below by constants).
\end{definition}

\begin{definition}
  We say a 3D truss is \emph{\trussnice}
  if it is {\esimple}, and its tetrahedral mesh is convex.
\end{definition}

\section{Algorithm Overview}
\label{sec:overview}

In this section, we present our algorithm for solving linear systems
in stiffness matrices of {\esimple} 3D trusses.
Our first main result concerns trusses formed by combining $k$ {\trussnice} trusses, each with constant aspect ratio upper bounded by some
arbitrarily large but fixed constant.

\begin{theorem}
Given an {\esimple} 3-D truss with $n$ vertices, formed from a union of $k$
{\trussnice} trusses each with aspect ratio at most $O(1)$\footnote{A slightly modified analysis extends this result to allow each individual truss has aspect ratio $O(n_i^{1/4})$, where $n_i$ is the number of vertices of the $i$th individual truss.}, and an error parameter $\eps >0$,
there is an algorithm which solves 
a linear system in the corresponding stiffness
matrix up to accuracy $\eps$ in time $O(k^{1/3} n^{5/3} \log (1 / \eps))$.
\label{thm:mainSmallAR}
\end{theorem}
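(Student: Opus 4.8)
Proof plan. The approach is \emph{incomplete nested dissection}: run only the coarsest layer of separator‑based Gaussian elimination exactly, and then solve the remaining system — a Schur complement supported on the separators — iteratively, preconditioning it by the stiffness matrix of a structurally much simpler truss built only on those separators. Writing the total cost as a function of the sizes $n_1,\dots,n_k$ of the individual {\trussnice} sub‑trusses, the bound for general $n_i$ with $\sum_i n_i=n$ follows by convexity from the essentially balanced case, so I will describe the construction for one sub‑truss and a parameter $r$ to be optimized at the end. For a sub‑truss $\calT_i$, use Lemma~\ref{lem:BoundingBox} to enclose it in a bounding box of comparable volume and aspect ratio $O(1)$. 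Because $\calT_i$ is {\esimple} — bounded edge lengths and bounded‑aspect‑ratio tetrahedrons — its vertices have essentially uniform density in the box, so an axis‑parallel grid splits the box into $\Theta(n_i/r)$ cells, each holding $\Theta(r)$ vertices and with a \emph{boundary} — the vertices incident to a cutting plane, plus the tetrahedrons straddling it — of size $O(r^{2/3})$. This $r$‑division has total boundary vertex set $\partial$ with $|\partial|=O(n/r^{1/3})$; the cuts are taken axis‑parallel so that the truss induced on $\partial$ has exactly the structure — bounded‑aspect‑ratio cells meeting along planar interfaces — that the eigenvalue lemma below requires.

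Next, eliminate all vertices strictly interior to grid cells, one cell at a time, by nested dissection inside the cell (Theorem~\ref{thm:nestedDissection}): a cell with $O(r)$ vertices costs $O(r^2)$ and contributes a Schur complement on its $O(r^{2/3})$‑vertex boundary with $O(r^{4/3})$ nonzeros. Over the $O(n/r)$ cells this is $O(nr)$ time, and because cell interiors are disjoint the global Schur complement $\SS\defeq\schurto{\AT}{\partial}$ splits as $\SS=\sum_c\SS_c$ and is stored with $O(nr^{1/3})$ nonzeros; by Fact~\ref{fac:Schur} it is PSD with $\lambda_{\max}(\SS)=O(1)$. Then, per cell, build (Section~\ref{sec:appendixHollow}) a \emph{hollow}, stiffly‑connected {\trussnice} truss on the boundary vertices of that cell — so its stiffness matrix $\BB_c$ has only the rigid‑motion null space — and set $\BB=\sum_c\BB_c$. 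The heart of the argument is the spectral bound of Sections~\ref{sec:minEig}--\ref{sec:proofDS07Main}: $\SS_c$ and $\BB_c$ share this null space, both have $\lambda_{\max}=O(1)$ and $\lambda_{\min}=\Omega(r^{-O(1)})$, so $\kappa(\SS_c,\BB_c)=r^{O(1)}$, and therefore $\kappa(\SS,\BB)=r^{O(1)}$. Establishing this — in particular certifying that the Schur complement of a truss onto these planar cell boundaries is again spectrally comparable to a genuine truss stiffness matrix, which is false for arbitrary trusses~\cite{kyngZ17} — is precisely where the {\trussnice}, bounded‑edge‑length and bounded‑stiffness hypotheses are used, and it extends the two‑dimensional estimates of Daitch--Spielman to three dimensions.

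Finally, solve $\SS\yy=\bb'$ by preconditioned conjugate gradient (Theorem~\ref{thm:pcg}) with preconditioner $\BB$. One iteration multiplies a vector by $\SS$ in $O(nr^{1/3})$ time and solves one system in $\BB$; the latter is done by nested dissection applied to the hollow complex — a honeycomb of $O(n/r)$ cell‑boundary shells together with the inter‑sub‑truss interfaces — taking the shells (size $O(r^{2/3})$) as leaves and planar cross‑sections as top separators, and invoking Lemma~\ref{lem:flexibleND} (this is the content of Section~\ref{sec:NestedDissection}). The iteration count is $O(\sqrt{\kappa(\SS,\BB)}\log(1/\eps))=O(r^{O(1)}\log(1/\eps))$, and a final back‑substitution through the stored per‑cell factorizations lifts the solution on $\partial$ to all of $V$ in $O(nr^{1/3})$ time. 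Summing the interior‑elimination cost $O(nr)$ and (iteration count)$\times$(per‑iteration cost), then optimizing $r$ — where the separator sizes of $\BB$, and in particular the size and gluing pattern of the $k$ inter‑sub‑truss interfaces that $\BB$ must carry, introduce the dependence on $k$ — yields the claimed running time $O(k^{1/3}n^{5/3}\log(1/\eps))$.

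The step I expect to be the main obstacle is the condition‑number bound of Sections~\ref{sec:minEig}--\ref{sec:proofDS07Main}: designing the hollow truss so that it is simultaneously stiffly connected (hence has the right null space) and spectrally sandwiched between $\SS$ and a $\poly(r)$ multiple of $\SS$; proving a polynomial lower bound on $\lambda_{\min}$ of three‑dimensional truss stiffness matrices of bounded size, aspect ratio and stiffness coefficients; and checking that this bound is not destroyed by the partial Gaussian elimination — equivalently, that the Schur complement onto the carefully chosen planar boundaries does not degenerate. Secondary difficulties are the purely geometric claim that an {\esimple} truss grid‑partitions into $\Theta(r)$‑size cells with $O(r^{2/3})$‑size planar boundaries, and the separator analysis needed to apply Lemma~\ref{lem:flexibleND} to the preconditioner.
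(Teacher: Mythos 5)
Your plan is essentially the paper's proof: compute a $(B_i,r_i)$-hollowing of each convex piece, eliminate the interiors, precondition the resulting Schur complement by the boundary truss, factor the preconditioner by nested dissection, and run PCG — with the condition-number bound coming from the $3$-D minimum-eigenvalue lemma applied to each stiffly-connected cell boundary. The only thing standing between your sketch and the stated exponents is making the placeholders concrete: each shell has $O(r^{2/3})$ vertices and diameter $O(r^{1/3})$, so Lemma~\ref{lem:3dMinEig} gives $\lambda_{\min}=\Omega(r^{-2})$ and hence $\kappa(\SS,\BB)=O(r^2)$ (not just $r^{O(1)}$); choosing $r_i=n_i^{1/2}$ and applying Jensen's inequality to $\sum_i n_i r_i^{-1/3}$ then yields a preconditioner of size $O(k^{1/6}n^{5/6})$, plain nested dissection (Theorem~\ref{thm:nestedDissection}) suffices to factor it in $O(k^{1/3}n^{5/3})$ time with $O(k^{2/9}n^{10/9})$ fill-in, and $O(n^{1/2}\log(1/\eps))$ PCG iterations give the claimed total.
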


This theorem is appealing because in many modeling
applications, only large constant aspect ratios are needed for
individual convex parts that are being combined.
In Theorem~\ref{thm:mainSmallAR}, we see that the performance degrades
smoothly towards the $O(n^2)$ running time of Nested Dissection as $k$
approaches $n$.

Our second main result deals with the case when we allow a truss formed
from $k$ {\trussnice} trusses, each of which may have arbitrarily large
aspect ratio.

\begin{theorem}
Given an {\esimple} 3-D truss with $n$ vertices, formed from a union of $k$ {\trussnice} trusses, and an error parameter $\eps > 0$, 
there is an algorithm which solves
a linear system in the corresponding stiffness
matrix up to accuracy $\eps$ in time 
$O(n^{11/6} k^{22/3} \log (1 / \eps))$.
\label{thm:main}
\end{theorem}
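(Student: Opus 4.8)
\medskip
\noindent\textbf{Proof plan.}
The plan is to reuse the \emph{incomplete nested dissection} scheme behind Theorem~\ref{thm:mainSmallAR} (nested dissection glued to support-theoretic preconditioning), after a preprocessing step that replaces the $k$ convex parts by parts whose aspect ratios fall in the range covered by our eigenvalue estimates. Recall the scheme: inside each convex part we compute an $r$-division whose separators are flat axis-aligned cross-sections (the ``carefully chosen boundaries''), and we run a separator-ordered Gaussian elimination that eliminates only the interior vertices of the cells; by Lemma~\ref{lem:flexibleND} this costs about the nested-dissection cost of the individual cells and leaves a partial Cholesky factorization $\AA=\LL\cdot\diag(\II,\SS)\cdot\LL^\top$ in which $\SS$ is supported on the union of the cross-sections and the original inter-part boundaries. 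We solve systems in $\AA$ by preconditioned conjugate gradient (Theorem~\ref{thm:pcg}) with preconditioner $\LL\cdot\diag(\II,\BB)\cdot\LL^\top$, where $\BB$ is the \emph{hollow} truss stiffness matrix carried by exactly those cross-sections and boundaries (Section~\ref{sec:appendixHollow}); a solve against this preconditioner is two triangular solves with $\LL$ plus one nested-dissection solve in $\BB$ on the low-dimensional surface complex the cross-sections form (Section~\ref{sec:NestedDissection}). The \trussnice{} hypothesis is used twice: to choose the cross-sections so that $\BB$ is stiffly-connected and hence has exactly the $6$-dimensional rigid-motion null space, matching that of $\SS$; and so that the key spectral bound of Sections~\ref{sec:minEig}--\ref{sec:proofDS07Main} applies, bounding $\lambda_{\min}(\BB)$ from below and therefore $\kappa(\SS,\BB)$ from above.

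The new ingredient is the preprocessing. For each \trussnice{} part $\calT_i$ with $n_i$ vertices, compute in linear time a bounding box $B_i$ via Lemma~\ref{lem:BoundingBox}; since all tetrahedrons have bounded edge lengths and aspect ratios, $\mathrm{vol}(B_i)=\Theta(n_i)$ and the aspect ratio of $B_i$ equals that of $\calT_i$ up to a constant. Write the side lengths of $B_i$ as $a_i\ge b_i\ge c_i$. We dice $B_i$ by an axis-aligned grid whose cells have side about $c_i^{5/2}$ in the two longer directions (and $c_i$ in the shortest), so that each cell meets $\calT_i$ in a \trussnice{} sub-truss with $m=\Theta(c_i^6)$ vertices and aspect ratio $O(c_i^{3/2})=O(m^{1/4})$ --- exactly the regime allowed by the extension of Theorem~\ref{thm:mainSmallAR} noted in its footnote. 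This replaces $\calT_i$ by $m_i=\Theta(n_i/c_i^{6})$ parts and adds a family of flat cross-sections that we fold into the $r$-division; the truss is now presented as a union of $k'=\sum_i m_i$ \trussnice{} parts of controlled aspect ratio, with $\sum_i n_i=O(n)$ up to lower-order terms from shared boundaries.

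It then remains to run the pipeline above on the $k'$ parts and to optimize the two free parameters: the $r$-division cell size, and the threshold below which a part counts as ``too thin'' and gets diced. The interior-elimination cost and the boundary fill-in are governed by Lemma~\ref{lem:flexibleND}; the preconditioner $\BB$ lives on a two-dimensional surface complex with small separators, so each inner nested-dissection solve is sub-quadratic in the number of boundary vertices; and the number of PCG iterations is $O(\sqrt{\kappa(\SS,\BB)})$, a polynomial in the cell size and in the worst post-dicing aspect ratio $O(n^{1/4})$ furnished by the key lemma. Multiplying the per-iteration cost by the iteration count, adding the one-time elimination cost, and choosing the two parameters to balance the resulting terms yields the stated $O(n^{11/6}k^{22/3}\log(1/\eps))$ running time; it beats the $O(n^2)$ bound of Theorem~\ref{thm:nestedDissection} exactly when $n^{11/6}k^{22/3}\muchless n^2$, i.e. $k\muchless n^{1/44}$.

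The main obstacle is that the preprocessing and the iteration count pull in opposite directions. Dicing a long thin part finely is what keeps the aspect ratio of the pieces --- hence $\kappa(\SS,\BB)$ through the eigenvalue bound of Sections~\ref{sec:minEig}--\ref{sec:proofDS07Main}, hence the PCG iteration count --- under control; but each such cut multiplies the number of parts (in the worst case a rod of $\Theta(n_i)$ unit tetrahedrons must be diced into $\Theta(n_i)$ constant-size parts), and the number of parts feeds into both the elimination cost and, through the size and connectivity of $\BB$, the cost of each inner nested-dissection solve. Getting the exponents right is therefore an extremal optimization over the possible bounding-box shapes $(a_i,b_i,c_i)$, using convexity to keep $\sum_i n_i$ near-linear and to keep the cross-section complex genuinely two-dimensional (so that its separators, and hence the inner solves, stay sub-quadratic). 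This is where the large exponent $22/3$ on $k$ originates, and a sharper treatment of plate-like and rod-like parts is the obvious route to improvement.
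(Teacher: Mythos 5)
Your overall pipeline (hollow, eliminate interiors, precondition the Schur complement by the hollow truss, bound the iteration count by the eigenvalue lemma) matches the paper, but your treatment of the large-aspect-ratio pieces --- the whole point of Theorem~\ref{thm:main} --- has a gap that your own ``main obstacle'' paragraph identifies but does not close. You propose to \emph{dice} each thin part $\calT_i$ into sub-parts of controlled aspect ratio $O(m^{1/4})$ and then rerun the Theorem~\ref{thm:mainSmallAR} machinery on the enlarged collection of $k'$ parts. But for a rod of $\Theta(n_i)$ unit tetrahedrons this forces $k'=\Theta(n_i)$, so in the worst case $k'=\Theta(n)$ and the bound $O(k'^{1/3}n^{5/3}\log(1/\eps))$ collapses to $O(n^2\log(1/\eps))$ --- no better than nested dissection, and nowhere near $O(n^{11/6}k^{22/3}\log(1/\eps))$. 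No choice of dicing scale escapes this: a thin rod's hollowing is essentially the rod itself, so it contributes nothing to the preconditioning gain, and cutting it only multiplies the part count that drives the top-level nested-dissection cost. The ``extremal optimization over bounding-box shapes'' you defer to cannot recover the claimed exponents by this route.

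The paper's actual resolution is structurally different and is the content of Lemmas~\ref{lem:smallSeparatorPlane} and~\ref{lem:kChunkND} (Algorithm~\ref{alg:NDkChunks}): pieces with aspect ratio exceeding $n_i^{c_\alpha}$ are \emph{not} hollowed and \emph{not} subdivided into new parts. Instead one exploits that a thin convex piece admits small plane separators of size $O(n_i^{2/3}\alpha_i^{-1/3})$ orthogonal to (roughly) its long axis. The key step is to pick a single random direction $\dd$ with $1/(10k)\le|\dd^\top\dd_i|\le 1-1/(10k)$ for \emph{all} $k$ bounding-box axes simultaneously, so that one family of $l$ parallel planes orthogonal to $\dd$ cuts every hollowed piece in few tetrahedrons (Lemma~\ref{lem:planeIntersect}, costing a $\cos^{-2}\theta=O(k^2)$ factor) and every un-hollowed thin piece in few tetrahedrons (Lemma~\ref{lem:smallSeparatorPlane}, costing $O(k)$). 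These planes become the top-level separators fed into Lemma~\ref{lem:flexibleND}, with $c_r=c_\alpha=1/3$ and $l=n^{1/6}$; the $k^{22/3}$ arises from these $\poly(k)$ angle losses propagating through the separator sizes into the fill-in and multiplication counts. To fix your write-up you would need to replace the dicing preprocessing with this combined-separator argument (or something equivalent that handles a rod without multiplying the part count).
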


We remark that the geometric assumptions of Theorem~\ref{thm:main}
are in many
ways fairly weak. We need the individual truss tetrahedrons
to have small aspect ratio, but each of the $k$ {\trussnice} trusses may
overall have a wide range of shapes: It can form a ball, a pancake, or
even a very long beam with arbitrarily large aspect ratio.
The dependence on $k$ is fairly bad and has not been carefully optimized, meaning currently that only
about $k \muchless n^{1/44} \approx n^{0.0227} $
{\trussnice} trusses can be combined while still
achieving a speed-up over nested dissection.
Even so, this allows the construction of some shapes with genus up to $n^{1/44}$. 

Fast matrix multiplication can be used in our algorithm, as well as earlier routines.
In accordance with previous works on nested dissection,
and to simplify presentation, we assume $\omega = 3$ (the matrix multiplication constant)
throughout our calculations.
However, in Appendix~\ref{sec:rofl},
 we also give the
$\omega$-dependent bounds.
Assuming $\omega = 2.3728639$ as in~\cite{Legall14},
the bounded aspect ratio case from Theorem~\ref{thm:mainSmallAR} takes time
$O(k^{0.1210452} n^{1.4608641} \log(1 / \epsilon) )$,
while the arbitrary aspect ratio case from Theorem~\ref{thm:main} takes time
$O(k^{5.7115596} n^{1.5175803} \log (1/\eps))$.
In both cases the running times are less than the $O(n^{1.5819093})$
bound obtained by plugging $\omega = 2.3728639$ into 3-D
nested dissection.

Our algorithm pseudocode
is stated
in Algorithm~\ref{alg:TrussSolver}.
Note this algorithm proves both Theorem~\ref{thm:mainSmallAR} and~\ref{thm:main}. 
Theorem~\ref{thm:mainSmallAR} is a special case where $\calI = [k]$ in line~\ref{line:MainI}.

\subsection{Main Ideas}

Both our main theorems are based on speeding up Nested Dissection by
combining it with preconditioning and iterative solvers.
In Section~\ref{sec:prelim:linsolve} we gave a brief outline of Nested Dissection.
In classical Nested Dissection, the main bottleneck that constrains the running
time of the algorithm is the process of applying Gaussian elimination
to the few separators in the top levels of the separator tree, after having
eliminated all the matrix indices at lower levels.
The intermediate matrix that arises during Nested Dissection after
eliminating the indices at lower levels is in fact the Schur
complement onto the separators at top levels.

Our central idea that gives us an advantage over Nested Dissection is
that: 
the outer boundary of tetrahedrons of a single {\trussnice} truss\footnote{It can be a subset of one of the $k$ input individual {\trussnice}trusses.}
with small aspect ratio is a 
good preconditioner for the Schur
complement of the whole truss onto the boundary.

By forcing Nested Dissection to use these outer boundaries of
individual {\trussnice} trusses as the top-level separator, 
we get a separator which has a good sparse preconditioner.
Fortunately, we can also ensure that this top-level separator has a small size.

The phenomenon that the boundary itself is a good preconditioner for the
Schur complement onto the boundary has a natural interpretation based on structural
mechanics.
The quadratic form associated with the Schur complement
corresponds to the energy associated with deforming the whole truss by
squishing or stretching the boundary vertices while leaving the interior intact and
finding the positions of the interior vertices that minimize the
overall energy.
We show that this energy is not much more than the energy that arises
from applying the same deformation to just the boundary tetrahedrons
after deleting the interior vertices.

This means that we can speed up the process of applying the inverse of
the Schur complement onto the boundary, 
which is the bottleneck of nested dissection.
We avoid directly inverting the Schur complement by
instead solving a linear system in the Schur complement using the
boundary itself as a preconditioner in preconditioned conjugate
gradient.
Because the boundary is much sparser than the Schur complement, and
has much fewer tetrahedrons than the initial mesh, we significantly reduce the running time.

How good a preconditioner the boundary is for the Schur complement
depends on the number of tetrahedrons in the initial convex mesh.
If the mesh is large, then the preconditioner is worse 
in the sense that solving an associated linear system is time-consuming, 
but  the
gain from hollowing out the mesh is relatively larger, because the
number of tetrahedrons on the surface is relatively smaller.
Ideally, we want to balance these two phenomena against each other.
We can ensure a good trade-off between these two effects by first dividing very large
convex trusses into smaller chunks before hollowing out these chunks
and using the resulting boundaries, which now look somewhat like a
Swiss cheese, as a preconditioner.

All together, this approach of partitioning, hollowing out,
preconditioning, and using Nested Dissection gives us Theorem~\ref{thm:mainSmallAR}.

If the aspect ratios of individual convex trusses are allowed to be extremely
large, so that the trusses can be very thin, then we cannot gain much by
hollowing out the trusses.
However, if an individual convex truss has large aspect ratio,
then we can get good separators for
Nested Dissection
by slicing the truss along its longest dimension.
Combining this observation with our preconditioning approach, we
are able to obtain Theorem~\ref{thm:main}, which has no requirements
on the aspect ratios of each of the $k$ trusses we are combining.
Unfortunately, leveraging both the preconditioning behavior and the
existence of good separators for individual large aspect ratio trusses
requires fairly technical work, which currently introduces a bad
dependence on $k$ in this version of our main result.

\begin{algorithm}[htb]
\caption{
  $\textsc{TrussSolver}(
\calT = \langle V, \{ \pp_i \}_{i \in V}, T, E, \gamma \rangle,
\ff,
\epsilon,
c_\alpha, c_r)$
  \label{alg:TrussSolver}
}
\begin{algorithmic}[1]
\renewcommand{\algorithmicrequire}{\textbf{Input:}}
\renewcommand\algorithmicensure {\textbf{Output:}}
\REQUIRE{a 3D truss $\calT = \langle V, \{ \pp_i \}_{i \in V}, T, E, \gamma \rangle$ with $n$ vertices, which is a union of $k$ {\trussnice}trusses $\calT_1, \ldots, \calT_k$,
a vector $\ff \in \mathbb{R}^{3n}$,
an error parameter $\epsilon > 0$.\\
Constants for aspect ratio threshold $0 < c_{\alpha} < 1$,
  and hollowing rate $0 < c_{r} < 1$.}
\ENSURE{an approximate solution $\xx$ such that
$\norm{\AA_{\calT} \xx - \ff}_2 \le \eps \norm{ \ff}_2$.}
\FOR{each $i$}
  \STATE{
    Compute a bounding box $B_i$ of $\calT_i$, via Lemma~\ref{lem:BoundingBox}.
    \label{line:BoundingBox}
  }
\ENDFOR
\STATE{Let $\calI = \{1 \le i \le k: \alpha(\calT_i) \le n_i^{c_{\alpha}} \}$.
  \label{line:MainI}}
 \FOR{each $i \in \calI$}
        \STATE{Hollow out the interior vertices of $\calT_{i}$
		with parameter $r_{i} = n_{i}^{c_r}$ to form $\calH_{i}$.
             \label{line:MainHollow}}
\ENDFOR
\STATE{Run nested dissection on the preconditioner (possibly with a specific set of separators).
  \label{line:MainND}}
\STATE{Run preconditioned conjugate gradient with this preconditioner to solve the overall system.}
\RETURN the solution $\xx$.
\end{algorithmic}
\end{algorithm}

\subsection{Bounding Eigenvalues of an Edge-Simple and Stiffly-Connected Truss}

Our main structural results are bounds on the condition number
of the stiffness matrix of a
{\esimple} and stiffly connected (see Definition~\ref{def:stifflyConnected}) simplicial complex.
As each vertex is involved in at most a constant number of
tetrahedrons, we can easily obtain bounds on the
maximum eigenvalue.

Thus, our main technical contribution is a lower bound
on the minimum non-zero eigenvalue of the stiffness matrix
of a 3D {\esimple} truss.
We state the explicit bound in Lemma~\ref{lem:3dMinEig}.
Such a bound is analogous to the bound on minimum eigenvalues
of a path in a graph.

\begin{lemma}
Let $\calT$ be an {\esimple} and stiffly-connected 3D truss. 
Let $n$ be the number of vertices of $\calT$ and $\Delta$ be the diameter.
Let $\MM$ denote the associated stiffness matrix. 
Then, $\lambda_{\min}(\MM) = \Omega(n^{-1} \Delta^{-4})$ and $\rank(\MM) = 3n - 6$.
\label{lem:3dMinEig}
\label{lem:TrusssLambdaMin}
\end{lemma}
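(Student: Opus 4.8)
The plan is to establish the rank statement and the eigenvalue lower bound separately, both by reduction to the combinatorial structure of the rigidity graph. For the rank, recall that each tetrahedron $t = \{a,b,c,d\}$ with bounded aspect ratio contributes a rank-$6$ PSD block to $\MM$ (the six edge terms $\bb^{(e)}\bb^{(e)\top}$ among the four vertices), whose kernel on the $12$ coordinates of those four vertices is exactly the $6$-dimensional space of infinitesimal rigid motions of $\mathbb{R}^3$ (three translations, three rotations). Since $\MM = \sum_t \MM_t$ is a sum of PSD matrices, $\nulls(\MM) = \bigcap_t \nulls(\MM_t)$. First I would argue that because the mesh is stiffly connected — its rigidity graph is connected, and the tetrahedra around each vertex form a connected subgraph — the only global vectors annihilated by every $\MM_t$ are the globally-consistent infinitesimal rigid motions, of which there are exactly $6$. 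This is the standard rigidity-theoretic fact that a connected triangulated body in $3$D is infinitesimally rigid; one proves it by induction along a spanning tree of the rigidity graph, showing that once the rigid motion is fixed on one tetrahedron it propagates uniquely across each shared triangular face (a shared face has three affinely independent points, enough to pin down the relative rigid motion). Hence $\rank(\MM) = 3n - 6$.

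For the eigenvalue bound, the strategy is a path/telescoping argument on the rigidity graph, in the spirit of support theory and the classical bound $\lambda_{\min} = \Omega(n^{-1})$ for a path graph Laplacian. Fix a unit vector $\xx \perp \nulls(\MM)$; I want to show $\xx^\top \MM \xx = \Omega(n^{-1}\Delta^{-4})$. Decompose $\mathbb{R}^{3n}$ into the space of rigid motions and its orthogonal complement; after subtracting the best-fitting global rigid motion (which does not change $\xx^\top\MM\xx$ since rigid motions are in the kernel), we may assume $\xx$ has no global rigid component but $\norm{\xx}_2 = \Theta(1)$. The key local estimate is that each single tetrahedron term satisfies $\xx^\top \MM_t \xx \gtrsim c \cdot \mathrm{dist}(\xx|_t, \mathcal{R})^2$, where $\mathcal{R}$ is the space of rigid motions restricted to those four points and $c$ is a constant depending only on the (bounded) aspect ratio, edge lengths, and stiffness coefficients — this is exactly where the {\esimple} hypothesis enters, and it should follow from Daitch–Spielman-type estimates referenced as the starting point in Section~\ref{sec:proofDS07Main}. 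Then I would chain these local estimates: pick a spanning tree of the rigidity graph of depth $O(?)$; for each tetrahedron $t$ let $\rho_t$ be the rigid motion best approximating $\xx|_t$, and bound the discrepancy between $\rho_t$ and $\rho_{t'}$ for adjacent $t,t'$ by the local energies; telescoping from an arbitrary root shows $\norm{\rho_t - \rho_{\mathrm{root}}}$ on the relevant ball is controlled by $\sqrt{\mathrm{diam(tree)} \cdot \sum_t \xx^\top\MM_t\xx}$, with an extra $\Delta$-dependent factor coming from converting a bound on rotational parameters to a bound on displacements over a region of diameter $\Delta$. If every $\rho_t$ were forced to be close to a single global rigid motion, then since $\xx$ has no global rigid component, $\norm{\xx}_2^2$ would be small unless $\sum_t \xx^\top\MM_t\xx$ is at least $\Omega(n^{-1}\Delta^{-4})$.

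The main obstacle I anticipate is getting the $\Delta$ and $n$ powers exactly right in the telescoping step: an infinitesimal rotation with small angular parameter still moves points at distance $\Delta$ by $\Theta(\Delta)$, and when composing many such near-rigid motions along a path of length $\Theta(n^{1/3})$ (or however deep the rigidity-tree must be), the errors can amplify. The careful bookkeeping is to track each rigid motion $\rho_t$ as a pair $(\text{translation}, \text{rotation generator})$ with a norm that weights the rotation part by $\Delta$, verify that adjacent tetrahedra have $\rho$-difference bounded in this weighted norm by $O(\Delta\sqrt{\text{local energy}})$ because a shared triangular face of bounded aspect ratio has an inradius bounded below, and then apply Cauchy–Schwarz over a path of length $O(n)$ (using that the rigidity graph has $O(n)$ nodes and the induced-subgraph-around-each-vertex condition lets us route between any two tetrahedra). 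This yields the claimed $\Omega(n^{-1}\Delta^{-4})$; a slightly more careful routing that exploits the diameter $\Delta$ rather than the vertex count might tighten it, but the stated bound is what the path-length-$O(n)$, displacement-scale-$\Delta$ accounting gives. I would defer the detailed constants to Sections~\ref{sec:minEig} and~\ref{sec:proofDS07Main}, where the Daitch–Spielman machinery is set up.
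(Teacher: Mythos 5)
Your rank argument (propagating an infinitesimal rigid motion across shared triangular faces along the rigidity graph) is fine and matches what the paper implicitly does. For the eigenvalue bound, however, you take a genuinely different route. The paper centers $\qq$ at a triangle, recursively expands edge differences through a recursion tree (Lemma~\ref{lem:DS07Main}), and finds that the resulting products of $3\times 3$ operators lose a factor $2^{-\Theta(h)}$ in the hop distance $h$; it then rescues the bound with an averaging argument (Lemma~\ref{lem:GoodInitMain}) showing that \emph{some} centering places a pair of points at constant hop distance whose displacement is already $\Omega(\Delta^{-2}n^{-1/2})$. Your chaining of best-fit infinitesimal rigid motions across shared faces is additive in the fixed $6$-dimensional space of rigid-motion generators, so it genuinely sidesteps the exponential operator-product blowup that the authors explicitly could not remove; that is a real structural advantage of your decomposition, and the local estimates you invoke (constant lower bound on the nonzero spectrum of a single bounded-aspect-ratio tetrahedron, and recovery of $(\text{translation},\text{rotation})$ from agreement on a shared face of bounded geometry) are all correct for an {\esimple} truss.

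The gap is quantitative: your accounting does not produce the stated bound $\Omega(n^{-1}\Delta^{-4})$. Each step of the chain costs $O\bigl(\Delta(\sqrt{E_t}+\sqrt{E_{t'}})\bigr)$ in sup-displacement norm, so Cauchy--Schwarz over a chain of length $L$ gives $\norm{\rho_t - \rho_{\mathrm{root}}}^2 = O(\Delta^2 L\, \xx^\top \MM \xx)$, and summing over the $n$ vertices and using $\norm{\xx - \rho_{\mathrm{root}}}_2 \ge \norm{\xx}_2 = 1$ yields $\lambda_{\min}(\MM) = \Omega(n^{-1}\Delta^{-2}L^{-1})$. Here $L$ is the \emph{rigidity-graph} diameter, which for a stiffly-connected complex can be $\Theta(n)$ even when $\Delta = \Theta(n^{1/3})$ (a folded snake of tetrahedra), not the $\Theta(n^{1/3})$ you parenthetically assume. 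In the worst case you therefore get $\Omega(n^{-2}\Delta^{-2})$, which is incomparable to $\Omega(n^{-1}\Delta^{-4})$ and strictly weaker whenever $\Delta \ll \sqrt{n}$ --- precisely the compact-mesh regime. (Coincidentally, in the paper's one application, Lemma~\ref{lem:ConditionNumberBoundary} with $n = O(r^{2/3})$ and $\Delta = O(r^{1/3})$, both bounds degenerate to $\Omega(r^{-2})$, so your weaker inequality would still suffice downstream; but it does not prove the lemma as stated.) To close the gap you would need either a bound of the form $L = O(\Delta)$, which does not follow from the hypotheses, or a routing/averaging device playing the role of the paper's Lemma~\ref{lem:GoodInitMain}.
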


Our proof is heavily motivated by the Path Lemma by Daitch and Spielman~\cite{daitchS07}, which bounds the minimum non-zero eigenvalue of the stiffness matrix of a path of triangles in 2D.
We start by shifting all the tetrahedrons by the
normals w.r.t. a particular centering tetrahedron, which
in effect projects away the coordinates from the null space.
Then we lower bound the minimum dot-product of a quadratic
form in terms of the pairwise $\ell_{2}$ differences
among these tetrahedrons.
However, our calculations 
result in an exponential factor loss depending on the ``hop
distance'' between the first and last tetrahedrons.
This is due to the accumulation of rotational operators,
which we need to treat as matrices instead of simple rotations.

We are not sure whether this exponential increase is simply
due to an algebraic artifact in our proof.
To circumvent it, we instead show that there exists
a particular centering where a pair of close-by
tetrahedrons are placed $1 / \poly(n)$ apart. 
This proof relies on analyzing the ``average behavior''
of all centerings globally.
It once again relies on treating the initial rotations
and projections as linear operators, and working
directly with the singular values of these projection
operators.

\subsection{Proving the Main Result for Small
  Aspect Ratio Truss Unions}

To accelerate Nested Dissection, we need to find a set of balanced
separators that are small and whose Schur complements have good sparse preconditioners.

To build these good separators, motivated by $r$-divisions, we divide each of the small-aspect
ratio trusses in our union of $k$ trusses into smaller chunks, such
that the boundary of each chunk is a good preconditioner of the Schur
complement onto that boundary.
The union of all these boundaries is called a \emph{hollowing}.
To create a hollowing, we  fix two parameters: 
a bounding box $B$ that determines the directions of each smaller chunks of the hollowing, and
a size parameter
$r$ that controls the size of the smaller chunks.
We call each smaller chunk as a \emph{region}.

\begin{definition}[$(B, r)$-hollowing]
Given a {\trussnice}3D truss
$\calT = \langle V, \{ \pp_i \}_{i \in V}, T, E, \gamma \rangle$, a bounding box $B$ of $\calT$, and
a parameter $r \le n / \alpha^2$ where $\alpha$ is the aspect ratio of $\calT$,
a \emph{$(B, r)$-hollowing} of $\calT$ is
another {\esimple}
3D truss
${\calH = \langle U, \{ \pp_i \}_{i \in U}, S, F, \gamma' \rangle}$
such that, $U \subseteq V$, $S \subseteq T$, and $F$ is the subset of $E$
that arises from edges in $S$, while $\gamma'$ is just the restriction of
$\gamma$ to $F$. I.e. edges maintain the same stiffness factors as in $\calT$.
Also
  \begin{enumerate}
  \item  $\calH$ contains $O(n r^{-1/3})$ points.
$\calT \setminus \calH$ consists of $O(nr^{-1})$ disjoint chunks, each of which has $O(r)$ vertices and is incident to $O(r^{2/3})$ vertices of $\calH$.
  \item  for \emph{every} plane $P$ whose normal vector has angle $\theta \in (0, \pi/2)$ with the longest direction of $B$, 
  the number of tetrahedrons in $\calH$ intersected by $P$ 
  is
  \[
  O\left( n^{2/3} \alpha^{-1/3} r^{-1/3} \cos^{-2}\theta \right).
  \]
  \item $\AA_{\calH} \preceq \schurto{\AA_{\calT}}{U} \preceq O(r^{2}) \AA_{\calH}$.
  \end{enumerate}
  \label{def:hollow}
\end{definition}

The next lemma describes the performance of  algorithm \textsc{Hollow}, Algorithm~\ref{alg:hollow} in Section~\ref{sec:appendixHollow}, that we use
to compute a $(B, r)$-hollowing of a {\trussnice} truss.

\begin{lemma}
Given a {\trussnice} 3-dimensional truss $\calT = \langle V, \{ \pp_i \}_{i \in V}, T,
E, \gamma \rangle$ with $n$ vertices,  a bounding box $B$ of $\calT$, and
a positive integer $r$ such that the aspect ratio of $\calT$ is at most $\sqrt{n/r}$,
the algorithm $\textsc{Hollow}(\calT,B, r)$ returns a
$(B, r)$-hollowing $\calH$ of $\calT$,
and runs in time $O(n)$. 
\label{lem:hollow}
\end{lemma}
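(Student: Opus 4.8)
The plan is to explicitly describe the algorithm $\textsc{Hollow}$ and verify it satisfies the three conditions of Definition~\ref{def:hollow}, together with the $O(n)$ running time bound. First I would use the bounding box $B$ to set up a coordinate system aligned with its axes; scaling appropriately, $B$ becomes a box of dimensions roughly $\ell_1 \times \ell_2 \times \ell_3$ with $\ell_1 \ge \ell_2 \ge \ell_3$, and since the aspect ratio is at most $\sqrt{n/r}$ and the number of tetrahedrons is $\Theta(n)$ (each with constant aspect ratio and bounded side length, so constant volume up to constants), a volume-counting argument gives $\ell_1 \ell_2 \ell_3 = \Theta(n)$ and $\ell_1/\ell_3 \le O(\sqrt{n/r})$. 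I would then lay down an axis-aligned grid of cubes of side length $\Theta(r^{1/3})$: each grid cell contains $O(r)$ tetrahedrons, there are $O(n/r)$ cells, and the \emph{hollowing} $\calH$ is defined as the union, over all cells, of the tetrahedrons that are incident to the boundary faces of the grid (a thin shell of constant width inside each cell). The interior vertices that lie strictly inside some cell and are not touched by the shell get removed; this defines $U \subseteq V$, $S \subseteq T$, $F$, and $\gamma'$ by restriction.

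For condition (1): each cell's shell is a $O(r^{1/3}) \times O(r^{1/3})$ slab of constant thickness, hence has $O(r^{2/3})$ tetrahedrons and vertices; summing over $O(n/r)$ cells gives $O(n r^{-1/3})$ points in $\calH$. The removed interior of each cell forms one chunk with $O(r)$ vertices, incident to the $O(r^{2/3})$ shell vertices of that cell, and there are $O(n/r)$ such chunks — matching the stated bounds. For condition (2), a plane $P$ whose normal makes angle $\theta$ with the longest box direction intersects $O(\ell_1 \ell_2 / (r^{1/3} \cdot r^{1/3} \cos\theta))$... more carefully, counting grid cells met by $P$ and multiplying by the per-cell shell count $O(r^{2/3})$ and dividing by the per-cell thickness: the cross-section argument gives that $P$ meets $O((\ell_1 \ell_2)/(r^{2/3}))$ cells when it is roughly axis-aligned, and a factor $\cos^{-2}\theta$ enters from the tilt (one $\cos^{-1}\theta$ from the elongation of the intersection with each slab, one from the number of grid layers pierced); combined with $\ell_1\ell_2 = \Theta(n/\ell_3)$ and $\ell_3 = \Theta((n/\alpha^2)^{1/3}) = \Theta(n^{1/3}\alpha^{-2/3})$ this yields the claimed $O(n^{2/3}\alpha^{-1/3} r^{-1/3}\cos^{-2}\theta)$ after bookkeeping. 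Condition (3) is the spectral claim: $\AA_\calH \preceq \schurto{\AA_\calT}{U}$ holds because $\calH$ is a sub-truss of $\calT$ restricted to $U$, and Schur complement onto $U$ dominates any sub-truss on those same vertices (the energy of the Schur complement is the minimum over harmonic extensions, which is at least the energy of the boundary sub-truss alone); the upper bound $\schurto{\AA_\calT}{U} \preceq O(r^2)\AA_\calH$ follows from the key eigenvalue lemma (Lemma~\ref{lem:3dMinEig}) applied cell-by-cell: each chunk has $O(r)$ vertices and diameter $O(r^{1/3})$, so its stiffness matrix has $\lambda_{\min} = \Omega(r^{-1} \cdot r^{-4/3})$ while the boundary shell, being {\esimple} and stiffly-connected with the same $O(r^{1/3})$ diameter, absorbs the deformation with relative condition number $O(r^2)$; summing the per-chunk bounds and using the fact that Schur complement distributes over the disjoint chunks gives the global bound.

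The running time is $O(n)$: computing the bounding box is linear by Lemma~\ref{lem:BoundingBox}, bucketing the $\Theta(n)$ tetrahedrons into grid cells is linear (each tetrahedron has $O(1)$ size so touches $O(1)$ cells), and identifying shell versus interior tetrahedrons within each cell is a local $O(1)$-per-tetrahedron test; the total is $O(n)$. The main obstacle I expect is condition (3), specifically proving the upper bound $\schurto{\AA_\calT}{U} \preceq O(r^2)\AA_\calH$ with the right power of $r$: this requires that after hollowing, each chunk together with its surrounding shell remains stiffly-connected and {\esimple} (so that Lemma~\ref{lem:3dMinEig} applies and the null space is exactly the trivial rigid-motion one), and it requires carefully decomposing the Schur complement energy as a sum over chunks plus the shell and bounding each piece — the $O(r^2)$ factor is exactly $(\text{diam of chunk})^4 \cdot (\text{vertex count})$ up to the $\lambda_{\max} = O(1)$ bound, so getting the exponent right is where the geometric assumptions (bounded side length, bounded aspect ratio, convexity to ensure the grid cells carve out genuine sub-complexes) all have to be used simultaneously. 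A secondary subtlety is ensuring the grid can be chosen so that $r \le n/\alpha^2$ actually permits $\ell_3 \ge \Omega(r^{1/3})$, i.e. the shortest box dimension is large enough to fit at least one grid cell, which is precisely what the hypothesis $\alpha \le \sqrt{n/r}$ guarantees.
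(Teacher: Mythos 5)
Your overall plan coincides with the paper's: grid the bounding box into cubes of side $\Theta(r^{1/3})$, keep the tetrahedrons near the grid planes as $\calH$, and prove the spectral bound region by region via Lemma~\ref{lem:3dMinEig}. However, there are two concrete gaps. First, in the size bound (condition 1 of Definition~\ref{def:hollow}) you only count tetrahedrons near the internal grid planes. The hollowing must also contain the tetrahedrons on the outer boundary of $\calT$ itself --- otherwise the chunks adjacent to that boundary are not sealed off and their shells are not stiffly connected, so Lemma~\ref{lem:3dMinEig} cannot be applied to them. Once those are included, you must show that each cube region $R$ receives only $O(r^{2/3})$ of them; this is where convexity of $\calT$ is genuinely used. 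The paper proves $\surf(\calT \cap R) = O(\surf(R)) = O(r^{2/3})$ via an area-nondecreasing map from the faces of the convex body $\calT \cap R$ onto an enclosing sphere, and then invokes Observation~\ref{obs:convertToSurfaceArea}. Your proposal has no substitute for this step.

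Second, and more seriously, your derivation of the $O(r^2)$ factor applies Lemma~\ref{lem:3dMinEig} to the wrong object. You take the chunk, with $O(r)$ vertices and diameter $O(r^{1/3})$, and compute $\lambda_{\min} = \Omega(r^{-1}\cdot r^{-4/3}) = \Omega(r^{-7/3})$, which would only give a relative condition number of $O(r^{7/3})$ and would degrade all downstream running times. The correct argument applies the eigenvalue lower bound to the \emph{boundary shell} $\calT_{C_i}$ of each region, which has only $O(r^{2/3})$ vertices (by the surface-area bound above) and diameter $O(r^{1/3})$, giving $\lambda_{\min}(\AA_{\calT_{C_i}}) = \Omega(r^{-2/3}\cdot r^{-4/3}) = \Omega(r^{-2})$; combined with $\lambda_{\max}(\schurto{\AA_{\calT_i}}{U_i}) = O(1)$ (bounded degree plus Fact~\ref{fac:Schur}) and the fact that the two matrices share the same null space, this yields $\schurto{\AA_{\calT_i}}{U_i} \preceq O(r^2)\AA_{\calT_{C_i}}$ per region, and summing over regions (each edge lies in $O(1)$ of them) gives the global bound. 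Your heuristic ``diameter to the fourth times vertex count'' is correct only if the vertex count is that of the shell, not the chunk. The lower bound $\AA_{\calH} \preceq \schurto{\AA_{\calT}}{U}$ and the $O(n)$ running time are fine as you state them.
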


\begin{proof}[Proof of Theorem~\ref{thm:mainSmallAR}]
Let $\calT$ be a {\esimple} 3-D truss with $n$ vertices, formed from a union of $k$
{\trussnice} trusses, say $\calT_1, \ldots, \calT_k$, each with aspect ratio at most $O(1)$.
For each $1\le i \le k$, let $n_i$ be the number of vertices of $\calT_i$, and define $r_i \defeq n_i^{1/2}$.
In Algorithm~\ref{alg:TrussSolver}, for each $\calT_i$,
 we compute a $(B_i, r_i)$-hollowing, where $B_i$ is a bounding box of $\calT_i$.
 By Lemma~\ref{lem:BoundingBox} and~\ref{lem:hollow}, the total running time here is $O(n)$.
In each $(B_i, r_i)$-hollowing region, we eliminate its interior vertices in total time
\[
O\left(\sum_i n_ir_i^{-1} \cdot  r_i^2\right)
=
O\left(n^{3/2}\right).
\]
The Schur complement onto the boundaries has 
\[
O\left(\sum_i n_ir_i^{-1} \cdot (r_i^{2/3})^2\right)
=
O\left(n^{7/6}\right)
\]
nonzeros.
We then run preconditioned conjugate gradient (PCG) to
solve the linear system in the Schur complement by preconditioning
it via the union of the $(B_i, r_i)$-hollowings, say $\calT'$.
Note by Jensen's inequality, $\calT'$ has size
\[
O\left(\sum_i n_i r_i^{-1/3} \right)
=
O\left(\sum_i n_i^{5/6}\right)
=
O\left(k^{1/6} n^{5/6}\right).
\]

Before running PCG, we compute a Cholesky factorization of $\AA_{\calT'}$ by nested dissection. According to Theorem~\ref{thm:nestedDissection}, the running time is $O(k^{1/3}n^{5/3})$, and the fill-in size is $O(k^{2/9}n^{10/9})$.
By Definition~\ref{def:hollow}, the condition number is $O(\max_i r_i^2) = O(n)$. According to Theorem~\ref{thm:pcg}, the number of PCG iterations is at most $O( n^{1/2} \log (1/\eps))$ to output a solution up to accuracy $\eps$.
In each PCG iteration, we do a matrix-vector multiplication with the Schur complement in time $O(n^{7/6})$, and solve a linear system in $\AA_{\calT'}$ in time $O(k^{2/9}n^{10/9})$. 
Thus the total running time is $O(k^{1/3}n^{5/3} \log (1 / \eps))$.
\end{proof}

\subsection{Proving the Main Result for All-Aspect Ratio Truss Unions}

We extend our result to cover the case when the union of 
{\trussnice} trusses also include trusses with arbitrarily large aspect
ratios.
The lemma below shows that  large aspect
ratio implies the existence of good plane separators, which is proven
in Section~\ref{sec:NestedDissection}.

\begin{lemma}
Given a {\trussnice} 3D truss with aspect ratio at least $\alpha > 0$, say,
$\calT =\langle V, \{ \pp_i \}_{i \in V}, T, E, \gamma \rangle$, and its bounding box $B$.
Let $\dd \in \mathbb{R}^3$ be a unit vector along the longest direction of $B$, and let $\gg \in \mathbb{R}^3$ be a unit vector with $\dd \cdot \gg > 0$.
Then every plane orthogonal to $\gg$ intersects at most $O(n^{2/3} \alpha^{-1/3}  (\dd \cdot \gg)^{-1})$ tetrahedrons.
\label{lem:smallSeparatorPlane}
\end{lemma}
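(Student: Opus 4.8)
The plan is to exploit the geometry of the bounding box $B$ together with the bounded aspect ratio and bounded edge lengths of the individual tetrahedrons. Let me set up notation: let $B$ have side lengths $\ell_1 \ge \ell_2 \ge \ell_3$, so $\dd$ points along the $\ell_1$-axis. Since $B$ is a bounding box of $\calT$ (up to constants in volume), and $\calT$ has aspect ratio at least $\alpha$, the smallest dimension of $\calT$ — and hence of $B$ — must be comparatively small: by Lemma~\ref{lem:BoundingBox} the aspect ratio of $B$ matches that of $\calT$ up to constants, so $\ell_1 / \ell_3 = \Theta(\alpha)$. Moreover, because every tetrahedron has constant edge length and constant aspect ratio, and the mesh is a simplicial complex, the number of tetrahedrons meeting any unit-radius ball is $O(1)$; a volume-packing argument then gives that the number of tetrahedrons in any region of volume $V$ is $\Theta(V)$ (with $V \ge 1$), and in particular the total count $n = \Theta(\mathrm{vol}(B)) = \Theta(\ell_1 \ell_2 \ell_3)$.

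First I would reduce to counting tetrahedrons near a plane rather than exactly on it. A plane $P$ orthogonal to $\gg$ intersects a tetrahedron only if that tetrahedron lies within distance $O(1)$ of $P$ (constant edge length implies constant diameter). So the tetrahedrons intersected by $P$ all lie in a slab $\{x : |\gg \cdot x - c| \le O(1)\}$ for the appropriate offset $c$. By the volume-packing bound above, the number of intersected tetrahedrons is $O(\mathrm{vol}(\calT \cap \text{slab}))$, and since $\calT \subseteq B$ this is at most $O(\mathrm{vol}(B \cap \text{slab}))$. The core of the argument is thus a purely geometric estimate: the volume of the intersection of the box $B$ with a slab of thickness $\Theta(1)$ whose bounding hyperplane has normal $\gg$.

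Next I would estimate that slab volume. The cross-section of $B$ cut by a plane with normal $\gg$ has area at most $O(\ell_2 \ell_3 / (\dd \cdot \gg))$ when $\gg$ has a nonnegative component along each box axis and $\dd \cdot \gg$ bounds its component along the long axis — the factor $1/(\dd\cdot\gg)$ arises because a plane nearly parallel to the long axis ($\dd \cdot \gg$ small) has a large cross-section, roughly the long face $\ell_1 \times \ell_2$ scaled, but this cross-sectional area is still capped by geometry: the slab of thickness $O(1)$ in direction $\gg$ projects onto an interval of length $O(1/(\dd\cdot\gg))$ along the $\dd$-axis, so the slab is contained in $\{|\dd \cdot x - c'| = O(1/(\dd\cdot\gg))\} \cap B$, whose volume is $O(\ell_2 \ell_3 / (\dd\cdot\gg))$. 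Combining with $n = \Theta(\ell_1\ell_2\ell_3)$ and $\ell_1 = \Theta(\alpha \ell_3)$, hence $\ell_3 = \Theta((n/(\alpha))^{1/3} \cdot (\ell_3/\ell_2)^{1/3}\cdots)$ — more cleanly, $\ell_2 \ell_3 \le (\ell_1\ell_2\ell_3)^{2/3} \ell_1^{-2/3} = \Theta(n^{2/3} (\alpha \ell_3)^{-2/3} \cdot \ell_3^{?})$; I will want the bound $\ell_2\ell_3 \le (\ell_1\ell_2\ell_3 / \ell_1)^{?}$, and using $\ell_2 \le \ell_1$ gives $\ell_2\ell_3 \le (\ell_1\ell_2\ell_3)/\ell_1 \cdot (\ell_2/\ell_3)^{?}$ — the clean inequality is $(\ell_2\ell_3)^{3/2} \le \ell_1\ell_2\ell_3 \cdot (\ell_3/\ell_1)^{1/2} \le n \cdot \alpha^{-1/2}$ up to constants, whence $\ell_2\ell_3 = O(n^{2/3}\alpha^{-1/3})$, giving exactly the claimed $O(n^{2/3}\alpha^{-1/3}(\dd\cdot\gg)^{-1})$.

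The main obstacle I anticipate is making the volume-to-tetrahedron-count conversion fully rigorous in both directions (a slab might clip many tetrahedrons that barely poke in, but constant diameter caps this) and, more seriously, handling the case where $\gg$ is \emph{not} aligned to be nonnegative along all box axes — there the cross-section of $B$ by a plane normal to $\gg$ could be as large as a diagonal cross-section, but the thin direction of $B$ still controls things: whatever $\gg$ is, a slab of thickness $O(1)$ along $\gg$ has volume at most $O(1/(\dd\cdot\gg))$ times the area of the projection onto the plane spanned by the two shorter axes, and that area is $O(\ell_2\ell_3)$ because $B$'s extent in those directions is $\ell_2,\ell_3$. I would formalize this via $\mathrm{vol}(B \cap \{|\gg\cdot x - c|\le 1\}) \le (\text{width of } B \text{ along } \gg)^{-1}$-type bounds combined with Fubini, decomposing the integral along the $\dd$-direction. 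Once that volumetric claim is nailed down, everything else is bookkeeping with the aspect-ratio relation $\ell_1 \asymp \alpha\ell_3$ and $n \asymp \ell_1\ell_2\ell_3$.
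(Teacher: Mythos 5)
Your route is the same as the paper's: the paper proves this lemma by combining Observation~\ref{obs:convertToSurfaceArea} (constant tetrahedron volume and aspect ratio turn the count of intersected tetrahedrons into the area of $\calT\cap P$, i.e.\ the volume of an $O(1)$-thick slab) with Claim~\ref{clm:intersectArea} (the cross-section of the box by a plane normal to $\gg$ has area at most $\ell_2\ell_3/(\dd\cdot\gg)$), followed by exactly your arithmetic $\ell_1=\Theta(\alpha\ell_3)$, $n=\Theta(\ell_1\ell_2\ell_3)$, and $(\ell_2\ell_3)^{3/2}\le \ell_1\ell_2\ell_3\cdot(\ell_3/\ell_1)^{1/2}$. (The worry about signs of $\gg$'s components is a non-issue: only the component along the long axis enters the bound, and one may reflect the box axes.)

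One step is false as you state it, though: the set $B\cap\{x:|\gg\cdot x-c|\le O(1)\}$ is \emph{not} contained in a slab $\{x:|\dd\cdot x-c'|\le O(1/(\dd\cdot\gg))\}$. Writing $x-y=\sum_i t_i e_i$ in the box frame for two points of that set, you only get $|t_1|\,(\dd\cdot\gg)\le O(1)+\ell_2|\gg\cdot e_2|+\ell_3|\gg\cdot e_3|$, so the extent along $\dd$ can be as large as $(\ell_2+\ell_3)/(\dd\cdot\gg)$ --- when $\gg$ is nearly orthogonal to $\dd$ the $\gg$-slab sweeps diagonally across the whole long face of the box --- and the containment would lose a factor of $\ell_2+\ell_3$. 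The correct fix is the Fubini argument you gesture at, and it is what the paper's Claim~\ref{clm:intersectArea} actually establishes: bound the area $S$ of a \emph{single} cross-section by noting that the parallelepiped with that cross-section as base and edge vector $\ell_1\dd$ has volume $S\cdot\ell_1(\dd\cdot\gg)$ and, after a cut-and-translate regluing along the long axis, fits inside $B$, giving $S\le\ell_2\ell_3/(\dd\cdot\gg)$; then integrate over the $O(1)$ slab thickness. (Equivalently, for a convex body the maximal cross-section area is $O(\mathrm{vol}/w)$ where $w$ is the width in direction $\gg$, and $w\ge\ell_1(\dd\cdot\gg)$.) With that one substitution your proof is complete and coincides with the paper's.
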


The above lemma tells us that a single {\trussnice} truss with large
aspect ratio has a good plane separator.
It turns out that even if we have many such trusses whose longest
dimension may point in different directions, and we have hollowed-out
trusses from small aspect ratio parts, we can still find a single plane that
acts as a reasonably good separator for all of these trusses at the same time.
This is captured by the following lemma,
which is obtained by instantiating Lemma~\ref{lem:kChunkNDMoreDetailed}
with $c_r = 1/3$ and $l = n^{1/6}$.

\begin{lemma}[Combining Separators]
Given a {\esimple} 3D truss
$\calT = \langle V, \{ \pp_i \}_{i \in V}, T, E, \gamma \rangle$,
which is a union of $k$ {\trussnice} trusses with up to $n$ vertices in total. 
Let $\calT' \subset \calT$ be a truss by selectively computing $(B_i, r_i)$-hollowings
of some of the pieces with parameter
\[
r_i \leq n_i^{1/3}.
\]
There exists a randomized algorithm which with high probability returns  a vertex ordering
so that a complete elimination of $\calT'$ has size
$O(n^{23/18}k^{44/9})$,
and takes time $O(n^{11/6}k^{22/3})$ to compute.
\label{lem:kChunkND}
\end{lemma}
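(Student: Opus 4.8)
The strategy is to reduce the statement to the more general Lemma~\ref{lem:kChunkNDMoreDetailed}, plugging in the specified parameters $c_r = 1/3$ and $l = n^{1/6}$, and then to verify that the resulting fill-in and runtime bounds simplify to the claimed $O(n^{23/18}k^{44/9})$ and $O(n^{11/6}k^{22/3})$. The real content, which presumably lives in the proof of Lemma~\ref{lem:kChunkNDMoreDetailed}, is the construction of a single good separating plane for a \emph{collection} of trusses whose longest directions may disagree, together with an accounting of the recursive separator tree needed to feed into Lemma~\ref{lem:flexibleND}.

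The first step is to describe the separator family. For each of the $k$ pieces, I would invoke Lemma~\ref{lem:smallSeparatorPlane} (for large-aspect-ratio pieces) or property~2 of Definition~\ref{def:hollow} (for hollowed-out small-aspect-ratio pieces) to say: for any plane $P$ with normal $\gg$, the number of tetrahedrons of piece $i$ cut by $P$ is $O(n_i^{2/3}\alpha_i^{-1/3}(\dd_i\cdot\gg)^{-2})$ (or the analogous hollowing bound). The key combinatorial observation is that if we pick $\gg$ at random on the sphere, then with constant probability $\dd_i \cdot \gg = \Omega(1)$ simultaneously for all $i$ is \emph{too much} to ask when $k$ is large; instead one argues that a random direction gives $(\dd_i \cdot \gg)^{-2}$ bounded in expectation (or with high probability up to a $\poly(k)$ or $\log$ factor), so that summing over $i$ and using concavity of $x \mapsto x^{2/3}$ (Jensen, exactly as in the proof of Theorem~\ref{thm:mainSmallAR}) yields a single plane cutting $O(k^{1/3} n^{2/3} \cdot \text{(loss factor)})$ tetrahedrons total. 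Sliding the plane along $\gg$ to balance the two sides gives a balanced separator. This is then applied recursively: $l = n^{1/6}$ controls how many times we subdivide, and $c_r = 1/3$ fixes the chunk size inside each hollowing, so that after $O(\log)$ levels of plane-cutting each leaf region has $O(r_i) = O(n_i^{1/3})$ interior vertices.

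The second step is the elimination-order bookkeeping. Having produced a recursive separator decomposition, I would identify the parameters $\alpha$ (number of leaves/recursive calls), $\beta$ (leaf size), $\gamma$ (top separator size) demanded by Lemma~\ref{lem:flexibleND}, read off from the construction above, and substitute. The top separator size $n^\gamma$ comes from the combined-plane bound inflated by the $k$-dependent loss factor; the leaf size $n^\beta$ is governed by $r_i \le n_i^{1/3}$; the number of recursive calls $n^\alpha$ is roughly $n r^{-1} \cdot l = n^{2/3} \cdot n^{1/6} = n^{5/6}$ up to $k$ factors. Plugging into the fill-in bound $O(n^{\alpha + \frac43\beta} + n^{\alpha+2\gamma})$ and the time bound $O(n^{\alpha+2\beta}+n^{\alpha+3\gamma})$ from Lemma~\ref{lem:flexibleND}, and tracking powers of $k$ through each term, should yield the advertised $O(n^{23/18}k^{44/9})$ fill-in and $O(n^{11/6}k^{22/3})$ runtime; I would present this as a short explicit arithmetic check rather than re-deriving Lemma~\ref{lem:flexibleND}.

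The main obstacle I anticipate is controlling the dependence on $k$ in the combined-separator step: because the $k$ longest directions $\dd_i$ can be spread out adversarially on the sphere, no single direction $\gg$ makes all the $(\dd_i \cdot \gg)^{-2}$ factors $O(1)$, and the naive union bound costs a factor polynomial in $k$ per level. Since there are $\Theta(\log n)$ levels, and each level's separator cost enters the fill-in bound through $n^{2\gamma}$ and $n^{3\gamma}$, even a modest per-level $\poly(k)$ loss compounds into the large exponent $k^{22/3}$ we see in the statement — so the delicate part is choosing the random-direction argument (and the balancing of the plane along $\gg$) carefully enough that the $k$-exponent comes out to exactly $22/3$ rather than something worse, while still keeping the $n$-exponent at $11/6$. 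I would handle this by deferring the tight version to Lemma~\ref{lem:kChunkNDMoreDetailed} and here only verifying that the stated parameter choices $c_r = 1/3$, $l = n^{1/6}$ are the ones that balance the two terms in Lemma~\ref{lem:flexibleND} against each other.
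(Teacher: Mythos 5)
Your top-level plan --- reduce to Lemma~\ref{lem:kChunkNDMoreDetailed} with $c_r=1/3$ and $l=n^{1/6}$ and verify the arithmetic --- is exactly what the paper does. However, your sketch of how the detailed lemma works misdescribes the construction in ways that would derail the arithmetic check you promise. The algorithm does not recursively cut with random planes over $O(\log n)$ levels down to leaves of size $r_i$. It picks \emph{one} direction $\dd$ (sampling $O(\log n)$ candidates only to boost success probability), which by a union bound satisfies $1/(10k)\le|\dd^\top\dd_i|\le 1-1/(10k)$ for \emph{all} $i$ simultaneously with probability at least $4/5$; it then cuts $\calT$ with $l$ \emph{parallel} planes orthogonal to $\dd$. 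The $l$ separators form a layered graph eliminated last by a one-dimensional nested dissection on the layers, and each of the $l+1$ slabs, with $O(n/l)$ tetrahedrons, is handled internally by standard Miller--Thurston nested dissection. The correct parameters for Lemma~\ref{lem:flexibleND} are therefore $n^{\alpha}=l=n^{1/6}$, $n^{\beta}=n/l=n^{5/6}$, and $n^{\gamma}=s=O(k^{22/9}n^{5/9})$, not your $\alpha=5/6$, $\beta=1/3$. With your identification the time term $n^{\alpha+3\gamma}$ becomes $n^{5/6+5/3}k^{22/3}=n^{5/2}k^{22/3}$, far exceeding the claim; with the correct one, $l\cdot s^{3}=n^{1/6}\cdot n^{5/3}k^{22/3}=n^{11/6}k^{22/3}$ and $l\cdot(n/l)^{2}=n^{11/6}$, while $l\cdot(n/l)^{4/3}=n^{23/18}$ and $l\cdot s^{2}=n^{23/18}k^{44/9}$, matching the statement.

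Relatedly, your attribution of the $k^{22/3}$ factor to a $\poly(k)$ loss compounding over $\Theta(\log n)$ levels is not the actual mechanism. The random direction is used once; the loss enters the single separator size $s$ as $k^{7/3+c_r/3}=k^{22/9}$ (an $O(k^{2})$ worst-case factor from $\cos^{-2}\theta\le 100k^{2}$ in the plane-crossing bounds of Lemma~\ref{lem:planeIntersect} and Lemma~\ref{lem:smallSeparatorPlane}, times $k^{1/3+c_r/3}$ from Jensen's inequality over the $k$ pieces), and $k^{22/3}$ is just $s^{\omega}=s^{3}$ in the cost of inverting that separator. Your expectation-based random-direction argument could probably be made to work, but the paper's version is a simultaneous worst-case guarantee, which is what makes the per-piece bound deterministic once $\dd$ is fixed.
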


The algorithm that achieves Lemma~\ref{lem:kChunkND}
is Algorithm~\ref{alg:NDkChunks} \textsc{ConvexTrussUnionND}
in Section~\ref{sec:NestedDissection}.
Given these lemmas, we can now sketch a proof of Theorem~\ref{thm:main}.

\begin{proof}[Proof of Theorem~\ref{thm:main}]
We bound the running time of
Algorithm~\ref{alg:TrussSolver} $\textsc{TrussSolver}$
with the preconditioner and nested dissection constructed
as per Lemma~\ref{lem:kChunkND}.

Since all the hollowings involve pieces with $r_i \le n_i^{1/3}$,
Definition~\ref{def:hollow} gives a bound of
$O(n^{2/3})$ on the condition
number, and in turn a bound of 
$O(n^{1/3} \log(1 / \epsilon))$ on
the number of PCG iterations via Theorem~\ref{thm:pcg}.
Furthermore, similar to the proof of Theorem~\ref{thm:mainSmallAR},
the Schur complement of $\calT$ onto the elements of $\calT'$
has size $O(n^{10/9})$, 
and computing them by eliminating all interior vertices of our hollowings takes time
$O(n^{4/3})$.

Thus, the total running time of Algorithm~\ref{alg:TrussSolver} is
\[
O\left( n^{4/3}
+ n^{11/6}k^{22/3}
+ n^{1/3} \log\left(1 / \epsilon\right)
\cdot
\left(
n^{10/9}
+
n^{23/18}k^{44/9}
\right)
\right) 
=
O\left(n^{11/6} k^{22/3} \log \left(1/\eps\right)\right).
\]
\end{proof}

\newcommand{\ns}{n}
\newcommand{\tetrahedron}{t}
\newcommand{\gmax}{\phi_{\max}}
\newcommand{\gmin}{\phi_{\min}}
\newcommand{\Trec}{T_{\text{rec}}}
\newcommand{\Tbfs}{T_{\text{BFS}}}
\newcommand{\surface}{\calT}
\newcommand{\dualSurface}{H_{\calS}}
\newcommand{\pa}{\text{pa}}
\newcommand{\tmin}{t_{\min}}
\newcommand{\pt}{\phi_t}

\newcommand\ihat{{\hat{{i}}}}
\newcommand\jhat{{\hat{{j}}}}
\newcommand\khat{{\hat{{k}}}}
\newcommand\shat{{\hat{{s}}}}

\newcommand{\vvtil}{\tilde{\vv}}
\newcommand{\uutil}{\tilde{\uu}}
\newcommand{\vvhat}{\hat{\vv}}
\newcommand{\uuhat}{\hat{\uu}}

\newcommand\ppbar{\boldsymbol{\overline{\mathit{p}}}}
\newcommand\pphat{\boldsymbol{\widehat{\mathit{p}}}}
\newcommand\qqbar{\boldsymbol{\overline{\mathit{q}}}}
\newcommand\qqhat{\boldsymbol{\widehat{\mathit{q}}}}

\newcommand\qNormalized[1]{\qqbar^{\left<#1\right>}}

\section{Bounding the Smallest Nonzero Eigenvalue of a {\esimple} Truss}
\label{sec:minEig}

In this section, we prove Lemma~\ref{lem:3dMinEig}, which lower bounds the smallest nonzero eigenvalue of a {\esimple} 3D truss.
We restate Lemma~\ref{lem:3dMinEig} in the following.

\begin{replemma}{lem:3dMinEig}
Let $\calT$ be an {\esimple} and stiffly-connected 3D truss.
Let $n$ be the number of vertices of $\calT$ and $\Delta$ be the diameter.
Let $\MM$ denote the associated stiffness matrix. 
Then, $\lambda_{\min}(\MM) = \Omega(n^{-1} \Delta^{-4})$ and $\rank(\MM) = 3n - 6$.
\end{replemma}

\subsection{Main Ideas}
\label{sec:mainIdea}

The proof of Lemma~\ref{lem:3dMinEig} is an extension of the path support lemma
by Daitch and Spielman~\cite{daitchS07}.
That proof relies on recentering a vector $\qq$, which is a unit vector orthogonal to the null space,
 with
respect to the a single face by transforming it
along the null space of $\MM$.

Let $\calT = \langle V, \{ \pp_i \}_{i \in V}, T, E, \gamma \rangle$ be a 3D stiffly-connected truss over $n$ vertices.
The null space of the stiffness matrix of $\calT$ 
can be characterized as:
\begin{enumerate}
\item $\pp^{x}$, $\pp^{y}$, $\pp^{z} \in \mathbb{R}^{3n}$:
for each $1 \le i \le n$, the corresponding 3-dimensional vector $\pp_i^x$
($\pp_i^y$ and $\pp_i^z$) has 1 for its $x$-coordinate
($y$-coordinate, and $z$-coordinate, respectively)
and 0 for the other two coordinates.
\item
$\pp^{\perp xy}$, $\pp^{\perp xz}$,
$\pp^{\perp yz} \in \mathbb{R}^{3n}$:
fix an arbitrary index $1\le c \le n$,
for each $1 \le i \le n$:
\begin{align*}
\pp^{\perp xy}_i
&= \left[-\left(\pp_i-\pp_c\right)_{y}, \left(\pp_i-\pp_c\right)_{x}, 0\right]^\top,\\
\pp^{\perp xz}_i
&= \left[-\left(\pp_i-\pp_c\right)_{z}, 0, \left(\pp_i-\pp_c\right)_{x}\right]^\top,\\
\pp^{\perp yz}_i
&= \left[0, -\left(\pp_i-\pp_c\right)_{z}, \left(\pp_i-\pp_c\right)_{y}\right]^\top.
\end{align*}
\end{enumerate}

Also, as many of our arguments are symmetric across dimensions,
we will use $d$, $d_1$ and $d_2$ to represent symmetric indexing
over the dimensions, or pairs of dimensions respectively.
Finally, as centering and exploring a simplicial complex
from a particular triangle introduces an ordering on the
tetrahedrons, faces, and edges,
we will define our edges, triangles, and tetrahedrons as ordered tuples:
\begin{enumerate}
\item Edges: $e = \langle e_1, e_2 \rangle$,
\item Triangles: we denote these as 
$s = \langle s_1, s_2, s_3 \rangle$.
Here $e(s)$ means the edge $\langle s_1, s_2 \rangle$.
\item Tetrahedrons: an ordered 4-tuples of pairwise adjacent points,
$t = \langle t_1, t_2, t_3, t_4 \rangle$.
Here $s(t)$ means the (triangle) surface $\langle t_1, t_2, t_3 \rangle$,
and $e(t)$ means the edge $e(s) = \langle t_1, t_2 \rangle$.
\end{enumerate}

With these notations in mind, 
we can center a vector $\qq$ w.r.t.
a particular
(oriented) triangle surface $s$.
This can be viewed as an extension of the centering lemma in~\cite{daitchS07}.
We prove the following Lemma in Section~\ref{sec:centerVector}.

\begin{lemma}
Given a stiffly-connected, {\esimple} truss $\calT = \langle V, \{ \pp_i \}_{i \in V}, T, E, \gamma \rangle$,
let $\qq$ be a vector orthogonal to the null space of its stiffness matrix.
For each oriented triangle $s$, 
there exists a (unique)  
vector $\qqbar^{\langle s \rangle}$ with scalar shift parameters
$c^{\left<s\right> \perp xy}$,
$c^{\left<s\right> \perp xz}$,
$c^{\left<s\right> \perp yz}$:
\begin{align}
\qqbar^{\langle s \rangle}
=
\qq + \sum_{d1d2} c^{\langle s \rangle \perp d1d2} \pp^{\perp d1d2}
\label{eqn:defC}
\end{align}
satisfying:
\begin{enumerate}
\item the plane containing the points
$\qNormalized{s}_{s}$ is parallel to the plane containing $\pp_{s}$.
\item The edge $\qNormalized{s}_{e(s)}$
is parallel to the edge $\pp_{e(s)}$.
\end{enumerate}
\label{lem:centering}
\end{lemma}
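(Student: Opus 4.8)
The plan is to set up the three scalar parameters $c^{\langle s\rangle \perp d_1 d_2}$ as the unique solution of a $3\times 3$ linear system and then verify that this choice enforces the two stated geometric conditions. First I would unpack what conditions 1 and 2 mean in coordinates. Fix the oriented triangle $s = \langle s_1, s_2, s_3\rangle$. Adding a multiple of $\pp^x, \pp^y, \pp^z$ (pure translations) never changes the \emph{shape} of the point configuration $\qqbar^{\langle s\rangle}_{s}$, nor the direction of any edge among those points, so translations are irrelevant here and we may ignore them (this is why only the rotational null-space vectors $\pp^{\perp d_1 d_2}$ appear in~\eqref{eqn:defC}). What remains is: the infinitesimal rotations $\pp^{\perp xy}, \pp^{\perp xz}, \pp^{\perp yz}$ act on the restriction $\qq_{s}$ of $\qq$ to the three points of $s$, and I want to show there is a unique linear combination of these three rotations whose addition to $\qq_{s}$ makes (i) the triangle $\qqbar^{\langle s\rangle}_{s}$ coplanar-parallel with $\pp_{s}$, i.e. its normal aligned with the normal of $\pp_s$, and (ii) the edge $\qqbar^{\langle s\rangle}_{e(s)} = \qqbar^{\langle s\rangle}_{s_2} - \qqbar^{\langle s\rangle}_{s_1}$ parallel to $\pp_{s_2} - \pp_{s_1}$.

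The key step is a dimension count. The map $(c^{\perp xy}, c^{\perp xz}, c^{\perp yz}) \mapsto$ (the resulting configuration $\qqbar^{\langle s\rangle}_{s}$ modulo translations) is affine; its linear part is the restriction of the rotation null-space to the three points of $s$. I claim the two geometric conditions together cut out exactly a three-dimensional affine constraint, matching the three free parameters, and that the corresponding linear system is nonsingular. Concretely: condition (ii), "edge $\qqbar^{\langle s\rangle}_{e(s)}$ parallel to $\pp_{e(s)}$," is two scalar equations (a vector in $\mathbb{R}^3$ being parallel to a fixed nonzero vector is a codimension-2 condition); condition (i), "triangle parallel to $\pp_s$" given that one of its edges is already correctly oriented, is one further scalar equation (the third vertex must lie in the correct affine plane through the already-placed edge). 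So $2 + 1 = 3$ equations in $3$ unknowns. I would then show the $3\times 3$ coefficient matrix is invertible by observing that its columns are the "derivatives" of these three quantities with respect to the three independent rotation generators, and that these derivatives are linearly independent precisely because the points $\pp_{s_1}, \pp_{s_2}, \pp_{s_3}$ are affinely independent (they form a genuine, nondegenerate triangle, since the truss is {\esimple} and every tetrahedron — hence every face — is nondegenerate with bounded aspect ratio). Existence and uniqueness of $\qqbar^{\langle s\rangle}$ then follow from invertibility of this system.

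The main obstacle I anticipate is bookkeeping the linear algebra cleanly: the three rotation generators $\pp^{\perp d_1 d_2}$ are each "centered" at the \emph{global} reference index $c$ from the null-space description, not at $s_1$, so when restricted to the three points of $s$ they are affine (not linear) functions of position, and I must be careful to factor out the translational ambiguity correctly before arguing nonsingularity — otherwise the system looks over- or under-determined. The clean way around this is to work in the quotient by translations from the start: replace each $\pp^{\perp d_1 d_2}_{s_j}$ by $\pp^{\perp d_1 d_2}_{s_j} - \pp^{\perp d_1 d_2}_{s_1}$, which depends only on $\pp_{s_j} - \pp_{s_1}$ and makes the three generators act as honest linearized rotations of the two edge vectors $\pp_{s_2} - \pp_{s_1}$ and $\pp_{s_3} - \pp_{s_1}$; nonsingularity is then the statement that an infinitesimal rotation of $\mathbb{R}^3$ is determined by its action on two linearly independent vectors, which is standard. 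A secondary, minor point is checking the orientation/sign conventions so that "parallel" (not merely "collinear in the underlying line") is what we get — but since we only ever add a small correction and the conditions are stated up to scaling, this does not affect existence or uniqueness.
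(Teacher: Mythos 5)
Your proposal is correct in substance but organizes the argument differently from the paper. The paper proceeds in two stages: it first finds coefficients $\alpha_{d_1d_2}$ achieving only condition 1, by writing that condition as a $2\times 3$ linear system in the three rotation coefficients and showing the coefficient matrix has rank $2$ (via an explicit rank computation, Claim~\ref{clm:matrixRank}, for a cross-product-type matrix built from a nonzero vector in the triangle's plane); it then adds an in-plane rotation $\gamma\,(\cdot)^{\perp_P}$ --- expressed as a combination of the three generators through the identity $\yy^{\perp_P} = \ww_z \yy^{\perp_{xy}} - \ww_y \yy^{\perp_{xz}} + \ww_x \yy^{\perp_{yz}}$ of Claim~\ref{clm:rotation} --- chosen so that the distinguished edge becomes parallel to $\pp_{e(s)}$ while condition 1 is preserved. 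Your single nonsingular $3\times 3$ system is arguably cleaner and, unlike the paper's two-stage construction, delivers the claimed uniqueness for free (the paper's proof as written only establishes existence). One caveat: your justification of nonsingularity, ``an infinitesimal rotation is determined by its action on two linearly independent vectors,'' does not literally apply, because your three constraints see only three scalar components of the rotation's action --- namely $\ww^{\top}(\uu\times\cc)$, $\ww^{\top}(\vv\times\cc)$, and $(\ww\times\uu)^{\top}(\uu\times\cc)$, where $\uu = \pp_{s_2}-\pp_{s_1}$, $\vv = \pp_{s_3}-\pp_{s_1}$ and $\ww$ is the unit normal of $\pp_s$ --- rather than the full action on two vectors. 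Nonsingularity does hold: the corresponding rows of the coefficient matrix are $\ww\times\uu$, $\ww\times\vv$, and $-\norm{\uu}_2^2\,\ww$; the first two lie in the triangle's plane and are independent exactly when $\uu,\vv$ are (cross product with the unit normal is a rotation of that plane), while the third is normal to it. But this short computation must actually be carried out; it is not the standard fact you cite.
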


Daitch and Spielman then showed that with any centering,
the value of $\qq^T \MM \qq$ is lower bounded by the sum of a
series of shifted values, or in simpler terms, the norm of
$\qNormalized{s}_{i} -\qNormalized{s}_{j}$ for some edge $ij$.
However, our extension of this bound (which we will
describe next in Lemma~\ref{lem:DS07Main}) to the 3-D case has
an exponential dependency on the distance in tetrahedrons
between $ij$ and $s$.
As a result, we first show the existence of a good centering,
namely one where there exist an edge close to $s$ whose endpoints
are far apart.
This notion of distance can be defined in terms of `hop count'
of tetrahedrons.
\begin{definition}
\label{def:TetDistance}
The \emph{tetrahedron-distance} between a pair of objects $x$ and $y$
in a simplicial complex is the shortest sequence of tetrahedrons
\[
t^{\left(0\right)}, 
t^{\left(1\right)}, 
\ldots
t^{\left(d\right)} 
\]
such that $x \subseteq t^{(0)}$, $y \subseteq t^{(d)}$,
and for all $1 \leq i \leq d$, $t^{(i - 1)}$ and $t^{(i)}$
share a triangle face.
\end{definition}
We remark that because all edges and angles are within some constant
range, this combinatorial distance is within constant factors of
the Euclidean distance of the associated points.
However, we will not make use of this connection.
\begin{lemma}
\label{lem:GoodInitMain}
Given a stiffly-connected, {\esimple} truss $\calT$
with $n$ vertices and diameter $\Delta$,
let $\qq$ be a unit vector orthogonal to the null space of the stiffness matrix of $\calT$.
There exists an oriented triangle $s$
and a pair of points $i,j$ within
tetrahedron-distance $O(1)$ of $s$ satisfying:
\[
\norm{\qNormalized{s}_{i}
-\qNormalized{s}_{j}}_2^2 
= \Omega \left( \frac{1}{ \Delta^4 n} \right).
\]
\end{lemma}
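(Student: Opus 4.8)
The plan is to prove Lemma~\ref{lem:GoodInitMain} by reducing it to a single smallest–singular–value estimate and then pulling out one good pair $(s,i,j)$ by pigeonhole over all $O(n)$ centerings. First I would observe that for a fixed oriented triangle $s$ and a pair of points $i,j$ within tetrahedron‑distance $O(1)$ of $s$, the map $\qq\mapsto\qNormalized{s}_i-\qNormalized{s}_j$ is linear in $\qq$: by Lemma~\ref{lem:centering} we have $\qNormalized{s}=\qq+\sum_{d1d2}c^{\langle s\rangle\perp d1d2}\pp^{\perp d1d2}$, the scalars $c^{\langle s\rangle\perp \bullet}$ are obtained by solving the constant‑size linear system encoding the two parallelism conditions — a system that is well conditioned because every tetrahedron has bounded aspect ratio and bounded edge lengths — and the generator differences $\pp^{\perp d1d2}_i-\pp^{\perp d1d2}_j$ do not depend on the centering reference index. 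Collecting, over all triangles $s$ and all $O(1)$ admissible pairs $ij$, these linear maps $L_{s,ij}\colon\mathbb{R}^{3n}\to\mathbb{R}^3$ into one operator $\Phi$ with $O(n)$ output blocks, it suffices to prove $\norm{\Phi\qq}_2^2=\Omega(\Delta^{-4})$ for every unit $\qq$ orthogonal to the null space of $\MM$; the largest block then has squared norm $\Omega(\Delta^{-4}/n)$, and the $(s,i,j)$ attaining it are the objects the lemma asks for. For the orthogonality I only need the easy inclusion: every infinitesimal rigid motion $v\mapsto A\pp_v+b$ ($A$ skew‑symmetric) lies in $\ker\MM$ because each edge term $(\pp_i-\pp_j)^\top A(\pp_i-\pp_j)$ vanishes, so for $\qq\perp\ker\MM$ and any rigid field $R$ we get $\norm{\qq-R}_2^2=\norm{\qq}_2^2+\norm{R}_2^2\ge\norm{\qq}_2^2$. (Step 2 below will in fact also show $\Phi$ is injective off $\ker\MM$, hence $\ker\Phi=\ker\MM$, but this is not needed as input.)

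The heart of the argument is the bound $\sigma_{\min}(\Phi)=\Omega(\Delta^{-2})$, which I would establish by propagating local rigid fits. Fix a unit $\qq\perp\ker\MM$ and let $\epsilon^2$ be the maximum of $\norm{\qNormalized{s}_i-\qNormalized{s}_j}_2^2$ over all admissible $(s,i,j)$ — exactly the quantity to be lower bounded. For each $s$, every difference of $\qNormalized{s}$ across a pair in the $O(1)$‑neighborhood of $s$ has norm $\le\epsilon$, and that neighborhood is connected with $O(1)$ graph diameter, so $\qNormalized{s}$ is within $O(\epsilon)$ per vertex of a constant there; since it differs from $\qq$ only by a rigid (infinitesimal‑rotation) field, this says $\qq$ agrees on the neighborhood of $s$, up to $O(\epsilon)$ per vertex, with a rigid field $R_s$. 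For two triangles $s,s'$ that are faces of a common tetrahedron $t$, the fields $R_s$ and $R_{s'}$ agree to within $O(\epsilon)$ on the four uniformly affinely independent vertices of $t$ (bounded aspect ratio and edge lengths); hence $R_s-R_{s'}$ has rotation part of norm $O(\epsilon)$ and drifts at rate $O(\epsilon)$, i.e.\ $\norm{(R_s-R_{s'})(u)}_2\le O(\epsilon)\,(1+\norm{u-v_t}_2)$ for every $u$, where $v_t$ is a vertex of $t$. Now fix a base triangle $s_0$; for an arbitrary vertex $v$, chain this estimate along a shortest path of face‑adjacent tetrahedrons from $s_0$ to $v$ (which exists since $\calT$ is stiffly connected). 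That path has $O(\Delta)$ steps — here I would use that tetrahedron‑distance is within constant factors of Euclidean distance, per the remark after Definition~\ref{def:TetDistance} — and every vertex it visits is within $O(\Delta)$ of $v$, so telescoping gives $\norm{\qq_v-R_{s_0}(v)}_2\le O(\epsilon)+\sum_{k=1}^{O(\Delta)}O(\epsilon)(1+\Delta)=O(\epsilon\Delta^2)$. Summing over the $n$ vertices, $\norm{\qq-R_{s_0}}_2^2=O(n\,\epsilon^2\Delta^4)$; combined with $\norm{\qq}_2^2\le\norm{\qq-R_{s_0}}_2^2$ this yields $1=O(n\epsilon^2\Delta^4)$, i.e.\ $\epsilon^2=\Omega\big(1/(\Delta^4 n)\big)$, which is the assertion of the lemma.

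I expect the propagation step to be the main obstacle. The delicate points are (i) getting the dependence on $\Delta$ sharp — replacing the $O(\Delta)$ path length by the trivial bound $O(n)$ on the number of tetrahedrons would lose a polynomial factor and break the target $1/(\Delta^4 n)$ — and (ii) making rigorous that the per‑step rigid‑motion errors add up rather than cancel, i.e.\ that the drift is genuinely $\Theta(\epsilon)$ per unit length along a geodesic. The approach hinted at in the Main Ideas is an alternative that stays purely linear‑algebraic: view each centering as a fixed operator — a rotation aligning the orientation at $s$, composed with the projection away from the null space of $\MM$ — treat the whole family $\{L_{s,ij}\}$ (equivalently $\Phi$) at once, and bound $\sigma_{\min}(\Phi)$ directly by working with the singular values of these projection/rotation operators together with an averaging over all centerings. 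I would develop the chaining argument first and fall back on the operator version if the constants in the chaining become unwieldy.
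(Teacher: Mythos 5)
Your proposal is correct and follows essentially the same route as the paper's proof: the paper also argues (in contrapositive form) that if all local differences are at most $\eps$ then nearby centerings have $O(\eps)$-close rotation coefficients (its Lemma~\ref{lem:CoefficientsClose}, proved via the singular values of the $\QQ_{\vv}$ operators, which is exactly your ``two rigid fields agreeing on an affinely independent quadruple'' step), chains this along an $O(\Delta)$-length path picking up $O(\eps\Delta)$ drift per step (its Lemma~\ref{lem:diffTwoVertices}), and sums over the $n$ vertices against $\norm{\qq}_2=1$ using orthogonality to the null space. Your rigid field $R_s$ is precisely the paper's $-\sum_{d_1d_2}c^{\langle s\rangle\perp d_1d_2}\pp^{\perp d_1d_2}$ plus a translation, so the two arguments coincide up to notation and framing.
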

We prove this lemma in Section~\ref{subsec:ExistenceOfGoodCentering}.

We can then check, via an argument similar to~\cite{daitchS07}, that such a centering and distance pair implies a large quadratic form.
The following lemma will be proved in Section~\ref{sec:proofDS07Main}.
\begin{lemma}
\label{lem:DS07Main}
Given a stiffly-connected, {\esimple} truss with stiffness matrix $\MM$, an oriented triangle $s$, and a pair of points $i,j$ within tetrahedron-distance $h$ of $s$, we have
\[
\norm{\qNormalized{s}}_{\MM}^2
\geq 2^{- \Theta \left( h \right)}
\norm{\qNormalized{s}_{i}
-\qNormalized{s}_{j}}_2^2.
\]
\end{lemma}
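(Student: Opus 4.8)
\textbf{Proof proposal for Lemma~\ref{lem:DS07Main}.}

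The plan is to follow the structure of the Path Lemma of Daitch--Spielman~\cite{daitchS07}, but to carry it out along a shortest chain of tetrahedrons connecting $s$ to the edge $ij$ and to carefully track how the rotational/projection operators accumulate along the chain. First I would fix a shortest sequence of tetrahedrons $t^{(0)}, t^{(1)}, \ldots, t^{(h)}$ witnessing the tetrahedron-distance between $s$ and $\{i,j\}$, so that $s \subseteq t^{(0)}$, $\{i,j\} \subseteq t^{(h)}$, and consecutive tetrahedrons share a triangle face. Using Definition~\ref{def:truss}, expand $\norm{\qNormalized{s}}_{\MM}^2 = \sum_{e=(a,b)\in E} \frac{\gamma(e)}{\norm{\pp_a-\pp_b}_2}\left(\bb^{(e)\top}\qNormalized{s}\right)^2$; since edge lengths and stiffness coefficients are bounded above and below by constants (the {\esimple} hypothesis), this is bounded below by a constant times $\sum_{e=(a,b)} \norm{\Pi_{\pp_a-\pp_b}\left(\qNormalized{s}_a - \qNormalized{s}_b\right)}_2^2$, where $\Pi_v$ is the orthogonal projection onto $\mathrm{span}(v)$. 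So it suffices to show that the deformation energy localized to the tetrahedrons $t^{(0)},\ldots,t^{(h)}$ controls, up to a factor $2^{-\Theta(h)}$, the single term $\norm{\qNormalized{s}_i - \qNormalized{s}_j}_2^2$.

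The core of the argument is a "propagation" step: for a single tetrahedron $t^{(m)}$ with vertices $\{a,b,c,d\}$, the six projected edge-energies $\sum_{\{u,v\}\subseteq t^{(m)}} \norm{\Pi_{\pp_u - \pp_v}(\qNormalized{s}_u - \qNormalized{s}_v)}_2^2$ control, up to an $O(1)$ factor depending only on the (bounded) aspect ratio of $t^{(m)}$, the full quantity $\sum_{\{u,v\}\subseteq t^{(m)}} \norm{(\qNormalized{s}_u - \qNormalized{s}_v) - \RR^{(m)}(\pp_u - \pp_v)}_2^2$ for the best-fit rotation-like linear map $\RR^{(m)}$ (the linearization of a rigid motion); this is exactly the local rigidity estimate that underlies the 2D path lemma, and in 3D it follows because four affinely independent points with bounded aspect ratio pin down an affine map from the six edge differences, and the projected differences along six generic directions are a well-conditioned linear image of the affine part. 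Now I would chain these across $m = 0, 1, \ldots, h$: between consecutive tetrahedrons sharing a face, the local best-fit maps $\RR^{(m)}$ and $\RR^{(m+1)}$ agree up to an error controlled by the shared face's edge-energies, but composing the comparisons means the operator norm bound on the discrepancy between $\qNormalized{s}_i - \qNormalized{s}_j$ and $\RR^{(0)}(\pp_i - \pp_j)$ picks up a product of operator norms $\prod_{m} \norm{\RR^{(m)}}$, each of which is $O(1)$ but bounded away from $1$, yielding the $2^{\Theta(h)}$ blow-up. Finally, the centering Lemma~\ref{lem:centering} guarantees that $\RR^{(0)}$ restricted to the plane and edge of $s$ acts as the identity, which kills the leading rigid-motion term at $s$ and lets me conclude that $\norm{\qNormalized{s}_i - \qNormalized{s}_j}_2^2 \le 2^{\Theta(h)} \cdot \norm{\qNormalized{s}}_{\MM}^2$, i.e. the claimed bound.

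The main obstacle I anticipate is controlling the accumulation of the linear operators cleanly. In the 2D path-of-triangles setting the intermediate maps are honest rotations, so their norms are exactly $1$ and nothing is lost; here, after projecting $\qq$ onto the complement of the null space and working with a fixed centering, the natural objects $\RR^{(m)}$ are only approximately orthogonal (they are linearizations of rigid motions plus small corrections coming from the deformation), so their operator norms are $1 + O(1)$ rather than $1$, and over $h$ steps this compounds to $2^{\Theta(h)}$. Making the per-step constant explicit — and, importantly, uniform over all {\esimple} complexes — requires using only the bounded aspect ratio, bounded edge lengths, and bounded stiffness coefficients, and carefully separating the "rigid part" (which composes with norm-$1$ factors in an idealized world) from the "deformation part" (which is what the energy is paying for). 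I would handle this by an induction on $m$ whose inductive hypothesis bounds $\norm{(\qNormalized{s}_{t^{(m)}_u} - \qNormalized{s}_{t^{(m)}_v}) - \RR^{(m)}(\pp_u - \pp_v)}_2$ in terms of $2^{O(m)}$ times the partial energy $\sum_{m' \le m}(\text{edge-energies of } t^{(m')})$, passing from $t^{(m)}$ to $t^{(m+1)}$ via the shared triangle face; the exponential factor is then simply the price of each inductive step multiplying through by the $O(1)$ operator-norm bound.
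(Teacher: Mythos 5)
Your proposal is correct in outline and would yield the stated inequality, but it takes a genuinely different route from the paper's. The paper does not fit a rigid motion to each tetrahedron. Instead it relabels the complex by a BFS of the rigidity graph rooted at the tetrahedron containing $s$, so that each subsequent tetrahedron contributes exactly one new vertex $i$ with parental face $\sigma_i$; for each such vertex it defines the local energy minimizer $\yy_i$ given the positions on the parental face and the net-stress vector $\dd_i = \qqbar_i - \yy_i$. The quadratic form is bounded below by $\Omega\bigl(\sum_i \norm{\dd_i}_2^2\bigr)$ using the non-degeneracy of each tetrahedron (Claims~\ref{clm:projectSum2} and~\ref{clm:projectSum}), and the edge differences $\qqbar_i - \qqbar_{\sigma_i(j)}$ are then recovered from the $\dd_i$'s by an explicit Cramer's-rule substitution (Claim~\ref{clm:diSolve}) unrolled into a binary recursion tree $\Trec^{(i,j)}$ with $2^{O(h)}$ nodes; the exponential loss is exactly the product of the per-node substitution matrices $\HH_{i_v,j_v}/\det(\FF_{i_v})$, each of constant operator norm (Claims~\ref{clm:recursionTreeNodes} and~\ref{clm:muBound}). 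Your scheme replaces this vertex-by-vertex bookkeeping with a per-tetrahedron infinitesimal-rigidity estimate and a comparison of consecutive fitted motions across shared faces: the paper's version buys complete explicitness (everything reduces to determinants of edge matrices), while yours isolates the geometric content more cleanly.

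Two points need care if you pursue your route. First, for the face-to-face comparison to pin anything down, the fitted maps $\RR^{(m)}$ must be constrained to linearized rigid motions, i.e.\ $x \mapsto \omega^{(m)} \times x + \tau^{(m)}$: a shared triangle spans only a $2$-plane, and a general best-fit affine map is \emph{not} determined by its restriction to that plane, whereas an infinitesimal rotation is determined by its action on two independent directions. Your phrase ``four affinely independent points \dots pin down an affine map'' is fine inside a single tetrahedron but does not transfer across a face. Second, once you compare $\RR^{(m)}$ with $\RR^{(m+1)}$ directly, the discrepancies $\norm{\omega^{(m)} - \omega^{(m+1)}}$ telescope additively, so this argument should lose only a factor polynomial in $h$ rather than $2^{\Theta(h)}$; the multiplicative accumulation of operator norms you invoke is intrinsic to a substitution scheme like the paper's, not to the rigid-motion comparison. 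This is not an error---a polynomial loss is stronger than what the lemma asserts, and your multiplicative induction still establishes the claimed $2^{-\Theta(h)}$ lower bound---but the induction is most naturally set up as a telescoping sum of per-step errors, with the centering conditions of Lemma~\ref{lem:centering} forcing $\omega^{(0)}$ to be controlled by the energy of the first tetrahedron, as you note.
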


\begin{proof}[Proof of Lemma~\ref{lem:3dMinEig}]
Consider an arbitrarily fixed unit vector $\qq$ that's orthogonal to the null space of $\MM$.
Let $s$ be the centering given by Lemma~\ref{lem:GoodInitMain},
and let $i,j$ be a pair of points within a constant tetrahedron-distance of $s$.
Lemma~\ref{lem:DS07Main} gives
\begin{align}
\norm{\qNormalized{s}}_{\MM}^2
\geq \Omega(1)
\norm{\qNormalized{s}_{i}
-\qNormalized{s}_{j}}_2^2
\geq \Omega\left( \frac{1}{\Delta^4n} \right).
\end{align}
On the other hand, by Equation~\eqref{eqn:defC}, $\norm{\qNormalized{s}}_2 \ge \norm{\qq}_2 = 1$.
Thus, $\lambda_{\min}(\MM) \ge \Omega\left( \frac{1}{\Delta^4n} \right)$.
\end{proof}

\subsection{Centering a Vector (Proof of Lemma~\ref{lem:centering})}
\label{sec:centerVector}

To prove Lemma~\ref{lem:centering}, 
we define the following operation.
Let $P$ be any fixed plane in $\mathbb{R}^3$ and let $\yy \in \mathbb{R}^3$. 
We define $\yy^{\perp_P}$ to be the vector obtained by first projecting $\yy$ onto plane $P$ and then rotating the projected vector on the plane counterclockwise by $\pi /2$.
The following claim shows that $\yy^{\perp_P}$ can be written as a linear combination of $\yy^{\perp_{xy}}, \yy^{\perp_{yz}}, \yy^{\perp_{xz}}$.

\begin{claim}[Rotation matrix]
Let $P$ be any fixed plane in $\mathbb{R}^3$, and let $\ww$ be its normal vector. 
Then,
\[
\forall \yy \in \mathbb{R}^3, 
\quad 
\yy^{\perp_P} = \ww_z \yy^{\perp_{xy}}
- \ww_y \yy^{\perp_{xz}}  
+ \ww_x \yy^{\perp_{yz}}.
\]
\label{clm:rotation}
\end{claim}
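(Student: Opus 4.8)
The plan is to prove Claim~\ref{clm:rotation} by direct linear-algebraic verification, exploiting the fact that both sides are linear in $\yy$. First I would observe that the map $\yy \mapsto \yy^{\perp_P}$ is linear: projection onto a plane through the origin is linear, and rotation by $\pi/2$ within that plane is linear, so the composition is a linear operator on $\mathbb{R}^3$. Likewise each of $\yy \mapsto \yy^{\perp_{xy}}$, $\yy \mapsto \yy^{\perp_{xz}}$, $\yy \mapsto \yy^{\perp_{yz}}$ is linear. Hence it suffices to check the claimed identity on a spanning set, e.g.\ the standard basis vectors $e_x, e_y, e_z$ (or, more cleverly, to check it as an identity of $3\times 3$ matrices).

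Next I would write down the operator $\yy \mapsto \yy^{\perp_{xy}}$ explicitly: projecting onto the $xy$-plane kills the $z$-coordinate, and a counterclockwise rotation by $\pi/2$ in the $xy$-plane sends $(a,b,0)$ to $(-b,a,0)$; so $\yy^{\perp_{xy}} = (-\yy_y, \yy_x, 0)^\top$, matching the null-space vector $\pp^{\perp xy}$ from the earlier characterization. Similarly $\yy^{\perp_{xz}} = (-\yy_z, 0, \yy_x)^\top$ and $\yy^{\perp_{yz}} = (0, -\yy_z, \yy_y)^\top$, where one must be careful that the orientation conventions for ``counterclockwise'' are consistent with the sign pattern in the statement. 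Then the right-hand side is the matrix
\[
M_\ww \defeq \ww_z \begin{pmatrix} 0 & -1 & 0 \\ 1 & 0 & 0 \\ 0 & 0 & 0 \end{pmatrix}
- \ww_y \begin{pmatrix} 0 & 0 & -1 \\ 0 & 0 & 0 \\ 1 & 0 & 0 \end{pmatrix}
+ \ww_x \begin{pmatrix} 0 & 0 & 0 \\ 0 & 0 & -1 \\ 0 & 1 & 0 \end{pmatrix}
= \begin{pmatrix} 0 & -\ww_z & \ww_y \\ \ww_z & 0 & -\ww_x \\ -\ww_y & \ww_x & 0 \end{pmatrix},
\]
i.e.\ the cross-product operator $\yy \mapsto \ww \times \yy$ when $\ww$ is a unit normal.

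Finally I would verify that $\yy \mapsto \ww \times \yy$ indeed equals $\yy \mapsto \yy^{\perp_P}$ for $\ww$ the unit normal of $P$. Decompose $\yy = \yy_\parallel + (\yy\cdot\ww)\ww$ into its component in $P$ and its component along $\ww$; then $\ww \times \yy = \ww \times \yy_\parallel$, which has magnitude $\|\yy_\parallel\|$, lies in $P$ (being orthogonal to $\ww$), is orthogonal to $\yy_\parallel$, and — by the right-hand rule with $\ww$ as the ``up'' direction — is exactly $\yy_\parallel$ rotated counterclockwise by $\pi/2$ in $P$. This is precisely the definition of $\yy^{\perp_P}$. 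I expect the main (minor) obstacle to be bookkeeping of orientation conventions: one must pin down ``counterclockwise'' in each coordinate plane and for a general plane $P$ consistently (all relative to the same right-handed frame), so that the signs $+\ww_z, -\ww_y, +\ww_x$ come out correctly rather than with some sign flipped; once the conventions are fixed, the computation is routine. I would also note in passing that the identity extends to non-unit $\ww$ by homogeneity, but the statement only needs the unit-normal case.
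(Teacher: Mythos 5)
Your proposal is correct and follows essentially the same route as the paper: both reduce the claim to the observation that the right-hand side is the cross-product (skew-symmetric) matrix $\RR$ associated with $\ww$, and that applying $\RR$ to $\yy$ coincides with projecting onto $P$ and rotating by $\pi/2$. If anything, your verification that $\ww\times\yy$ really is project-then-rotate (via the decomposition $\yy = \yy_\parallel + (\yy\cdot\ww)\ww$ and the right-hand rule) is slightly more careful than the paper, which simply declares $\RR$ to be the in-plane rotation and checks $\RR\ww = \zero$.
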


\begin{proof}
Let 
\[
\RR \defeq \left( \begin{array}{ccc}
0 & - \ww_z & \ww_y \\
\ww_z & 0 & - \ww_x \\
- \ww_y & \ww_x & 0
\end{array} \right)
\label{eqn:rotateMatrix}
\]
be the rotation matrix which rotates a vector on the plane $P$ counterclockwise by $\pi / 2$. 
Then,
\[
\yy^{\perp_P} = \RR \left( \yy - (\yy^\top \ww) \ww \right)
= \RR \left(\II - \ww^\top \ww \right) \yy.
\]
We can check that $\RR \ww = {\bf 0}$.
It implies that
$\yy^{\perp_P} = \RR \yy$.

On the other hand,
\[
\ww_z \yy^{\perp_{xy}}
- \ww_y \yy^{\perp_{xz}}  
+ \ww_x \yy^{\perp_{yz}}
= \left( \begin{array}{ccc}
0 & - \ww_z & \ww_y \\
\ww_z & 0 & - \ww_x \\
-\ww_y & \ww_x & 0 
\end{array} \right) \yy
= \RR \yy.
\]
Thus, the claim holds.
\end{proof}

\begin{claim}
Let $\hh \in \mathbb{R}^3$ be a nonzero vector.
Then matrix
\[
\HH \defeq \left( \begin{array}{ccc}
0 & - \hh_z & \hh_y \\
 \hh_z & 0 & - \hh_x \\
- \hh_y & \hh_x & 0 \\
\hh_x & \hh_y & \hh_z
\end{array} \right).
\]
has rank 3.
\label{clm:matrixRank}
\end{claim}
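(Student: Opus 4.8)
The plan is to show that the $4 \times 3$ matrix $\HH$ has full column rank by computing its $3\times 3$ minors and observing that they cannot all vanish when $\hh \neq \zero$. Concretely, the top three rows of $\HH$ form the familiar skew-symmetric cross-product matrix
\[
\left( \begin{array}{ccc}
0 & - \hh_z & \hh_y \\
 \hh_z & 0 & - \hh_x \\
- \hh_y & \hh_x & 0
\end{array} \right),
\]
whose null space is exactly $\Span{\hh}$ (this is just the statement $\hh \times \yy = \zero \iff \yy \parallel \hh$). So the top three rows already have rank $2$ whenever $\hh \neq \zero$. To boost the rank to $3$, it suffices to exhibit a vector in the column space direction, or dually, to check that the only vector $\yy$ killed by \emph{all four} rows is $\yy = \zero$: the first three rows force $\yy = \lambda \hh$ for some scalar $\lambda$, and then the fourth row gives $\hh^\top \yy = \lambda \norm{\hh}_2^2 = 0$, hence $\lambda = 0$ since $\hh \neq \zero$. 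Therefore $\HH \yy = \zero \implies \yy = \zero$, so $\HH$ has trivial kernel and thus rank $3$.

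An equivalent and perhaps cleaner route is to note that $\HH^\top \HH = (\norm{\hh}_2^2) \II_3$: indeed the first three rows contribute $\norm{\hh}_2^2 \II - \hh \hh^\top$ (the Gram matrix of the cross-product operator restricted to $\hh^\perp$), and the fourth row contributes $\hh \hh^\top$, and these sum to $\norm{\hh}_2^2 \II$. Since $\hh \neq \zero$, $\HH^\top \HH$ is invertible, which immediately gives $\rank(\HH) = 3$. I would likely present this version because it is a one-line computation and also records the useful fact that the columns of $\HH$ are orthogonal with equal norm $\norm{\hh}_2$.

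**I do not expect any real obstacle here** — the statement is elementary linear algebra and the proof is a short verification. The only thing to be slightly careful about is getting the sign pattern and the identification of the first three rows with the cross-product matrix right, so that the claim $\HH^\top \HH = \norm{\hh}_2^2 \II$ actually checks out entry by entry; this is routine. The reason this claim matters in context is that it will be applied with $\hh$ playing the role of a face normal in Claim~\ref{clm:rotation}, to argue that the rotation-operator matrix $\RR$ (built from $\hh$) together with the auxiliary row acts injectively, which is exactly what is needed to control the null-space shifts when centering the vector $\qq$.
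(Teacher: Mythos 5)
Your proof is correct, and it takes a genuinely different route from the paper's. The paper argues about the \emph{rows}: it notes the $2\times 2$ bottom-left minor has determinant $-(\hh_x^2+\hh_y^2)$, splits into cases, and then shows by contradiction that row 2 cannot be a linear combination of rows 3 and 4 (solving for the combination coefficients forces $\hh_x^2+\hh_y^2+\hh_z^2=0$). You instead argue about the \emph{columns}/kernel, exploiting that the top $3\times 3$ block is the cross-product matrix $\yy \mapsto \hh\times\yy$ with kernel $\Span{\hh}$, which the fourth row $\hh^\top$ then kills; your alternative computation $\HH^\top\HH = \smallnorm{\hh}_2^2\,\II$ (from $\QQ_{\hh}^\top\QQ_{\hh} = \smallnorm{\hh}_2^2\II - \hh\hh^\top$ plus $\hh\hh^\top$) is a clean one-line verification. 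Your version is arguably preferable: it handles all cases uniformly (the paper's argument implicitly divides by $\hh_z$ and $\hh_x$ and does not cleanly treat, e.g., $\hh_x=\hh_z=0$, where row 2 is the zero row and genuinely does lie in the span of rows 3 and 4, so one must fall back on row 1), and it meshes with the structure the paper develops anyway in Lemma~\ref{lem:Qeig}, which establishes exactly the singular values $1,1,0$ and null space $\Span{\vv}$ of the cross-product matrix that your argument uses.
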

\begin{proof}
The 2-by-2 bottom left submatrix has determinant $-(\hh_y^2 + \hh_x^2)$. If $\hh_y^2 + \hh_x^2 = 0$, then clearly $\HH$ has rank 3 and we have done; otherwise, the 3rd and the 4th rows are independent.

Now it suffices to show that the 2nd row is independent of the 3rd and the 4th rows.
Assume by contradiction, suppose
\begin{align*}
\hh_z &= -\alpha \hh_y + \beta \hh_x, \\
 0 &= \alpha \hh_x + \beta \hh_y, \\
 -\hh_x &= \beta \hh_z.
\end{align*}
By solving the last two equations, we get $\beta = -\hh_x / \hh_z$ and $\alpha = \hh_y / \hh_z$. 
Plugging these values into the 1st equation, we have
$\hh_x^2 + \hh_y^2 + \hh_z^2 = 0$, which contradicts that $\hh \neq {\bf 0}$.
\end{proof}

\begin{proof}[Proof of Lemma~\ref{lem:centering}]
Let $\ww$ be the normal vector of the plane containing $s = \langle \pp_{i_1}, \pp_{i_2}, \pp_{i_3} \rangle$, that is,
\[
\ww^\top (\pp_{i_2} - \pp_{i_1}) = 0, 
\ww^\top (\pp_{i_3} - \pp_{i_1}) = 0,
\mbox{ and }
\norm{\ww}_2 = 1.
\]

We first show that there exist $\alpha_{xy}, \alpha_{xz}, \alpha_{yz} \in \mathbb{R}$ such that the vector
\[
\zz \defeq
\qq + \sum_{d1d2} \alpha_{d1d2} \pp^{\perp_{d1d2}}
\]
satisfies the first condition.
It suffices to show that the following linear system has a solution for real numbers $\alpha_{d1d2}$'s:
\begin{align*}
\ww^\top (\qq_{i_2} - \qq_{i_1} +\sum_{d1d2} \alpha_{d1d2} (\pp_{i_2} - \pp_{i_1})^{\perp_{d1d2}}) &= 0, \\
\ww^\top (\qq_{i_3} - \qq_{i_1} +\sum_{d1d2} \alpha_{d1d2} (\pp_{i_3} - \pp_{i_1})^{\perp_{d1d2}}) &= 0. 
\end{align*}
Rearrange it and write it in matrix form:
\[
\left( \begin{array}{ccc}
\ww^\top (\pp_{i_2} - \pp_{i_1})^{\perp_{xy}} & \ww^\top (\pp_{i_2} - \pp_{i_1})^{\perp_{yz}} & \ww^\top (\pp_{i_2} - \pp_{i_1})^{\perp_{xz}} \\
\ww^\top  (\pp_{i_3} - \pp_{i_1})^{\perp_{xy}} & \ww^\top (\pp_{i_3} - \pp_{i_1})^{\perp_{yz}} & \ww^\top (\pp_{i_3} - \pp_{i_1})^{\perp_{xz}} 
\end{array} \right) 
\left( \begin{array}{c}
\alpha_{xy} \\
\alpha_{yz} \\
\alpha_{xz}
\end{array} \right)
=
\left( \begin{array}{c}
-\ww^\top(\xx_{i_2} - \xx_{i_1}) \\
-\ww^\top(\xx_{i_3} - \xx_{i_1})
\end{array} \right).
\]
It suffices to show that the 
coefficient matrix has rank 2.

Assume by contradiction, there is some $k \in \mathbb{R}$ such that
\[
\ww^\top (\pp_{i_2} - \pp_{i_1})^{\perp_{d1d2}} = k \ww^\top (\pp_{i_3} - \pp_{i_1})^{\perp_{d1d2}}, 
\quad \forall d1,d2
\]
It equals to
\[
\ww^\top  \left( (\pp_{i_2} - \pp_{i_1}) - k(\pp_{i_3} - \pp_{i_1}) \right)^{\perp_{d1d2}} = 0,
\quad \forall d1,d2
\]
Let $\hh \defeq (\pp_{i_2} - \pp_{i_1}) - k(\pp_{i_3} - \pp_{i_1})$.
Since vectors $\pp_{i_2} - \pp_{i_1}, \pp_{i_3} - \pp_{i_1}$ are not parallel, we have $\hh \neq {\bf 0}$.
Besides, $\hh \perp \ww$.
Write the above equation in matrix form:
\[
\left( \begin{array}{ccc}
0 & - \hh_z & \hh_y \\
 \hh_z & 0 & - \hh_x \\
- \hh_y & \hh_x & 0 \\
\hh_x & \hh_y & \hh_z
\end{array} \right)  \ww
={\bf 0}.
\]
By Claim~\ref{clm:matrixRank}, the coefficient matrix has rank 3, which implies that $\ww = {\bf 0}$.
It
contradicts that $\norm{\ww}_2 = 1$.

Then we show that there exist $\beta_{xy}, \beta_{xz}, \beta_{yz} \in \mathbb{R}$ such that 
\[
\qq^{\langle s \rangle}= \zz + \sum_{d1d2} \beta_{d1d2} \pp^{\perp_{ d1d2}}
\]
satisfies both conditions.

Let $P$ be the plane containing $\pp_{i_1}, \pp_{i_2}, \pp_{i_3}$.
Let $\gg \in \mathbb{R}^{3\ns}$ satisfy
\[
\gg_i = (\pp_i - \pp_1)^{\perp_P}, 
\quad \forall 1 \le i \le \ns
\]
Then, there exists an appropriate multiplier $\gamma \in \mathbb{R}$ such that the vector
\[
(\zz_{i_2} - \zz_{i_1}) + \gamma (\gg_{i_2} - \gg_{i_1})
= (\zz_{i_2} - \zz_{i_1})
+ \gamma (\pp_{i_2} - \pp_{i_1})^{\perp_P}
\]
is parallel to $\pp_{i_2} - \pp_{i_1}$.
Besides, since both $\zz$ and $\gg$ are parallel to the plane $P$, the vector $\qq^{\langle s \rangle} = \zz + \gamma \gg$ is parallel to the plane $P$.

By Claim~\ref{clm:rotation}, there exists real numbers $\beta_{xy}, \beta_{xz}, \beta_{yz}$ such that 
\[
\gamma \gg = \sum_{d1d2} \beta_{d1d2} \pp^{\perp_{d1d2}}.
\]
Let 
\[
c^{\langle s \rangle \perp_{d1d2}} = \alpha_{d1d2}+ \beta_{d1d2}, 
\quad \forall d1,d2
\]
This completes the proof.
\end{proof}

\subsection{Existence of Good Centering (Proof of Lemma~\ref{lem:GoodInitMain})}
\label{subsec:ExistenceOfGoodCentering}

In this section we prove Lemma~\ref{lem:GoodInitMain}.

The proof is by contradiction.
We assume that for \emph{every} centering
at a triangle $s$,
for every pair $\langle i, j \rangle$ within tetrahedron-distance 
3 of $\left<s\right>$ satisfies
\begin{align}
\norm{\qNormalized{s}_{i} -\qNormalized{s}_{j}}_2
\leq \epsilon,
\label{eqn:assumption}
\end{align}
where $\epsilon \defeq \sqrt{\frac{1}{ \beta  \Delta^4 n}}$ for a sufficiently large constant $\beta$.

Recall that in Lemma~\ref{lem:centering},
for each centering at $\left<s\right>$, we define 3 scalar coefficients
\[
c^{\left<s\right> \perp xy},
c^{\left<s\right> \perp xz},
c^{\left<s\right> \perp yz} \in \mathbb{R}
\]
for the $3$ null space vectors in Equation~\eqref{eqn:defC}.
We will write the vector containing these 3 coefficients as $\cc^{\left<s\right>}$.

To prove Lemma~\ref{lem:GoodInitMain}, we need the following lemma.
It says that, under the assumption in Equation~\eqref{eqn:assumption}, the difference between the coefficient vectors w.r.t. to two close centering triangles is small.

\begin{lemma}
\label{lem:CoefficientsClose}
Assume that for every centering triangle $s$ and every pair of points $i, j$ within
distance $3$ of $\left<s\right>$ satisfies Equation~\eqref{eqn:assumption}. 
Then for every pair of centering triangles $s_1$ and $s_2$ within constant tetrahedron distance to each other,  
we have
\[
\norm{\cc^{\left<s_1\right>} - \cc^{\left<s_2\right>}}_2
= O (\eps).
\]
\end{lemma}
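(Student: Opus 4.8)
The plan is to bound $\norm{\cc^{\left<s_1\right>} - \cc^{\left<s_2\right>}}_2$ by relating both centerings to the geometry of a small patch of tetrahedrons that contains both $s_1$ and $s_2$. First I would observe that, by definition~\eqref{eqn:defC}, the difference $\qNormalized{s_1} - \qNormalized{s_2}$ equals $\sum_{d1d2}(c^{\left<s_1\right>\perp d1d2} - c^{\left<s_2\right>\perp d1d2}) \pp^{\perp d1d2}$, i.e.\ it lies entirely in the three-dimensional "rotational" part of the null space spanned by $\pp^{\perp xy}, \pp^{\perp xz}, \pp^{\perp yz}$. So the quantity we want to control is exactly the coordinate vector of $\qNormalized{s_1} - \qNormalized{s_2}$ in this basis. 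The key reduction is therefore: it suffices to show that $\qNormalized{s_1}$ and $\qNormalized{s_2}$ are close \emph{on some constant-size set of points} near $s_1, s_2$, and then argue that a rotational null-space vector that is small on such a set must have a small coefficient vector.

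For the first half, I would pick a point $v$ that lies in a tetrahedron within constant tetrahedron-distance of both $s_1$ and $s_2$ (this exists since $s_1, s_2$ are within constant distance, and aspect ratios/edge lengths are bounded so only $O(1)$ tetrahedrons are involved). Using the triangle inequality along a bounded-length chain of edges — each edge-difference controlled by the hypothesis~\eqref{eqn:assumption} applied at \emph{both} centerings $s_1$ and $s_2$ (each such edge is within tetrahedron-distance $3$ of the relevant $s$) — I would get $\norm{\qNormalized{s_1}_v - \qNormalized{s_2}_v}_2 = O(\eps)$ for $v$ ranging over a constant-size cluster of points, say the vertices of the $O(1)$ tetrahedrons joining $s_1$ to $s_2$. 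Here I need the fact that $\qNormalized{s_1}$ and $\qNormalized{s_2}$ differ from $\qq$ only by null-space shifts, so $\qNormalized{s_1}_i - \qNormalized{s_1}_j = \qNormalized{s_1}_i - \qNormalized{s_1}_j$ is a genuine "edge length" that~\eqref{eqn:assumption} bounds directly, and likewise anchoring: since both centerings fix the position of (a subset of) the $s$-points by conditions 1--2 of Lemma~\ref{lem:centering}, I can pin down the absolute discrepancy rather than just pairwise differences by using a common reference triangle or overlapping triangle between $s_1$ and $s_2$'s neighborhoods.

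For the second half — extracting a coefficient bound from smallness on a point cluster — I would write, for a cluster of points $\pp_{v_1},\dots,\pp_{v_m}$ (with $m = O(1)$) that affinely span $\mathbb{R}^3$, the linear map $L$ sending $(a_{xy}, a_{xz}, a_{yz}) \mapsto \big( \sum a_{d1d2} \pp^{\perp d1d2}_{v_1}, \dots, \sum a_{d1d2}\pp^{\perp d1d2}_{v_m}\big)$. Since the points span $\mathbb{R}^3$ and have bounded coordinates relative to the diameter and mutual distances bounded below by constants (edge lengths are $\Theta(1)$), $L$ is injective with smallest singular value bounded below by a constant depending only on the shape parameters — this is where I invoke that the truss is edge-simple (bounded aspect ratio, edge lengths), so the local geometry is uniformly non-degenerate. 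Then $\norm{\cc^{\left<s_1\right>} - \cc^{\left<s_2\right>}}_2 \le \sigma_{\min}(L)^{-1} \cdot \big(\sum_v \norm{\qNormalized{s_1}_v - \qNormalized{s_2}_v}_2^2\big)^{1/2} = O(1)\cdot O(\eps) = O(\eps)$, as desired.

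The main obstacle I anticipate is the anchoring step: condition~\eqref{eqn:assumption} only bounds \emph{pairwise differences} $\norm{\qNormalized{s}_i - \qNormalized{s}_j}_2$, so a priori $\qNormalized{s_1}$ and $\qNormalized{s_2}$ could differ by a large \emph{translation} even while being close on every edge. I expect this is resolved by the normalization built into Lemma~\ref{lem:centering} — conditions 1 and 2 fix both the plane and an edge of $\qNormalized{s}$ to be parallel to those of $\pp$, which kills the rotational freedom locally; but I still need to check that the residual translational freedom (the $\pp^x,\pp^y,\pp^z$ part) does not enter, which it does not because~\eqref{eqn:defC} only allows $\pp^{\perp d1d2}$ shifts, not pure translations. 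So the genuine work is verifying that a vector of the form $\sum_{d1d2} a_{d1d2}\pp^{\perp d1d2}$ which is $O(\eps)$ on every edge of a constant-size non-degenerate cluster has $\norm{a}_2 = O(\eps)$ — and this follows from the same injectivity/singular-value argument applied to the \emph{difference} operator on edge vectors rather than on points, again using edge-simplicity to get a uniform lower bound.
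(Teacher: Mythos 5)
Your proposal is correct and, once you make the self-correction in your final paragraph, it is essentially the paper's proof: the ``difference operator on edge vectors'' you describe is exactly the matrix $\QQ_{(\pp_i - \pp_j)}$ acting on $\cc^{\left<s_1\right>} - \cc^{\left<s_2\right>}$, and your singular-value lower bound for a non-degenerate local cluster is supplied by Lemma~\ref{lem:QTwo} applied to two non-parallel edges of a common tetrahedron (non-degeneracy coming from bounded aspect ratio). Note that your intermediate goal of bounding $\norm{\qNormalized{s_1}_v - \qNormalized{s_2}_v}_2$ pointwise is neither achievable from the pairwise hypothesis nor needed --- only the edge-difference version, which you correctly settle on, enters the argument.
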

\newcommand{\jsum}{\sum_{d_1d_2}}

The next lemma implies that: for any two vertices such that each is centered w.r.t. a triangle close to itself, the difference between the two centered vertices is small.

\begin{lemma}
Let $u, w$ be two arbitrary vertices of $\calT$. 
Let $s_u$ ($s_w$) be a triangle containing $u$ (and $w$, respectively).
Under the assumption in Equation~\eqref{eqn:assumption}, we have
\[
\norm{\qNormalized{s_u}_{u} - \qNormalized{s_w}_{w}}_2 = O\left( \Delta^2 \epsilon \right).
\]
\label{lem:diffTwoVertices}
\end{lemma}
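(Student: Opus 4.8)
\textbf{Proof proposal for Lemma~\ref{lem:diffTwoVertices}.}

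The plan is to combine a ``local'' estimate (that centering at a nearby triangle does not move a vertex much) with a ``global chaining'' estimate (that the underlying truss is stiffly connected, hence low-diameter in tetrahedron-distance, and that the coefficient vectors vary slowly by Lemma~\ref{lem:CoefficientsClose}). The point is that $\qNormalized{s_u}_u$ and $\qNormalized{s_w}_w$ differ by two kinds of contributions: (i) the difference between the raw vector $\qq$ evaluated at $u$ versus at $w$, and (ii) the difference coming from the null-space shift vectors $\pp^{\perp d_1 d_2}$, whose coefficients are $\cc^{\left<s_u\right>}$ versus $\cc^{\left<s_w\right>}$.

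First I would reduce to the case where $s_u$ and $s_w$ are close by chaining. Since $\calT$ is stiffly connected and {\esimple} with bounded edge lengths, its rigidity graph is connected and its tetrahedron-distance diameter is $O(\Delta)$ (edges have constant length, so there are $O(\Delta)$ hops between any two tetrahedrons). Pick a path of tetrahedrons $t^{(0)}, \ldots, t^{(m)}$ with $m = O(\Delta)$ connecting a tetrahedron containing $u$ to one containing $w$, and for each $t^{(\ell)}$ pick a triangle face $s^{(\ell)}$ of it; set $s^{(0)} = s_u$, $s^{(m)} = s_w$. For each consecutive pair $s^{(\ell)}, s^{(\ell+1)}$, Lemma~\ref{lem:CoefficientsClose} gives $\norm{\cc^{\left<s^{(\ell)}\right>} - \cc^{\left<s^{(\ell+1)}\right>}}_2 = O(\eps)$, so by the triangle inequality $\norm{\cc^{\left<s_u\right>} - \cc^{\left<s_w\right>}}_2 = O(\Delta \eps)$. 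Next I need the analogous bound for the vertex values themselves: for any vertex $v$ and any two triangles $s, s'$ containing $v$ (or near $v$), Equation~\eqref{eqn:assumption} applied within the constant-size tetrahedron neighborhood bounds $\norm{\qNormalized{s}_v - \qq_v - (\qNormalized{s'}_v - \qq_v)}_2$ in terms of $\eps$ plus the null-space-coefficient difference times the local magnitude of $\pp^{\perp d_1 d_2}$; again chaining along the $O(\Delta)$-hop path and telescoping, one gets $\norm{\qNormalized{s_u}_u - \qNormalized{s_w}_w}_2 = O(\Delta \eps) + (\text{coefficient error}) \cdot (\text{bound on } \norm{\pp^{\perp d_1 d_2}_v}_2)$.

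The second ingredient is controlling the magnitude $\norm{\pp^{\perp d_1 d_2}_v}_2$. By definition these null-space vectors have entries that are coordinates of $\pp_v - \pp_c$, so $\norm{\pp^{\perp d_1 d_2}_v}_2 = O(\Delta)$. Combined with the coefficient error $O(\Delta \eps)$ from Lemma~\ref{lem:CoefficientsClose}, this contributes $O(\Delta) \cdot O(\Delta \eps) = O(\Delta^2 \eps)$, which dominates the $O(\Delta \eps)$ term and yields the claimed bound. So the skeleton is: (1) bound the tetrahedron-distance diameter by $O(\Delta)$ using stiff connectivity and bounded edge length; (2) chain Lemma~\ref{lem:CoefficientsClose} along an $O(\Delta)$-hop path to get $\norm{\cc^{\left<s_u\right>} - \cc^{\left<s_w\right>}}_2 = O(\Delta \eps)$; (3) use Equation~\eqref{eqn:assumption} locally to relate $\qNormalized{s}_v$ at adjacent triangles, telescope, and absorb the null-space shift using $\norm{\pp^{\perp d_1 d_2}_v}_2 = O(\Delta)$.

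I expect the main obstacle to be the bookkeeping in step (3): carefully separating, at each hop, the part of $\qNormalized{s^{(\ell)}}_v - \qNormalized{s^{(\ell+1)}}_v$ that is controlled directly by the small-difference assumption~\eqref{eqn:assumption} (which only applies to pairs of vertices within constant tetrahedron-distance of a single centering triangle, not to the same vertex under two different centerings) from the part that is a pure null-space translation whose coefficients Lemma~\ref{lem:CoefficientsClose} controls. One must pass through an intermediate quantity such as $\qNormalized{s^{(\ell)}}_v$ versus $\qNormalized{s^{(\ell)}}_{v'}$ for $v'$ a shared vertex of $t^{(\ell)}$ and $t^{(\ell+1)}$, so that each individual comparison is either ``same centering, nearby vertices'' (handled by~\eqref{eqn:assumption}) or ``same vertex, nearby centerings'' (handled by writing the difference explicitly via Equation~\eqref{eqn:defC} and Lemma~\ref{lem:CoefficientsClose}). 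Getting these triangle inequalities to line up without losing more than an $O(\Delta)$ factor — rather than something larger like $O(\Delta^2)$ in the number of hops — is the delicate point, but it should go through since each hop costs only $O(\eps)$ and there are $O(\Delta)$ of them.
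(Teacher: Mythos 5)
Your proposal is correct and matches the paper's proof in all essentials: the paper also telescopes along an $O(\Delta)$-hop path, alternating ``same centering, nearby vertices'' terms (each $O(\eps)$ by the assumption) with ``same vertex, nearby centerings'' terms (each $O(\Delta\eps)$ by Lemma~\ref{lem:CoefficientsClose} times the $O(\Delta)$ magnitude of the null-space shift vectors), summing to $O(\Delta^2\eps)$. The ``delicate point'' you flag at the end is exactly the decomposition the paper uses, so there is no gap.
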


\begin{proof}
Let $u = v_1, v_2, \ldots, v_f = w$ be a shortest path from $u$ to $w$.
Note $f \le \Delta$.
Let $s_i$ be a triangle next to $v_i$ for $2 \le i \le f-1$.
The path from vertex $u$, centered at $\langle s_u \rangle$, to vertex $w$, centered at $\langle s_w \rangle$, can be expressed as the following:
\[
\qNormalized{s_u}_{u} - \qNormalized{s_w}_{w} = 
\qNormalized{s_u}_{u} - \qNormalized{s_u}_{v_2} 
+ \sum_{2\le i \le f-1} \left( \qNormalized{s_{i}}_{v_{i}} -\qNormalized{s_{i+1}}_{v_{i}} \right)
+ \sum_{2 \le i \le f-1} \left( \qNormalized{s_{i+1}}_{v_{i}} -\qNormalized{s_{i+1}}_{v_{i+1}} \right).
\]
Taking $\ell_2$ norm on both sides and applying the triangle inequality, we an bound the norm of the LHS by the sum of $\ell_2$ norm of each term in the RHS.

By Equation~\eqref{eqn:defC} and the triangle inequality, 
\[
\norm{\qNormalized{s_{i}}_{v_i} - \qNormalized{s_{i+1}}_{v_{i}}}_2 
\le \jsum{ \abs{\cc^{\left<s_i\right> \perp d1d2} - \cc^{\left<s_{i+1}\right> \perp d1d2}}
\norm{(\pp_{v_i} - \pp_c)^{\perp d1d2}}}_2.
\]
Apply Lemma~\ref{lem:CoefficientsClose}:
\begin{align}
\norm{\qNormalized{s_{i}}_{v_i} - \qNormalized{s_{i+1}}_{v_{i}}}_2
= O(\Delta \eps).
\label{eqn:diffTwoTriangles}
\end{align}
Together with our assumption in Equation~\eqref{eqn:assumption}, we have
\[
\norm{\qNormalized{s_u}_{u} - \qNormalized{s_w}_{w}}_2
\le O(\Delta^2 \eps).
\]
\end{proof}

Now we prove Lemma~\ref{lem:GoodInitMain}.

\begin{proof}[Proof of Lemma~\ref{lem:GoodInitMain}]
For each vertex $i$, let $s_i$ denote an arbitrary triangle next to vertex $i$.
We can write vector $\qq$ as
\begin{align}
\qq = \qqhat + 
\qqtilde - \sum_{d_1d_2} 
c^{\left<s_1 \right>\perp d_1d_2} \pp^{\perp d_1 d_2} + \ee,
\label{eqn:qRewrite}
\end{align}
where 
\[
\qqhat_i = \qNormalized{s_i}_i - \qNormalized{s_1}_1, \qqtilde_i = \qNormalized{s_1}_i - \qNormalized{s_i}_i,\ee_i = \qNormalized{s_1}_1.
\]
Note that the last two terms of Equation~\eqref{eqn:qRewrite} are in the null space of $\MM$.
Thus, 
\[
\norm{\qqhat + \qqtilde}_2 \ge \norm{\qq}_2 = 1.
\]
On the other hand, by the triangle inequality, 
\[
\norm{\qqhat + \qqtilde}_2  \le \norm{\qqhat}_2 + \norm{\qqtilde}_2
 = \left( \sum_i \norm{\qNormalized{s_i}_i - \qNormalized{s_1}_1}_2^2 \right)^{1/2}
+ \left( \sum_i \norm{\qNormalized{s_1}_i - \qNormalized{s_i}_i}_2^2 \right)^{1/2}.
\]
We apply Lemma~\ref{lem:diffTwoVertices} to the first term, and apply Equation~\eqref{eqn:diffTwoTriangles} (which is true for any two close centering triangles) $\Delta$ times for the second term:
\[
\norm{\qqhat + \qqtilde}_2  \le \norm{\qqhat}_2 + \norm{\qqtilde}_2
= O(\sqrt{n} \Delta^2 \eps).
\]
By our choice of $\eps \leftarrow \frac{1}{  \beta  \sqrt{n} \Delta^2 }$
for a sufficiently large constant $\beta$, we get 
a contradiction.
\end{proof}

It remains to prove Lemma~\ref{lem:CoefficientsClose}.
For it, we define a matrix w.r.t. a given vector $\vv \in \mathbb{R}^3$:
\begin{align}
\QQ_{\vv}
\defeq \left( \vv^{\perp{yz}}, \vv^{\perp{xz}}, \vv^{\perp{xy}} \right)
= \left( \begin{array}{ccc}
0 & -\vv_z & \vv_y \\
\vv_z & 0 & - \vv_x \\
-\vv_y & \vv_x & 0
\end{array} \right).
\label{eqn:Q}
\end{align}
We will use the following properties of $\QQ_{\vv}$.
\begin{lemma}
\label{lem:Qeig}
For the matrix $\QQ_{\vv}$ as defined in Equation~\eqref{eqn:Q},
its singular values are $0, 1, 1$, and its null space is
multiples of the vector $\vv$.
\end{lemma}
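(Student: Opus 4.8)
The plan is to recognize $\QQ_{\vv}$ as the matrix of the cross-product operator $\yy \mapsto \vv \times \yy$ and then read off the spectrum from a single well-known identity. First I would verify by direct component comparison that $\QQ_{\vv}\yy = \vv \times \yy$ for every $\yy \in \mathbb{R}^3$; this is immediate from the displayed formula in Equation~\eqref{eqn:Q}, since the three entries of $\QQ_\vv \yy$ are exactly $(\vv_y\yy_z - \vv_z\yy_y,\ \vv_z\yy_x - \vv_x\yy_z,\ \vv_x\yy_y - \vv_y\yy_x)$. In particular $\QQ_\vv$ is skew-symmetric, so $\QQ_\vv^\top = -\QQ_\vv$ and hence $\QQ_\vv^\top \QQ_\vv = -\QQ_\vv^2$.

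Next I would compute $\QQ_\vv^2$ via the vector triple product (BAC--CAB) identity: for all $\yy$,
\[
\QQ_\vv^2 \yy = \vv \times (\vv \times \yy) = (\vv \cdot \yy)\,\vv - (\vv \cdot \vv)\,\yy,
\]
so $\QQ_\vv^2 = \vv\vv^\top - \norm{\vv}_2^2\,\II$ and therefore $\QQ_\vv^\top \QQ_\vv = \norm{\vv}_2^2\,\II - \vv\vv^\top$. This symmetric PSD matrix is diagonalized by any orthonormal basis adapted to $\vv$: it kills $\vv$ (since $(\norm{\vv}_2^2 \II - \vv\vv^\top)\vv = \norm{\vv}_2^2\vv - \norm{\vv}_2^2 \vv = \zero$) and acts as $\norm{\vv}_2^2\,\II$ on the two-dimensional subspace $\vv^{\orthog}$. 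Consequently the eigenvalues of $\QQ_\vv^\top\QQ_\vv$ are $0, \norm{\vv}_2^2, \norm{\vv}_2^2$, so the singular values of $\QQ_\vv$ are $0, \norm{\vv}_2, \norm{\vv}_2$, which equal $0,1,1$ when $\vv$ is a unit vector.

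Finally, for the null space: the kernel of $\QQ_\vv$ coincides with the kernel of $\QQ_\vv^\top\QQ_\vv$, which by the computation above is exactly $\mathrm{span}(\vv)$; equivalently, $\QQ_\vv \vv = \vv \times \vv = \zero$, and since $\QQ_\vv$ has rank $2$ (two nonzero singular values) its null space is one-dimensional and hence equals $\mathrm{span}(\vv)$. I do not anticipate any real obstacle here — the only thing worth stating carefully is the identification with the cross product, after which everything follows from the triple-product identity; one could alternatively grind the $3\times 3$ product $\QQ_\vv^\top\QQ_\vv$ entrywise, but that is strictly more tedious and offers no extra insight.
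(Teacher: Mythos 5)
Your proof is correct and rests on the same key identification as the paper's: both recognize $\QQ_{\vv}\yy = \vv\times\yy$. The only difference is in how the spectrum is then extracted — the paper argues geometrically (the map projects onto the plane $\vv^{\orthog}$ and rotates by $\pi/2$, which kills $\vv$ and preserves lengths on $\vv^{\orthog}$), while you compute $\QQ_{\vv}^\top\QQ_{\vv} = \norm{\vv}_2^2\,\II - \vv\vv^\top$ algebraically via skew-symmetry and the triple-product identity; your route has the minor advantage of making explicit that the nonzero singular values are $\norm{\vv}_2$, so the stated values $0,1,1$ hold exactly when $\vv$ is a unit vector, an assumption the paper's geometric description uses implicitly.
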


\begin{proof}
For any vector $\pp \in \mathbb{R}^3$, the cross product $\vv \times \pp = \QQ_{\vv} \pp$.
That is, $\QQ_{\vv} \pp$ is the vector obtained by:
\begin{enumerate}
\item First projecting $\pp$ onto the plane with normal vector $\vv$,
say	$P_{\perp \vv}$, and
\item then rotating the projected vector on the plane $P_{\perp \vv}$
by $\frac{\pi}{2}$ counterclockwise.
\end{enumerate}
From this description, we can infer that
\[
\nulls(\QQ_{\vv}) = \Span{\vv}
\]
and any vector orthogonal to $\nulls(\QQ_{\vv})$ is on this plane $P_{\perp \vv}$.
Such vectors are not affected by the first step projection,
and their lengths are not changed by the subsequent rotation.
This means for such vectors $\uu \perp \nulls(\QQ_{\vv})$
we have $\uu^\top \QQ_{\vv}^\top \QQ_{\vv} \uu  = \norm{\uu}_2^2$.
Thus, the singular values of $\QQ_{\vv}$ are 1,1,0.
\end{proof}

\begin{lemma}
\label{lem:QTwo}
If $\vv$ and $\uu$ are vectors with length at least $1$ such that
the angle between them is 
$\theta \in (0, \pi)$, then the matrix
\[
\QQ_{\vv}^{\top}\QQ_{\vv}
+ \QQ_{\uu}^{\top}\QQ_{\uu}
\]
is full rank, and has minimum eigenvalue at least $2 \sin^2 \frac{\theta}{2}$.
\end{lemma}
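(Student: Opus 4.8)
The plan is to reduce the statement to a two-dimensional computation by exploiting the explicit description of $\QQ_{\vv}$ from Lemma~\ref{lem:Qeig}. Recall that $\QQ_{\vv}^\top \QQ_{\vv}$ is the (scaled) orthogonal projection onto the plane $P_{\perp \vv}$: indeed, by Lemma~\ref{lem:Qeig}, $\QQ_{\vv}$ has singular values $1,1,0$ with null space $\Span{\vv}$, so $\QQ_{\vv}^\top \QQ_{\vv} = \II - \frac{\vv\vv^\top}{\norm{\vv}_2^2}$ when $\norm{\vv}_2 = 1$; more generally for $\norm{\vv}_2 \geq 1$ we have $\QQ_{\vv}^\top\QQ_{\vv} \pgeq \II - \frac{\vv\vv^\top}{\norm{\vv}_2^2}$, since rescaling $\vv$ to unit length does not change the null space but scaling up $\vv$ only enlarges $\QQ_{\vv}^\top\QQ_{\vv}$ on the orthogonal complement. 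So it suffices to lower bound the minimum eigenvalue of $M \defeq \left(\II - \frac{\vv\vv^\top}{\norm{\vv}_2^2}\right) + \left(\II - \frac{\uu\uu^\top}{\norm{\uu}_2^2}\right)$, i.e. of $2\II - \vhat\vhat^\top - \uhat\uhat^\top$ where $\vhat, \uhat$ are the unit vectors along $\vv, \uu$.

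The key step is then a direct eigen-analysis of $2\II - \vhat\vhat^\top - \uhat\uhat^\top$. This matrix acts as $2\II$ on the one-dimensional space orthogonal to $\Span{\vhat, \uhat}$, so the interesting part is the restriction to the plane $\Span{\vhat,\uhat}$. On that plane, $\vhat\vhat^\top + \uhat\uhat^\top$ is a sum of two rank-one projections onto unit vectors at angle $\theta$; a standard computation (pick an orthonormal basis of the plane, or note the matrix is $2 \times 2$ with trace $2$ and determinant $\sin^2\theta$) shows its eigenvalues are $1 + \cos\theta$ and $1 - \cos\theta$. Hence the eigenvalues of $2\II - \vhat\vhat^\top - \uhat\uhat^\top$ restricted to the plane are $1 - \cos\theta$ and $1 + \cos\theta$, both of which lie in $(0,2)$ since $\theta \in (0,\pi)$, and the remaining eigenvalue is $2$. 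The minimum eigenvalue is therefore $1 - \cos\theta = 2\sin^2\frac{\theta}{2} > 0$, which also gives full rank. Combining with the PSD inequality $\QQ_{\vv}^\top\QQ_{\vv} + \QQ_{\uu}^\top\QQ_{\uu} \pgeq M$ finishes the proof.

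The main thing to be careful about — rather than a genuine obstacle — is the $\norm{\vv}_2 \geq 1$ versus $\norm{\vv}_2 = 1$ distinction: one must check that lengthening $\vv$ genuinely only helps, i.e. that $\QQ_{\vv}^\top\QQ_{\vv}$ is monotone (in the PSD order) in $\norm{\vv}_2$ while the null space stays fixed. This is immediate from $\QQ_{c\vv} = c\QQ_{\vv}$, so $\QQ_{\vv}^\top\QQ_{\vv} = \norm{\vv}_2^2\bigl(\II - \vhat\vhat^\top\bigr) \pgeq \II - \vhat\vhat^\top$ whenever $\norm{\vv}_2 \geq 1$. Everything else is routine linear algebra on $2\times 2$ and $3\times 3$ symmetric matrices, and the identity $1 - \cos\theta = 2\sin^2\frac{\theta}{2}$ is the only trigonometric fact used.
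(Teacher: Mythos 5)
Your proposal is correct in substance and follows essentially the same route as the paper: the paper decomposes a unit test vector $\hh$ along and orthogonal to each of $\vv$ and $\uu$ and computes $\hh^\top(\QQ_{\vv}^\top\QQ_{\vv}+\QQ_{\uu}^\top\QQ_{\uu})\hh = 2-(\alpha_v^2+\alpha_u^2)$, which is exactly your observation that $\QQ_{\vv}^\top\QQ_{\vv} \pgeq \II - \vhat\vhat^\top$ followed by diagonalizing $2\II - \vhat\vhat^\top - \uhat\uhat^\top$; your handling of the $\norm{\vv}_2\ge 1$ case via $\QQ_{c\vv}=c\,\QQ_{\vv}$ is if anything more careful than the paper's.

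One point worth flagging, which your own computation makes visible: the eigenvalues on $\Span{\vhat,\uhat}$ are $1-\cos\theta$ and $1+\cos\theta$, so the minimum eigenvalue is $1-\abs{\cos\theta}$, not $1-\cos\theta$. For obtuse $\theta$ the correct bound is $2\cos^2\frac{\theta}{2}$, which is smaller than the claimed $2\sin^2\frac{\theta}{2}$; your last step silently assumes $\cos\theta\ge 0$. The paper's proof has the identical slip (its inequality $\alpha_v^2+\alpha_u^2\le 2\cos^2\frac{\theta}{2}$ bounds the wrong extreme when $\cos\theta<0$), so this is an issue with the lemma as stated rather than with your argument specifically. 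It is harmless downstream, since the lemma is only invoked with $\theta$ bounded away from both $0$ and $\pi$ (edges of bounded-aspect-ratio tetrahedrons) and only an $\Omega(1)$ lower bound is used, but a fully correct statement should read $2\min\{\sin^2\frac{\theta}{2},\cos^2\frac{\theta}{2}\}$.
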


\begin{proof}

Let $\hh \in \mathbb{R}^3$ be a unit vector.
Decompose $\hh$:
\[
\hh  = \alpha_v \vv + \beta_v \vvhat = \alpha_u \uu + \beta_u \uuhat,
\]
where $\alpha_v, \alpha_u, \beta_v, \beta_u \in \mathbb{R}$, $\vvhat$ is a unit vector orthogonal to $\vv$, and $\uuhat$ is a unit vector orthogonal to $\uu$.
Then,
\[
\hh^\top (\QQ_v^\top \QQ_v + \QQ_u^\top \QQ_u) \hh
= \beta_v^2 \norm{\QQ_v \vvhat}_2^2 + \beta_u^2 \norm{\QQ_u \uuhat}_2^2.
\]
By Lemma~\ref{lem:Qeig}, $\norm{\QQ_v \vvhat}_2 = 1$ and $\norm{\QQ_u \uuhat}_2 = 1$. Thus,
\[
\hh^\top (\QQ_v^\top \QQ_v + \QQ_u^\top \QQ_u) \hh = \beta_v^2 + \beta_u^2
= 2  - (\alpha_v^2 + \alpha_u^2).
\]
The second equality is due to that $\hh$ is a unit vector.
\[
\alpha_v^2 + \alpha_u^2 = \norm{\hh^\top \vv}_2^2
+ \norm{\hh^\top \uu}_2^2
\le 2 \cos^2 \frac{\theta}{2}.
\]
Therefore,
\[
\lambda_{\min} (\QQ_v^\top \QQ_v + \QQ_u^\top \QQ_u) \ge 2 \sin^2 \frac{\theta}{2}.
\]
$\theta < \pi$ implies that $\lambda_{\min} (\QQ_v^\top \QQ_v + \QQ_u^\top \QQ_u) > 0$, and thus the matrix $\QQ_v^\top \QQ_v + \QQ_u^\top \QQ_u$ has full rank.
\end{proof}

Equipped with the above lemmas, we prove Lemma~\ref{lem:CoefficientsClose}.
\begin{proof}[Proof of Lemma~\ref{lem:CoefficientsClose}]

Let $s_1$ and $s_2$ be two triangle centerings 
for which there exist two edges belong to same tetrahedron, say $(i,j), (j,k)$,
such that vertices $i,j,k$ are all within 
tetrahedron-distance constant $h$ of both $s_1$ and $s_2$.

Recall $\qNormalized{s_1}$ is defined in Equation~\eqref{eqn:defC}.
Subtracting
\begin{align*}
\qNormalized{s_1}_{i} 
& = \qq_{i} + \sum_{d_1 d_2} c^{\left<s_1 \right> \perp d_1 d_2}
  \left( \pp_{i}^{\perp d_1 d_2} - \pp_{c}^{\perp d_1 d_2} \right)
\end{align*}
from the corresponding equation for $j$ gives:
\begin{align*}
\qNormalized{s_1}_{i} - \qNormalized{s_1}_{j}
& = \qq_{i} - \qq_{j} + \sum_{d_1 d_2} c^{\left<s_1 \right> \perp d_1 d_2}
  \left( \pp_{i}^{\perp d_1 d_2} - \pp_{j}^{\perp d_1 d_2} \right).
\end{align*}
Plugging in the definition of $\QQ_{(\pp_{i} - \pp_{j})}$:
\begin{align*}
\qNormalized{s_1}_{i} - \qNormalized{s_1}_{j}
& = \qq_{i} - \qq_{j} +
	\QQ_{(\pp_{i} - \pp_{j})} \cc^{\left< s_1 \right>}.
\end{align*}
Subtracting this equation from centering triangle $s_2$
in turn
cancels the $\qq_{i} - \qq_{j}$ term on the RHS, giving:
\[
\QQ_{(\pp_{i} - \pp_{j})}
\left(\cc^{\left< s_1 \right>} - \cc^{\left< s_2 \right>}\right)
 = 
\left( \qNormalized{s_1}_{i} - \qNormalized{s_1}_{j} \right)
-
\left( \qNormalized{s_2}_{i} - \qNormalized{s_2}_{j} \right).
\label{eqn:CtoQ}
\]
Together with the corresponding equation for $j,k$,
\[
\left( \begin{array}{c}
\QQ_{(\pp_{i} - \pp_{j})} \\
\QQ_{(\pp_{j} - \pp_{k})}
\end{array} \right) 
\left(\cc^{\left< s_1 \right>} - \cc^{\left< s_2 \right>}\right)
= \left( \begin{array}{c}
\left( \qNormalized{s_1}_{i} - \qNormalized{s_1}_{j} \right)
-
\left( \qNormalized{s_2}_{i} - \qNormalized{s_2}_{j} \right) \\
\left( \qNormalized{s_1}_{j} - \qNormalized{s_1}_{k} \right)
-
\left( \qNormalized{s_2}_{j} - \qNormalized{s_2}_{k} \right)
\end{array} \right).
\]
Multiplying $\left( \begin{array}{cc}
\QQ_{(\pp_{i} - \pp_{j})}^\top &
\QQ_{(\pp_{j} - \pp_{k})}^\top
\end{array}\right)$ on both sides gives:
\begin{multline*}
\left( \QQ_{(\pp_{i} - \pp_{j})}^\top \QQ_{(\pp_{i} - \pp_{j})}
+ \QQ_{(\pp_{j} - \pp_{k})}^\top \QQ_{(\pp_{j} - \pp_{k})} \right) \left(\cc^{\left< s_1 \right>} - \cc^{\left< s_2 \right>}\right) \\
=   \left( \begin{array}{cc}
\QQ_{(\pp_{i} - \pp_{j})}^\top &
\QQ_{(\pp_{j} - \pp_{k})}^\top
\end{array}\right) 
\left( \begin{array}{c}
\left( \qNormalized{s_1}_{i} - \qNormalized{s_1}_{j} \right)
-
\left( \qNormalized{s_2}_{i} - \qNormalized{s_2}_{j} \right) \\
\left( \qNormalized{s_1}_{j} - \qNormalized{s_1}_{k} \right)
-
\left( \qNormalized{s_2}_{j} - \qNormalized{s_2}_{k} \right)
\end{array} \right).
\end{multline*}
Solving the above linear equations and taking norm on both sides:
\begin{multline*}
\norm{\cc^{\left< s_1 \right>} - \cc^{\left< s_2 \right>}}_2
\le  \norm{\left( \QQ_{(\pp_{i} - \pp_{j})}^\top \QQ_{(\pp_{i} - \pp_{j})}
+ \QQ_{(\pp_{j} - \pp_{k})}^\top \QQ_{(\pp_{j} - \pp_{k})} \right)^{-1}}_2 \\
 \cdot \norm{\left( \begin{array}{cc}
\QQ_{(\pp_{i} - \pp_{j})}^\top &
\QQ_{(\pp_{j} - \pp_{k})}^\top
\end{array}\right) }_2
\norm{\left( \begin{array}{c}
\left( \qNormalized{s_1}_{i} - \qNormalized{s_1}_{j} \right)
-
\left( \qNormalized{s_2}_{i} - \qNormalized{s_2}_{j} \right) \\
\left( \qNormalized{s_1}_{j} - \qNormalized{s_1}_{k} \right)
-
\left( \qNormalized{s_2}_{j} - \qNormalized{s_2}_{k} \right)
\end{array} \right)}_2.
\end{multline*}
Applying Lemma~\ref{lem:Qeig} and~\ref{lem:QTwo} on the first two terms, 
 and applying the triangle inequality and the assumption in Equation~\eqref{eqn:assumption} give:
 \[
\norm{\cc^{\left< s_1 \right>} - \cc^{\left< s_2 \right>}}_2
\le O( \eps). 
 \]
\end{proof}

\section{Proof of Path Lemma}
\label{sec:proofDS07Main}

We now prove Lemma~\ref{lem:DS07Main}, which lower bounds the quadratic form of an {\esimple} and stiffly-connected truss stiffness matrix 
by the distance between two centered points.
As we now only deal with a single centering in this section, we will drop this superscription for simplicity and relabel all indices w.r.t. this centering.
Our relabeling is similar to that in~\cite{daitchS07}.

\subsection{Relabeling Tetrahedrons and Vertices}

Fix an arbitrary tetrahedron, say $\tetrahedron_1$, and we center the vector $\qq$ w.r.t. one of the triangle faces of $\tetrahedron_1$ as in Lemma~\ref{lem:centering}.
Use $\tetrahedron_1$ as root, we
run breadth-first-search (BFS) in the rigidity graph (refer to Definition~\ref{def:stifflyConnected}) of $\surface$, and relabel the tetrahedrons of $\surface$ according to this BFS ordering. 
For example, the neighbor tetrahedrons of $\tetrahedron_1$ are labeled as $\tetrahedron_2, \tetrahedron_3, \ldots$.
Let $\Tbfs$ be the corresponding BFS tree in which each node represents a tetrahedron in $\surface$.

Based on $\Tbfs$,
we relabel the vertices of $\surface$ as follows.
We label the vertices of $\tetrahedron_1$ by $-2,-1,0,1$ in an arbitrary order.  Each child of $\tetrahedron_1$ (and subsequent recursions)  shares a triangle face with their respective parent tetrahedron and thus only requires us to label one vertex per child tetrahedron, which can be labeled according to the BFS ordering.

For each newly labeled vertex $j$, we use $\tetrahedron_j$ to denote the tetrahedron encompassing the $\tetrahedron_j$ and its parental face.
Besides,
we use a 3-dimensional vector $\sigma_j$ to consist of 
the indexes of the parental face sorted in ascending order, and is assumed that indexing this vector is implicitly modulo 3.
Then, $t_j = \{j, \sigma_j(1), \sigma_j(2), \sigma_j(3)\}$.
For completeness, we define $\sigma_1 \defeq \{-2,-1,0\}$. See Figure~\ref{fig:relabel}  for an example. 

\begin{figure}[htb]
\centering
\includegraphics[scale=0.55]{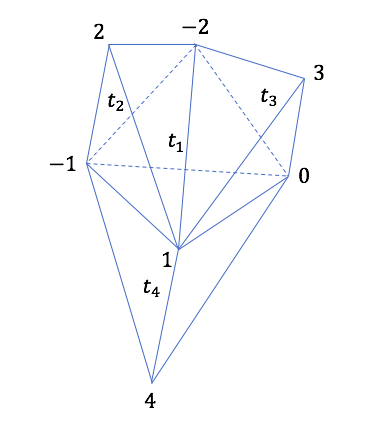}
\caption{An example of the relabeling of tetrahedrons and vertices: $t_1 = \{-2,-1,0,1\}, t_2 = \{-2, -1, 1, 2\}, t_3 = \{-2,0,1,3\}, t_4 = \{-1,0,1,4\}$ and $ \sigma_1 = \{-2,-1,0 \}, \sigma_2 = \{-2,-1,1\}, \sigma_3 = \{-2,1,0\}, \sigma_4 = \{0,-1,1\}$.}
\label{fig:relabel}
\end{figure}

\subsection{Distance between Local Minimizers and the Centered Vectors}

Let $\qq \in \mathbb{R}^{3\ns}$ be a unit vector which is orthogonal to $\Span{\pp^x,\pp^y,\pp^z,\pp^{\perp xy},\pp^{\perp yz},\pp^{\perp xz}}$ defined at the beginning of Section~\ref{sec:mainIdea}.
By Lemma~\ref{lem:centering},
there exist scalars $c^{\left<t_1\right> \perp xy},c^{\left<t_1\right> \perp xz},c^{\left<t_1\right> \perp yz} \in \mathbb{R}$ such that the vector
\[
\qNormalized{t_1} \defeq \qq + \jsum{ c^{\left<t_1\right> \perp d_1d_2} \pp^{\perp d_1d_2}}
\]
satisfies:
\begin{enumerate}
\item the plane containing $\qNormalized{t_1}_{-2}, \qNormalized{t_1}_{-1}, \qNormalized{t_1}_{0}$ is parallel to the plane $\sigma_{\tetrahedron_{1}}$, and
\item $\qNormalized{t_1}_{-2} - \qNormalized{t_1}_{-1}$ is parallel to $(\sigma_{\tetrahedron_1}(1)-\sigma_{\tetrahedron_1}(2))$.
\end{enumerate}

We drop the superscription $\langle t_1 \rangle$ when the context is clear.

For each $0 \le i \le n-3$, we define a 3-dimensional vector $\yy_i$. 
$\yy_0$ is a vector on the plane of $\sigma_{t_1} = \{\pp_{-2}, \pp_{-1}, \pp_{0}\}$ and minimizes the energy / quadratic form, suppose both $\qNormalized{t_1}_{-2}, \qNormalized{t_1}_{-1}$ are fixed.
That is,
\begin{equation}
\label{eqn:x0def}
\begin{split}
&(\pp_{0} - \pp_{j})^\top (\yy_{0} - \qqbar_{j}) = 0, \quad \forall j \in \{-2, -1\} \\
&\ww^\top (\yy_{0} - \qqbar_{-2}) = 0
\end{split}.
\end{equation}
For each $1 \le i \le n-3$, $\yy_i$ is a vector which minimizes the energy / quadratic form, suppose all the three points of $\sigma_j$ are fixed.
That is,
\begin{align}
(\pp_{i} - \pp_{\sigma_{i}(j)} )^\top (\yy_{i} - \qqbar_{\sigma_{i}(j)}) = 0, \quad \forall j \in \{1,2,3\}
\label{eqn:xiDef}
\end{align}

We then define the distance between each local minimizer $\yy_i$ and the centered vector $\qqbar_i$.
Specifically,
\begin{align}
\dd_i \defeq \left\{ \begin{array}{ll}
\qqbar_{-1} - \qqbar_{-2}, & \quad i = -1 \\
\qqbar_i - \yy_i, & \quad 0 \le i \le n-3
\end{array}
\right.
\label{eqn:diDef}
\end{align}
This definition intermediately gives that $\forall -1 \le i \le \ns-3, 1 \le j \le \min\{i+2,3\}$,
\begin{align}
\dd_i^\top (\pp_i - \pp_{\sigma_i(j)}) 
 = (\qqbar_i - \yy_i)^\top (\pp_i - \pp_{\sigma_i(j)})
 = (\qqbar_i - \qqbar_{\sigma_i(j)})^\top (\pp_i - \pp_{\sigma_i(j)}).
\label{eqn:diReplace}
\end{align}
Note that equation (\ref{eqn:diReplace}) describes the net stress on an edge.

We can explicitly express these $\dd_i$'s
by solving Equations~\eqref{eqn:x0def} and~\eqref{eqn:xiDef}.

Let 
\[
\FF_0 \defeq  \left( \begin{array}{c}
(\pp_0 - \pp_{-2})^\top \\
(\pp_0 - \pp_{-1})^\top \\
\ww^\top
\end{array} \right).
\]
Let $\hh_{-2}, \hh_{-1}, \hh_{\ww} \in \mathbb{R}^3$ such that for each $1 \le k \le 3$,
\begin{align*}
\hh_{-2}(k) &\defeq \det \left( \begin{array}{cc}
(\pp_0 - \pp_{-1})^{(k+1)_3} & (\pp_0 - \pp_{-1})^{(k+2)_3} \\
\ww^{(k+1)_3} & \ww^{(k+2)_3}
\end{array} \right), \\
\hh_{-1}(k) &\defeq \det \left( \begin{array}{cc}
(\pp_0 - \pp_{-2})^{(k+1)_3} & (\pp_0 - \pp_{-2})^{(k+2)_3} \\
\ww^{(k+1)_3} & \ww^{(k+2)_3}
\end{array} \right), \\
\hh_{\ww}(k) &\defeq \det \left( \begin{array}{cc}
(\pp_0 - \pp_{-1})^{(k+1)_3} & (\pp_0 - \pp_{-1})^{(k+2)_3} \\
(\pp_0 - \pp_{-2})^{(k+1)_3} & (\pp_0 - \pp_{-2})^{(k+2)_3}
\end{array} \right).
\end{align*}
where $(k+1)_3$ and $(k+2)_3$ denotes accessing specific dimensions of vectors $\in \mathbb{R}^3$ modulo 3.
Then we define 
\[
\HH_{0,2} \defeq \hh_{-2} (\pp_0 - \pp_{-2})^\top,
\quad 
\HH_{0,1} \defeq \hh_{-1} (\pp_0 - \pp_{-1})^\top,
\quad \mbox{and }
\HH_{\ww} \defeq \hh_{\ww} \ww^\top.
\]

Similarly, for $1\le i \le \ns-3$, let
\[
\FF_i \defeq
\left( \begin{array}{c}
(\pp_i - \pp_{\sigma_i(1)})^\top \\
(\pp_i - \pp_{\sigma_i(2)})^\top \\
(\pp_i - \pp_{\sigma_i(3)})^\top 
\end{array} \right),
\]
and 
\begin{align}
\HH_{i,j} \defeq 
\hh_{i,j}
(\pp_i - \pp_{\sigma_i(j)})^\top, \quad \forall 1 \le j \le 3,
\label{eqn:HijDef}
\end{align}
where $\hh_{i,j} \in \mathbb{R}^3$ with
\[
\hh_{i,j}(k) \defeq  \det \left( \begin{array}{cc}
(\pp_i - \pp_{\sigma_i((j+1)_3)})^{(k+1)_3} & (\pp_i - \pp_{\sigma_i((j+1)_3)})^{(k+2)_3} \\
(\pp_i - \pp_{\sigma_i((j+2)_3)})^{(k+1)_3} & (\pp_i - \pp_{\sigma_i((j+2)_3)})^{(k+2)_3}
\end{array} \right) .
\]

We can check that
the following $\yy_i$'s satisfy Equations~\eqref{eqn:x0def} and~\eqref{eqn:xiDef}:
\[
\yy_i = \left\{ \begin{array}{ll}
\frac{1}{\det(\FF_0)} \left( (\HH_{0,2} + \HH_{\ww} ) \qqbar_{-2} + \HH_{0,1} \qqbar_{-1}  \right), & \quad  i = 0 \\
\frac{1}{\det(\FF_i)} \sum_{1\le j\le 3} \HH_{i,j} \qqbar_{\sigma_i(j)},
& \quad \forall 1 \le i \le n
\end{array} \right.
\]

\begin{claim}
\begin{align}
\det(\FF_i) = (\pp_i - \pp_{\sigma_i(j)})^\top \hh_{i,j}.
\label{eqn:detFi}
\end{align}
\label{clm:determinant}
\end{claim}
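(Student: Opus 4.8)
The claim asserts that the cofactor expansion of $\det(\FF_i)$ along any of the three rows recovers the vector $\hh_{i,j}$ as the corresponding cofactor vector, so that $\det(\FF_i) = (\pp_i - \pp_{\sigma_i(j)})^\top \hh_{i,j}$. My plan is to treat this as a direct verification by Laplace (cofactor) expansion of the $3 \times 3$ determinant.

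Concretely, fix $j$ and recall $\FF_i$ has rows $(\pp_i - \pp_{\sigma_i(1)})^\top$, $(\pp_i - \pp_{\sigma_i(2)})^\top$, $(\pp_i - \pp_{\sigma_i(3)})^\top$. First I would identify which row of $\FF_i$ is the one indexed by $j$ — say it is row $j$ (up to the cyclic relabeling $\sigma_i$, and here I will need to be careful that the ascending-order convention for $\sigma_i$ together with the cyclic shifts $(j+1)_3$, $(j+2)_3$ produces exactly the minor pattern with the correct sign). Then the Laplace expansion of $\det(\FF_i)$ along that row reads $\det(\FF_i) = \sum_{k=1}^{3} (\pp_i - \pp_{\sigma_i(j)})^{(k)} \, C_k$, where $C_k$ is the signed $2\times 2$ minor obtained by deleting row $j$ and column $k$. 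By the definition of $\hh_{i,j}$ in the displayed formula preceding the claim, $\hh_{i,j}(k)$ is precisely the $2\times 2$ determinant formed from the other two rows (rows $\sigma_i((j+1)_3)$ and $\sigma_i((j+2)_3)$) restricted to columns $(k+1)_3$ and $(k+2)_3$. So the remaining work is to check that this choice of retained columns, in this order, reproduces the signed minor $C_k$ including sign: the standard cofactor sign $(-1)^{j+k}$ must match the cyclic orientation built into using columns $(k+1)_3, (k+2)_3$ in that order rather than sorted order. This is the familiar identity that the cofactor vector of a row in a $3\times 3$ matrix equals (up to the global orientation) the cross product of the other two rows, and $\hh_{i,j}$ is exactly that cross product written out coordinatewise.

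The key steps, in order: (1) write $\det(\FF_i)$ as the scalar triple product of the three row vectors $\pp_i - \pp_{\sigma_i(1)}$, $\pp_i - \pp_{\sigma_i(2)}$, $\pp_i - \pp_{\sigma_i(3)}$; (2) use the identity $\det[\aa \mid \bb \mid \cc] = \aa^\top(\bb \times \cc)$ (scalar triple product), and observe that $\bb \times \cc$, expanded into coordinates via $(\bb\times\cc)(k) = \bb^{(k+1)_3}\cc^{(k+2)_3} - \bb^{(k+2)_3}\cc^{(k+1)_3}$, is exactly the $2\times 2$ determinant appearing in the definition of $\hh_{i,j}(k)$ when $\{\bb,\cc\}$ are the two rows other than row $j$, taken in the cyclic order $(j+1)_3, (j+2)_3$; (3) conclude $\hh_{i,j} = (\pp_i - \pp_{\sigma_i((j+1)_3)}) \times (\pp_i - \pp_{\sigma_i((j+2)_3)})$ and hence $\det(\FF_i) = (\pp_i - \pp_{\sigma_i(j)})^\top \hh_{i,j}$; and finally (4) note that because the scalar triple product is invariant under cyclic permutation of its three arguments, the same value $\det(\FF_i)$ is obtained for each $j \in \{1,2,3\}$, which is what lets the claim hold uniformly in $j$. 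The case $i=0$, if one wants it, is identical with $\ww$ playing the role of the third row and $\hh_{-2}, \hh_{-1}, \hh_\ww$ the corresponding cross products.

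The main obstacle — really the only subtlety, since the underlying linear algebra is elementary — is bookkeeping the signs and the cyclic index conventions: making sure that the ascending-sorted $\sigma_i$ together with the modular shifts $(j+1)_3, (j+2)_3$ yields an \emph{even} (cyclic) permutation for every $j$, so that no stray minus sign appears and the formula is genuinely $j$-independent. I would handle this by checking $j=1$ explicitly against the plain Laplace expansion along the first row, then invoking cyclicity of the triple product for $j=2,3$ rather than re-expanding. Everything else is routine.
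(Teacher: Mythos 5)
Your proposal is correct and follows essentially the same route as the paper: the paper's proof is simply the cofactor (Laplace) expansion of the $3\times 3$ determinant applied to $\FF_i$, which is exactly your scalar-triple-product argument identifying $\hh_{i,j}$ with the cross product of the other two rows. Your additional care about the cyclic index conventions and the $j$-independence via cyclic invariance of the triple product is sound bookkeeping that the paper leaves implicit.
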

\begin{proof}
Let 
\[
\QQ = \left( \begin{array}{ccc}
a & b & c \\
d & e & f \\
g & h & i
\end{array} \right)
\]
be a $3 \times 3$ matrix.
Then, the determinant of $\QQ$ is
\[
\det(\QQ)
= a \cdot \det\left( \left( \begin{array}{cc}
e & f \\
h & i
\end{array} \right) \right)
- b \cdot \det\left( \left( \begin{array}{cc}
d & f \\
g & i
\end{array} \right) \right)
+ c \cdot \det\left( \left( \begin{array}{cc}
d & e \\
g & h
\end{array} \right) \right).
\]
Applying the above rule to $\FF_i$ gives Eqaution~\eqref{eqn:detFi}.
\end{proof}

\begin{claim}
\[
\det(\FF_i) \II = \sum_{1 \le j \le 3} \HH_{i,j}, 
\quad \forall i \ge 0
\]
\label{clm:FHrelation}
\end{claim}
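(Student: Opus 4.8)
The plan is to identify the vectors $\hh_{i,j}$ (and, for $i=0$, the vectors $\hh_{-2},\hh_{-1},\hh_{\ww}$) as the \emph{columns of the classical adjugate matrix} $\mathrm{adj}(\FF_i)$, after which the claim is immediate from the fundamental adjugate identity $\mathrm{adj}(\FF_i)\,\FF_i=\det(\FF_i)\,\II$. The one elementary fact I would record first is: if $\FF$ is a $3\times3$ matrix whose three rows are $\rr_1^\top,\rr_2^\top,\rr_3^\top$, then $\mathrm{adj}(\FF)=(\rr_2\times\rr_3 \mid \rr_3\times\rr_1 \mid \rr_1\times\rr_2)$, i.e. its $j$-th column is the cross product of the other two rows taken in cyclic order. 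This is checked in one line by verifying $\FF\cdot(\rr_2\times\rr_3 \mid \rr_3\times\rr_1 \mid \rr_1\times\rr_2)=\det(\FF)\,\II$ entrywise via the scalar triple product and invoking uniqueness of the adjugate (by continuity this persists when $\FF$ is singular). Entrywise, the fact says the $(k,j)$ entry of $\mathrm{adj}(\FF)$ is the signed $2\times2$ minor obtained by deleting row $j$ and column $k$ — which is exactly the shape of the definitions of $\hh_{i,j}(k)$ in~\eqref{eqn:HijDef} (and of $\hh_{-2},\hh_{-1},\hh_{\ww}$), once the cyclic index conventions $(j{+}1)_3,(j{+}2)_3$ and $(k{+}1)_3,(k{+}2)_3$ are matched up with the cofactor sign $(-1)^{j+k}$.

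For $i\ge 1$: I would set $\rr_j\defeq\pp_i-\pp_{\sigma_i(j)}$, so that these are precisely the rows of $\FF_i$, and check directly from~\eqref{eqn:HijDef} that $\hh_{i,j}=\rr_{(j+1)_3}\times\rr_{(j+2)_3}$, i.e. $\hh_{i,j}$ is the $j$-th column of $\mathrm{adj}(\FF_i)$. Then $\HH_{i,j}=\hh_{i,j}\,(\pp_i-\pp_{\sigma_i(j)})^\top=\hh_{i,j}\,\rr_j^\top$ is the outer product of the $j$-th column of $\mathrm{adj}(\FF_i)$ with the $j$-th row of $\FF_i$, so summing telescopes into a matrix product: $\sum_{j=1}^3\HH_{i,j}=\mathrm{adj}(\FF_i)\,\FF_i=\det(\FF_i)\,\II$, which is the claim. (Claim~\ref{clm:determinant}/\eqref{eqn:detFi} is exactly a diagonal-entry instance of this same adjugate identity, so it can be reused verbatim for the diagonal part of the computation if one prefers a more pedestrian verification.)

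The case $i=0$ is handled identically: $\FF_0$ has rows $(\pp_0-\pp_{-2})^\top,(\pp_0-\pp_{-1})^\top,\ww^\top$, and with the triple $\{\HH_{0,2},\HH_{0,1},\HH_{\ww}\}$ playing the role of the three summands, one checks that $\hh_{-2},\hh_{-1},\hh_{\ww}$ are respectively the first, second, and third columns of $\mathrm{adj}(\FF_0)$ (with the signs pinned down by the requirement that the resulting $\yy_0$ solve~\eqref{eqn:x0def}), whence $\HH_{0,2}+\HH_{0,1}+\HH_{\ww}=\mathrm{adj}(\FF_0)\,\FF_0=\det(\FF_0)\,\II$. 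An alternative route that sidesteps the index bookkeeping entirely: whenever $\FF_i$ is invertible, $\yy_i$ is the \emph{unique} solution of the linear system~\eqref{eqn:x0def}/\eqref{eqn:xiDef}, and that system is translation-equivariant — replacing every $\qqbar_{(\cdot)}$ by $\qqbar_{(\cdot)}+\tt$ replaces the solution by $\yy_i+\tt$ — whereas the explicit formula $\yy_i=\tfrac{1}{\det(\FF_i)}\sum_j\HH_{i,j}\,\qqbar_{\sigma_i(j)}$ is linear in the $\qqbar$'s, so comparing the two forces $\tfrac{1}{\det(\FF_i)}\sum_j\HH_{i,j}=\II$; since both sides of the claimed identity are polynomial in the point coordinates, it then extends to all $\FF_i$, singular or not. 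I expect the only genuine obstacle to be the sign/index bookkeeping — lining up the cyclic conventions $(\cdot)_3$ in the definitions of the $\hh$'s with the cofactor signs $(-1)^{j+k}$ — which is routine but easy to slip on, particularly in the $i=0$ case with the normal-vector row $\ww$.
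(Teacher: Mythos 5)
Your proposal is correct, and for $i\ge 1$ it is the paper's argument repackaged one level of abstraction up: the paper verifies the identity entrywise, showing each diagonal entry of $\sum_j\HH_{i,j}$ is a Laplace expansion of $\det(\FF_i)$ along a column and each off-diagonal entry is the determinant of a matrix with two equal columns (hence $0$) --- which is precisely the standard proof of $\mathrm{adj}(\FF_i)\,\FF_i=\det(\FF_i)\,\II$. What your packaging buys is twofold. First, once you observe $\hh_{i,j}=\rr_{(j+1)_3}\times\rr_{(j+2)_3}$ with $\rr_j=\pp_i-\pp_{\sigma_i(j)}$, the telescoping $\sum_j\hh_{i,j}\rr_j^\top=\mathrm{adj}(\FF_i)\,\FF_i$ replaces the index-heavy $3\times 3$ bookkeeping by a named identity, and it subsumes Claim~\ref{clm:determinant} as the diagonal case. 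Second, your alternative argument via uniqueness of $\yy_i$ and translation-equivariance of the defining system is genuinely different from anything in the paper and is robust to exactly the sign errors you worry about. That worry is warranted: with the paper's literal definitions, $\hh_{-1}=(\pp_0-\pp_{-2})\times\ww=-(\rr_3\times\rr_1)$ and $\hh_{\ww}=(\pp_0-\pp_{-1})\times(\pp_0-\pp_{-2})=-(\rr_1\times\rr_2)$, so two of the three $i=0$ summands carry the wrong sign relative to the adjugate columns, and the identity as written fails for $i=0$ unless those signs are corrected (the paper's own proof silently sidesteps this by only computing with the $i\ge1$ formulas from~\eqref{eqn:HijDef}). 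Your prescription --- pin the signs by requiring $\yy_0$ to solve~\eqref{eqn:x0def} --- is the right fix, and is a point in favor of your write-up over the paper's.
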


\begin{proof}
We first show that the diagonals of $\sum_{j \in [3]} \HH_{i,j}$ are equal to $\det(\FF_i)$.
\begin{align*}
& \sum_{j \in [3]} \HH_{i,j} (1,1) \\
= & \sum_{j \in [3]}  (\pp_i - \pp_{\sigma_i(j)})(1) \cdot \det \left( \begin{array}{cc}
(\pp_i - \pp_{\sigma_i((j+1)_3)})(2) & (\pp_i - \pp_{\sigma_i((j+1)_3)})(3) \\
(\pp_i - \pp_{\sigma_i((j+2)_3)})(2) & (\pp_i - \pp_{\sigma_i((j+2)_3)})(3)
\end{array} \right) \\
=&   \det \left( \begin{array}{c}
(\pp_i - \pp_{\sigma_i(1)})^\top \\
(\pp_i - \pp_{\sigma_i(2)})^\top \\
(\pp_i - \pp_{\sigma_i(3)})^\top 
\end{array} \right) \\
= & \det(\FF_i).
\end{align*}
Similarly, we have
\[
\sum_{j \in [3]} \HH_{i,j} (2,2)  = \sum_{j \in [3]} \HH_{i,j} (3,3)
= \det(\FF_i).
\]

Then, we show that the off-diagonals of $\sum_{j \in [3]} \HH_{i,j}$ are 0.
\begin{align*}
& \sum_{j \in [3]} \HH_{i,j} (1,2) \\
= & \sum_{j \in [3]}  (\pp_i - \pp_{\sigma_i((j+1)_3)})(2) \cdot \det \left( \begin{array}{cc}
(\pp_i - \pp_{\sigma_i((j+1)_3)})(2) & (\pp_i - \pp_{\sigma_i((j+1)_3)})(3) \\
(\pp_i - \pp_{\sigma_i((j+2)_3)})(2) & (\pp_i - \pp_{\sigma_i((j+2)_3)})(3)
\end{array} \right) \\
= & \det  \left( \begin{array}{ccc}
(\pp_i - \pp_{\sigma_i(1)})(2) & (\pp_i - \pp_{\sigma_i(1)})(2) & (\pp_i - \pp_{\sigma_i(1)})(3) \\
(\pp_i - \pp_{\sigma_i(2)})(2) & (\pp_i - \pp_{\sigma_i(2)})(2) & (\pp_i - \pp_{\sigma_i(2)})(3)\\
(\pp_i - \pp_{\sigma_i(3)})(2) & (\pp_i - \pp_{\sigma_i(3)})(2) & (\pp_i - \pp_{\sigma_i(3)})(3) 
\end{array} \right) \\
= & 0.
\end{align*}
The last equation is due to the 3 columns of the matrix are linearly dependent.
Similarly, 
\[
 \sum_{j \in [3]} \HH_{i,j} (s,t) = 0, 
 \quad \forall 1 \le s \neq t \le 3.
\]
This completes the proof.
\end{proof}

Plugging the above equations into the definition of $\dd_i$'s gives:
\begin{claim}
For each $-1 \le i \le \ns-3$,
\[
\dd_i = \qqbar_i - \qqbar_{\sigma_i(j)}
+ \frac{1}{\det(\FF_i)} \sum_{\substack{1 \le j' \le \min\{i+2, 3\} \\ j' \neq j}} \HH_{i,j'} \left( \qqbar_{\sigma_i(j)} - \qqbar_{\sigma_i(j')} \right).
\]
\label{clm:diSolve}
\end{claim}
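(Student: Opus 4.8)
The plan is to substitute the closed forms for the local minimizers $\yy_i$ derived just above into the definition $\dd_i = \qqbar_i - \yy_i$ (Equation~\eqref{eqn:diDef}), and then simplify the resulting expression using the partition-of-identity relation $\sum_{j'}\HH_{i,j'} = \det(\FF_i)\,\II$ from Claim~\ref{clm:FHrelation}. I would split the argument into three cases according to the special role of the root tetrahedron $\tetrahedron_1$. The case $i = -1$ is immediate: by Equation~\eqref{eqn:diDef} we have $\dd_{-1} = \qqbar_{-1} - \qqbar_{-2} = \qqbar_{-1} - \qqbar_{\sigma_{-1}(1)}$ under the convention $\sigma_{-1}(1) = -2$, and since $\min\{-1+2,3\} = 1$ the sum in the claim is empty (taking $j = 1$), so there is nothing more to check.

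For $1 \le i \le \ns-3$, I would start from $\dd_i = \qqbar_i - \yy_i$ with $\yy_i = \det(\FF_i)^{-1}\sum_{1\le j'\le 3}\HH_{i,j'}\qqbar_{\sigma_i(j')}$, and rewrite $\qqbar_i = \det(\FF_i)^{-1}\big(\sum_{j'}\HH_{i,j'}\big)\qqbar_i$ using Claim~\ref{clm:FHrelation}, obtaining $\dd_i = \det(\FF_i)^{-1}\sum_{j'}\HH_{i,j'}(\qqbar_i - \qqbar_{\sigma_i(j')})$. Then, fixing $j$ and splitting $\qqbar_i - \qqbar_{\sigma_i(j')} = (\qqbar_i - \qqbar_{\sigma_i(j)}) + (\qqbar_{\sigma_i(j)} - \qqbar_{\sigma_i(j')})$ for each $j' \neq j$, the total coefficient of $(\qqbar_i - \qqbar_{\sigma_i(j)})$ becomes $\det(\FF_i)^{-1}\sum_{j'}\HH_{i,j'} = \II$ (Claim~\ref{clm:FHrelation} once more), which contributes exactly the leading term $\qqbar_i - \qqbar_{\sigma_i(j)}$ and leaves $\det(\FF_i)^{-1}\sum_{j'\neq j}\HH_{i,j'}(\qqbar_{\sigma_i(j)} - \qqbar_{\sigma_i(j')})$. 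Since $i \ge 1$ forces $\min\{i+2,3\} = 3$, this matches the claimed formula verbatim.

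The case $i = 0$ requires one small extra step. Here $\yy_0 = \det(\FF_0)^{-1}\big((\HH_{0,2}+\HH_{\ww})\qqbar_{-2} + \HH_{0,1}\qqbar_{-1}\big)$, with the face normal $\ww$ playing the role of a missing fourth constraint; the analogue of Claim~\ref{clm:FHrelation} holds with the same proof (third row of $\FF_0$ equal to $\ww^\top$ and $\HH_{0,3}$ read as $\HH_{\ww}$), so $\HH_{0,1}+\HH_{0,2}+\HH_{\ww} = \det(\FF_0)\,\II$. Carrying out the same manipulation as above, the contributions of $\HH_{\ww} = \hh_{\ww}\ww^\top$ appear only as multiples of $\ww^\top(\qqbar_0 - \qqbar_{-1})$ and $\ww^\top(\qqbar_0 - \qqbar_{-2})$; but the first property of Lemma~\ref{lem:centering} forces $\qqbar_{-2}, \qqbar_{-1}, \qqbar_0$ into a plane parallel to $\sigma_{\tetrahedron_1}$, whose normal is $\ww$, so these dot products vanish. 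Thus $\HH_{\ww}$ disappears and we are left with precisely $\dd_0 = \qqbar_0 - \qqbar_{\sigma_0(j)} + \det(\FF_0)^{-1}\sum_{1\le j'\le 2,\,j'\neq j}\HH_{0,j'}(\qqbar_{\sigma_0(j)} - \qqbar_{\sigma_0(j')})$ for $j \in \{1,2\}$, as required.

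I expect the only real difficulty here to be bookkeeping rather than a conceptual obstacle: keeping track of which of $\sigma_i(1),\sigma_i(2),\sigma_i(3)$ is paired with each $\HH_{i,j'}$, and correctly handling the two degenerate root cases $i \in \{-1,0\}$, where one of the defining constraints of $\yy_i$ comes from the face normal $\ww$ rather than from a fourth vertex. Once the orthogonality of $\ww$ to $\qqbar_0 - \qqbar_{-1}$ and $\qqbar_0 - \qqbar_{-2}$ (coming from the centering in Lemma~\ref{lem:centering}) is invoked, everything reduces to the elementary regrouping described above.
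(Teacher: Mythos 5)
Your proposal is correct and matches the paper's intended argument: the paper gives no explicit proof beyond ``plugging the above equations into the definition of $\dd_i$,'' and your computation is exactly that substitution, using $\sum_{j'}\HH_{i,j'}=\det(\FF_i)\II$ from Claim~\ref{clm:FHrelation} to peel off the identity coefficient on $\qqbar_i-\qqbar_{\sigma_i(j)}$. Your careful treatment of the root cases $i\in\{-1,0\}$ --- in particular killing the $\HH_{\ww}$ contribution via the centering condition $\ww^\top(\qqbar_0-\qqbar_{-1})=\ww^\top(\qqbar_0-\qqbar_{-2})=0$ from Lemma~\ref{lem:centering} --- is exactly the bookkeeping the paper leaves implicit.
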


\subsection{Bounding the Norm of $\qqbar_i - \qqbar_{\sigma_i(j)}$ in terms of $\dd_i$'s}

Rearranging the above equation gives that for each $-1 \le i \le \ns-3, 1 \le j \le \min\{i+2,3\}$,
\begin{align}
\qqbar_i - \qqbar_{\sigma_i(j)}
= \dd_i - \frac{1}{\det(\FF_i)} \sum_{\substack{1 \le j' \le \min\{i+2,3\} \\ j' \neq j}} \HH_{i,j'}\left( \qqbar_{\sigma_i(j)} - \qqbar_{\sigma_i(j')} \right).
\label{eqn:recursion}
\end{align}
Our goal is to express each $\qqbar_i - \qqbar_{\sigma_i(j)}$ as a function of $\dd_i, \dd_{i-1}, \ldots, \dd_{-1}$.
If each term $\qqbar_{\sigma_i(j)} - \qqbar_{\sigma_i(j')}$ in the right hand side of Equation~\eqref{eqn:recursion} satisfies 
\[
\{ \sigma_i(j), \sigma_i(j') \} 
= \{ i_1, \sigma_{i_1}(j_1) \}
\]
for some $i_1 < i$ and $1 \le j_1 \le \min\{i_1+2,3\}$,  then we can substitute it by Equation~\eqref{eqn:recursion} with the left hand side being $\qqbar_{i_1} - \qqbar_{\sigma_{i_1}(j_1)}$.
The substitution terminates when the right hand side term becomes $\pm \left(\qqbar_{-1} - \qqbar_{-2} \right) = \pm \dd_{-1}$.

\begin{claim}
Let $\qqbar_{\sigma_{i_1}(j_1)} - \qqbar_{\sigma_{i_1}(j_2)}$ ($0 \le i_1 \le \ns-3, 1 \le j_1 \neq j_2 \le \min\{i+2,3\}$) be a term appearing in the right hand side of Equation~\eqref{eqn:recursion}.
There exist $-1 \le i_2 < i_1, 1 \le j_3 \le \min\{i_2+2, 3\}$ satisfying
\[
\{ \sigma_{i_1}(j_1), \sigma_{i_1}(j_2) \} 
= \{ i_2, \sigma_{i_2}(j_3) \}.
\]
\label{clm:recursion}
\end{claim}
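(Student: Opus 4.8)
The plan is to first strip the statement down to a purely combinatorial fact about the BFS relabeling, and then establish that fact by strong induction on $i_1$. Writing $\{a,b\}:=\{\sigma_{i_1}(j_1),\sigma_{i_1}(j_2)\}$ with $a<b$, note that every entry of $\sigma_m$ carries a label strictly smaller than $m$ (by construction for $m\geq 2$, and by the conventions $\sigma_{-1}=\{-2\}$, $\sigma_0=\{-2,-1\}$, $\sigma_1=\{-2,-1,0\}$ for small $m$), and that $a,b<i_1$ because $a,b\in\sigma_{i_1}$. Hence $\{a,b\}=\{i_2,\sigma_{i_2}(j_3)\}$ is possible only with $i_2=b$ and $\sigma_{i_2}(j_3)=a$, and then automatically $-1\leq i_2=b<i_1$ and $j_3\leq\min\{i_2+2,3\}$. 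So Claim~\ref{clm:recursion} is equivalent to the assertion that \emph{for every $i_1$ and every edge $\{a,b\}$ ($a<b$) of the parental triangle $\sigma_{i_1}$, the smaller endpoint $a$ occurs in $\sigma_b$}; when $b\leq 1$ this is immediate since then $a\in\{-2,-1,0\}$ lies in the corresponding $\sigma_b$. I will call a pair $\{a,b\}$ with $a<b$ and $a\in\sigma_b$ a \emph{tree edge}; tree edges are precisely the pairs $\{m,\sigma_m(j)\}$, i.e.\ the left-hand-side pairs of~\eqref{eqn:recursion}, so this reformulation is exactly what guarantees that each substitution performed in~\eqref{eqn:recursion} strictly decreases the index and hence that the substitution process terminates (at $\dd_{-1}$).

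I would prove ``every edge of every parental face is a tree edge'' by strong induction on $i_1$. Fix an edge $\{p,q\}$ of $\sigma_{i_1}$ with $p<q$; the case $q\leq 1$ is immediate, so assume $q\geq 2$. Let $t^{\sharp}$ be the first tetrahedron processed by the BFS that contains the edge $\{p,q\}$. Since $\sigma_{i_1}$ is a face of $t_{\pa(i_1)}$, the tetrahedron $t_{\pa(i_1)}$ contains $\{p,q\}$ and is processed before $t_{i_1}$, so $t^{\sharp}$ is processed strictly before $t_{i_1}$; in particular any vertex of $t^{\sharp}$ already labeled at that time has label $<i_1$. Moreover the triangle that $t^{\sharp}$ shares with its own BFS-parent cannot contain both $p$ and $q$, since that parent is processed still earlier and would then already contain the edge. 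If $t^{\sharp}$ introduces a new vertex $v$ (its apex), then $t^{\sharp}=\{v\}\cup\sigma_v$ with $\sigma_v$ being the face shared with the parent; as $\sigma_v$ omits at least one of $p,q$ while $t^{\sharp}$ contains both, one of them equals $v$, and it cannot be $p$ (that would give $q\in\sigma_p$, forcing $q<p$), so $q=v$, hence $p\in t^{\sharp}\setminus\{q\}=\sigma_q$ and $q=v<i_1$. Thus $\{p,q\}$ is a tree edge and we are done in this case.

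The remaining possibility — which I expect to be the main obstacle — is that $t^{\sharp}$, the first BFS-processed tetrahedron containing $\{p,q\}$, introduces no new vertex, i.e.\ closes a cycle in the rigidity graph. Here I would try to rule this situation out for edges lying on a parental face: walking back along the BFS tree from $t_{\pa(i_1)}$, one repeatedly uses that a triangle lies in at most two tetrahedra of a complex embedded in $\mathbb{R}^{3}$, that the face shared with the BFS-parent can never coincide with the face being ``entered'' at the next step, and the inductive hypothesis applied to the earlier parental faces met along the way, in order to conclude that $\{p,q\}$ is in fact first ``born'' at a vertex-introducing tetrahedron. Turning this walk-back into a rigorous argument — equivalently, showing that whenever $\{p,q\}$ is an edge of some $\sigma_{i_1}$ its two endpoints are nested in the discovery order — is the delicate part, and is where the geometric structure of the {\esimple} simplicial complex and the precise behaviour of BFS in its rigidity graph have to be used.
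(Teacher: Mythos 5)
Your reduction of Claim~\ref{clm:recursion} to the purely combinatorial statement ``for every edge $\{p,q\}$, $p<q$, of a parental face $\sigma_{i_1}$ one has $p\in\sigma_q$'' is correct, and your treatment of the case in which the first BFS-processed tetrahedron containing $\{p,q\}$ introduces a new vertex is sound. But the proposal is not a proof: the remaining case, which you flag yourself, is left as an unexecuted sketch, and it is exactly the crux. Worse, the ``walk-back'' you hope for cannot succeed in general, because in that configuration the reduced statement can actually fail. Concretely, take $t_1=\{-2,-1,0,1\}$, $t_2=\{-2,-1,1,2\}$, $t_3=\{-2,0,1,3\}$, a fourth tetrahedron $t'=\{-2,1,2,3\}$ closing the fan of wedges around the edge $\{-2,1\}$ (so $t'$ shares a triangle with each of $t_2$ and $t_3$ and introduces no new vertex), and a fifth tetrahedron $\{-2,2,3,4\}$ glued to $t'$ along $\{-2,2,3\}$. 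This is a stiffly-connected simplicial complex, the BFS order is $t_1,t_2,t_3,t',\{-2,2,3,4\}$, and one gets $\sigma_4=\{-2,2,3\}$ while $2\notin\sigma_3=\{-2,0,1\}$ and $3\notin\sigma_2=\{-2,-1,1\}$; hence the edge $\{2,3\}$ of $\sigma_4$ is not of the form $\{i_2,\sigma_{i_2}(j_3)\}$. So no purely combinatorial completion of your induction exists without an additional structural hypothesis ruling out such cycle-closing tetrahedra.

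For comparison, the paper's proof takes a different route: it passes to the BFS-parent $\tetrahedron_{i'}$ of $\tetrahedron_{i_1}$, argues that the shared face $\sigma_{i_1}$ must contain the apex $i'$ of that parent (its own parental face being already occupied, since a triangle lies in at most two tetrahedra), and then splits on whether the larger element of the pair equals that apex --- if so it is done immediately, and if not both elements lie in $\sigma_{i'}$ and it recurses on $i'<i_1$. That argument presupposes that the BFS-parent is itself a vertex-introducing tetrahedron $\tetrahedron_{i'}$, which is precisely the assumption your unresolved case violates (in the example above, the parent of the tetrahedron introducing vertex $4$ is $t'$, which introduces no vertex, and the paper's Case~1 would then wrongly assert $2\in\sigma_3$). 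You have therefore correctly isolated the genuine difficulty, but you have not overcome it, and as written neither your sketch nor a direct transcription of the paper's induction closes it.
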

\begin{proof}
Without the loss of generality, assume $\sigma_{i_1} (j_1) > \sigma_{i_1} (j_2)$.
If $i_1 = 0$, then 
$\{ \sigma_{i_1}(j_1), \sigma_{i_1}(j_2) \} = \{-1, \sigma_{-1}(1)\}$; 
if $i_1 = 1$, then
$\{ \sigma_{i_1}(j_1), \sigma_{i_1}(j_2) \} \in \{ \{0, \sigma_{0}(1)\}, \{0, \sigma_{0}(2)\}, \{-1, \sigma_{0}(1)\} \}$.
The remaining proof focuses on $i_1 \ge 2$.

\begin{figure}[htb]
\centering
\includegraphics[scale=0.55]{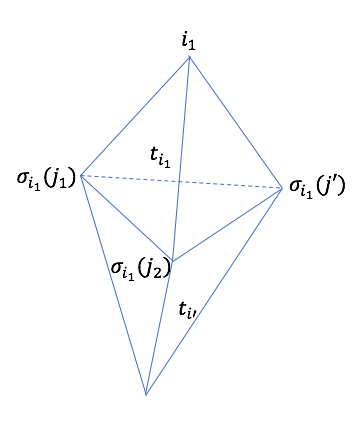}
\caption{Tetrahedrons $\tetrahedron_{i_1}$ and $\tetrahedron_{i'}$}
\label{fig:recursion}
\end{figure}

Let $\sigma_{i_1}(j')$ be the other vertex in tetrahedron $\tetrahedron_{i_1}$.
Let $\tetrahedron_{i'}$ be the parent of $\tetrahedron_{i_1}$ in the BFS tree $\Tbfs$.
$i' < i_1$, and the two tetrahedrons
$\tetrahedron_{i_1}$ and $\tetrahedron_{i'}$ share a triangle face containing vertices $\sigma_{i_1}(j_1), \sigma_{i_1}(j_2), \sigma_{i_1}(j')$.
See Figure~\ref{fig:recursion}.
If $i' = 1$,
then 
\[
\{\sigma_{i_1}(j_1), \sigma_{i_1}(j_2) \}
\in \{ \{i_2, \sigma_{i_2}(j_3)\}: -1 \le i_2 \le 1, 1 \le j_3 \le \min\{i_2+2,3\} \}.
\]
Otherwise $i' \ge 2$, we prove the statement by case analysis. \\
{\bf Case 1.} $\sigma_{i_1}(j_1) = \max_{1\le k \le 3}\{\sigma_{i_1}(k)\}$.
By our labeling rules, vertex $i'$ is the one with the maximum index among all the vertices of $\tetrahedron_{i'}$ and it is contained in the triangle shared by $\tetrahedron_{i_1}$ and  $\tetrahedron_{i'}$.
Thus, $\sigma_{i_1}(j_1) = i'$ and
\[
\{\sigma_{i_1}(j_1), \sigma_{i_1}(j_2) \}
= \{i', \sigma_{i'}(j_3) \}
\]
for some $1 \le j_3 \le 3$.\\
{\bf Case 2.} $\sigma_{i_1}(j') = \max_{1\le k \le 3}\{\sigma_{i_1}(k)\}$.
Then, $\sigma_{i_1}(j') = i'$ and 
$\sigma_{i_1}(j_1), \sigma_{i_1}(j_2) \in \{ \sigma_{i'}(k): 1\le k \le 3\}$.
Note $i' < i$.
By induction on the tetrahedron index $i'$, we can see that there exist $i_2, j_3$ satisfying
$\{ \sigma_{i_1}(j_1), \sigma_{i_1}(j_2) \} 
= \{ i_2, \sigma_{i_2}(j_3) \}$.
\end{proof}

We use a recursion tree $\Trec^{(i,j)}$ to express the process of recursively substituting $\qqbar_{i'} - \qqbar_{\sigma_{i'}(j')}$ via Equation~\eqref{eqn:recursion}.
The root of $\Trec^{(i,j)}$ is $\qqbar_i - \qqbar_{\sigma_i(j)}$, and 
each node $u$ of $\Trec^{(i,j)}$ represents a term $\qqbar_{i_u} - \qqbar_{\sigma_{i_u}(j_u)}$ which is substituted at that point.
The leaves are $\pm(\qqbar_{-1}-  \qqbar_{-2})$, which equals $\pm \dd_{-1}$ by Equation~\eqref{eqn:diDef}.

\begin{claim}
The number of nodes in the recursion tree $\Trec^{(i,j)}$ is at most $2^i$.
\label{clm:recursionTreeNodes}
\end{claim}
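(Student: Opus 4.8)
The plan is to argue by strong induction on the index $i$ of the root term $\qqbar_i - \qqbar_{\sigma_i(j)}$ of the recursion tree $\Trec^{(i,j)}$. Two structural facts, both already established, drive the argument. First, by Equation~\eqref{eqn:recursion} the children of a node carrying a term of index $i$ are exactly the terms $\qqbar_{\sigma_i(j)} - \qqbar_{\sigma_i(j')}$ over the indices $j' \neq j$ with $1 \le j' \le \min\{i+2,3\}$; hence a node has at most $\min\{i+2,3\}-1 \le 2$ children, a node of index $i=0$ has a single child, and a node of index $i=-1$ is a leaf (its term is $\pm(\qqbar_{-1}-\qqbar_{-2}) = \pm\dd_{-1}$ by Equation~\eqref{eqn:diDef}, and it is not expanded further). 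Second, Claim~\ref{clm:recursion} guarantees that every such child term $\qqbar_{\sigma_i(j)} - \qqbar_{\sigma_i(j')}$ equals, up to sign, a term $\qqbar_{i_2} - \qqbar_{\sigma_{i_2}(j_3)}$ with a strictly smaller index $i_2 < i$. In particular indices strictly decrease along every root-to-leaf path and cannot drop below $-1$, so $\Trec^{(i,j)}$ is automatically finite, with depth at most $i+2$.

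With these in place I would let $M(i)$ denote the maximum number of nodes over all recursion trees whose root term has index $i$ (maximizing over the admissible choices of $j$, of the centering triangle, and of the BFS relabeling). The facts above give $M(-1)=1$, $M(0) = 1 + M(-1)$, and, for $i \ge 1$,
\[
M(i) \;\le\; 1 + 2\max_{-1 \le i' \le i-1} M(i') \;=\; 1 + 2\,M(i-1),
\]
using that $M$ is non-decreasing. Unrolling this recurrence yields $M(i) = O(2^i)$; the stated bound $M(i)\le 2^i$ then follows after tracking the small base cases explicitly together with the observation that in every branch where the fixed vertex $\sigma_i(j)$ is not the largest of $\sigma_i(1),\sigma_i(2),\sigma_i(3)$, one of the two children already carries an index at most $i-2$ rather than $i-1$. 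Either way, the feature that propagates downstream is the growth rate $2$: in Lemma~\ref{lem:DS07Main} the relevant index $i$ is controlled by the tetrahedron-distance $h$, and the tree size enters only through the factor $2^{-\Theta(h)}$, so any bound of the form $2^{O(i)}$ is enough.

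The only genuinely delicate point is the assertion that each substitution step strictly decreases the index — that no child term can reintroduce an index $\ge i$ — and this is precisely the content of Claim~\ref{clm:recursion}, which has already been proved from the BFS labeling rules and the case analysis on which vertex of the shared triangle face is largest. Granting it, the present claim is pure bookkeeping on the binary-branching recurrence above; the residual care lies only in pinning down the definition of $M(i)$ and the base cases $i\in\{-1,0,1\}$ so that the closed form holds verbatim.
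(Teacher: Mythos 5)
Your argument is correct and is essentially the paper's proof: the paper also observes that each node has at most two children and that a strictly decreasing quantity (the BFS-tree depth of $\tetrahedron_{i_u}$, rather than your vertex index $i_u$) bounds the depth of $\Trec^{(i,j)}$ by $i$, giving $2^{O(i)}$ nodes. Both your recurrence $M(i)\le 1+2M(i-1)$ and the paper's ``binary tree of height $i$'' bound are off from the literal constant $2^i$ by a small factor, which is immaterial since only $2^{O(i)}$ enters Lemma~\ref{lem:DS07Main}.
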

\begin{proof}
Since $\Trec^{(i,j)}$ is a binary tree,
it suffices to prove that the height of $\Trec^{(i,j)}$ is at most $i$.
For each non-leaf node $u$ of $\Trec^{(i,j)}$,
let $u_1$ be a child of $u$.
According to Equation~\eqref{eqn:recursion} and our labeling rules, 
$i_{u_1}$ is a vertex in tetrahedron $\tetrahedron_{i_u}$ and $i_{u_1} < i_u$.
Let $\tetrahedron_k$ be the tetrahedron such that $\tetrahedron_k$ and $\tetrahedron_{i_u}$
share a triangle face containing vertices $\sigma_{i_u}(1), \sigma_{i_u}(2), \sigma_{i_u}(3)$.
Then in $\Tbfs$, $\tetrahedron_k$ is the parent of $\tetrahedron_{i_u}$.
Since $i_{u_1} < i_u$, in $\Tbfs$, the depth of $\tetrahedron_{i_{u_1}}$ is smaller than the depth of $\tetrahedron_{i_u}$. 
See Figure~\ref{fig:recHeight}.
It implies that the height of $\Trec^{(i,j)}$ is at most the height of $\Tbfs$, which is at most $i$.
\begin{figure}[htb]
\centering
\includegraphics[scale=0.55]{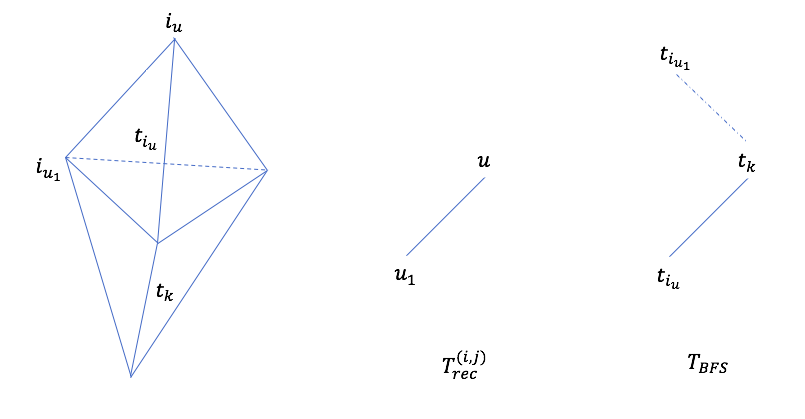}
\caption{If $u_1$ is a child of $u$ in $\Trec^{(i,j)}$, then the depth of $\tetrahedron_{i_{u_1}}$ is smaller than the depth of $\tetrahedron_{i_u}$ in $\Tbfs$.}
\label{fig:recHeight}
\end{figure}
\end{proof}

At each node $u$ of the recursion tree $\Trec^{(i,j)}$, 
by applying Equation~\eqref{eqn:recursion} with the left hand side being $\qqbar_{i_u} -\qqbar_{\sigma_{i_u}(j_u)}$, we introduce a term related to $\dd_{i_u}$.
For each non-root node $u$, denote $\pa(u)$ the parent of $u$ in $\Trec^{(i,j)}$.
Let $P_{u}$ be the path from the root node to node $u$ in $\Trec^{(i,j)}$.
For each non-root node  $u$, define
\begin{align}
\MM_u \defeq \prod_{v \in P_{pa(u)}} \frac{\HH_{i_v,j_v}}{\det(\FF_{i_v})}.
\label{eqn:muDef}
\end{align}
For root node $r$, define $\MM_r \defeq \II$.
By recursively applying Equation~\eqref{eqn:recursion}, we get
\[
\qqbar_i - \qqbar_{\sigma_i(j)}
= \sum_{u \in \Trec^{(i,j)}} \sgn{u} \MM_u \dd_{i_u},
\]
where $\sgn{u} \in \{+1,-1\}$.
By the triangle inequality and the multiplicative inequality of 2-norm,
\begin{align}
\norm{\qqbar_i - \qqbar_{\sigma_i(j)}}_2
\le \sum_{u \in \Trec^{(i,j)}} \norm{ \MM_u}_2 \norm{\dd_{i_u}}_2.
\label{eqn:xiDiff}
\end{align}

We bound $\norm{\MM_u}_2$ for each non-root node $u$ by the following claim.
\begin{claim}
Let $l_{\max}$ be the maximum edge length, and let $V_{\min}$ be the minimum tetrahedron volume. 
Define
$c \defeq \frac{l_{\max}^3}{6 V_{\min}}$, which is a constant by our assumption that all tetrahedrons have constant aspect ratios and all edge lengths are constant.
For each non-root $u$ in $\Trec^{(i,j)}$, $\norm{\MM_u}_2 \le c^i$. 
\label{clm:muBound}
\end{claim}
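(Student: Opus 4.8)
The plan is to bound the operator norm of each factor in the product defining $\MM_u$ separately, and then to bound the number of factors. Recall from Equation~\eqref{eqn:muDef} that $\MM_u = \prod_{v \in P_{\pa(u)}} \frac{\HH_{i_v,j_v}}{\det(\FF_{i_v})}$, so by submultiplicativity of the spectral norm it suffices to show (i) $\norm{\HH_{i,j}/\det(\FF_i)}_2 \le c$ for every relevant index $i \ge 0$, and (ii) the number of factors, i.e.\ the length $|P_{\pa(u)}|$, is at most $i$.

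For step (i) with $i \ge 1$: I would first note that $\HH_{i,j} = \hh_{i,j}(\pp_i - \pp_{\sigma_i(j)})^\top$ is a rank-one matrix, so $\norm{\HH_{i,j}}_2 = \norm{\hh_{i,j}}_2 \cdot \norm{\pp_i - \pp_{\sigma_i(j)}}_2$. Expanding the $2\times 2$ determinants in the definition of $\hh_{i,j}$ and using the modular indexing convention shows that $\hh_{i,j}$ is exactly the cross product $(\pp_i - \pp_{\sigma_i((j+1)_3)}) \times (\pp_i - \pp_{\sigma_i((j+2)_3)})$; since both of these vectors are edges of the tetrahedron $t_i$, edge-simplicity gives each length at most $l_{\max}$, hence $\norm{\hh_{i,j}}_2 \le l_{\max}^2$ and $\norm{\HH_{i,j}}_2 \le l_{\max}^3$. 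On the denominator side, $\det(\FF_i)$ is the scalar triple product $(\pp_i - \pp_{\sigma_i(1)}) \cdot \big((\pp_i - \pp_{\sigma_i(2)}) \times (\pp_i - \pp_{\sigma_i(3)})\big)$ of the three edge vectors of $t_i$ emanating from vertex $i$, whose absolute value equals $6\cdot\mathrm{vol}(t_i) \ge 6 V_{\min}$. Dividing gives $\norm{\HH_{i,j}/\det(\FF_i)}_2 \le l_{\max}^3/(6V_{\min}) = c$.

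The boundary case $i = 0$ needs a small variant, since $\FF_0$ is built from two edges and the unit normal $\ww$ rather than three edges. Here $\hh_{-2}$ and $\hh_{-1}$ are cross products of an edge of $t_1$ with $\ww$, so $\norm{\hh_{-2}}_2, \norm{\hh_{-1}}_2 \le l_{\max}$ and $\norm{\HH_{0,j}}_2 \le l_{\max}^2$; meanwhile $\det(\FF_0) = \ww \cdot \big((\pp_0 - \pp_{-2}) \times (\pp_0 - \pp_{-1})\big)$ has absolute value equal to twice the area of the face $\{\pp_{-2},\pp_{-1},\pp_0\}$ of $t_1$, which is at least $6V_{\min}/l_{\max}$ because $\mathrm{vol}(t_1) = \tfrac13\cdot\mathrm{area}\cdot h$ with the height $h$ bounded by an edge length $\le l_{\max}$. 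This again yields $\norm{\HH_{0,j}/\det(\FF_0)}_2 \le l_{\max}^3/(6V_{\min}) = c$, and the remaining tiny cases are immediate.

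For step (ii), $|P_{\pa(u)}|$ equals the depth of $u$ in $\Trec^{(i,j)}$, which by the argument in the proof of Claim~\ref{clm:recursionTreeNodes} is at most the height of $\Trec^{(i,j)}$, and hence at most $i$. Since Hadamard's inequality applied to the three edge vectors from any vertex of a tetrahedron shows $6\,\mathrm{vol}(t) \le l_{\max}^3$, we have $V_{\min} \le l_{\max}^3/6$ and therefore $c \ge 1$; so a product of at most $i$ factors each of norm at most $c$ has norm at most $c^{i}$, which is the claimed bound. I expect the main (very mild) obstacle to be bookkeeping the root-tetrahedron case, where $\FF_0$ must be estimated via the face-area-versus-volume identity rather than the clean triple-product identity used for $i \ge 1$; everything else is routine once $\hh_{i,j}$ is recognized as a cross product and $\det(\FF_i)$ as six times a tetrahedron volume.
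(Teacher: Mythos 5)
Your proof is correct, and it reaches the same bound by a slightly different mechanism than the paper. Both arguments start from the same product decomposition of $\MM_u$ over the path $P_{\pa(u)}$ and both use the path-length bound of $i$ from Claim~\ref{clm:recursionTreeNodes}; the difference is in how the product of rank-one factors is controlled. You apply submultiplicativity of the spectral norm factor by factor, bounding each $\norm{\HH_{i,j}}_2 = \norm{\hh_{i,j}}_2\,\norm{\pp_i - \pp_{\sigma_i(j)}}_2 \le l_{\max}^3$ via the cross-product identity and each $\abs{\det(\FF_i)} = 6\,\mathrm{vol}(t_i) \ge 6V_{\min}$ via the triple-product identity. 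The paper instead telescopes the chain of rank-one matrices into a single outer rank-one matrix times a product of \emph{scalar} ratios $\bigl((\pp_{i_l}-\pp_{\sigma_{i_l}(j_l)})^\top \hh_{i_{l+1},j_{l+1}}\bigr)/\det(\FF_{i_l})$, each interpreted as a ratio of volumes of two tetrahedrons built from edges of adjacent mesh tetrahedrons, and bounds each ratio by $V_{\max}/V_{\min} \le c$. The telescoping view makes transparent that the per-step loss is really a volume ratio between neighboring elements (potentially tighter when adjacent volumes are comparable), whereas your version is more self-contained and, helpfully, makes explicit two points the paper elides: the separate normalization needed for the root factor $\FF_0$ (two edges plus the unit normal, handled via the face-area-versus-volume identity) and the fact that $c \ge 1$ by Hadamard's inequality, which is needed to pass from ``at most $i$ factors each of norm at most $c$'' to the bound $c^i$.
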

\begin{proof}
Let $i = u_1 > u_2 > \ldots > u_t > u_{t+1} = u$ be the vertices along the path $P_u$. 
By the proof of Claim~\ref{clm:recursionTreeNodes}, the  length of $P_u$ is at most $i$.
Let $i_k = i_{u_k}$ and $j_k = j_{u_k}$ for each $1 \le k \le t$.
By Equations~\eqref{eqn:muDef}, \eqref{eqn:HijDef}, and~\eqref{eqn:detFi},
\begin{align*}
\MM_u &= \prod_{i_1 \ge i_l \ge i_t} \frac{\hh_{i_{l},j_{l}} (\pp_{i_l} - \pp_{\sigma_{i_l}(j_l)})^\top }{(\pp_{i_l} - \pp_{\sigma_{i_l}(j_l)})^\top \hh_{i_l,j_l}} \\
& = \hh_{i_1,j_1} 
\left( \prod_{i_1 \ge i_l \ge i_{t-1}}
\frac{ (\pp_{i_l} - \pp_{\sigma_{i_l}(j_l)})^\top \hh_{(i_{l+1},j_{l+1})} }{(\pp_{i_l} - \pp_{\sigma_{i_l}(j_l)})^\top \hh_{(i_l,j_l)} } \right)
\frac{(\pp_{i_{t}} - \pp_{\sigma_{i_l}(j_t)})^\top }{(\pp_{i_{t}} - \pp_{\sigma_{i_l}(j_t)})^\top \hh_{(i_t,j_t)} }.
\end{align*}
Recall that from Claim~\ref{clm:determinant},
\[
(\pp_{i_l} - \pp_{\sigma_{i_l}(j_l)})^\top \hh_{(i_l,j_l)}
= \det(\FF_{i_l}), 
\] 
which is equal to the volume of the tetrahedron generated by vectors $\pp_{i_l} - \pp_{\sigma_{i_l}(1)}, \pp_{i_l} - \pp_{\sigma_{i_l}(2)}$, and $\pp_{i_l} - \pp_{\sigma_{i_l}(3)}$.
Similarly, $(\pp_{i_l} - \pp_{\sigma_{i_l}(j_l)})^\top \hh_{(i_{l+1},j_{l+1})}$ equals to the volume of the tetrahedron generated by vectors $\pp_{i_l} - \pp_{\sigma_{i_l}(j_l)}, \pp_{i_{l+1}} - \pp_{\sigma_{i_{l+1}}((j_{l+1}+1)_3)}$, and $\pp_{i_{l+1}} - \pp_{\sigma_{i_{l+1}}((j_{l+1}+2)_3)}$.
Thus, by our definition $c = l_{\max}^3 / (6 V_{\min}) \ge V_{\max} / V_{\min}$,
\[
\abs{\frac{ (\pp_{i_l} - \pp_{\sigma_{i_l}(j_l)})^\top \hh_{(i_{l+1},j_{l+1})}}{(\pp_{i_l} - \pp_{\sigma_{i_l}(j_l)})^\top \hh_{(i_l,j_l)}^{(j_l)}}} \le c.
\]
Thus,
\begin{align*}
\norm{\MM_u}_2 & 
\le  \left( \prod_{i_1 \ge i_l \ge i_{t-1}}
\abs{ \frac{ (\pp_{i_l} - \pp_{\sigma_{i_l}(j_l)})^\top \hh_{(i_{l+1},j_{l+1})}}{(\pp_{i_l} - \pp_{\sigma_{i_l}(j_l)})^\top \hh_{(i_{l},j_{l})}}} \right)
\norm{ \frac{\hh_{(i_{1},j_{l})} (\pp_{i_t} - \pp_{\sigma_{i_t}(j_t)})^\top }{(\pp_{i_{t}} - \pp_{\sigma_{i_t}(j_t)})^\top \hh_{(i_{t},j_{t})} } }_2 
= O(c^i).
\end{align*}
\end{proof}

Applying Claim~\ref{clm:muBound} to 
Equation~\eqref{eqn:xiDiff}, 
\[
\norm{\qqbar_i - \qqbar_{\sigma_i(j)}}_2
= O \left( c^i \sum_{u \in \Trec^{(i,j)}} \norm{\dd_{i_u}}_2 \right).
\]
By Claim~\ref{clm:recursionTreeNodes} and the fact that $-1 \le i_u \le i$ for each $u \in \Trec^{(i,j)}$,
\begin{align}
\norm{\qqbar_i - \qqbar_{\sigma_i(j)}}_2
= O \left( (2c)^i \sum_{-1 \le i' \le i} \norm{\dd_{i'}}_2 \right).
\label{eqn:xiNorm}
\end{align}

\subsection{Bounding the Quadratic Form $(\qqbar)^\top \MM \qqbar$}

Note
\[
(\qqbar)^\top \MM \qqbar
\ge \sum_{\substack{-1 \le i \le \ns-3 \\ 1 \le j \le \min\{i+2,3\}}}
\left( (\qqbar_i - \qqbar_{\sigma_i(j)})^\top (\pp_i - \pp_{\sigma_i(j)}) \right)^2.
\]
By Equation~\eqref{eqn:diReplace},
\[
(\qqbar)^\top \MM \qqbar
\ge \sum_{\substack{-1 \le i \le \ns-3 \\ 1 \le j \le \min\{i+2,3\}}}
\left( \dd_i^\top (\pp_i - \pp_{\sigma_i(j)}) \right)^2.
\]
By the fact that $\dd_{-1} = \qqbar_{-1} - \qqbar_{-2}$ is parallel to $\pp_{-1} - \pp_{-2}$, 
\[
\left( \dd_{-1}^\top (\pp_{-1} - \pp_{-2}) \right)^2
= \Omega \left( \norm{\dd_{-1}}_2^2 \right).
\]

\begin{claim}[Lemma 3.7 of~\cite{daitchS07}]
Under the assumption: the angle between $\pp_0- \pp_{-1}$ and $\pp_0 - \pp_{-2}$ is in the range $[\theta, \pi - \theta]$ for some constant $\theta$.
For any fixed unit vector $\yy \in \mathbb{R}^3$,
\[
\sum_{j \in \{-2,-1\}} \left( \yy^\top (\pp_0 -\pp_j) \right)^2 = \Theta(1).
\]
\label{clm:projectSum2}
\end{claim}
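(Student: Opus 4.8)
The plan is to reduce this to a computation in the $2$-dimensional plane $P$ spanned by $\uu \defeq \pp_0-\pp_{-1}$ and $\vv \defeq \pp_0-\pp_{-2}$, which is exactly the $2$-D argument of~\cite{daitchS07} transcribed to $\mathbb{R}^3$. First I would dispatch the upper bound: since $\calT$ is {\esimple}, every edge has length in $[l_{\min},l_{\max}]$ for constants $0<l_{\min}\le l_{\max}$, so Cauchy--Schwarz gives $(\yy^\top(\pp_0-\pp_j))^2 \le \|\yy\|_2^2\,\|\pp_0-\pp_j\|_2^2 \le l_{\max}^2$ for each $j\in\{-2,-1\}$, hence $\sum_j(\yy^\top(\pp_0-\pp_j))^2 \le 2l_{\max}^2 = O(1)$. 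Note also that the angle hypothesis forces $\uu$ and $\vv$ to be linearly independent, so $P$ is genuinely $2$-dimensional.

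For the lower bound I would rewrite the sum as the quadratic form $\yy^\top\MM_0\yy$ with $\MM_0 \defeq \uu\uu^\top+\vv\vv^\top = \WW\WW^\top$, where $\WW \defeq [\,\uu\ \vv\,]\in\mathbb{R}^{3\times 2}$. The nonzero eigenvalues of $\WW\WW^\top$ coincide with those of the $2\times2$ Gram matrix $\GG\defeq\WW^\top\WW$, and I would estimate $\mathrm{tr}(\GG)=\|\uu\|_2^2+\|\vv\|_2^2\le 2l_{\max}^2$ and $\det(\GG)=\|\uu\|_2^2\|\vv\|_2^2\sin^2\angle(\uu,\vv)\ge l_{\min}^4\sin^2\theta$ (using the angle hypothesis and that $\sin\ge\sin\theta$ on $[\theta,\pi-\theta]$). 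The standard estimate for a $2\times2$ PSD matrix (smaller eigenvalue $\ge\det/\mathrm{tr}$, larger $\le\mathrm{tr}$) then confines both eigenvalues of $\GG$ to $[\,l_{\min}^4\sin^2\theta/(2l_{\max}^2),\,2l_{\max}^2\,]=[\Omega(1),O(1)]$, i.e. $\MM_0 \pgeq c_0\,\Pi_P$ with the constant $c_0\defeq l_{\min}^4\sin^2\theta/(2l_{\max}^2)>0$ and $\Pi_P$ the orthogonal projection onto $P$; consequently $\yy^\top\MM_0\yy \ge c_0\|\Pi_P\yy\|_2^2$.

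Finally I would observe that $\|\Pi_P\yy\|_2 = 1$ in every invocation of the claim, so the displayed bound is $\Omega(1)$: the unit vector $\yy$ to which the claim is applied is the normalization of $\dd_0 = \qqbar_0-\yy_0$, which lies in $P$ because $\yy_0-\qqbar_{-2}\perp\ww$ by~\eqref{eqn:x0def} while $\qqbar_0-\qqbar_{-2}$ is parallel to $P$ by Lemma~\ref{lem:centering}. I do not expect any genuine obstacle — the computation is entirely routine once phrased via the Gram matrix $\GG$ — and the only point needing attention is exactly this bookkeeping: a general $\yy\in\mathbb{R}^3$ that is orthogonal to $P$ makes the sum vanish, so the proviso $\yy\in P$ (automatic in all applications) really is what is being used, and I would state the claim with that proviso (or note it inline) to be safe.
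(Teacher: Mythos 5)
Your proof is correct. Note that the paper itself gives no proof of this claim --- it is imported as Lemma 3.7 of Daitch--Spielman, whose setting is 2-dimensional --- so there is nothing to compare against except the neighbouring Claim~\ref{clm:projectSum}, which the paper proves by a contradiction/volume argument; your Gram-matrix route ($\det/\mathrm{tr}$ lower bound on the smaller eigenvalue of $\WW^\top\WW$ with $\WW = [\,\uu\ \vv\,]$) is a cleaner and fully quantitative way to get the same conclusion. More importantly, you have caught a real imprecision in the statement: for an arbitrary unit $\yy \in \mathbb{R}^3$ the claim is false, since $\yy = \ww$ (the normal to the plane $P$ through $\pp_{-2},\pp_{-1},\pp_0$) makes both terms vanish. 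The lower bound genuinely requires $\norm{\Pi_P \yy}_2 = \Omega(1)$, and your verification that this holds in the one place the claim is invoked --- $\yy$ is the normalization of $\dd_0 = \qqbar_0 - \yy_0$, which is parallel to $P$ because $\yy_0 - \qqbar_{-2} \perp \ww$ by Equation~\eqref{eqn:x0def} and $\qqbar_0 - \qqbar_{-2}$ is parallel to $P$ by Lemma~\ref{lem:centering} --- is exactly the bookkeeping needed to make the paper's use of the claim sound. Stating the claim with that proviso, as you suggest, is the right fix.
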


\begin{claim}
Under the assumption of constant edge lengths: for each $1 \le i \le \ns-3$, the determinant of the matrix:
\[
\det\left( \left( \begin{array}{ccc}
\pp_i - \pp_{\sigma_i(1)} & \pp_i - \pp_{\sigma_i(2)}
& \pp_i - \pp_{\sigma_i(3)}
\end{array} \right) \right)
= \Theta(1).
\]
Furthermore, for any fixed unit vector $\yy \in \mathbb{R}^3$, for each $1 \le i \le \ns-3$,
\[
\sum_{1 \le j \le 3} \left( \yy^\top \left(\pp_i - \pp_{\sigma_i(j)} \right) \right)^2 = \Theta(1).
\]
\label{clm:projectSum}
\end{claim}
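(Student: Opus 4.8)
The plan is to observe that the matrix in the claim is, up to transposition, the matrix $\FF_i$ defined just above, whose three rows are the edge vectors $\pp_i - \pp_{\sigma_i(j)}$ ($j=1,2,3$) of the tetrahedron $\tetrahedron_i = \{i,\sigma_i(1),\sigma_i(2),\sigma_i(3)\}$ emanating from vertex $i$. By construction of the BFS relabeling, $\tetrahedron_i$ is one of the tetrahedrons of $\calT$, hence it has bounded aspect ratio and all of its edge lengths lie in $[l_{\min},l_{\max}]$ for absolute constants $0<l_{\min}\le l_{\max}$ (from the standing {\esimple} assumption). Both parts of the claim then follow from a single fact: $\FF_i$ is well-conditioned, i.e. all of its singular values lie between two absolute constants.

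For the determinant, I would use that $\abs{\det(\FF_i)}$ equals $6$ times the volume $V_i$ of $\tetrahedron_i$, so it suffices to show $V_i=\Theta(1)$. The upper bound is Hadamard's inequality, $\abs{\det(\FF_i)} \le \prod_{j}\norm{\pp_i-\pp_{\sigma_i(j)}}_2 \le l_{\max}^3$, giving $V_i=O(1)$. For the lower bound I would go through the inscribed ball: the smallest ball enclosing $\tetrahedron_i$ contains both endpoints of an edge, so has radius at least $l_{\min}/2$; dividing by the (constant) aspect ratio, the largest inscribed ball has radius $\Omega(1)$, and since it is contained in $\tetrahedron_i$ we get $V_i=\Omega(1)$. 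This proves the first assertion, $\abs{\det(\FF_i)}=\Theta(1)$.

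For the second assertion, note $\sum_{j=1}^{3}\bigl(\yy^\top(\pp_i-\pp_{\sigma_i(j)})\bigr)^2 = \norm{\FF_i\yy}_2^2$. The $O(1)$ upper bound is immediate from Cauchy--Schwarz: each summand is at most $\norm{\pp_i-\pp_{\sigma_i(j)}}_2^2 \le l_{\max}^2$ and $\norm{\yy}_2=1$. For the $\Omega(1)$ lower bound, write $\sigma_1\ge\sigma_2\ge\sigma_3>0$ for the singular values of $\FF_i$; then $\sigma_1\le\norm{\FF_i}_F\le\sqrt{3}\,l_{\max}=O(1)$ while $\sigma_1\sigma_2\sigma_3=\abs{\det(\FF_i)}=\Theta(1)$ by the previous step, so $\sigma_3 \ge \abs{\det(\FF_i)}/\sigma_1^2 = \Omega(1)$, and hence $\norm{\FF_i\yy}_2^2 \ge \sigma_3^2 = \Omega(1)$ for every unit vector $\yy$.

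I do not expect a genuine obstacle. The only point needing care is that the volume lower bound truly requires the lower bound on edge lengths: bounded aspect ratio alone excludes sliver tetrahedrons but not arbitrarily small ones, so ``constant edge lengths'' in the claim must be read against the {\esimple} hypothesis. One should also verify that every hidden constant depends only on the global bounds (aspect ratio, edge length, stiffness coefficient) and not on $i$ or $n$, which is automatic since each estimate above involves only a single tetrahedron.
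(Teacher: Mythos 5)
Your proof is correct, and both parts rest on the same core facts the paper uses: $\abs{\det(\FF_i)}$ is (six times) the volume of the tetrahedron $\tetrahedron_i$, which is $\Theta(1)$, while all row norms of $\FF_i$ are $O(1)$. The differences are in execution. For the determinant, the paper simply asserts that the volume ``is constant by our assumption,'' whereas you actually derive the lower bound from the {\esimple} hypothesis (enclosing-ball radius at least $l_{\min}/2$, divide by the aspect ratio to get an $\Omega(1)$ inscribed ball); your remark that bounded aspect ratio alone would not suffice without the edge-length lower bound is a worthwhile observation the paper leaves implicit. For the second assertion, the paper argues by contradiction: if all three projections $\yy^\top(\pp_i-\pp_{\sigma_i(j)})$ were $o(1)$, the three edge vectors would lie in an $o(1)$-thick slab orthogonal to $\yy$, forcing the volume to be $o(1)$. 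You instead bound the smallest singular value of $\FF_i$ via $\sigma_3 \ge \abs{\det(\FF_i)}/\sigma_1^2$ with $\sigma_1 \le \norm{\FF_i}_F = O(1)$. The two arguments encode the same geometry, but yours is quantitative and gives an explicit constant $\Omega(l_{\min}^6/(\alpha^3 l_{\max}^4))$-type bound, avoiding the paper's slightly informal asymptotic phrasing for a single fixed tetrahedron. No gaps.
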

\begin{proof}
The determinant of matrix $ \left( \begin{array}{ccc}
\pp_i - \pp_{\sigma_i(1)} & \pp_i - \pp_{\sigma_i(2)}
& \pp_i - \pp_{\sigma_i(3)}
\end{array} \right)$
is the signed volume of tetrahedron $\tetrahedron_i = \{i, \sigma_i(1), \sigma_i(2), \sigma_i(3)\}$, 
which is constant by our assumption.

We claim that for any unit vector $\yy$, there exists some $j \in \{1,2,3\}$ such that $\yy^\top (\pp_i - \pp_{\sigma_i(j)})= \Theta(1)$.
Assume by contradiction, 
for every $j \in \{1,2,3\}$ we have $\yy^\top (\pp_i - \pp_{\sigma_i(j)})= o(1)$.
This means that all three vectors $\pp_i - \pp_{\sigma_i(1)}, \pp_i - \pp_{\sigma_i(2)}, \pp_i - \pp_{\sigma_i(3)}$
are between two 2D planes with distance $o(1)$ and orthogonal to $\yy$.
This contradicts the assumption that the volume of tetrahedron $t_i = \{i, \sigma_i(1), \sigma_i(2), \sigma_i(3)\}$ is constant.

Thus, we have
\[
\sum_{1 \le j \le 3} \left( \yy^\top \left(\pp_i - \pp_{\sigma_i(j)} \right) \right)^2 = \Theta(1).
\]
\end{proof}

Therefore,
\[
(\qqbar)^\top \MM \qqbar
= \Omega \left( \sum_{-1 \le i \le n} \norm{\dd_i}_2^2 \right).
\label{eqn:quadraticBound}
\]

Applying the Cauchy-Schwarz inequality on the above equation and Equation~\eqref{eqn:xiNorm} gives the path lemma.

\def\dist{\text{dist}}
\def\vol{\text{vol}}
\def\algKChunksND{\textsc{ConvexTrussUnionND}}
\def\algBoundBox{\textsc{BoundingBox}}
\def\surf{\text{Surf}}

\def\calB{\mathcal{B}}

\section{Computing a $(B, r)$-Hollowing of a
Convex Edge-simple  Truss}
\label{sec:appendixHollow}

In this section, we present Algorithm~\ref{alg:hollow} $\textsc{Hollow}$, 
which is used as a subroutine for line~\ref{line:MainHollow} of Algorithm~\ref{alg:TrussSolver} $\textsc{TrussSolver}$.
The input of Algorithm~\ref{alg:hollow} consists of a {\trussnice}  3D truss $\calT = \langle \{\pp_i\}_{i \in V}, T, E, \gamma \rangle$, a bounding box $B$ and an integer parameter $r$.
The algorithm outputs a $(B, r)$-hollowing $\calH$ of $\calT$. We can check that the algorithm terminates in time $O(\abs{V})$.
These together prove Lemma~\ref{lem:hollow}.

\begin{algorithm}[htb]
\caption{
  $\textsc{Hollow}(
\calT = \langle \{\pp_i\}_{i \in V}, T, E, \gamma \rangle, B,
 r )$
  \label{alg:hollow}
}
\begin{algorithmic}[1]
\renewcommand{\algorithmicrequire}{\textbf{Input:}}
\renewcommand\algorithmicensure {\textbf{Output:}}
\REQUIRE{a {\trussnice}  3D truss $\calT = \langle \{\pp_i\}_{i \in V}, T, E, \gamma \rangle$, 
a bounding box $B$,
a positive integer $r \le n / \alpha(\calT)^2$}
\ENSURE{a $(B, r)$-hollowing of $\calT$}
\STATE Compute $r$-division planes, which are planes orthogonal
to each of the three directions of $B$
such that these planes divide $B$ into $O(n/r)$ small cubes of side length $O(r^{1/3})$ each. 
\label{lin:smallCubes}
\STATE $\calH \leftarrow$ tetrahedrons intersecting an $r$-division plane.\label{lin:startConstructH}
\STATE $\calH \leftarrow \calH \ \cup$ tetrahedrons on the boundary of $\calT$.
\STATE Make $\calH$ stiffly-connected by adding a minimal number of tetrahedrons in $\calT$.
\label{lin:endConstructH}
\RETURN $\calH$.
\end{algorithmic}
\end{algorithm}

Given that each individual tetrahedron has constant volume and constant aspect ratio, 
the following observation converts counting the number of tetrahedrons in a truss intersecting a 2D plane into the intersection area of this truss and the plane.

\begin{observation}
Let $\calT$ be a {\trussnice} 3D truss, and let $P$ be a 2D plane.
Let $A(\calT \cap P)$ be the intersection area of $\calT$ and $P$.
Then the number of tetrahedrons in $\calT$ intersecting $P$ is upper bounded by $O(\max\{A(\calT \cap P), 1\})$.
\label{obs:convertToSurfaceArea}
\end{observation}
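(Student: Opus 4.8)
The plan is to bound $N$, the number of tetrahedrons meeting $P$, by charging each such tetrahedron against the area of $R \defeq \calT \cap P$, which has area $A \defeq A(\calT \cap P)$. First I would record the constants supplied by the hypotheses: since every tetrahedron has bounded aspect ratio and bounded edge lengths, there are universal constants $V_{\min}$, $D_0$, $\rho_0$ — a lower bound on tetrahedron volume, an upper bound on tetrahedron diameter, and a lower bound on tetrahedron inradius — all $\Theta(1)$; moreover each vertex of $\calT$ lies in at most $O(1)$ tetrahedrons, since each incident tetrahedron subtends solid angle $\Omega(1)$ at that vertex. Because $\calT$ is a simplicial complex, the tetrahedrons $T_1, \dots, T_N$ meeting $P$ have disjoint interiors, so the planar regions $R_i \defeq T_i \cap P$ pairwise overlap only in sets of zero area (a shared triangle face slices to at most a segment) and $A = \sum_{i=1}^N \mathrm{area}(R_i)$.

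Next I would split the $T_i$ according to a constant threshold $c_1 = \Theta(1)$: call $T_i$ \emph{fat} if $\mathrm{area}(R_i) \ge c_1$ and \emph{thin} otherwise, so that $\#\{\text{fat}\} \le A/c_1 = O(A)$ immediately. The key geometric fact is that a thin tetrahedron must have a vertex within distance $\rho_0/2$ of $P$. To see it, suppose $T$ meets $P$ but all four of its vertices are at distance $\ge \rho_0/2$ from $P$; then $T$ has a vertex $a$ strictly on one side of $P$ and a vertex $b$ strictly on the other, both at distance $\ge \rho_0/2$. Writing $B \subseteq T$ for the inscribed ball (radius $\rho_0$, center $c_B$): either $c_B$ lies within $\rho_0/\sqrt2$ of $P$, in which case $B \cap P$ is a disk of radius $\Omega(\rho_0)$; or one of $a,b$ lies on the side of $P$ opposite to $c_B$, and then one of the cones $\mathrm{conv}(\{a\} \cup B)$, $\mathrm{conv}(\{b\} \cup B)$, both contained in $T$, crosses $P$ in a set containing a disk of radius $\Omega(\rho_0^2 / D_0)$. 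Either way $\mathrm{area}(R_i) = \Omega(\rho_0^4 / D_0^2)$, so choosing $c_1$ below this threshold establishes the claim. Consequently, if $W$ denotes the set of mesh vertices within $\rho_0/2$ of $P$, every thin tetrahedron contains a vertex of $W$, and since each vertex of $W$ lies in $O(1)$ tetrahedrons, $\#\{\text{thin}\} = O(|W|)$.

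It then remains to prove $|W| = O(\max\{A,1\})$, and I expect this to be the main obstacle. The intended argument is that each $v \in W$ ``owns'' a piece of $R$ of area $\Omega(1)$, with $O(1)$-fold overlap: the star of $v$ in $\calT$ contains a convex region about $v$ whose width in $\Omega(1)$ of its directions is $\Omega(1)$ (using bounded aspect ratio together with $\Omega(1)$ edge lengths to keep every triangle of the link of $v$ at distance $\Omega(1)$ from $v$), so slicing it by the nearby plane $P$ yields $\Phi(v) \defeq \mathrm{star}(v) \cap P \subseteq R$ of area $\ge c_3 = \Omega(1)$; and a generic point of $P$ belongs to $\Phi(v)$ only when $v$ is one of the four vertices of the tetrahedron containing it, so $\sum_{v \in W} \mathrm{area}(\Phi(v)) = O(A)$ and hence $|W| = O(A)$, a bound that is never vacuous since $W \ne \emptyset$ already forces $A \ge c_3$. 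Combining the three steps then gives $N = \#\{\text{fat}\} + \#\{\text{thin}\} = O(A) + O(|W|) = O(\max\{A,1\})$. The subtle point I would have to treat carefully is exactly the claim ``$\Phi(v)$ has area $\Omega(1)$'': it is clean for interior vertices, where the star contains a genuine ball of radius $\Omega(1)$, but for a vertex on the boundary of $\calT$ whose star is a thin cone, a plane $P$ grazing that cone can cut out only negligible area — this is the familiar phenomenon of a plane tangent to the mesh along a long chain of boundary tetrahedrons, and to obtain a clean statement one either uses the bounded edge length to cap the length of such a chain, or restricts (as the applications in fact do) to planes meeting $\calT$ transversally.
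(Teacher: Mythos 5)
Your route is genuinely different from the paper's. The paper's entire proof is a three-line volume-packing argument: a tetrahedron meeting $P$ has diameter $O(1)$, hence lies in the $O(1)$-neighborhood of $\calT \cap P$; tetrahedrons have pairwise disjoint interiors and volume $\Omega(1)$; so the count is at most $\mathrm{vol}\left(N_{O(1)}(\calT\cap P)\right)/V_{\min}$, which the paper then asserts is $O(\max\{A,1\})$. You instead charge area inside the plane itself, via the fat/thin split, the insphere/cone argument showing a thin tetrahedron has a vertex near $P$, and the vertex-star charging for $|W|$. Your argument is more work but is also more honest about where the difficulty sits.

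The gap you flag in your last paragraph is real, and it is exactly the gap hidden in the paper's final sentence, so neither argument closes it. By the Steiner (tube) formula, the volume of the $D$-neighborhood of a convex planar region $R$ is $2D\,\mathrm{area}(R) + \Theta(D^2)\,\mathrm{perimeter}(R) + \Theta(D^3)$, and for a thin convex cross-section the perimeter term dominates the area term; correspondingly, in your argument a boundary vertex grazed by $P$ need not own $\Omega(1)$ of $A$. Concretely, take $\calT$ to be a $1\times 1\times L$ beam triangulated into unit tetrahedrons and let $P = \{x+y = 2-\delta\}$ graze a long edge: then $\calT\cap P$ is a $\Theta(\delta)\times L$ strip with $A = \Theta(\delta L)$, yet every one of the $\Omega(L)$ tetrahedrons along that edge meets $P$. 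So the observation as literally stated is false, and no proof of it can succeed; the correct statement carries an additive $O(\mathrm{perimeter}(\calT\cap P))$ (equivalently $O(\mathrm{diam}(\calT\cap P))$) term, with which both your charging argument and the paper's packing argument go through. The places the observation is invoked (Lemmas~\ref{lem:regionBoundary}, \ref{lem:planeIntersect}, \ref{lem:smallSeparatorPlane}) bound the cross-section by that of a box or a small square and appear to have enough slack to absorb the extra diameter term, but that is a correction to the paper, not to you: your instinct to either cap the length of the grazed chain using bounded edge lengths or to restrict to transversal planes is the right repair.
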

\begin{proof}
Since every individual tetrahedron in $\calT$ has constant volume and constant aspect ratio, a tetrahedron of $\calT$ intersects $P$ only if all points of this tetrahedron is within some constant distance of $P$.
The number of tetrahedrons of $\calT$ intersecting $P$ can be upper bounded by the volume within some constant distance to $\calT \cap P$.
Thus, the number of tetrahedrons of $\calT$ intersecting $P$ is at most $O(\max\{A(\calT \cap P), 1\})$.
\end{proof}

\subsection{Bounding the Size of $\calH$}

In this section, we show that $\calH = \textsc{Hollow}(\calT, B, r)$, computed by Algorithm~\ref{alg:hollow}, has a small size. That is, $\calH$ satisfies the first condition of the $(B, r)$-hollowing definition in Definition~\ref{def:hollow}.

Note in Algorithm~\ref{alg:hollow} line~\ref{lin:smallCubes}, the bounding box $B$ is divided into $O(n/r)$ small cubes of volume $O(r)$ each.
We call a small cube as a \emph{region}.
The tetrahedrons in a region are the tetrahedrons of $\calH$ which intersect a single small cube.
A tetrahedron can appear in at most eight regions.

The following lemma upper bounds the number of tetrahedrons of $\calH$ in each region.

\begin{lemma}
Given a {\trussnice} 3D truss $\calT$ of $n$ vertices, a bounding box $B$ and a positive integer $r \le n / \alpha(\calT)^2$, 
let $\calH = \textsc{Hollow}(\calT, B, r)$ returned by Algorithm~\ref{alg:hollow}. 
Then, $\calH$ has at most $O(nr^{-1/3})$ tetrahedrons.
\label{lem:regionBoundary} 
\end{lemma}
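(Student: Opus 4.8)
The plan is to bound separately the three groups of tetrahedrons that $\textsc{Hollow}$ places into $\calH$ (Algorithm~\ref{alg:hollow}): those intersecting an $r$-division plane, those lying on $\partial\calT$, and those added to make $\calH$ stiffly connected (line~\ref{lin:endConstructH}). The first step is to pin down the geometry of the bounding box. Since $\calT$ is {\trussnice}, every tetrahedron has $\Theta(1)$ volume and every vertex lies in $O(1)$ tetrahedrons, so $\vol(\calT) = \Theta(n)$, and by Lemma~\ref{lem:BoundingBox} the box $B$ has $\vol(B) = \Theta(n)$ and aspect ratio $\Theta(\alpha)$ with $\alpha \defeq \alpha(\calT)$; also all three side lengths of $B$ are $\Omega(1)$, since at least one tetrahedron of $\calT$ must fit inside $B$. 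Writing the side lengths as $\ell_1 \ge \ell_2 \ge \ell_3$, we get $\ell_1\ell_2\ell_3 = \Theta(n)$ and $\ell_1 = \Theta(\alpha \ell_3)$, whence $\ell_3 = \Omega((n/\alpha^2)^{1/3})$ and $\ell_1\ell_2 = \Theta(n/\ell_3) = O(n^{2/3}\alpha^{2/3})$. Crucially, the hypothesis $r \le n/\alpha^2$ is exactly what guarantees both $r^{1/3} = O(\ell_3)$ (so the grid of cubes of side $r^{1/3}$ really has $\Theta(n/r)$ cells, as asserted in line~\ref{lin:smallCubes}) and $\ell_1\ell_2 = O(n r^{-1/3})$; these two facts drive the whole count.

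For the plane-intersecting tetrahedrons I would treat the three axis directions of $B$ separately. There are $O(\ell_i / r^{1/3})$ $r$-division planes orthogonal to the $i$-th direction, each meeting $\calT \subseteq B$ in a planar region of area at most $\ell_j \ell_k$ (the other two side lengths), hence crossed by $O(\max\{\ell_j\ell_k, 1\}) = O(\ell_j\ell_k)$ tetrahedrons by Observation~\ref{obs:convertToSurfaceArea}. For the two longest directions this gives a subtotal $O(\ell_1\ell_2\ell_3\, r^{-1/3}) = O(n r^{-1/3})$; for the shortest direction it gives $O((\ell_3 r^{-1/3} + 1)\,\ell_1\ell_2) = O(n r^{-1/3} + \ell_1\ell_2) = O(n r^{-1/3})$, using $\ell_1\ell_2 = O(nr^{-1/3})$. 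For the boundary tetrahedrons I would use monotonicity of surface area for nested convex bodies: $\partial\calT$ bounds a convex body contained in $B$, so $\mathrm{area}(\partial\calT) \le \mathrm{area}(\partial B) = 2(\ell_1\ell_2 + \ell_1\ell_3 + \ell_2\ell_3) = O(\ell_1\ell_2) = O(n r^{-1/3})$; since each boundary tetrahedron lies within $O(1)$ of $\partial\calT$, the same thin-shell volume bound underlying Observation~\ref{obs:convertToSurfaceArea} bounds their number by $O(\mathrm{area}(\partial\calT) + 1) = O(n r^{-1/3})$.

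It remains to bound the tetrahedrons added in the stiffening step, which I expect to be the main obstacle. Let $\calH_0$ be the truss before line~\ref{lin:endConstructH}; the above gives $\abs{\calH_0} = O(n r^{-1/3})$, so it suffices to show that the minimal stiffening adds only $O(n r^{-1/3})$ further tetrahedrons, i.e. $\abs{\calH} = O(\abs{\calH_0})$. The idea is that $\calH_0$ is already ``almost'' stiffly connected: within a single grid cell the tetrahedrons of $\calH_0$ are exactly those meeting one of the six walls of the cell or meeting $\partial\calT$, and since $\calT$ intersected with a cube is convex, its intersection with each cube face and with $\partial\calT$ consists of $O(1)$ connected $2$-dimensional pieces, each of which (being convex, hence connected) induces a connected set of tetrahedrons by a path argument. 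Thus $\calH_0$ restricted to one cell has only $O(1)$ rigidity components, which can be joined by adding a path of $O(r^{1/3})$ tetrahedrons (the cell has side $O(r^{1/3})$ and edges have $\Theta(1)$ length); the inter-cell links along shared walls and the $O(1)$-per-vertex fixes needed for the local clause of Definition~\ref{def:stifflyConnected} are absorbed into the same $O(r^{1/3})$-per-cell plus $O(1)$-per-vertex budget. Summing over the $O(n/r)$ cells gives $O(n r^{-2/3}) = O(n r^{-1/3})$ extra tetrahedrons, so $\abs{\calH} = O(n r^{-1/3})$. The delicate point I would spend the most effort on is exactly this claim that stiffening never requires global re-routing through the interior of $\calT$ --- in particular bounding the number of rigidity components of $\calH_0$ inside a cell by a constant, and checking that face-adjacent cells meeting $\calT$ can always be linked cheaply --- which is where the convexity of each individual {\trussnice} truss is essential.
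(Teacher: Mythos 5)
Your proposal is correct in substance but organized quite differently from the paper's proof. The paper argues \emph{per region}: it shows each of the $O(n/r)$ cube regions contains only $O(r^{2/3})$ tetrahedrons of $\calH$, by splitting them into those near the cube's walls (bounded by the cube's surface area) and those near $\partial\calT \cap R$, and for the latter it compares $\surf(\calT\cap R)$ to the surface area of the smallest enclosing ball via an explicit face-by-face projection map. You instead count \emph{globally}: you sum the per-plane intersection bound over all $O(\ell_i/r^{1/3})$ division planes in each axis direction, and you bound the outer-boundary contribution by monotonicity of surface area for nested convex bodies, $\surf(\calT) \le \surf(B) = O(\ell_1\ell_2) = O(nr^{-1/3})$. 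Your use of the hypothesis $r \le n/\alpha^2$ to derive $\ell_3 = \Omega(r^{1/3})$ and $\ell_1\ell_2 = O(nr^{-1/3})$ mirrors the paper's remark that this hypothesis is exactly what makes the $r$-division exist, but you additionally exploit it in the count itself. The trade-off: the paper's per-region bound of $O(r^{2/3})$ is strictly stronger and is reused downstream (e.g.\ in Lemma~\ref{lem:ConditionNumberBoundary}, where each region's boundary needs $O(r^{2/3})$ vertices and diameter $O(r^{1/3})$), whereas your global count only yields the aggregate stated in this lemma; on the other hand your boundary argument via surface-area monotonicity of the enclosing box is more elementary than the paper's ball-projection construction. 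On the stiffening step (line~\ref{lin:endConstructH}): the paper's proof silently assumes the minimally-added tetrahedrons stay within constant distance of the region walls or of $\partial\calT$ and does not justify this, so your explicit per-cell repair budget of $O(r^{1/3})$ tetrahedrons, summing to $O(nr^{-2/3})$, is if anything more careful than the published argument — though, as you acknowledge, making the ``$O(1)$ rigidity components per cell'' claim rigorous is the real remaining work, and it is not done in the paper either.
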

Note the shortest side length of the bounding box of $\calT$ is at least $n^{1/3} \alpha^{-2/3}$. 
The requirement $r \le n / \alpha(\calT)^2$ guarantees that the shortest side length is at least $r^{1/3}$ so that an $r$-division exists.

\begin{proof}

Note $\calH$ has $O(n/r)$ regions.
It suffices to show that each region of $\calH$ has at most $O(r^{2/3})$ tetrahedrons.

Let $R$ be a region of $\calH$.
A tetrahedron of $\calH$ belongs to region $R$ if either this tetrahedron is
within constant distance to the boundary of $R$,
or this tetrahedron is within constant distance to the part of
the boundary of $\calT$ that's contained in $R$.
Since every tetrahedron of $\calH$ has constant volume and aspect ratio,
the number of tetrahedrons within constant distance to
the boundary of $R$ is $O(r^{2/3})$.
It remains to bound the number of tetrahedrons within constant
distance to the boundary of $\calT$ that's contained in $R$.

Define $S \defeq \calT \cap R$.
Since both $\calT$ and $R$ are convex, $S$ is convex.
Let $\surf(\cdot)$ the surface area of a shape.
By Observation~\ref{obs:convertToSurfaceArea}, the number of boundary tetrahedrons of $\calT$ contained in $R$ is $O(\surf(S))$.

Let $B_1$ be the smallest ball containing $S$.
Since $R$ is a cube, we have
\[
\surf(B_1) = O(\surf(R)).
\]
So it suffices to show $\surf(S) \leq \surf(B_1)$.
We do so by giving a one-to-one mapping of every
point from the surface of $S$ onto $B_1$.

Let $\phi: S \rightarrow B_1$ 
which maps each face of $S$ to a subset of the surface of $B_1$, defined as follows.
Consider a face of $S$, say $f$, with vertices
$v_1, \ldots, v_k$ in a clockwise order.
Let $P_f$ be the plane containing $f$.
$P_f$ cuts $B_1$ into two parts, let $B_1'$ be the part of the smaller volume
(break a tie arbitrarily), aka the sphere cap generated by the plane $p_f$.

For each $1 \le i \le k$, let $P_i$ be the plane orthogonal to $P_f$
that passes through $v_i$ and $v_{i+1})$
(if $i =k$, then the intersection line is $(v_k, v_1)$).
Let $u_i$ be the point of intersection of the surface of $B_1$
with the planes $P_{i - 1}$ and $P_i$
(if $i = 1$, then $u_1$ is the intersect vertex of $P_1, P_k$ and the surface of $B_1'$).

We define $\phi(f)$ to be the surface  of $B_1'$ enclosed by $(u_1, \ldots, u_k, u_1)$.
See Figure~\ref{fig:surfaceArea} for an example.
\begin{figure}[htb]
\centering
\includegraphics[width=0.5\textwidth]{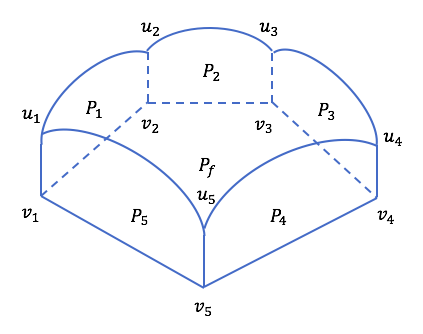}
\caption{An example of $\phi(f)$. Face $f$ has five vertices $v_1, \ldots, v_5$. $\phi(f)$ is the surface of $B_1'$ enclosed by $(u_1, \ldots, u_5, u_1)$.}
\label{fig:surfaceArea}
\end{figure}

For each face $f$ of $S$, the orthogonal projection
of $\phi(f)$ onto the plane $P_f$ is $f$.
Thus
\[
\textsc{Area}(\phi(f)) \ge \textsc{Area}(f).
\]
In addition, since $S$ is convex, for any two distinct
faces $f_1$ and $f_2$, $\phi(f_1)$ and $\phi(f_2)$ are also disjoint.
So we have
\[
\text{Surf}\left(S\right)
= \sum_{f \in S} \textsc{Area}\left(f\right)
\leq \sum_{f\in S} \textsc{Area}\left(\phi\left(f\right)\right)
\leq \text{Surf}\left(B_1\right).
\]
Combining this with $\text{Surf}(B_1) \leq O(\text{Surf}(S)) \leq O(r^{2/3})$
then completes the proof.
\end{proof}

\subsection{Bounding the Number of Tetrahedrons in $\calH$ Intersecting with a Plane}

In this section, we show that $\calH = \textsc{Hollow}(\calT, B, r)$, computed by Algorithm~\ref{alg:hollow}, has a small overlap with any plane whose normal vector has an angle between $(0,\pi/2)$ with the longest direction of $B$.
That is, $\calH$ satisfies the second condition of the $(B, r)$-hollowing definition in Definition~\ref{def:hollow}.

\begin{lemma}
Given a {\trussnice} 3D truss $\calT$ of $n$ vertices, a bounding box $B$ and a positive integer $r \le n / \alpha(\calT)^2$, let $\calH = \textsc{Hollow}(\calT, B, r)$  returned by Algorithm~\ref{alg:hollow}.
Let $\dd \in \mathbb{R}^3$ be a unit vector such that the angle
between $\dd$ and let the angles with the three directions of the box
(normals to its faces) be $\theta_x, \theta_y, \theta_z \in (\theta,
\pi/2)$, for some $\theta > 0$.
Then, the number of tetrahedrons in $\calH$ which intersect any plane $P$ orthogonal to $\dd$ is at most 
\[
O \left( \frac{n^{2/3} }{\alpha(\calT)^{1/3}  r^{1/3} \cos^2 \theta}
\right).
\]
\label{lem:planeIntersect}
\end{lemma}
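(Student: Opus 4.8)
The plan is to charge each tetrahedron of $\calH$ that $P$ meets to one of the $O(n/r)$ small cubes (``regions'') of the $r$-division grid that tiles $B$, bound the contribution of a single region by $O(r^{1/3})$, and then bound the number of regions $P$ meets by the area and perimeter of the cross-section $B\cap P$. Every tetrahedron of $\calH$ lies within $O(1)$ of some region: it either touches an $r$-division plane (a face of a region), or it is a boundary tetrahedron of $\calT$ and hence lies on $\partial(\calT\cap R)$ for the region $R$ containing it, or it is one of the bridging tetrahedra added in line~\ref{lin:endConstructH}, which form near-shortest paths inside single regions. Thus, if $P$ meets a tetrahedron of $\calH$, then $P$ passes within $O(1)$ of the region carrying that tetrahedron, so the total count is $O(r^{1/3})$ times the number of regions $P$ passes near, once we establish the per-region bound.

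\textbf{Per-region bound.} Fix a region $R$, a cube of side $\Theta(r^{1/3})$. I claim $P$ meets only $O(r^{1/3})$ tetrahedra of $\calH$ inside $R$. The intersection of $P$ with each of the six faces of $R$ is a segment of length $O(r^{1/3})$, so $P$ meets $O(r^{1/3})$ of the tetrahedra of $\calH$ sitting on the dividing-plane faces. The intersection of $P$ with $\partial(\calT\cap R)$ is a convex curve of length at most $\mathrm{perimeter}(\calT\cap R\cap P)\le\mathrm{perimeter}(R\cap P)=O(r^{1/3})$, using that $\calT\cap R$ is convex and contained in the cube $R$; hence $P$ meets $O(r^{1/3})$ boundary tetrahedra in $R$. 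Finally the bridging tetrahedra in $R$ are $O(1)$ shortest paths of length $O(r^{1/3})$ each (this is all that a near-minimal stiffening inside a cube can use, and is consistent with the $O(r^{2/3})$ total count of Lemma~\ref{lem:regionBoundary}), contributing another $O(r^{1/3})$. Since each such tetrahedron hit by $P$ lies within $O(1)$ of one of these curves, the per-region count is $O(r^{1/3})$.

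\textbf{Counting regions and conclusion.} Writing $P=\{x:\dd^{\top}x=c\}$, with the coordinates of $\dd$ in the box's axis frame nonnegative without loss of generality, a grid cube meets $P$ exactly when the corresponding lattice point lies in a slab of constant thickness $O(1)$ about $P$; so the number of such cubes is, up to constants, $\textsc{Area}(B\cap P)/r^{2/3}+\mathrm{perimeter}(B\cap P)/r^{1/3}+1$ (volume term plus boundary term of a thin prismatic region), and multiplying by $O(r^{1/3})$ gives
\[
\#\{\text{tetrahedra of }\calH\text{ hit by }P\}=O\!\left(\frac{\textsc{Area}(B\cap P)}{r^{1/3}}+\mathrm{perimeter}(B\cap P)+r^{1/3}\right).
\]
Let $L_x\le L_y\le L_z$ be the box dimensions; by Lemma~\ref{lem:BoundingBox}, $L_xL_yL_z=\Theta(n)$ and $L_z/L_x=\Theta(\alpha)$, so $L_x=\Omega(n^{1/3}\alpha^{-2/3})$, whence $L_xL_y=\Theta(n/L_z)=O(n^{2/3}\alpha^{-1/3})$ and $L_y\le\sqrt{L_yL_z}=O(n^{1/3}\alpha^{1/3})$. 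Let $\theta$ be the angle between $\dd$ and the longest box direction (the quantity in Definition~\ref{def:hollow}), so that direction's component of $\dd$ is $\cos\theta$. Slicing $B$ by planes orthogonal to the longest direction, the trace of $B\cap P$ on each slice is a chord of an $L_x\times L_y$ rectangle, and a change of variables in the chord-length integral gives $\textsc{Area}(B\cap P)\le L_xL_y/\cos\theta$; examining the cross-section polygon face by face shows each of its edges on a side face has length $O(L_y/\cos\theta)$, so $\mathrm{perimeter}(B\cap P)=O(L_y/\cos\theta)$. Plugging in,
\[
\#\{\text{tetrahedra of }\calH\text{ hit by }P\}=O\!\left(\frac{n^{2/3}}{\alpha^{1/3}r^{1/3}\cos\theta}+\frac{n^{1/3}\alpha^{1/3}}{\cos\theta}\right),
\]
and since $r\le n/\alpha(\calT)^{2}$ implies $n^{1/3}\alpha^{1/3}\le n^{2/3}\alpha^{-1/3}r^{-1/3}$, the second term is dominated by the first; this is the claimed bound (indeed with $\cos^{-1}\theta$ in place of $\cos^{-2}\theta$).

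\textbf{Main obstacle.} The delicate point is the cross-section estimate $\textsc{Area}(B\cap P)\le L_xL_y/\cos\theta$: the naive ``shadow''/projection bound on $\textsc{Area}(B\cap P)$ is far too lossy for planes nearly parallel to the longest axis, and it is precisely the cancellation in the chord-length integral --- the same mechanism behind Lemma~\ref{lem:smallSeparatorPlane} --- together with the inequality $r\le n/\alpha^{2}$ taming the perimeter, that makes everything fit. One must also handle the degenerate regime $\cos\theta\to 0$ (where $P$ is nearly parallel to the longest direction, and the bound is vacuous) and verify carefully that the bridging tetrahedra really contribute only $O(r^{1/3})$ per region.
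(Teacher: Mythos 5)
Your outline --- (number of $r$-division regions the plane passes near) $\times$ (a per-region bound) --- is the same as the paper's, and your region count via $\textsc{Area}(B\cap P)/r^{2/3}$ together with the cross-section estimate $\textsc{Area}(B\cap P)\le L_xL_y/\cos\theta$ is exactly the content of Claim~\ref{clm:intersectArea}. The gap is the per-region bound of $O(r^{1/3})$ with no angle dependence. The tetrahedra of $\calH$ lying on an $r$-division plane $Q$ form a slab of constant thickness around the $r^{1/3}\times r^{1/3}$ square $Q\cap R$. A tetrahedron of that slab meets $P$ as soon as it lies within $O(1)$ of \emph{both} $Q$ and $P$, i.e., inside a strip around the line $Q\cap P$ of width $\Theta\bigl(1/\sin\angle(P,Q)\bigr)$ --- not within $O(1)$ of the segment $P\cap Q\cap R$, as your ``segment of length $O(r^{1/3})$'' argument assumes. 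When $P$ is tilted by less than about $r^{-1/3}$ relative to $Q$, that strip covers the whole face and $P$ meets $\Theta(r^{2/3})$ tetrahedra inside the single region $R$; the same objection applies to your count of boundary tetrahedra via the length of the curve $P\cap\partial(\calT\cap R)$. The paper instead bounds the per-region count by the area of $P$ intersected with the $O(1)\times r^{1/3}\times r^{1/3}$ box containing the slab, which Observation~\ref{obs:convertToSurfaceArea} and Claim~\ref{clm:intersectArea} give as $O(r^{1/3}/\cos\theta)$; multiplied by the region count $O\bigl(bc\,r^{-2/3}\cos^{-1}\theta\bigr)$, this is precisely where the stated $\cos^{-2}\theta$ comes from.

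Consequently your parenthetical strengthening to $\cos^{-1}\theta$ is not established, and it is in fact false once the aspect ratio is large. Take $B$ with sides $\ell\times\ell\times\alpha\ell$ (so $\ell=\Theta((n/\alpha)^{1/3})$) and let $\dd$ make angle $\theta$ with the long axis with $\cos\theta=\delta\in(\alpha^{-2/3},\alpha^{-1/3})$, chosen so that $P$ is nearly parallel to the $\ell\times\alpha\ell$ division planes orthogonal to a short axis. About $\delta\alpha\ell\,r^{-1/3}$ of those planes are met, and each contributes a strip of $\Theta(\ell/\delta)$ slab tetrahedra, for a total of $\Theta(\alpha\ell^{2}r^{-1/3})=\Theta(n^{2/3}\alpha^{1/3}r^{-1/3})$. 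This exceeds $n^{2/3}\alpha^{-1/3}r^{-1/3}\delta^{-1}$ whenever $\delta>\alpha^{-2/3}$, while staying below the paper's $n^{2/3}\alpha^{-1/3}r^{-1/3}\delta^{-2}$. So the $1/\cos\theta$ factor must be retained in the per-region step; with that correction (and granting your unproven but plausible accounting of the bridging tetrahedra added in line~\ref{lin:endConstructH}) your argument coincides with the paper's.
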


Without loss of generality, we assume that the bounding box $B$ is \emph{axis-parallel}, that is, the sides of $B$ are parallel to the three axes: the $x$ axis, the $y$ axis and the $z$-axis.
We say an axis-paralleled box has side lengths $a, b, c > 0$, if the sides parallel to the $x$-axis have length $a$, the sides parallel to the $y$-axis have length $b$, and the sides parallel to the $z$-axis have length $c$, respectively.

To prove Lemma~\ref{lem:planeIntersect}, we need the following claim, which bounds the intersection area of a 2D plane and a 3D box.

\begin{claim}
Let $B$ be a 3D axis-parallel box of side lengths $a, b, c > 0$. 
Let $\dd \in \mathbb{R}^3$ be a unit vector such that the angle between $\dd$ and the $x$-axis (the $y$-axis, and the $z$-axis) is $\theta_x$ ($\theta_y, \theta_z$, respectively).
Suppose $\theta_x, \theta_y, \theta_z \in (0, \pi/2)$.
Then the intersection area $S$ of any 2D plane $P$ orthogonal to $\dd$ and the 3D box $B$ satisfies 
\[
S \le \min \left\{ \frac{bc}{\cos \theta_x}, \frac{ac}{\cos \theta_y}, \frac{ab}{\cos \theta_z} \right\}.
\]
\label{clm:intersectArea}
\end{claim}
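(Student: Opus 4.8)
The plan is to reduce the statement to a one-line fact about how orthogonal projection between two planes scales area. Without loss of generality assume $B = [0,a]\times[0,b]\times[0,c]$ is axis-parallel, and fix the coordinate axis for which we want the bound; say we aim to show $S \le bc/\cos\theta_x$, the other two cases being identical after permuting the axes. Let $\pi:\mathbb{R}^3\to\mathbb{R}^2$ be the orthogonal projection that forgets the $x$-coordinate, so that $\pi(B) = [0,b]\times[0,c]$, a rectangle of area $bc$.

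First I would observe that $D := P\cap B$ is a convex polygon (the intersection of an affine plane with a box); if $D$ is empty or lies in a line then its area is $0$ and the bound is trivial, so assume $\textsc{Area}(D)>0$. Next I would note that since the unit normal $\dd$ of $P$ has $x$-component of absolute value $\cos\theta_x \in (0,1)$ (because $\theta_x\in(0,\pi/2)$), the plane $P$ contains no line parallel to the $x$-axis, and therefore $\pi$ restricted to the affine plane $P$ is an affine isomorphism onto $\mathbb{R}^2$; in particular $\pi$ is injective on $D$, and $\pi(D)\subseteq \pi(B) = [0,b]\times[0,c]$, so $\textsc{Area}(\pi(D)) \le bc$. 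The remaining ingredient is the area-scaling identity $\textsc{Area}(\pi(D)) = \cos\theta_x\cdot\textsc{Area}(D)$: picking an orthonormal basis $\uu_1,\uu_2$ of the direction space $\dd^{\orthog}$ of $P$, the image $\pi(D)$ has area $\bigl|\det[\,\pi\uu_1\mid\pi\uu_2\,]\bigr|\cdot\textsc{Area}(D)$, and the determinant equals the $x$-component of $\uu_1\times\uu_2 = \pm\dd$, whose absolute value is $\cos\theta_x$. Combining the last two displays gives $\textsc{Area}(D) = \textsc{Area}(\pi(D))/\cos\theta_x \le bc/\cos\theta_x$; running the same argument projecting out the $y$- and $z$-coordinates yields the $ac/\cos\theta_y$ and $ab/\cos\theta_z$ bounds, and taking the minimum finishes the proof.

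The only step needing any care is the clean justification of the area-scaling identity for an orthogonal projection between planes — making precise that the dihedral angle between $P$ and the $yz$-plane equals the angle $\theta_x$ between their normals $\dd$ and $\ee_x$, and that this contracts area by exactly $\cos\theta_x$. This can be done either via the cross-product computation sketched above, or by the standard fact that projecting a planar region onto a plane meeting it at dihedral angle $\theta$ multiplies area by $\cos\theta$; everything else is bookkeeping about convexity and containment.
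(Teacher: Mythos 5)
Your proof is correct, and it takes a genuinely different (and shorter) route than the paper's. The paper argues in 3D: it translates the polygon $P\cap B$ along the $x$-axis to build a prism-like solid contained in a region of volume at most $abc$, shears it into a parallelepiped with $P\cap B$ as a face and $a\cos\theta_x$ as the corresponding height, and concludes $S\cdot a\cos\theta_x\le abc$. You instead argue in 2D: orthogonally project $P\cap B$ onto the $yz$-plane, note the image sits inside the $b\times c$ rectangle $\pi(B)$, and use the exact area-scaling identity $\textsc{Area}(\pi(D))=\cos\theta_x\cdot\textsc{Area}(D)$, justified cleanly by the determinant/cross-product computation $\det[\pi\uu_1\mid\pi\uu_2]=\pm\dd_x$ for an orthonormal basis $\uu_1,\uu_2$ of the plane's direction space. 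Both proofs establish the same inequality for each axis and then take the minimum by symmetry. Your projection argument buys brevity and avoids the paper's somewhat delicate case analysis about which faces of the box the plane meets and the cut-and-glue construction; the paper's volume argument is more self-contained geometrically but longer. One small point of care in your write-up: the claim as stated restricts $\theta_x,\theta_y,\theta_z\in(0,\pi/2)$, so $\cos\theta_x=|\dd_x|>0$ and the determinant is nonzero, which is exactly what makes the projection nondegenerate; you note this, so the argument is complete.
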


\begin{proof}

Since the three terms in the right hand side are symmetric, we only prove $S$ can be upper bounded by the first term and the other two follow in a similar way.

Let $B_{x1}, B_{x2}$ be the two faces of $B$ which are orthogonal to the $x$-axis, without loss of generality, assume $B_{x1}$ has a smaller $x$-coordinate.

If $P$ intersects neither $B_{x1}$ nor $B_{x2}$, then the volume of $B$ is equal to $S a \cos \theta_x$.

If $P$ intersects $B_{x1}$ say with line $(K,L)$, see Figure~\ref{fig:intersectArea}, then we draw a line going through point $K$ and parallel to the $x$-axis, which intersects face $B_{x2}$ at point $M$, similarly we draw a line going through point $L$ and parallel to the $x$-axis, which intersects face $B_{x2}$ at point $N$.
We cut the box $B$ by the plane $KMLN$, 
see Figure~\ref{fig:intersectArea}.
Note that the volume of the convex hull of $(K,G,E,F,G,H,I,J)$, the right one in Figure~\ref{fig:intersectArea}, equals to $S a \cos \theta_x$, which is smaller than the volume of $B$.
\begin{figure}[htb]
\centering
\includegraphics[width=\textwidth]{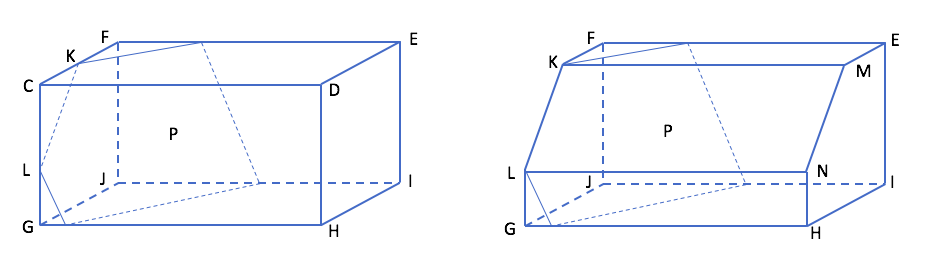}
\caption{Plane $P$ intersects face $B_{x1}$ with line $(K,L)$. We draw a line going through point $K$ and parallel to the $x$-axis, which intersects face $B_{x2}$ at point $M$, similarly we draw a line going through point $L$ and parallel to the $x$-axis, which intersects face $B_{x2}$ at point $N$.}
\label{fig:intersectArea}
\end{figure}
\begin{figure}[htb]
\centering
\includegraphics[width=0.6\textwidth]{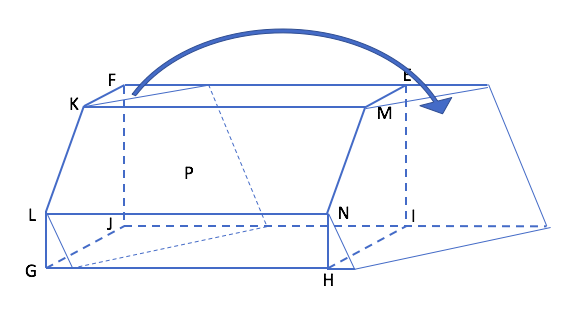}
\caption{Cut the shape along the plane $P$, and shift the left part along the $x$-axis and glue the two faces $KFJGL$ and $MEIHN$.}
\label{fig:intersectArea2}
\end{figure}

Similarly, if $P$ intersects $B_{x2}$,  
then we can draw two lines parallel to the $x$-axis and going through the two intersection points respectively and get a shape of volume $Sa\cos\theta_x$ smaller than the volume of $B$.
Note that the intersection between $P$ and $B_{x2}$ cannot coincide with $(M,N)$, otherwise the angle $\theta_x = \pi / 2$.
We can check that the shape we get must contain $P \cap B$, and the two faces of the shape which are orthogonal to the $x$-axis are congruent.
If we cut the shape along the plane $P$, then we can shift the left part along the $x$-axis and glue the two faces which are orthogonal to the $x$-axis, and get a parallelepiped which has $P \cap B$ as a face. Figure~\ref{fig:intersectArea2} shows the parallelepiped we get from the example of Figure~\ref{fig:intersectArea}.
Thus,
\[
S \cdot a \cdot \cos \theta_x \le abc.
\]
That is, $S \le bc / \cos \theta_x$.
This completes the proof.
\end{proof}

Now we prove Lemma~\ref{lem:planeIntersect}.

\begin{proof}[Proof of Lemma~\ref{lem:planeIntersect}]

Let $t$ be the number of regions (that is, small cubes of the hollowing) of $\calH$ which intersect a 2D plane $P$,
and let $m$ be the maximum number of tetrahedrons of a single region of $\calH$ which intersect the plane $P$.
The number of tetrahedrons in the hollowing $\calH$ which intersect  $P$ can be upper bounded by $tm$.

We first bound $t$, the number of regions of $\calH$ which intersect the plane $P$.
A cube region intersects the plane $P$ only if all its points are within a distance $\sqrt{3}r^{1/3}$ of $P$.
We put two planes, say $P_1, P_2$, which are parallel to the plane $P$, above and below $P$ with distance $\sqrt{3} r^{1/3}$ to $P$.
All cube regions which intersect the plane $P$ must be within the two planes $P_1$ and $P_2$.
By Claim~\ref{clm:intersectArea},
the volume of the box $B$ between the two planes $P_1, P_2$ is at most 
\[
\frac{bc}{\cos \theta} \cdot 2\sqrt{3} r^{1/3}.
\]
Since each cube has volume at most $r$, we can bound the number of  cube regions which intersect the plane $P$ 
\[
t \le \frac{2\sqrt{3} bc}{r^{2/3} \cos \theta}.
\]

We then bound $m$, the maximum number of tetrahedrons of a single region of $\calH$ which intersect the plane $P$.
Consider the tetrahedrons of $\calH$ in this single cube region which intersect a some hollowing plane.
These tetrahedrons are within constant distance of an $r^{1/3} \times r^{1/3}$ square, 
which is on a plane orthogonal to one of the direction of the bonding box.
By Observation~\ref{obs:convertToSurfaceArea} and Claim~\ref{clm:intersectArea}, the number of these tetrahedrons which intersect the plane $P$ is at most $O(r^{1/3} / \cos\theta)$.
Thus, we can bound the maximum number of tetrahedrons of a single  cube region which intersect the plane $P$
\[
m \le O\left( \frac{r^{1/3} }{ \cos\theta} \right).
\]
Therefore, the number of tetrahedrons of the hollowing $\calH$ which intersect the plane $P$ is at most
\[
tm \le O \left( \frac{bc}{r^{1/3} \cos^2 \theta} \right).
\]

Note that we have $a = c \cdot \alpha(\calT)$ and $a \ge b \ge c$. The above number is upper bounded by
\[
O \left( \frac{n^{2/3}}{\alpha(\calT)^{1/3} r^{1/3} \cos^2 \theta} \right).
\]
\end{proof}

\subsection{Bounding the Relative Condition Number of $\calH$}

In this section, we show that the condition number of $\AA_{\calH}$ and $\textsc{Sc}[\AA_{\calT}]_U$ is small.
That is, $\calH$ satisfies the third condition of the $(B, r)$-hollowing definition in Definition~\ref{def:hollow}.

Since $\calT$ is convex,
each region of $\calH$ is connected.
Lemma~\ref{lem:3dMinEig} implies that $\AA_{\calH}$ and $\textsc{Sc}[\AA_{\calT}]_U$ has the same null space.
By Lemma~\ref{lem:3dMinEig},
the smallest nonzero eigenvalue of $\AA_{\calH}$ is lower bounded by the diameter and the size of each region of $\calH$.

\begin{lemma}
Let $\AA_{\calT}$ be the truss stiffness matrix of a {\trussnice} 3D truss $\calT$.
Given a positive integer $r$ and a bounding box $B$, let 
$\calH = \textsc{Hollow}(\calT,B, r)$ be returned by Algorithm~\ref{alg:hollow}. 
Let $\AA_{\calH}$ be the associated truss stiffness matrix of $\calH$.
Then,
\[
\AA_{\calH} \pleq \textsc{Sc}[\AA_{\calT}]_U
\pleq O(r^2) \AA_{\calH},
\]
where $U$ consists of all vertices in $\calH$.
\label{lem:ConditionNumberBoundary}
\end{lemma}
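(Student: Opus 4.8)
The plan is to prove the two inequalities separately. The lower bound $\AA_{\calH} \pleq \schurto{\AA_{\calT}}{U}$ is the easy direction: since $\calH$ is a sub-truss of $\calT$ keeping the same stiffness coefficients, we may write $\AA_{\calT} = \AA_{\calH}^{\mathrm{ext}} + \RR$, where $\AA_{\calH}^{\mathrm{ext}}$ is the stiffness matrix of $\calH$'s edges padded with zeros on $V \setminus U$ (its $U\times U$ block is exactly $\AA_{\calH}$, because every edge of $\calH$ joins two vertices of $U$), and $\RR \pgeq 0$ is the stiffness contribution of the remaining edges. Taking the Schur complement onto $U$ is monotone and leaves $\AA_{\calH}^{\mathrm{ext}}$ untouched (it involves no coordinate of $V\setminus U$), so $\schurto{\AA_{\calT}}{U} \pgeq \schurto{\AA_{\calH}^{\mathrm{ext}}}{U} = \AA_{\calH}$.

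For the upper bound, the point is to exploit locality. By Definition~\ref{def:hollow}(1), $\calT \setminus \calH$ breaks into $O(n r^{-1})$ vertex-disjoint chunks $C_1, C_2, \ldots$, each contained in a cube of side $O(r^{1/3})$, with a set $W_i$ of $O(r)$ interior vertices and incident to a set $\partial C_i \subseteq U$ of $O(r^{2/3})$ vertices of $\calH$. Let $\BB_i$ be the stiffness matrix of chunk $C_i$, supported on $W_i \cup \partial C_i$, so that $\AA_{\calT} = \AA_{\calH}^{\mathrm{ext}} + \sum_i \BB_i^{\mathrm{ext}}$. Since the $W_i$ are pairwise non-adjacent and each $\BB_i$ is stiffly connected (which we verify from the \textsc{Hollow} construction: each chunk is a connected, full-dimensional edge-simple piece carved from the convex truss $\calT$), the block $(\BB_i)_{W_i W_i}$ is invertible — a nonzero rigid motion cannot vanish on the surface $\partial C_i$, which affinely spans $\mathbb{R}^3$ — so eliminating $W = \bigcup_i W_i$ gives the exact identity
\[
\schurto{\AA_{\calT}}{U} = \AA_{\calH} + \sum_i \schurto{\BB_i}{\partial C_i},
\]
where each $\schurto{\BB_i}{\partial C_i}$ is extended by zero to $U$. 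Hence it suffices to show $\schurto{\BB_i}{\partial C_i} \pleq O(r^2)\,\AA_{\calH_i}$ for a suitable stiffly-connected sub-truss $\calH_i$ of $\calH$ on vertex set $\partial C_i$ (the portion of $\calH$ surrounding chunk $i$), together with $\sum_i \AA_{\calH_i} \pleq O(1)\,\AA_{\calH}$. The latter holds because each tetrahedron of $\calH$ is adjacent to only $O(1)$ chunks (bounded geometric degree), hence lies in $O(1)$ of the $\calH_i$.

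For the per-chunk bound I would combine three facts. First, $\schurto{\BB_i}{\partial C_i}$ and $\AA_{\calH_i}$ have the \emph{same} null space: $\AA_{\calH_i}$ has rank $3\abs{\partial C_i}-6$ with kernel the rigid motions restricted to $\partial C_i$ by Lemma~\ref{lem:3dMinEig}, and a vector is in $\nulls(\schurto{\BB_i}{\partial C_i})$ iff it extends to a null vector of $\BB_i$, i.e.\ iff it is the restriction of a rigid motion — the same space. Let $\Pi_i$ be the orthogonal projection onto the complement of this common kernel. Second, by Fact~\ref{fac:Schur} and the easy bound $\lambda_{\max}(\BB_i) = O(1)$ (bounded vertex degree, bounded stiffness coefficients and edge lengths), $\schurto{\BB_i}{\partial C_i} \pleq \lambda_{\max}(\BB_i)\,\Pi_i = O(1)\,\Pi_i$. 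Third, applying Lemma~\ref{lem:3dMinEig} to $\calH_i$, which has $\abs{\partial C_i} = O(r^{2/3})$ vertices and diameter $O(r^{1/3})$, gives $\AA_{\calH_i} \pgeq \lambda_{\min}(\AA_{\calH_i})\,\Pi_i = \Omega(r^{-2})\,\Pi_i$. Chaining these yields $\schurto{\BB_i}{\partial C_i} \pleq O(1)\,\Pi_i \pleq O(r^2)\,\AA_{\calH_i}$, and summing over $i$ with $\sum_i \AA_{\calH_i} \pleq O(1)\AA_{\calH}$ and $\AA_{\calH}\pleq O(r^2)\AA_{\calH}$ gives $\schurto{\AA_{\calT}}{U} \pleq O(r^2)\,\AA_{\calH}$.

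The main obstacle is the null-space bookkeeping: one must carefully verify that the chunks produced by \textsc{Hollow} (or chunks together with their surrounding surface) are stiffly connected — equivalently that $\BB_i$ has exactly the $6$-dimensional rigid-motion kernel — and that $\calH_i$ can be chosen stiffly connected on precisely the vertex set $\partial C_i$ so the two kernels literally coincide. This matching is what lets Lemma~\ref{lem:3dMinEig} supply the $\Omega(r^{-2})$ spectral gap that drives the $O(r^2)$ relative condition number; without it the bound degrades. The remainder is routine tracking of constants from the bounded edge-length, bounded-stiffness, and bounded-aspect-ratio hypotheses.
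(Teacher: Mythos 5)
Your proof is correct and takes essentially the same route as the paper's: the lower bound via monotonicity of the Schur complement over the edges of $\calH$, and the upper bound by localizing to the $O(n/r)$ regions, bounding $\lambda_{\max}$ of each local Schur complement by $O(1)$ via Fact~\ref{fac:Schur}, matching null spaces through stiff-connectivity, and invoking Lemma~\ref{lem:3dMinEig} on each boundary piece (with $O(r^{2/3})$ vertices and diameter $O(r^{1/3})$) to get the $\Omega(r^{-2})$ spectral lower bound, then summing using the $O(1)$ overlap of regions. The only cosmetic difference is that you peel off $\AA_{\calH}$ first to get an exact additive identity for the Schur complement over disjoint chunk interiors, whereas the paper sub-additively splits $\AA_{\calT}$ over overlapping regions before eliminating; both are valid.
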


\begin{proof}
The first inequality is equivalent to: $\textsc{Sc}[\AA_{\calT}]_U - \AA_{\calH}$ is a symmetric PSD matrix.
Let $\calT'$ be the truss obtained by removing all the edges of $\calT$ whose two endpoints are both in $U$.
Let $\AA_{\calT'}$ be the associated truss stiffness matrix of $\calT'$.
We can check that
\[
\textsc{Sc}[\AA_{\calT}]_U - \AA_{\calH}
= \textsc{Sc}[\AA_{\calT'}]_{U}.
\]
By Fact~\ref{fac:Schur}, $\textsc{Sc}[\AA_{\calT}]_U - \AA_{\calH}$ is a  symmetric PSD matrix.

It remains to prove the second inequality.
Note the planes of Algorithm~\ref{alg:hollow} line~\ref{lin:smallCubes} divide $\calT$ into small regions. 
Let $\calT_i$ denote the subgraph induced by $\calT$ on the $i$th region, and let $\calT_{C_i}$ denote the subgraph induced by $\calT$ on the boundary  of the $i$th region.
Let $\AA_{\calT_i}, \AA_{\calT_{C_i}}$ be the associated truss stiffness matrices of $\calT_i$ and $\calT_{C_i}$ respectively.
Let $U_i$ denote the boundary vertices of the $i$th region.
Let $\textsc{Sc}[\AA_{\calT_i}]_{U_i}$ be
the Schur complement of $\AA_{\calT_i}$ w.r.t. to $U_i$.

Since $\calT$ is a {\trussnice} 3D truss,
each vertex in $\calT_i$ has constant degree
and each edge 
has constant length and elasticity parameter.
It implies $\lambda_{\max}(\AA_{\calT_i}) = O(1)$.
By Fact~\ref{fac:Schur}, $\lambda_{\max}(\textsc{Sc}[\AA_{\calT_i}]_{U_i}) = O(1)$.

According to Algorithm~\ref{alg:hollow} from line~\ref{lin:startConstructH} to line~\ref{lin:endConstructH}, in each $\calT_{C_i}$, the tetrahedrons are arranged in simplicial complex and $\calT_{C_i}$ is connected.
By Lemma~\ref{lem:3dMinEig}, the null spaces of $\AA_{\calT_{C_i}}$ and the null space of $\textsc{Sc}[\AA_{\calT_i}]_{U_i}$ are the same.
Besides, each $\calT_{C_i}$ has $O(r^{2/3})$ vertices and diameter $O(r^{1/3})$, by Lemma~\ref{lem:regionBoundary}.
Applying Lemma~\ref{lem:3dMinEig} gives:
\[
\lambda_{\min}(\AA_{\calT_{C_i}}) = \Omega \left( \frac{1}{r^2} \right).
\]
This implies:
\[
\textsc{Sc}[\AA_{\calT_i}]_{U_i} \pleq O(r^2) \AA_{\calT_{C_i}}.
\]
Note each edge of $\calT$ only appears in a constant number of regions.
Thus,
\[
\textsc{Sc}[\AA_{\calT}]_{U} \pleq \sum_{i} \textsc{Sc}[\AA_{\calT_i}]_{U_i}
\pleq O(r^2) \sum_{i} \AA_{\calT_{C_i}}
\pleq O(r^2) \AA_{\calH}.
\]
This completes the proof.
\end{proof}

\section{Nested Dissection}
\label{sec:NestedDissection}

In this section,
we present our solver for the constructed preconditioners, Algorithm~\ref{alg:NDkChunks} $\algKChunksND$, which is used as a subroutine in line~\ref{line:MainND} of Algorithm~\ref{alg:TrussSolver} \textsc{TrussSolver}.

We first prove Lemma~\ref{lem:flexibleND} in Section~\ref{sec:flexibleNDProof}.
Then, in Section~\ref{sec:NDsubsection}, we use Lemma~\ref{lem:flexibleND}
to analyze the performance of Algorithm~\ref{alg:NDkChunks}, which proves Lemma~\ref{lem:kChunkND}.

\subsection{Proof of Lemma~\ref{lem:flexibleND}}
\label{sec:flexibleNDProof}

We restate Lemma~\ref{lem:flexibleND} with running time phrased in terms of the matrix
multiplication exponent $\omega$.

\begin{replemma}{lem:flexibleND}
Suppose we have a recursive separator decomposition
of a 
simplicial complex with $n$ bounded aspect ratio
tetrahedrons such that:
\begin{enumerate}
\item the number of leaves, and hence total number
of recursive calls, is at most $n^{\alpha}$.
\item each leaf (bottom layer partition) has at most
$n^{\beta}$ tetrahedrons.
\item each top separator has size at most $n^{\gamma}$.
\end{enumerate}
Then we can find an exact Cholesky factorization of
the associated stiffness matrix with 
multiplication count
$O(n^{\alpha + 2\omega \beta / 3 }  + n^{\alpha + \gamma \omega  }  \log n)$
and  fill-in size
$O(n^{\alpha + \frac{4}{3} \beta}  + n^{\alpha + 2 \gamma})$.
\end{replemma}

Nested dissection according to the separator decomposition stated in Lemma~\ref{lem:flexibleND} has three parts of cost:
\begin{enumerate}
\item Inverting leaf components:
	\begin{enumerate}
	\item the cost only associated with vertices not belonging to top-level separators; \label{item:leafOnlyCost}
	\item the cost associated with top-level separators. \label{item:leafSepCost}
	\end{enumerate}
\item Inverting top-level separators. \label{item:sepOnlyCost}
\end{enumerate}

We first analyze the cost of inverting top-level separators.
Note that top-level separators are disjoint.
After eliminating all leaf components, we get a \emph{layered graph}.
That is, its vertices can be partitioned into 
$V_1 \cup \ldots \cup V_l$ for some positive integer $l$ such that there is no edge between 
$V_i$ and $V_j$ for $\abs{i - j} > 1$.
Algorithm~\ref{alg:layerGraph} gives a numbering for a layered graph, and Claim~\ref{clm:layeredGraphND} analyzes its performance.

\begin{algorithm}[htb]
\caption{
  $\textsc{LayeredGraphND}(G = (V_1 \cup \ldots \cup V_l, E))$
  \label{alg:layerGraph}
}
\begin{algorithmic}[1]
\renewcommand{\algorithmicrequire}{\textbf{Input:}}
\renewcommand\algorithmicensure {\textbf{Output:}}
\REQUIRE{a layered graph $G$ with $l$ layers
}
\ENSURE{an elimination ordering for vertices in $G$
}

\IF{$l=1$}
	\STATE number the vertices in $V_1$ arbitrarily and return this numbering.
\ENDIF

\STATE Label $V_{\lfloor l/2 \rfloor}$ with the highest possible numbers.

\RETURN the numbering of $\textsc{LayeredGraphND}(G [V_1 \cup \ldots \cup V_{\lfloor l/2 \rfloor - 1}]$ and 
$\textsc{LayeredGraphND}(G [V_{\lfloor l/2 \rfloor + 1} \cup \ldots \cup V_l]$, and $V_{\lfloor l/2 \rfloor}$.

\end{algorithmic}
\end{algorithm}

\begin{claim}
Let $G$ be a layered graph with $l$ layers of at most $s$ vertices each.
Algorithm~\ref{alg:layerGraph} $\textsc{LayeredGraphND}$ returns a numbering such that, Gaussian elimination according to this order has fill-in size $O(s^2l)$ and multiplication count $O(s^{\omega} l)$.
\label{clm:layeredGraphND}
\end{claim}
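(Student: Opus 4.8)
The plan is to analyze $\textsc{LayeredGraphND}$ as a balanced‑separator elimination in which the linear layer structure keeps the ``boundary'' of every eliminated block confined to at most two adjacent layers. First I would describe the recursion tree: on a block of layers $V_a,\dots,V_b$ the call picks the middle layer $V_j$ (numbered highest within $V_a\cup\cdots\cup V_b$), recurses on $[V_a,\dots,V_{j-1}]$ and on $[V_{j+1},\dots,V_b]$, and eliminates $V_j$ last. Since each layer is the ``middle'' of exactly one such call, the recursion tree has exactly $l$ nodes, and the order produced eliminates the $l$ layers one at a time, each layer eliminated as a (worst‑case) clique on at most $s$ vertices.

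The key step is the following invariant, proved by induction down the recursion tree: when the call on block $[V_a,\dots,V_b]$ begins, every still‑active vertex outside $V_a\cup\cdots\cup V_b$ that is adjacent, in the current fill graph (original edges plus fill produced so far), to some vertex of $V_a\cup\cdots\cup V_b$ lies in $V_{a-1}\cup V_{b+1}$, with the convention $V_0=V_{l+1}=\emptyset$. The base case is the top‑level call, where the outside is empty. For the inductive step I would invoke Fact~\ref{fac:SchurPath}: after eliminating a set $S$, a fill edge $(x,y)$ can appear only if $x$ and $y$ are joined by a path through $S$. Taking $S=V_a\cup\cdots\cup V_{j-1}$ shows the left sub‑recursion can only create edges among $V_{a-1}\cup V_j$ (the only active vertices adjacent to $S$), so the subsequent call on $[V_{j+1},\dots,V_b]$ inherits the invariant with boundary $\subseteq V_j\cup V_{b+1}$, and symmetrically the right sub‑recursion only adds edges among $V_j\cup V_{b+1}$. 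Hence, when $V_j$ is finally eliminated, its active neighborhood is contained in $V_{a-1}\cup(V_j\setminus\{\cdot\})\cup V_{b+1}$, a set of at most $3s$ vertices; and afterwards the residual fill touching $V_a\cup\cdots\cup V_b$ lives in $V_{a-1}\cup V_{b+1}$, re‑establishing the parent's invariant.

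Given the invariant the accounting is immediate. Each of the $l$ recursion‑tree nodes performs one block elimination of a single layer: at most $s$ variables forming a clique, with at most $2s$ external neighbors. Such an elimination adds $O\big((s+2s)^2\big)=O(s^2)$ fill edges, and, by forming the Schur complement through one $s\times s$ inverse (cost $O(s^\omega)$) and two rectangular products whose larger dimension is $O(s)$ (each $O(s^\omega)$), costs $O(s^\omega)$ arithmetic operations. Summing over the $l$ nodes yields total fill‑in $O(s^2 l)$ and total multiplication count $O(s^\omega l)$, which is the claim.

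I expect the main obstacle to be making the invariant and its inductive maintenance fully precise: carefully tracking which vertices are ``active'' at each moment, applying the Schur‑path fact to confirm that no fill escapes the two‑layer‑wide window, and handling the boundary cases ($a=1$ or $b=l$, and the single‑layer leaves, where one of the recursive calls is empty) uniformly. Once the invariant is pinned down, the cost bounds follow by a routine sum over the $l$ nodes.
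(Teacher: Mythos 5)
Your proof is correct and follows essentially the same route as the paper's: both rest on the observation that when the middle layer of a block is eliminated it is adjacent (in the fill graph) only to itself and the two layers bounding its block, giving $O(s^2)$ fill and $O(s^\omega)$ work per layer and hence $O(s^2 l)$ and $O(s^\omega l)$ in total. The only difference is presentational — the paper packages the count as a Lipton--Rose--Tarjan-style recurrence $f(n)\le f(n_1)+f(n_2)+3s^2$ and simply asserts the confinement of fill, whereas you prove it explicitly as an invariant via Fact~\ref{fac:SchurPath} and sum over the $l$ recursion-tree nodes directly.
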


\begin{proof}
We follow the proofs of~\cite{liptonRT79}.
We first bound the fill-in size.
Consider a recursion of Algorithm~\ref{alg:layerGraph} on a graph with $n$ vertices,
let $f(n)$ denote the maximum number of fill-in edges whose lower numbered endpoint is numbered by this recursion.
Suppose this recursion deals with layers $V_s, V_{s+1} \ldots, V_t$.
The algorithm numbers the vertices in the middle layer, that is, $V_{\lfloor(s+t)/2 \rfloor}$, which can be viewed as a separator whose removal separates the graph into two disjoint and balanced parts.
The fill-in edges whose lower numbered endpoint is in $V_{\lfloor(s+t)/2 \rfloor}$ consists of the following edges: (1) edges whose both endpoints are in $V_{\lfloor(s+t)/2 \rfloor}$; and (2) edges whose one endpoint is in $V_{\lfloor(s+t)/2 \rfloor}$ and the other is in $V_{s-1}$ (if exists) or $V_{t+1}$ (if exists).
Since each layer has at most $s$ vertices,
\[
f(n) \le \left\{ \begin{array}{ll}
O(s^2), &\quad \text{if } n = O(s) \\
f(n_1) + f(n_2) + 3s^2, & \quad \text{otherwise}
\end{array} 
\right.
\]
Note $n_2 \le n_1 \le n_2 + s$.
Thus, the total fill-in size is $O(s^2 l)$.

We then bound the multiplication count.
For a recursion of Algorithm~\ref{alg:layerGraph} on a graph with $n$ vertices, let $g(n)$ denote the maximum multiplication count associated with vertices in this graph which are going to be numbered in this recursion.
Similar to the analysis of the fill-in, we have
\[
g(n) \le \left\{ \begin{array}{ll}
O(s^{\omega}),  &\quad \text{if } n = O(s) \\
g(n_1) + g(n_2) + 3s^{\omega}, & \quad \text{otherwise}
\end{array} 
\right.
\]
Thus, the total multiplication count is $O(s^{\omega} l)$.
\end{proof}

Using Algorithm~\ref{alg:layerGraph} as a subroutine, we prove Lemma~\ref{lem:flexibleND}.

\begin{proof}[Proof of Lemma~\ref{lem:flexibleND}]

We first bound the fill-in size.
Note each leaf component has $O(n^{\beta})$ vertices in which $O(n^{\gamma})$ vertices are in some top-level separators  and are labeled by higher numbers.
By the result in~\cite{liptonRT79} and~\cite{millerT90}, the fill-in introduced by inverting 
 each leaf component is
\[
O\left( n^{4 \beta / 3} + n^{ 2 \beta / 3 + \gamma} \right).
\]
There are totally $O(n^{\alpha})$ leaf components.
Thus, the fill-in size introduced by inverting all leaf components is $O\left( n^{\alpha+ 4 \beta / 3} + n^{\alpha+ 2 \beta / 3 + \gamma} \right)$.

By Claim~\ref{clm:layeredGraphND},
the fill-in size of inverting top-level separators is $O(n^{\alpha + 2\gamma})$.
Thus, the total fill-in size is
\[
O\left(\alpha +  n^{4 \beta / 3  } + n^{\alpha +  2 \beta / 3 + \gamma }
+ n^{\alpha +  2\gamma} \right)
= O \left( n^{\alpha +  4 \beta / 3 } 
+ n^{\alpha + 2\gamma} \right).
\]
Here, we use Cauchy-Schwarz inequality to drop the second term.

We then bound the multiplication count.
By~\cite{millerT90}, when inverting each leaf component, the multiplication count only associated with vertices not belonging to top-level separators is
 $O(n^{2\omega \beta / 3})$.
By Claim~\ref{clm:layeredGraphND}, the multiplication count of inverting top-level separators is $O(n^{\alpha +  \omega \gamma})$.

It remains to upper bound the multiplication count associated with top-level separators when inverting a leaf component.
We adapt the analysis in~\cite{liptonRT79}.

We prove a more general result.
Consider a tetrahedral mesh $G$ with $n$ vertices, in which $s$ vertices are in a top-level separator and have been labeled with higher numbers.
Let $g(n, s)$ be the multiplication count associated these $s$ top-level vertices when running nested dissection on $G$.

Nested dissection finds a separator of size $n^{2/3}$ for $G$.
The multiplication count associated with the $s$ top-level vertices when inverting this separator can be upper bounded by $O((n^{2/3} + s)^\omega)$.
This gives the following recursion:
\begin{align*}
g(n, s) &\le O((n^{2/3} + s)^{\omega}) + \max_{n_i, s_i} 
\{\sum_{1\le i \le 2} g(n_i, s_i) \} \\
& \le O(2^\omega n^{2\omega /3} + 2^\omega s^{\omega} ) + \max_{n_i, s_i} \{ \sum_{1\le i \le 2} g(n_i, s_i) \}.
\end{align*}
Here, the maximum is taken over
\begin{align*}
s_1 + s_2 & \le s + n^{2/3} \\
n_1 + n_2 & \le n + n^{2/3}  \\
 n_1, n_2 &\le \frac{4}{5} n + n^{2/3}.
\end{align*}
We can compute that
\[
g(n, s) = O\left( n^{2\omega / 3} + s^\omega \log n\right).
\]

Note each leaf component is incident to at most two top-level separators.
Thus, the multiplication count associated with the top-level vertices when inverting this leaf component is
\[
O \left( n^{2\omega \beta / 3} + n^{\gamma \omega}  \log n \right).
\]
Combining the other two parts of cost, and the fact that there are totally $n^{\alpha}$ leaf components, 
the total multiplication count is
\[
O \left( n^{\alpha + 2\omega \beta / 3} + n^{\alpha +  \gamma \omega } \log n \right).
\]

If we replace $\omega$ by 3, then we can drop the $\log n$ factor in the above equation, which gives the result in the statement.
\end{proof}

\subsection{Proof of Lemma~\ref{lem:kChunkND}}
\label{sec:NDsubsection}

We now present Algorithm~\ref{alg:NDkChunks} $\algKChunksND$.
The input consists of a {\esimple} 3D truss $\calT$ which is
a union of  $k$ {\trussnice} trusses, a bounding box for each {\trussnice} truss,
the index subset of small-aspect-ratio trusses and $(B_i, r_i)$-hollowings for each small-aspect-ratio truss.
The output is an elimination ordering for the union of the hollowings of 
small-aspect-ratio trusses and the large-aspect-ratio trusses.
This proves Lemma~\ref{lem:kChunkND}.

\begin{algorithm}[htb]
\caption{
  $\algKChunksND(\calT = \langle \{\pp_i\}_{i \in V}, T, E, \gamma \rangle, \calB, \calI, \calH, l)$
  \label{alg:NDkChunks}
}
\begin{algorithmic}[1]
\renewcommand{\algorithmicrequire}{\textbf{Input:}}
\renewcommand\algorithmicensure {\textbf{Output:}}
\REQUIRE{a {\esimple} 3D truss $\calT$ which is the union of $k$ {\trussnice} trusses $\calT_1, \ldots, \calT_k$, \\
$\calB = \{B_1, \ldots, B_k \}$ in which $B_i$ is a bounding box of $\calT_i$
constructed from Line~\ref{line:BoundingBox} of Algorithm~\ref{alg:TrussSolver},\\
Index set $\calI$ of large aspect ratio complexes constructed from
Line~\ref{line:MainI} of Algorithm~\ref{alg:TrussSolver},\\
Hollowings of each $\calT_{i}$ with $i \in \calI$ constructed from
Line~\ref{line:MainHollow} of Algorithm~\ref{alg:TrussSolver},\\
$l$, the number of top-level partitions.
}
\ENSURE{an elimination ordering for vertices in $(\union_{i \in \calI} \calH_i) \bigcup (\union_{i \notin \calI} \calT_i)$
}

\STATE For each $1\le i \le k$, 
$\dd_i \leftarrow$ the direction of 
the longest sides of $B_i$.
\STATE Compute a unit vector $\dd$ such that $\frac{1}{10 k} \le \abs{\dd^\top \dd_i} \le 1- \frac{1}{10k}, \forall 1 \le i \le k$.\label{lin:directionD}
\STATE Compute separator planes $P_1, \ldots, P_l$, which are planes orthogonal to $\dd$ and dividing
$\calT$ into $l+1$ parts $Q_1, \ldots, Q_{l+1}$ of $O(n l^{-1})$ tetrahedrons each.
\label{lin:separatorPlane}

\STATE For each $1 \le  j\le l$,
$S_j \leftarrow$ tetrahedrons in $(\union_{i \in \calI} \calH_i) \bigcup (\union_{i \notin \calI} \calT_i)$ which intersect plane $P_j$. \label{lin:topSeparator}
\STATE $Q_1 \leftarrow Q_1 \cup S_1, Q_{l+1} \leftarrow Q_{l+1} \cup S_l$. 
\STATE For each $2 \le j \le l, Q_j \leftarrow Q_j \cup S_{j-1} \cup S_j$.

\STATE $\textsc{LayeredGraphND}(\calT[S_1 \cup \ldots \cup S_l])$ with highest numbers.
\label{lin:NDParallelSeparator} 

\STATE \label{lin:NDMT} Run nested dissection with  MT-separators\footnote{the separator algorithm in~\cite{millerT90}} for each $Q_j$ to number its unnumbered vertices.
\RETURN the elimination ordering of vertices in $(\union_{i \in \calI} \calH_i) \bigcup (\union_{i \notin \calI} \calT_i)$.
\end{algorithmic}
\end{algorithm}

We first prove that there exists a good direction $\dd$ such that:
the angle between $\dd$ and the longest direction of each bounding box is in a proper range.

\begin{lemma}
Let $k \ge 2$ and $\dd_1, \ldots, \dd_k \in \mathbb{R}^3$ be $k$ unit vectors. 
Then there exists a unit vector $\dd \in \mathbb{R}^3$ such that
\begin{align}
\frac{1}{10k}  \le \abs{\dd^\top \dd_i} \le 1- \frac{1}{10k}  , \forall 1 \le i \le k.
\label{eqn:directionD}
\end{align}
\end{lemma}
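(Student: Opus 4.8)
The plan is to argue by a volume/measure argument on the unit sphere $S^2$. For each $i$, the "bad set" of directions $\dd$ is $B_i \defeq \{\dd \in S^2 : \abs{\dd^\top \dd_i} < \frac{1}{10k} \text{ or } \abs{\dd^\top \dd_i} > 1 - \frac{1}{10k}\}$. The set where $\abs{\dd^\top \dd_i}$ is close to $1$ is a pair of small spherical caps around $\pm \dd_i$; the set where $\abs{\dd^\top \dd_i}$ is close to $0$ is a thin band around the great circle orthogonal to $\dd_i$. First I would bound the normalized measure (spherical area divided by $4\pi$) of each of these two pieces by $O(1/k)$: a cap of the relevant angular radius has measure $O((1/k)^2)$ or so (certainly $O(1/k)$), and a band of angular half-width $\arcsin(\tfrac{1}{10k}) = O(1/k)$ around a great circle has measure $O(1/k)$. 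Hence $\mu(B_i) \le c/k$ for an absolute constant $c$.

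Then I would take a union bound: $\mu(\Union_{i=1}^k B_i) \le \sum_i \mu(B_i) \le c$. Here is the one place that needs care — we need this total to be strictly less than $1$ (the whole sphere's normalized measure), so that the complement $S^2 \setminus \Union_i B_i$ is nonempty and we may pick $\dd$ from it. The constant $\frac{1}{10}$ in the statement is presumably chosen precisely so that $c < 1$; I would verify the explicit constant by computing the band measure as $\frac{1}{2}\cdot 2\cdot\frac{1}{10k}$-ish (the fraction of the sphere within Euclidean distance $\frac{1}{10k}$ of a great-circle plane is exactly $\frac{1}{10k}$ by the cylinder/Archimedes projection fact that the $z$-coordinate is uniform on $[-1,1]$), and the two caps near $\pm\dd_i$ as the fraction with $\abs{z} > 1-\frac{1}{10k}$, which is again exactly $\frac{1}{10k}$. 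So $\mu(B_i) = \frac{1}{5k}$ exactly, giving $\sum_i \mu(B_i) = \frac{1}{5} < 1$. Therefore a valid $\dd$ exists.

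The main (and really only) obstacle is getting the constant bookkeeping right: one must use the exact "uniform height" property of the uniform measure on $S^2$ rather than a sloppy $O(\cdot)$ bound, since the claim asserts a specific constant $\frac{1}{10}$ and the union bound over $k$ terms is tight up to constants. Once the measure of each bad band and bad cap pair is pinned at exactly $\frac{1}{10k}$ each (total $\frac{1}{5k}$ per $i$), summing gives $\frac{1}{5}$, comfortably below $1$, and the argument closes. I would also note the degenerate cases are harmless: if some $\dd_i = \pm \dd_j$ the bad sets coincide and the bound only improves, and the constraint $k \ge 2$ is not actually needed for the measure argument but is stated for consistency with the calling context. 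Alternatively, one could phrase this as a probabilistic argument: pick $\dd$ uniformly at random on $S^2$; then $\prob{}{\dd \in B_i} = \frac{1}{5k}$, so $\prob{}{\dd \in \Union_i B_i} \le \frac{1}{5} < 1$, hence with positive probability $\dd$ satisfies \eqref{eqn:directionD}, which proves existence.
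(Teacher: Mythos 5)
Your proposal is correct and follows essentially the same route as the paper: pick $\dd$ uniformly at random on the sphere, bound the probability of each bad event by $O(1/k)$, and union bound over the $k$ vectors to get total failure probability $1/5 < 1$. Your use of the Archimedes projection fact (the coordinate $\dd^\top \dd_i$ is uniform on $[-1,1]$) to pin each bad set's measure at exactly $\frac{1}{5k}$ is in fact cleaner than the paper's computation via spherical-cap volume formulas, but the underlying argument is identical.
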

\begin{proof}
We pick a unit vector $\dd$ uniformly at random.
For any fixed $i$,
\[
\Pr \left( \frac{1}{10k} \le  \abs{\dd^\top \dd_i} \le 1 - \frac{1}{10k} \right)
=2 \cdot  \frac{\vol(1/10k) - \vol(1-  1/10k)}{V}.
\]
Here, 
$\vol(x)$ is the volume of a cap of a 3D unit ball with height $1-x$, and $V$ is the volume of a 3D unit ball.
We have $\vol(x) = \frac{1}{6} \pi (1-x^2)(3(1-x^4) + (1-x^2)^2)$ and $V = \frac{4}{3} \pi$.
Plugging these volumes into the above equation,
\begin{align*}
\Pr \left( \frac{1}{10k} \le  \abs{\dd^\top \dd_i} \le 1 - \frac{1}{10k} \right)
& = \frac{1}{2} \left( \left( 1 - \frac{1}{10k} \right)^2 
\left(2+ \frac{1}{10k} \right)
- \left(\frac{1}{10k} \right)^2 \left(3 - \frac{1}{10k} \right)
\right).
\end{align*}
Take the opposite:
\begin{align*}
\Pr \left( \abs{\dd^\top \dd_i} < \frac{1}{10k} \text{ or } \abs{\dd^\top \dd_i} > 1 - \frac{1}{10k} \right)
& \le  \frac{3}{20k} - \frac{1}{2000k^3}
+ \frac{3}{200k^2}.
\end{align*}
By union bound,
\[
\Pr \left( \exists i \ s.t. \ \abs{\dd^\top \dd_i} < \frac{1}{10k} \text{ or } \abs{\dd^\top \dd_i} > 1 - \frac{1}{10k} \right)
\le  k \left( \frac{3}{20k} - \frac{1}{2000k^3} + \frac{3}{200k^2} \right) 
\le \frac{1}{5}.
\]
Thus,
\[
\Pr \left( \forall i, \frac{1}{10k} \le  \abs{\dd^\top \dd_i} \le 1 - \frac{1}{10k}
\right)
\ge \frac{4}{5}.
\]
This implies there exists a $\dd$ as desired.
\end{proof}

We independently pick $O(\log n)$ unit vectors uniformly at random.
By a Chernoff bound, we can find a direction $\dd$ satisfying Equation~\eqref{eqn:directionD} with high probability.

Recall that Lemma~\ref{lem:planeIntersect} states that: 
 any 2D plane $P$ orthogonal to $\dd$ intersects a small number of tetrahedrons in a $(B, r)$-hollowing of a {\trussnice} 3D truss.
Lemma~\ref{lem:smallSeparatorPlane} 
bounds the number of tetrahedrons in a {\trussnice} 3D truss which intersect a single 2D plane $P$ orthogonal to $\dd$.
We restate it in the following, which can be proved by combining Observation~\ref{obs:convertToSurfaceArea} and Claim~\ref{clm:intersectArea}.

\begin{replemma}{lem:smallSeparatorPlane} 
Let $\calT$ be a {\trussnice} 3D truss of $n$ vertices, and let $B$ be a bounding box of $\calT$. 
Let $\dd \in \mathbb{R}^3$ be a unit vector such that the angle between $\dd$ and the longest direction of $B$ is $\theta \neq \pi /2$.
Then any plane $P$ orthogonal to $\dd$ intersects $\calT$ in at most $O(n^{2/3} \alpha(\calT)^{-1/3} \cos^{-1} \theta)$ tetrahedrons.
\end{replemma}

Lemma~\ref{lem:planeIntersect} and Lemma~\ref{lem:smallSeparatorPlane} together imply that:
the top-level separator $S_1 \cup \ldots \cup S_l$ computed in Algorithm~\ref{alg:NDkChunks} has a small size.
This, together with nested dissection in~\cite{millerT90}, lets us prove that Algorithm~\ref{alg:NDkChunks} outputs an elimination ordering with a small fill-in size and multiplication count.

\begin{lemma}
\label{lem:kChunkNDMoreDetailed}
Given a {\esimple} 3D truss $\calT$ of $n$ vertices which
is a union of $k$ {\trussnice} trusses with $n_i$ vertices each,
running Algorithm~\ref{alg:TrussSolver},
$\textsc{TrussSolver}(\calT, \ff, \epsilon, c_\alpha, c_r)$,
with Line~\ref{line:MainND} replaced by
Algorithm~\ref{alg:NDkChunks}, $\algKChunksND(\calT, \calB, \calI, \calH, l)$,
leads to performance in terms of $n$ that is optimized by setting
\[
c_{\alpha} =  c_r \le \frac{1}{3}
\]
in Line~\ref{line:MainI} of Algorithm~\ref{alg:TrussSolver}, \textsc{TrussSolver}.
In terms of $c_r$, the hollowing parameter,
and $l$, the number of top-level separators,
this gives an elimination ordering with fill-in size at most
\[
O(n^{4/3} l^{-1/3}  + k^{14/3 + 2c_r/3} n^{4/3 - 2c_r/3} l),
\]
that can be computed in time
\[
O(n^{2\omega / 3} l^{-2\omega / 3 + 1} 
+ k^{7 \omega / 3 + \omega c_r / 3} n^{2\omega / 3 - c_r\omega / 3} l \log n),
\]
where $\omega$ is the matrix multiplication exponent.
\end{lemma}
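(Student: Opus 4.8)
The plan is to follow the three-part accounting of Gaussian elimination used in the proof of Lemma~\ref{lem:flexibleND}: (i) eliminating the interiors of the $l+1$ slabs $Q_1,\dots,Q_{l+1}$; (ii) the fill-in and multiplications these interior eliminations induce on the union of separator layers $S := S_1\cup\dots\cup S_l$; and (iii) eliminating $S$ itself, which after step (i) is a layered graph handled by Claim~\ref{clm:layeredGraphND}. First I would record why the parameter choice is $c_\alpha = c_r\le \tfrac13$: algorithm $\textsc{Hollow}$ returns a valid $(B_i,r_i)$-hollowing only when $r_i\le n_i/\alpha(\calT_i)^2$ (Definition~\ref{def:hollow}, Lemma~\ref{lem:hollow}); since $i\in\calI$ forces $\alpha(\calT_i)\le n_i^{c_\alpha}$ and we take $r_i = n_i^{c_r}$, this needs $c_r\le 1-2c_\alpha$, which under $c_\alpha=c_r$ becomes $c_r\le\tfrac13$. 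Moreover $c_\alpha=c_r$ is exactly the choice making the two plane-intersection bounds below agree — the hollowed pieces contribute a factor $r_i^{-1/3}=n_i^{-c_r/3}$, the non-hollowed pieces a factor $\alpha(\calT_i)^{-1/3}<n_i^{-c_\alpha/3}$ — so any other split only makes one of the two cases worse; this is the sense in which the choice is optimal. Throughout I use that $\calT$ is {\esimple}, so in each piece the number of tetrahedrons and the number of vertices agree up to constants.

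The first substantive step is to bound $s := \max_j|S_j|$, where $S_j$ is the set of tetrahedrons of $\calT' := (\bigcup_{i\in\calI}\calH_i)\cup(\bigcup_{i\notin\calI}\calT_i)$ meeting the plane $P_j\perp\dd$. By Line~\ref{lin:directionD} of Algorithm~\ref{alg:NDkChunks}, $\dd$ makes angle $\theta_i$ with the longest direction $\dd_i$ of $B_i$ with $\cos\theta_i = |\dd^\top\dd_i|\ge\tfrac1{10k}$, so $\cos^{-1}\theta_i = O(k)$ and $\cos^{-2}\theta_i = O(k^2)$. For $i\in\calI$ the hollowing exists as above, and Lemma~\ref{lem:planeIntersect} bounds the tetrahedrons of $\calH_i$ meeting $P_j$ by $O(n_i^{2/3}\alpha(\calT_i)^{-1/3}r_i^{-1/3}\cos^{-2}\theta_i) = O(k^2 n_i^{2/3-c_r/3})$; for $i\notin\calI$ we have $\alpha(\calT_i)>n_i^{c_\alpha}=n_i^{c_r}$, and Lemma~\ref{lem:smallSeparatorPlane} bounds the tetrahedrons of $\calT_i$ meeting $P_j$ by $O(n_i^{2/3}\alpha(\calT_i)^{-1/3}\cos^{-1}\theta_i) = O(k\, n_i^{2/3-c_r/3})$. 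Summing over the $k$ pieces, using concavity of $x\mapsto x^{2/3-c_r/3}$ and $\sum_i n_i = O(n)$ (Jensen), $\sum_i n_i^{2/3-c_r/3}\le k^{1/3+c_r/3}n^{2/3-c_r/3}$, hence
\[
s = O\!\left(k^{7/3+c_r/3}\,n^{2/3-c_r/3}\right).
\]

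For the cost: each slab $Q_j$ carries $O(n/l)$ tetrahedrons beyond the $O(s)$ separator vertices (those in $S_{j-1}\cup S_j$) that are already numbered highest, so running the nested dissection of~\cite{millerT90} on $Q_j$ to number its interior — the ``leaf'' analysis inside the proof of Lemma~\ref{lem:flexibleND} with leaf size $O(n/l)$ and $O(s)$ higher-numbered vertices — costs $O((n/l)^{4/3} + (n/l)^{2/3}s)$ fill-in and $O((n/l)^{2\omega/3} + s^\omega\log n)$ multiplications, the $\log n$ being precisely the one produced by the recursion $g(n,s)=O(n^{2\omega/3}+s^\omega\log n)$. Over all $l+1$ slabs this is $O(n^{4/3}l^{-1/3} + n^{2/3}l^{1/3}s)$ fill-in and $O(n^{2\omega/3}l^{1-2\omega/3} + s^\omega l\log n)$ multiplications. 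After these interior eliminations, Fact~\ref{fac:SchurPath} shows the Schur complement onto $S$ retains the layered structure (a path through eliminated interior vertices stays within a single slab, so each $S_j$ becomes adjacent only to $S_{j-1}$, $S_{j+1}$, and itself), so Claim~\ref{clm:layeredGraphND} with $l$ layers of at most $s$ vertices contributes $O(s^2 l)$ fill-in and $O(s^\omega l)$ multiplications for step (iii). Adding the three parts and using AM--GM, $n^{2/3}l^{1/3}s\le\tfrac12\big(n^{4/3}l^{-1/3}+s^2 l\big)$, the cross term is absorbed; substituting the bound for $s$ gives fill-in $O(n^{4/3}l^{-1/3} + k^{14/3+2c_r/3}n^{4/3-2c_r/3}l)$ and running time $O(n^{2\omega/3}l^{-2\omega/3+1} + k^{7\omega/3+\omega c_r/3}n^{2\omega/3-c_r\omega/3}l\log n)$, with the time to compute $\dd$, the planes, the sets $S_j$, and all separator orderings being of lower order.

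The step I expect to need the most care is verifying that, after eliminating the slab interiors, the graph on $S$ really is layered with layers $S_1,\dots,S_l$: this needs consecutive cutting planes $P_j$ to be separated by more than the (constant) maximum tetrahedron diameter, so that no tetrahedron meeting $P_j$ shares a vertex with one meeting $P_{j'}$ for $|j-j'|\ge 2$. This holds because each slab $Q_j$ has $\Theta(n/l)$ tetrahedrons of bounded volume while any plane orthogonal to $\dd$ meets only $O(s)$ of them, so $Q_j$ has width $\Omega\big(\tfrac{n}{ls}\big)$, which exceeds the maximum tetrahedron diameter whenever $l$ is not too large relative to $n$ and $k$ — comfortably so for $l=n^{1/6}$, which is the instantiation used for Lemma~\ref{lem:kChunkND}; this is the only implicit upper bound on $l$. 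A secondary bookkeeping point is that each $Q_j$ (and $\calT'$) inherits the simple-mesh, bounded-aspect-ratio hypothesis required by~\cite{millerT90}, which is immediate since the hollowings $\calH_i$ are {\esimple} by Definition~\ref{def:hollow} and the non-hollowed pieces are {\trussnice}.
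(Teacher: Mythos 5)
Your proposal is correct and follows essentially the same route as the paper: bound the per-plane separator size $s = O(k^{7/3+c_r/3}n^{2/3-c_r/3})$ by combining Lemma~\ref{lem:planeIntersect} (hollowed pieces), Lemma~\ref{lem:smallSeparatorPlane} (non-hollowed pieces), the $\cos^{-1}\theta_i = O(k)$ guarantee from Line~\ref{lin:directionD}, and Jensen's inequality, then feed leaf size $O(n/l)$ and separator size $s$ into the accounting of Lemma~\ref{lem:flexibleND}. Your explicit verification that the post-elimination graph on $S_1\cup\dots\cup S_l$ is genuinely layered (slab width exceeding the constant tetrahedron diameter) is a detail the paper leaves implicit, but it is supporting rigor rather than a different argument.
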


\begin{proof}
We apply Lemma~\ref{lem:flexibleND}. 
According to Algorithm~\ref{alg:NDkChunks} line~\ref{lin:directionD}, for each $i$, the angle between the longest direction of the $i$th bounding box and $\dd$ has cosine value in $[1/10k, 1-1/10k]$.

We first upper bound the number of vertices in each top-level separators, that is, the number of tetrahedrons in $(\union_{i \in \calI} \calH_i) \bigcup (\union_{i \notin \calI} \calT_i)$ which intersects a plane $P_j$, see Algorithm~\ref{alg:NDkChunks} line~\ref{lin:topSeparator}.
For each $i \in \calI \defeq \{i \in [k]: \alpha(\calT_i) \le n_i^{c_\alpha}\}$, by Lemma~\ref{lem:planeIntersect}, the number of tetrahedrons in $\calH_i$ intersect a plane $P_j$ is at most 
\[
O \left( k^2 n_i^{2/3}  \alpha_i^{-1/3}  r_i^{-1/3} \right)
= O \left( k^2 n_i^{2/3 - c_r/3} \right),
\]
since $\alpha_i \ge 1$.
For each $i \notin \calI$, by Lemma~\ref{lem:smallSeparatorPlane},
the number of tetrahedrons in $\calT_i$ intersect a plane $P_j$ is at most 
\[
O \left( k n_i^{2/3}  \alpha_i^{-1/3}  \right)
=
O \left( k n_i^{2/3 - c_{\alpha} /3}  \right)
\]
since $\alpha_i > n_{i}^{c_{\alpha}}$.
The two terms have same exponent for $n$ when we set 
$c_r = c_{\alpha}$.
Note Algorithm~\ref{alg:hollow} requires that $c_r + 2 c_{\alpha} \le 1$. Thus, here we need $c_r \le 1/3$.

Thus, the total number of tetrahedrons in
$(\union_{i \in \calI} \calH_i) \bigcup (\union_{i \notin \calI} \calT_i)$
which intersect a single separator plane $P_j$ is then at most:
\[
s \defeq O \left( \sum_{1\le i \le k} k^2 n_i^{2/3 - c_r/3}  \right)
= O \left( k^{7/3 + c_r/3} n^{2/3 - c_r/3} \right).
\]
The last inequality is by Jensen's inequality.

There are totally $l$ top-level separators,  which separates the whole truss into $l+1$ separate components and each component has $O(n/l)$ vertices, according to Algorithm~\ref{alg:NDkChunks} line~\ref{lin:separatorPlane}.

We plug these parameters into Lemma~\ref{lem:flexibleND}, the total fill-in size is 
\[
O \left( \left( \frac{n}{l} \right)^{4/3} l  + s^2 l \right)
= O (n^{4/3} l^{-1/3}  + k^{14/3 + 2c_r/3} n^{4/3 - 2c_r/3} l  ),
\]
and the multiplication count is
\[
O \left( \left( \frac{n}{l} \right)^{2 \omega /3} l
+ s^{\omega} l \log n
\right)
= O (n^{2\omega / 3} l^{-2\omega / 3 + 1} 
+ k^{7 \omega / 3 + \omega c_r / 3} n^{2\omega / 3 - c_r\omega / 3} l  \log n).
\]
This completes the proof.
\end{proof}

\bibliographystyle{alpha}
\bibliography{ref}

\appendix
\section*{Appendices}

\section{Schur Complements}
\label{sec:appendixSchur}

In this section, we prove Fact~\ref{fac:Schur} on Schur complements.
Throughout this subsection, let
\[
\AA = \left( \begin{array}{cc}
\AA_{SS} & \AA_{ST} \\
\AA_{ST}^\top & \AA_{TT}
\end{array} \right)
\]
be a symmetric matrix.
Let $n_1$ be the size of $S$ and $n_2$ be the size of $T$.
Recall that the Schur complement
\[
\textsc{Sc}[\AA]_T = 
\AA_{TT} - \AA_{ST}^\top\AA^{\dagger}_{SS}\AA_{ST}.
\]

We restate Fact~\ref{fac:Schur} below.

\begin{repfact}{fac:Schur}
Let $\AA$ be a symmetric PSD matrix defined as above, 
and let $\schurto{\AA}{T}$ be its Schur complement.
Then,
\begin{enumerate}
\item $\schurto{\AA}{T}$ is a symmetric PSD matrix.
\item $\lambda_{\max}(\schurto{\AA}{T}) \le \lambda_{\max}(\AA)$.
\end{enumerate}
\end{repfact}

To prove this fact, we need the following fact and a special case of Weyl inequalities.

\begin{fact}

For any fixed vector $\yy \in \mathbb{R}^{n_2}$, 
\[
\min_{\xx \in \mathbb{R}^{n_1}} \left( \begin{array}{cc}
\xx^{\trp} & \yy^{\trp}
\end{array} \right) \AA \left( \begin{array}{c}
\xx \\
\yy
\end{array} \right) 
= \yy^{\trp} \textsc{Sc}[\AA]_T \yy.
\]
\label{clm:schur}
\end{fact}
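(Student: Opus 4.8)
The plan is to minimize the quadratic form over $\xx$ by completing the square, taking care that the principal block $\AA_{SS}$ may be singular so that a pseudoinverse is needed. First I would expand
\[
\begin{pmatrix} \xx^{\trp} & \yy^{\trp} \end{pmatrix} \AA \begin{pmatrix} \xx \\ \yy \end{pmatrix}
= \xx^{\trp}\AA_{SS}\xx + 2\xx^{\trp}\AA_{ST}\yy + \yy^{\trp}\AA_{TT}\yy,
\]
and propose the candidate minimizer $\xx^{\star} \defeq -\AA_{SS}^{\pinv}\AA_{ST}\yy$. Note that in the setting where this fact is applied, $\AA$ is symmetric PSD, hence so is its principal submatrix $\AA_{SS}$.

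The key preliminary step is to show $\AA_{ST}\yy \in \im(\AA_{SS})$. I would argue by contradiction: if $\AA_{ST}\yy$ had a nonzero component $\uu \in \nulls(\AA_{SS})$, then taking $\xx = -t\uu$ and letting $t \to \infty$ makes the $\xx^{\trp}\AA_{SS}\xx$ term vanish while the linear term becomes $-2t\norm{\uu}_2^2 \to -\infty$, contradicting that $\AA$ is PSD. I expect this to be the only place positive semidefiniteness is genuinely invoked, and hence the main (if mild) obstacle; the rest is algebra with the Moore--Penrose identities $\AA_{SS}\AA_{SS}^{\pinv}\AA_{ST}\yy = \AA_{ST}\yy$ and $\AA_{SS}^{\pinv}\AA_{SS}\AA_{SS}^{\pinv} = \AA_{SS}^{\pinv}$.

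Using these identities I would compute the value of the form at $\xx^{\star}$ to be exactly $\yy^{\trp}(\AA_{TT} - \AA_{ST}^{\trp}\AA_{SS}^{\pinv}\AA_{ST})\yy = \yy^{\trp}\schurto{\AA}{T}\yy$. Finally, to confirm this is the minimum and not merely a stationary value, I would substitute $\xx = \xx^{\star} + \zz$; the cross terms collapse to $2\zz^{\trp}(\AA_{SS}\xx^{\star} + \AA_{ST}\yy) = 0$ by the membership fact above, leaving the form equal to $\zz^{\trp}\AA_{SS}\zz + \yy^{\trp}\schurto{\AA}{T}\yy \ge \yy^{\trp}\schurto{\AA}{T}\yy$ since $\AA_{SS} \pgeq 0$. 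This yields both the lower bound and its attainment at $\xx^{\star}$, completing the proof.
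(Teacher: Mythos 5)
Your proof is correct and follows essentially the same route as the paper: expand the quadratic form, minimize over $\xx$, and plug in the minimizer $-\AA_{SS}^{\pinv}\AA_{ST}\yy$ to recover $\yy^{\trp}\schurto{\AA}{T}\yy$. You are in fact more careful than the paper's version, which only sets the gradient to zero and substitutes the pseudoinverse solution; your verification that $\AA_{ST}\yy \in \im(\AA_{SS})$ (needed for the stationarity equation to be solvable and for the Moore--Penrose identities to apply) and your completion-of-the-square argument confirming that the stationary point is a genuine minimum close small gaps that the paper leaves implicit.
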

\begin{proof}
We expand the left hand side,
\begin{align}
\left( \begin{array}{cc}
\xx^{\trp} & \yy^{\trp}
\end{array} \right) \AA \left( \begin{array}{c}
\xx \\
\yy
\end{array} \right) 
= \xx^{\trp}\AA_{SS}\xx
+ 2\xx^{\trp} \AA_{ST}\yy + \yy^{\trp} \AA_{TT} \yy.
\label{eqn:schur_lhs}
\end{align}
Taking derivative w.r.t. $\xx$ and setting it to be 0 give that
\[
2 \AA_{SS}\xx + 2 \AA_{ST} \yy = {\bf 0}.
\]
Plugging $\xx = - \AA_{SS}^{\dagger} \AA_{ST} \yy$ into~\eqref{eqn:schur_lhs}, 
\[
\min_{\xx \in \mathbb{R}^{n_1}} \left( \begin{array}{cc}
\xx^{\trp} & \yy^{\trp}
\end{array} \right) \AA \left( \begin{array}{c}
\xx \\
\yy
\end{array} \right) 
= -  \yy^\top \AA_{ST}^\top \AA^{\dagger}_{SS} \AA_{ST} \yy 
+ \yy^\top \AA_{TT} \yy
= \yy^{\trp} \textsc{Sc}[\AA]_T \yy.
\]
This completes the proof.
\end{proof}

\begin{theorem}[A special case of Weyl inequalities]
Let $\HH \in \mathbb{R}^n$ and $\HH = \HH_1 + \HH_2$ where $\HH_1, \HH_2$ are symmetric matrices and $\HH_2$ is a PSD matrix. 
Let $\lambda_1(\cdot) \ge \ldots \ge \lambda_n(\cdot)$ be eigenvalues of a matrix.
Then, $\lambda_i (\HH) \ge \lambda_i (\HH_1), i = 1,2,\ldots, n$.
\label{thm:weyl}
\end{theorem}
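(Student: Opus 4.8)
The plan is to invoke the Courant--Fischer min-max characterization of the eigenvalues of a symmetric matrix. Recall that for a symmetric matrix $\MM \in \mathbb{R}^{n \times n}$ with eigenvalues $\lambda_1(\MM) \ge \cdots \ge \lambda_n(\MM)$, one has
\[
\lambda_i(\MM) \;=\; \max_{\substack{V \subseteq \mathbb{R}^n \\ \mathrm{dim}\, V = i}} \ \min_{\substack{\xx \in V \\ \norm{\xx}_2 = 1}} \xx^\top \MM \xx ,
\]
where $V$ ranges over all $i$-dimensional linear subspaces of $\mathbb{R}^n$. I would take this identity as known background.

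First I would observe that since $\HH_2$ is PSD, $\xx^\top \HH_2 \xx \ge 0$ for every $\xx$, and therefore
\[
\xx^\top \HH \xx \;=\; \xx^\top \HH_1 \xx + \xx^\top \HH_2 \xx \;\ge\; \xx^\top \HH_1 \xx \qquad \text{for all } \xx \in \mathbb{R}^n .
\]
Consequently, for every fixed $i$-dimensional subspace $V$ the inner minima compare as
\[
\min_{\substack{\xx \in V \\ \norm{\xx}_2 = 1}} \xx^\top \HH \xx \;\ge\; \min_{\substack{\xx \in V \\ \norm{\xx}_2 = 1}} \xx^\top \HH_1 \xx .
\]
Taking the maximum of both sides over all such $V$, and applying the displayed Courant--Fischer identity to $\HH$ on the left and to $\HH_1$ on the right, yields $\lambda_i(\HH) \ge \lambda_i(\HH_1)$ for every $i \in \{1,\dots,n\}$, which is the claim.

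There is essentially no hard step here; the only point that needs care is to use the ``$\max$ over $i$-dimensional subspaces of the $\min$'' form of Courant--Fischer, so that the pointwise inequality $\xx^\top \HH \xx \ge \xx^\top \HH_1 \xx$ propagates monotonically through both the inner minimum and the outer maximum. An alternative phrasing is to note that $\HH - \HH_1 = \HH_2$ is PSD, i.e.\ $\HH \pgeq \HH_1$ in the Loewner order, and then cite monotonicity of eigenvalues with respect to the Loewner order; but since that monotonicity is itself typically proved via Courant--Fischer, the direct argument above is the cleanest self-contained route and the one I would write out in full. (This lemma is then used together with Fact~\ref{clm:schur}, applied to a unit eigenvector $\yy$ of $\schurto{\AA}{T}$ with $\HH = \AA$, $\HH_1 = $ the top-left/bottom-right decomposition implicit in $\min_{\xx}$, to deduce $\lambda_{\max}(\schurto{\AA}{T}) \le \lambda_{\max}(\AA)$ in Fact~\ref{fac:Schur}.)
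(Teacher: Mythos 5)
Your proof is correct. Note that the paper itself offers no proof of Theorem~\ref{thm:weyl} at all --- it is stated as a known special case of Weyl's inequalities and used as a black box in the proof of Fact~\ref{fac:Schur} --- so there is no in-paper argument to compare against. The Courant--Fischer min-max argument you give is the standard, self-contained way to establish this eigenvalue monotonicity, and the key steps (pointwise domination of quadratic forms because $\HH_2$ is PSD, then propagating the inequality through the inner minimum and outer maximum) are all sound.
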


\begin{proof}[Proof of Fact~\ref{fac:Schur}]

The first statement immediately follows Fact~\ref{clm:schur}.

To prove the second statement,
we decompose $\AA$:
\begin{align*}
\AA & = \left( \begin{array}{cc}
{\bf 0} & {\bf 0} \\ 
{\bf 0} & \textsc{Sc}[\AA]_T
\end{array} \right)
+ \left( \begin{array}{cc}
\AA_{SS} & \AA_{ST} \\
\AA_{ST}^\top &  \AA_{ST}^\top \AA_{SS}^{-1} \AA_{ST}
\end{array} \right) \\
& = \left( \begin{array}{cc}
{\bf 0} & {\bf 0} \\ 
{\bf 0} & \textsc{Sc}[\AA]_T
\end{array} \right)
+ \left( \begin{array}{c}
\AA_{SS}^{1/2} \\
\AA_{ST}^\top \AA_{SS}^{-1/2} 
\end{array} \right)^\top
\left( \begin{array}{c}
\AA_{SS}^{1/2} \\
\AA_{ST}^\top \AA_{SS}^{ -1/2}
\end{array} \right).
\end{align*}
Here we assume that $\AA_{SS}$ is invertible, otherwise we can use pseudo-inverse.
The first matrix is symmetric, and the second matrix is symmetric and PSD.
By Theorem~\ref{thm:weyl}, 
$\lambda_{\max}(\AA) \ge \lambda_{\max}(\textsc{Sc}[\AA]_T)$.
\end{proof}


\section{Running Times In Terms of Fast Matrix Multiplication}
\label{sec:rofl}

We now restate the running times of our algorithms in terms of
faster matrix multiplication / inversion routines.
Specifically, we assume inverting an $n \times n$ matrix takes
time $O(n^{\omega})$, where $\omega < 2.3728639$~\cite{Legall14}.

We first examine purely nested dissection based algorithms.
The running time of these algorithms are dominated by
the cost of inverting the matrix at the top-most level.
Thus, the running time of 3-D nested dissection from~\cite{millerT90}
as given in Theorem~\ref{thm:nestedDissection} is
\[
O\left( n^{2 \omega / 3}\right),
\]
while the performance of Lemma~\ref{lem:flexibleND} becomes
\[
O\left( n^{2\omega \beta / 3 + \alpha} 
+ n^{\omega \gamma  +\alpha}
\right).
\]

We now propagate these different costs for constructing the
nested dissection partial states into our running time analyses.

For the bounded aspect ratio case described in
Theorem~\ref{thm:mainSmallAR}, recall that the input truss is a union of $k$ convex pieces of $n_i$ vertices each.
Hollowing with parameter
\[
r_i = n_i^{c_r}
\]
now takes time
\[
O\left(
\sum_{i}
\frac{n_{i}}{r_{i}} \cdot r_{i}^{2 \omega / 3}
\right)
=
O\left(
n^{1 + c_r \left( 2 \omega / 3 - 1 \right)}
\right),
\]
while still giving a Schur complement of size
\[
O\left( n^{1 + c_r / 3} \right)
\]
on the boundaries, and total boundary size of
\[
O\left( k^{c_r/3} n^{1 - c_r / 3} \right).
\]

Theorem~\ref{thm:nestedDissection} then gives that solving this
problem on just the boundary elements takes time
\[
O\left( k^{2c_r \omega /9} n^{2\left( 1 - c_r /3 \right)\omega / 3} \right)
=
O\left( k^{2c_r \omega /9} n^{2 \omega / 3 - 2 c_r \omega /9} \right)
,
\]
and results in a total fill-in of
\[
O\left( k^{4 c_r/9} n^{4/3 - 4 c_r/ 9} \right).
\]

Putting these parameters back into the condition number
bound of $O(n^{2 c_r})$ gives an iteration count of $O(n^{c_r} \log(1 / \epsilon))$,
which in turn gives a total cost of
\begin{multline*}
O\left(
n^{1 + c_r \left( 2 \omega / 3 - 1 \right)}
\right)
+
O\left( k^{2c_r \omega /9} n^{2 \omega / 3 - 2 c_r \omega /9} \right)
+ O\left( n^{c_r} \log\left(1 / \epsilon\right)  \right)
\cdot
\left[
O\left( n^{1 + c_r / 3} \right)
+
O\left( k^{c_r/3} n^{4/3 - 4 c_r/ 9} \right)
\right]
\\
=
O\left(
n^{1 + c_r \left( 2 \omega / 3 - 1 \right)}
+
k^{2c_r \omega /9} n^{2 \omega / 3 - 2 c_r \omega /9}
+
n^{1 + 4 c_r / 3} \log\left(1 / \epsilon\right)
+
k^{c_r/3} n^{4/3 + 5 c_r/ 9} \log\left(1 / \epsilon\right)
\right).
\end{multline*}
We can (slightly) simplify this using the fact that $\omega \le 3$
to drop the first term: it is always upper bounded by the third.
Also, since $k \leq n$, we will focus on optimizing the
exponent on $n$, that is, we want to pick $c_r$ to minimize
the maximum of
\begin{align*}
& 2 \omega /3 - 2 c_r \omega / 9\\
& 1 + 4 c_r / 3\\
& 4/3 + 5c_r / 9
\end{align*}

By running an LP solver, we get:
\begin{itemize}
\item
 when $\omega = 3$ this is optimized at
$c_r = 1/2$,
which gives a total cost of
$O(k^{1/3} n^{5/3} \log( 1 / \epsilon))$.
\item when $\omega = 2.3728639$, 
this is optimized at $c_r = 0.2295553$.
Here the exponents on $n$ the three terms are
$1.4608640$, $1.3060737$ and $1.4608640$
respectively, and we have
$2 \omega / 9 > 1 /3$, so 
so the total cost is bounded by
$O(k^{0.1210452} n^{1.4608641} \log( 1 / \epsilon))$.
\end{itemize}

For the more general case from Theorem~\ref{thm:main},
combining the bounds from Lemma~\ref{lem:kChunkNDMoreDetailed}
with the
\begin{itemize}
\item $O(n^{1 + c_r ( 2 \omega / 3 - 1 )})$ cost of computing the
Schur complement of eliminating the innards of the hollowings, and
\item the $O( n^{1 + c_r / 3} )$ size of the these Schur complements, and
\item $O(n^{c_r} \log(1 / \epsilon))$ iteration count of PCG
\end{itemize}
gives a total cost of:\footnote{We drop the $\log n$ factor here, given $\log n \ll n^c$ for any constant $c> 0$.}
\begin{multline*}
O\left(
n^{1 + c_r \left( 2 \omega / 3 - 1 \right)}
+
n^{2\omega / 3} l^{-2\omega / 3 + 1} 
+ k^{7 \omega / 3 + c_r\omega / 3} n^{2\omega / 3 - c_r\omega / 3} l\right. \\ \left.
+ n^{c_r} \log\left( 1 / \epsilon \right)
\left(
  n^{1 + c_r / 3} 
  +
  n^{4/3} l^{-1/3} + k^{14/3+2c_r/3} n^{4/3 - 2c_r/3} l
\right)
\right).
\end{multline*}
Since $\omega \le 3$, we drop the first term.
We can simplify this by moving $k$ and $\log(1 / \epsilon)$ to the outermost,
and only optimizing the remaining terms:
\[
O\left( k^{7 \omega / 3 + c_r\omega / 3} \log\left( 1 / \epsilon \right) \right)
\cdot 
\left( n^{2\omega / 3} l^{-2\omega / 3 + 1} 
+  n^{2\omega / 3 - c_r\omega / 3} l
+ \left(
  n^{1 + 4c_r / 3} 
  +
  n^{4/3 + c_r} l^{-1/3} +  n^{4/3 + c_r/3} l
\right)
\right).
\]

Let $l = n^{c_l}$.
Since Algorithm~\ref{alg:hollow} requires that $c_r +  2c_{\alpha} \le 1$ and in Lemma~\ref{lem:kChunkNDMoreDetailed} we set $c_r = c_{\alpha}$, we have $0 \le c_r \le 1/3$.
Subject to this requirement, 
we minimize the maximum of the following terms:
\begin{align*}
& 1 + \left( \frac{2 \omega}{3} - 1 \right) c_r \\
& \frac{2\omega}{3} + \left(-\frac{2\omega}{3} + 1\right) c_l , \\
& \frac{2 \omega}{3} - \frac{\omega}{3} c_r  +  c_l, \\
& 1 + \frac{4}{3} c_r, \\
& \frac{4}{3} + c_r - \frac{1}{3} c_l, \\
& \frac{4}{3} +  \frac{1}{3} c_r + c_l.
\end{align*}

By running an LP solver, we get:
\begin{itemize}
\item
 when $\omega = 3$ this is optimized at
$c_r = 1/3$ and $c_l = 1/6$,
which gives a total cost of $O(k^{22/3} n^{11/6} \log (1 / \eps))$ ($11/6 \approx 1.8333$)
\item when $\omega = 2.3728639$, an optimum solution is $c_r = 0.2210963$
and $c_{l} = 0.1105482$
for a total cost of $O(k^{5.7115596} n^{1.5175803} \log(1/\eps))$.
\end{itemize}

\end{document}